% Options for packages loaded elsewhere
\PassOptionsToPackage{unicode}{hyperref}
\PassOptionsToPackage{hyphens}{url}
\PassOptionsToPackage{dvipsnames,svgnames,x11names}{xcolor}
\documentclass[
  12pt]{article}

\usepackage{amsmath,amssymb}
\usepackage{iftex}
\ifPDFTeX
  \usepackage[T1]{fontenc}
  \usepackage[utf8]{inputenc}
  \usepackage{textcomp} % provide euro and other symbols
\else % if luatex or xetex
  \usepackage{unicode-math}
  \defaultfontfeatures{Scale=MatchLowercase}
  \defaultfontfeatures[\rmfamily]{Ligatures=TeX,Scale=1}
\fi
\usepackage{lmodern}
\ifPDFTeX\else  
    % xetex/luatex font selection
\fi
% Use upquote if available, for straight quotes in verbatim environments
\IfFileExists{upquote.sty}{\usepackage{upquote}}{}
\IfFileExists{microtype.sty}{% use microtype if available
  \usepackage[]{microtype}
  \UseMicrotypeSet[protrusion]{basicmath} % disable protrusion for tt fonts
}{}
\makeatletter
\@ifundefined{KOMAClassName}{% if non-KOMA class
  \IfFileExists{parskip.sty}{%
    \usepackage{parskip}
  }{% else
    \setlength{\parindent}{0pt}
    \setlength{\parskip}{6pt plus 2pt minus 1pt}}
}{% if KOMA class
  \KOMAoptions{parskip=half}}
\makeatother
\usepackage{xcolor}
\setlength{\emergencystretch}{3em} % prevent overfull lines
\setcounter{secnumdepth}{5}
% Make \paragraph and \subparagraph free-standing
\makeatletter
\ifx\paragraph\undefined\else
  \let\oldparagraph\paragraph
  \renewcommand{\paragraph}{
    \@ifstar
      \xxxParagraphStar
      \xxxParagraphNoStar
  }
  \newcommand{\xxxParagraphStar}[1]{\oldparagraph*{#1}\mbox{}}
  \newcommand{\xxxParagraphNoStar}[1]{\oldparagraph{#1}\mbox{}}
\fi
\ifx\subparagraph\undefined\else
  \let\oldsubparagraph\subparagraph
  \renewcommand{\subparagraph}{
    \@ifstar
      \xxxSubParagraphStar
      \xxxSubParagraphNoStar
  }
  \newcommand{\xxxSubParagraphStar}[1]{\oldsubparagraph*{#1}\mbox{}}
  \newcommand{\xxxSubParagraphNoStar}[1]{\oldsubparagraph{#1}\mbox{}}
\fi
\makeatother

\usepackage{longtable,booktabs,array}
\usepackage{calc} % for calculating minipage widths
% Correct order of tables after \paragraph or \subparagraph
\usepackage{etoolbox}
\makeatletter
\patchcmd\longtable{\par}{\if@noskipsec\mbox{}\fi\par}{}{}
\makeatother
% Allow footnotes in longtable head/foot
\IfFileExists{footnotehyper.sty}{\usepackage{footnotehyper}}{\usepackage{footnote}}
\makesavenoteenv{longtable}
\usepackage{graphicx}
\makeatletter
\def\maxwidth{\ifdim\Gin@nat@width>\linewidth\linewidth\else\Gin@nat@width\fi}
\def\maxheight{\ifdim\Gin@nat@height>\textheight\textheight\else\Gin@nat@height\fi}
\makeatother
% Scale images if necessary, so that they will not overflow the page
% margins by default, and it is still possible to overwrite the defaults
% using explicit options in \includegraphics[width, height, ...]{}
\setkeys{Gin}{width=\maxwidth,height=\maxheight,keepaspectratio}
% Set default figure placement to htbp
\makeatletter
\def\fps@figure{htbp}
\makeatother

\addtolength{\oddsidemargin}{-.5in}%
\addtolength{\evensidemargin}{-.1in}%
\addtolength{\textwidth}{1in}%
\addtolength{\textheight}{1.7in}%
\addtolength{\topmargin}{-1in}
\makeatletter
\@ifpackageloaded{caption}{}{\usepackage{caption}}
\AtBeginDocument{%
\ifdefined\contentsname
  \renewcommand*\contentsname{Table of contents}
\else
  \newcommand\contentsname{Table of contents}
\fi
\ifdefined\listfigurename
  \renewcommand*\listfigurename{List of Figures}
\else
  \newcommand\listfigurename{List of Figures}
\fi
\ifdefined\listtablename
  \renewcommand*\listtablename{List of Tables}
\else
  \newcommand\listtablename{List of Tables}
\fi
\ifdefined\figurename
  \renewcommand*\figurename{Figure}
\else
  \newcommand\figurename{Figure}
\fi
\ifdefined\tablename
  \renewcommand*\tablename{Table}
\else
  \newcommand\tablename{Table}
\fi
}
\@ifpackageloaded{float}{}{\usepackage{float}}
\floatstyle{ruled}
\@ifundefined{c@chapter}{\newfloat{codelisting}{h}{lop}}{\newfloat{codelisting}{h}{lop}[chapter]}
\floatname{codelisting}{Listing}

\makeatother
\makeatletter
\makeatother
\makeatletter
\@ifpackageloaded{caption}{}{\usepackage{caption}}
\@ifpackageloaded{subcaption}{}{\usepackage{subcaption}}
\makeatother

\ifLuaTeX
  \usepackage{selnolig}  % disable illegal ligatures
\fi
\usepackage[]{natbib}
\bibliographystyle{agsm}
\usepackage{bookmark}

\IfFileExists{xurl.sty}{\usepackage{xurl}}{} % add URL line breaks if available
\urlstyle{same} % disable monospaced font for URLs
\hypersetup{
  pdftitle={Title},
  pdfauthor={Author 1; Author 2},
  pdfkeywords={3 to 6 keywords, that do not appear in the title},
  colorlinks=true,
  linkcolor={blue},
  filecolor={Maroon},
  citecolor={Blue},
  urlcolor={Blue},
  pdfcreator={LaTeX via pandoc}}

\newcommand{\anon}{1}

\usepackage{amsmath, mathrsfs, amssymb, amsthm, fullpage, graphicx, mathtools, bbm}
\usepackage[ruled,vlined]{algorithm2e}
\newtheorem{theorem}{Theorem}[section]
\newtheorem{lemma}{Lemma}[section]
\newtheorem{example}{Example}[section]
\newtheorem*{example*}{Example}

\newcommand\numberthis{\addtocounter{equation}{1}\tag{\theequation}}

\makeatletter
\def\BState{\State\hskip-\ALG@thistlm}
\makeatother

\newtheorem{proposition}{Proposition}[section]
\newtheorem{corollary}{Corollary}[section]
  {
      \theoremstyle{plain}
      \newtheorem{assumption}{Assumption}
  }

% \numberwithin{equation}{section}

% \newtheorem{definition}{Definition}
% \newtheorem{proposition}{Proposition}
% \newtheorem{corollary}{Corollary}
%   {
%       \theoremstyle{plain}
%       \newtheorem{assumption}{Assumption}
%   }

% \newtheorem{fact}{Fact}
% \newtheorem{conjecture}{Conjecture}

\numberwithin{equation}{section}

% \allowdisplaybreaks

% \DeclarePairedDelimiter{\ceil}{\lceil}{\rceil}

\usepackage{caption}
\captionsetup[figure]{font=footnotesize}
\captionsetup{labelfont=bf}

\usepackage{chngcntr}
%\counterwithin{figure}{section}

\newcommand\independent{\protect\mathpalette{\protect\independenT}{\perp}}
\def\independenT#1#2{\mathrel{\rlap{$#1#2$}\mkern2mu{#1#2}}}

\newcommand{\mme}[0]{\mathbb{E}}
\newcommand{\mmp}[0]{\mathbb{P}}
\newcommand{\mmr}[0]{\mathbb{R}}
\newcommand{\mmn}[0]{\mathbb{N}}

\newcommand{\bone}[0]{\mathbbm{1}}

\DeclarePairedDelimiterX{\inp}[2]{\langle}{\rangle}{#1, #2}

% \newcommand{\wt}[0]{\widetilde{T}}

 % inner command, used by \rchi

% \usepackage[nameinlink, noabbrev]{cleveref}

\usepackage{multirow,float}

\usepackage{relsize}

\usepackage{accents}

\newcommand{\footremember}[2]{%
    \footnote{#2}
    \newcounter{#1}
    \setcounter{#1}{\value{footnote}}%
}
\newcommand{\footrecall}[1]{%
    \footnotemark[\value{#1}]%
}

\allowdisplaybreaks

\begin{document}

\def\spacingset#1{\renewcommand{\baselinestretch}%
{#1}\small\normalsize} \spacingset{1}

%%%%%%%%%%%%%%%%%%%%%%%%%%%%%%%%%%%%%%%%%%%%%%%%%%%%%%%%%%%%%%%%%%%%%%%%%%%%%%

\if1\anon
{
  \title{\bf Correcting the Coverage Bias of Quantile Regression}
  \author{Isaac Gibbs\footremember{equalAuth}{The first two authors contributed equally to this work.}\footnote{Department of Statistics, University of California, Berkeley. Corresponding author: \href{mailto:igibbs@berkeley.edu}{igibbs@berkeley.edu}.} \and John J. Cherian\footrecall{equalAuth}
\footnote{Department of Statistics, Stanford University.} \and Emmanuel J. Cand\`{e}s\footnote{Departments of Mathematics and Statistics, Stanford University.}}
  \maketitle
} \fi

% \if1\anon
% {
%   \title{\bf Correcting the Coverage Bias of Quantile Regression}
%   \author{Isaac Gibbs\footremember{equalAuth}{The first two authors contributed equally to this work.}\\ 
%   Department of Statistics, University of California, Berkeley\\  \href{mailto:igibbs@berkeley.edu}{igibbs@berkeley.edu}\\
%   John J. Cherian\footrecall{equalAuth}\\
%   Department of Statistics, Stanford University\\
%   and\\
%   Emmanuel J. Cand\`{e}s\\
%   Departments of Mathematics and Statistics, Stanford University}
%   \maketitle
% } \fi

\if0\anon
{
  \bigskip
  \bigskip
  \bigskip
  \begin{center}
    {\LARGE\bf Correcting the Coverage Bias of Quantile Regression}
\end{center}
  \medskip
} \fi

\bigskip 
\begin{abstract}
We develop a collection of methods for adjusting the predictions of quantile regression to ensure coverage. Our methods are model agnostic and can be used to correct for high-dimensional overfitting bias with only minimal assumptions. Theoretical results show that the estimates we develop are consistent and facilitate accurate calibration in the proportional asymptotic regime where the ratio of the dimension of the data and the sample size converges to a constant. This is further confirmed by experiments on both simulated and real data. A key component of our work is a new connection between the leave-one-out coverage and the fitted values of variables appearing in a dual formulation of the quantile regression problem. This facilitates the use of cross-validation in a variety of settings at significantly reduced computational costs. 
\end{abstract}

\noindent%
% {\it Keywords:} prediction set, cross-validation, high-dimensional statistics.
%\vfill

%\newpage
% \spacingset{1.8} % DON'T change the spacing!

\section{Introduction}

Quantile regression is a popular tool for bounding the tail of a target outcome. This method has a long history dating back to the foundational work of \citet{Koenker1978} and has found widespread applications across a variety of areas \citep{Koenker2001, Koenker2017}. Classical results demonstrate that as the sample size increases quantile regression estimates are consistent, normally distributed around their population analogs \citep{Koenker1978, Angrist2006}, and, perhaps most critically,  achieve their target coverage level \citep{Jung2023, Duchi2025}.

\begin{figure}[ht]
    \centering
    \includegraphics[width=\textwidth]{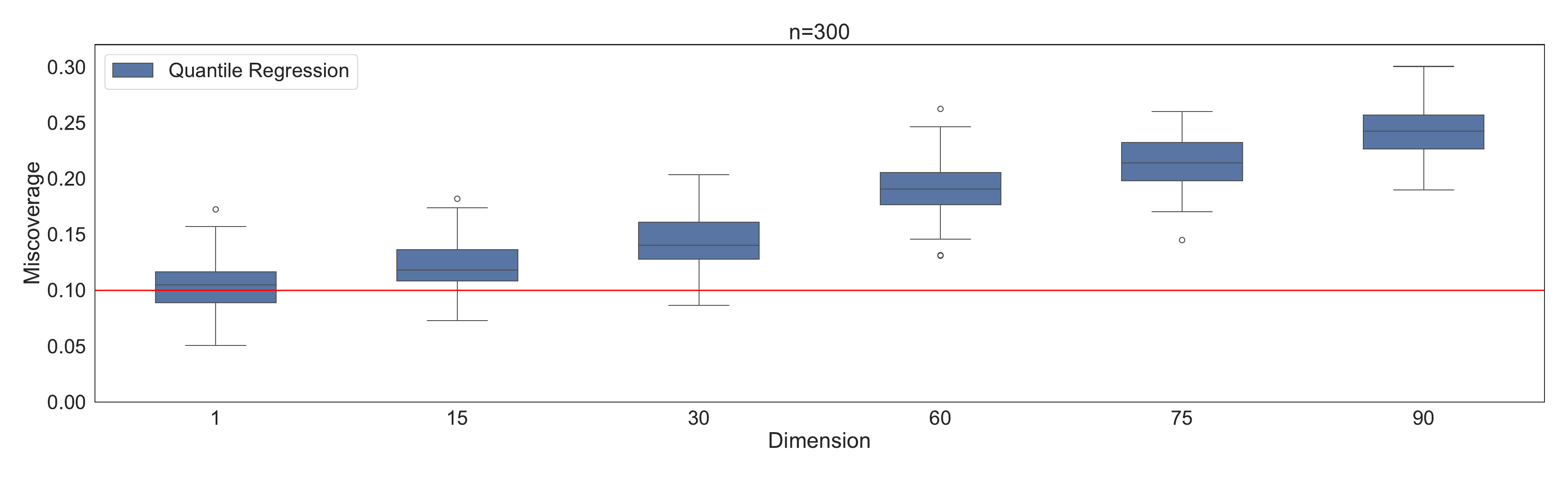}
    \caption{Miscoverage of (unregularized) quantile regression fit with model $Y_i \sim \beta_0 + X_i^\top \beta$ on i.i.d. data $\{(X_i,Y_i)\}_{i=1}^n$ sampled as $Y_i = X_i^\top\tilde{\beta} + \epsilon_i$ for $X_i \sim \mathcal{N}(0,I_d)$, $\epsilon \sim \mathcal{N}(0,1)$, and $\epsilon_i \independent X_i$. Boxplots in the figure show the empirical distribution of the training-conditional coverage, $\mmp(Y_{n+1} \leq \hat{\beta}_0 + X_{n+1}^\top \hat{\beta} \mid \{(X_i,Y_i)\}_{i=1}^n)$ where $(\hat{\beta}_0, \hat{\beta})$ denote the estimated coefficients at quantile level $\tau = 0.9$ and $(X_{n+1},Y_{n+1})$ is an independent sample from the same model. The results come from 100 trials where in each trial the coverage is evaluated over a test set of size 2000 and the population coefficients are sampled as $\tilde{\beta} \sim \mathcal{N}(0,I_d/d)$. The red line shows the target miscoverage level of $1-\tau = 0.1$.}
    \label{fig:qr_cov}
\end{figure}

Although the classical theory can be accurate for large sample sizes, it is often insufficient to characterize the realities of finite datasets. Figure \ref{fig:qr_cov} shows the realized miscoverage of quantile estimates fit at target level $\tau = 0.9$ in a well specified linear model $Y_i = X_i^\top \tilde{\beta} + \epsilon_i$ with $\epsilon_i \independent X_i$ and $X_i \in \mmr^d$.\footnote{Code for reproducing this figure along with all other experiments in this article can be found at \url{https://github.com/isgibbs/adjusted_qr}.} In agreement with the classical theory, we see that when $X_i$ has very low dimension (e.g., $d=1$) quantile regression reliably obtains the target miscoverage rate of $1-\tau = 0.1$. However, the scope of this result is limited, and the coverage shows visible bias in what might be typically considered to be small or moderate dimensions (e.g., $d \in \{15,30\}$ compared to a sample size of $n=300$). Perhaps unsurprisingly, this issue only worsens as the dimension increases and quantile regression exhibits over two times the target error rate when $d=90$.

Formal characterization of the coverage bias of quantile regression was first given in \citet{Bai2021}. They eschew classical theory and instead work under a proportional asymptotic framework in which the ratio of the dimension of the data and the sample size converges to a constant. Under a stylized linear model, they show that in this regime the coverage of quantile regression converges to a value different from the target level and provide an exact formula for quantifying this bias. Interestingly, while both under- and overcoverage are possible, they demonstrate that in most settings quantile regression will tend to undercover.\footnote{As a matter of terminology, if $\hat{q}_{\tau}$ is an estimate of the $\tau \in [1/2,1]$ quantile of $Y$ we say that $\hat{q}_{\tau}$ undercovers if $\mmp(Y \leq \hat{q}_{\tau}) < \tau$ and overcovers if $\mmp(Y \leq \hat{q}_{\tau}) > \tau$. For $\tau < 1/2$ this terminology is reversed and we say that $\hat{q}_{\tau}$ undercovers if $\mmp(Y \leq \hat{q}_{\tau}) > \tau$ and overcovers otherwise. This is motivated by the fact that for $\tau > 1/2$ (resp. $\tau < 1/2$) the $\tau$-quantile is designed to be a high probability upper (resp. lower) bound on $Y$. We use the terms undercoverage and overcoverage to reflect these goals.} This is consistent with the results in Figure \ref{fig:qr_cov} as well as additional empirical evaluations that we will present in Section \ref{sec:empirics}. 

Two proposals have been made in the literature for correcting quantile regression's bias. Under the same linear model assumptions, \citet{Bai2021} derive a simple method for adjusting the nominal level to account for overfitting. While quite effective, this procedure is limited in scope to small aspect ratios and a restrictive model for the data. A more generic procedure that does not require any such modeling assumptions was given in \citet{GCC2025}. They employ a technique known as full conformal inference, which augments the regression fit with a guess of the unseen test point. This mimics the effect of overfitting the training data on the test point, thereby eliminating the resulting bias. In general, this approach has two main drawbacks. First, it requires randomization in order to obtain the desired coverage level. As we will show in Section \ref{sec:dual_thresh}, this randomization can be significant and may cause the quantile estimate to vary substantially. Second, considerable additional computation is required for every test point. This contrasts sharply with standard quantile regression, which once fitted can issue new predictions at the cost of computing just a single inner product. Depending on the application, additional test-time computational complexity may not be permissible. 

In this article, we develop three alternative procedures for adjusting the quantile regression fit. All of these methods are deterministic, and two of them require per test point computation that is identical to standard quantile regression. Briefly, our methods can be summarized as 1) a level-adjustment procedure that tunes the nominal level of the quantile regression loss, 2) an additive adjustment that adds a constant bias to the quantile estimates, and 3) a deterministic analog of the procedure proposed in \citet{GCC2025}. To tune the parameters of these first two methods, we will utilize leave-one-out cross-validation. A central contribution of our work is a new connection between the leave-one-out coverage and a set of dual variables to the quantile regression. This will enable us to compute the entire set of leave-one-out coverage values in time identical to that of running a single regression fit and facilitate hyperparameter tuning at significantly reduced computational costs.

The remainder of this article is structured as follows. In Section \ref{sec:methods} we introduce our main methods. Section \ref{sec:loo_cv} then gives the formal connection between the quantile dual and leave-one-out coverage. Theoretical results showing the consistency of our proposals in the proportional asymptotic regime are presented in Section \ref{sec:theory}, while Sections \ref{sec:simulated_example} and \ref{sec:empirics} give empirical results demonstrating the accuracy of our methods in finite samples. Overall, our results show that all of our proposed methods are robust and provide reliable coverage irrespective of the dimension of the data.

The theoretical results in this paper contribute to a growing literature on characterizing and correcting for overfitting bias in high dimensions (e.g., \cite{EK2013a, Donoho2013, Zhang2013, Javanmard2014, Geer2014, Thram2018, Hastie2022}). Of particular relevance to our work are the Gaussian comparison inequalities of \citet{Gordon1985, Gordon1988} and their development for high dimensional M-estimation problems in \citet{Thram2018}. These tools will allow us to characterize the asymptotic behaviour of the quantile regression dual variables and, through their connection to leave-one-out coverage, to prove the consistency of our cross-validation estimates. There is a large body of literature investigating the performance of cross-validation in high-dimensional parameter tuning (e.g., \cite{Steinberger2016, Kamiar2020, Bayle2020, Austern2020, Xu2021, Patil2021, Patil2022, Steinberger2023, Zou2025}). On a technical level, these articles often require smoothness and/or strong convexity assumptions on the loss in order to derive exact formulas for the leave-one-out coefficients. In contrast, we will be interested in the behaviour of the leave-one-out coverage of quantile regression, which is a discontinuous objective taken over parameter estimates coming from a non-differentiable loss. Here, our connection to the dual program will be critical in allowing us to avoid technical problems present in prior work and facilitating the application of tools that are typically unavailable in studies of cross-validation.

\textbf{Notation:} In the remainder of this article we let $\{(X_i,Y_i)\}_{i=1}^{n+1} \in \mmr^d \times \mmr$ denote a set of covariate-response pairs, where the first $n$ points denote the training set and the last entry is the test point for which $Y_{n+1}$ is unobserved. Given a target level $\tau \in (0,1)$, we will be interested in quantile regression estimates of the form
\[
(\hat{\beta}_0, \hat{\beta}) = \underset{(\beta_0, \beta) \in \mmr^{d+1}}{\text{argmin}} \sum_{i=1}^n \ell_{\tau}(Y_i - \beta_0 - X_i^\top\beta) + \mathcal{R}(\beta),
\]
where $\ell_{\tau}(r) = \tau r - \min\{r,0\}$ is the usual pinball loss and $\mathcal{R} : \mmr^{d} \to \mmr$ is an optional regularization function. For $d$ fixed and $n$ tending to infinity, the quantile regression estimates satisfy the target coverage guarantee $\mmp(Y_{n+1} \leq \hat{\beta}_0 + X_{n+1}^\top \hat{\beta}) \to \tau$. Our goal in this article is to adjust the regression procedure to recover this guarantee even in cases where $d/n \to \gamma \in (0,\infty)$ converges to a constant.

\section{Methods}\label{sec:methods}

We will now introduce our three methods for debiasing quantile regression. As shown theoretically in Section \ref{sec:theory} and empirically in Sections \ref{sec:simulated_example} and \ref{sec:empirics}, all of these procedures provide (asymptotically) exact coverage. Notably, this does not mean that their performance is identical. In Section \ref{sec:empirics} we compare the three approaches across a number of additional metrics (e.g., prediction set length, conditional coverage properties) and observe considerable variability. After reading the introduction to each method below, readers who are primarily interested in practical recommendations may choose to skip ahead to these results.

\subsection{Level adjustment}

The first method we will consider is to modify the nominal level used in the quantile regression loss. In particular, let 
\[
(\hat{\beta}_0(\tau^{\text{adj.}}),\hat{\beta}(\tau^{\text{adj.}})) = \underset{(\beta_0,\beta) \in \mmr^{d+1}}{\text{argmin}} \sum_{i=1}^n \ell_{\tau^{\text{adj.}}}(Y_i - \beta_0 -  X_i^\top \beta),
\]
denote the quantile estimates fit at adjusted level $\tau^{\text{adj.}}$. Let  $(\hat{\beta}^{(-i)}_0(\tau^{\text{adj.}}), \hat{\beta}^{(-i)} (\tau^{\text{adj.}}))$ denote the corresponding leave-one-out coefficients obtained when the $i_{\text{th}}$ sample is excluded from the fit. Then, we define
\begin{equation}\label{eq:adjusted_level}
\hat{\tau}^{\text{adj.}} = \underset{\tau^{\text{adj.}} \in [0,1]}{\text{argmin}}\left| \frac{1}{n} \sum_{i=1}^n \bone\left\{Y_i \leq  \hat{\beta}_0^{(-i)}(\tau^{\text{adj.}}) + X_i^\top \hat{\beta}^{(-i)}(\tau^{\text{adj.}}) \right\} - \tau\right| ,
\end{equation}
as the level that obtains the smallest leave-one-out coverage gap. This gives us the adjusted quantile estimate,
\[
\hat{q}_{\text{level-adj.}}(X_{n+1}) = \hat{\beta}_0(\hat{\tau}^{\text{adj.}}) + X_{n+1}^\top \hat{\beta}(\hat{\tau}^{\text{adj.}}).
\]
As an aside, we remark that in practice the leave-one-out coverage is typically a non-decreasing function of $\tau^{\text{adj.}}$. Using this observation, in the experiments that follow we will compute (\ref{eq:adjusted_level}) using binary search. 

A method for adjusting the quantile regression level has also been previously proposed by \citet{Bai2021}. They showed that when the aspect ratio is small and the data come from a stylized linear model, the value $\hat{\tau}^{\text{adj.}} = (\tau - \frac{1}{2}\frac{d}{n})/(1-\frac{1}{2}\frac{d}{n})$ asymptotically provides the desired coverage. The method above can be seen as a generalization of this procedure that replaces their modeling assumptions with a generic leave-one-out cross-validation based approach.

\begin{figure}[ht]
    \centering
    \includegraphics[width=\textwidth]{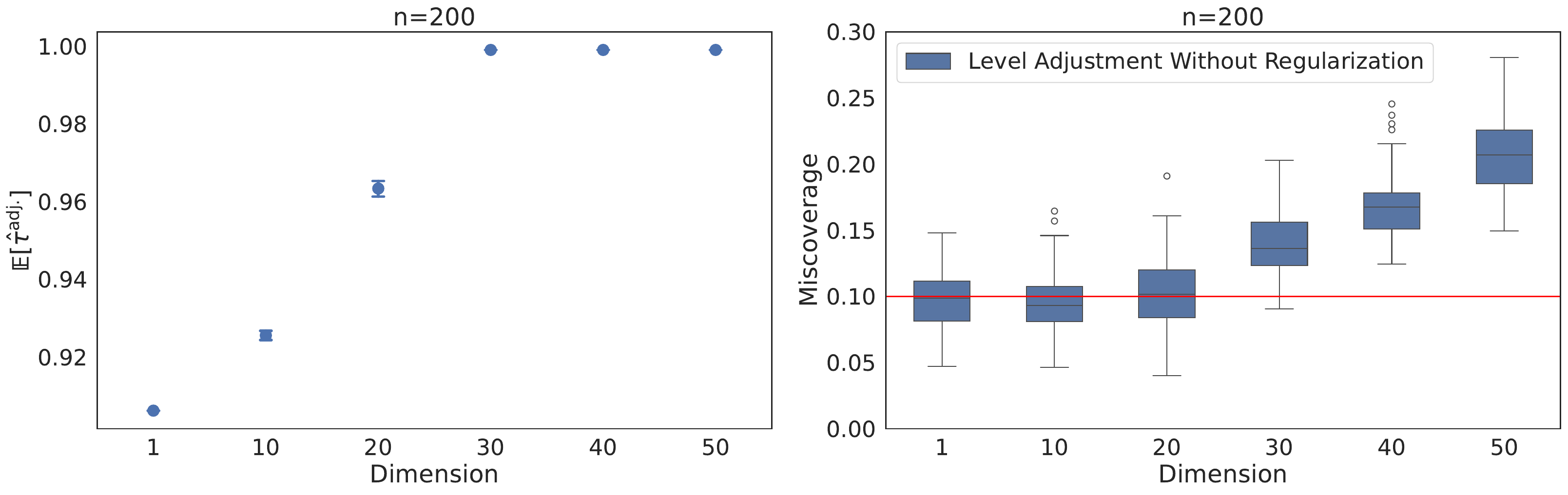}
     \caption{Average value of $\hat{\tau}^{\text{adj.}}$ (left panel) and empirical miscoverage (right panel) of quantile regression fit with an adjusted level as the dimension of the data varies. Data for these experiments are sampled from the Gaussian linear model $Y_i = X_i^\top\tilde{\beta} + \epsilon_i$ with $X_i \sim \mathcal{N}(0,I_d)$, $\epsilon_i \sim \mathcal{N}(0,1)$, and $\epsilon_i \independent X_i$. Dots and error bars in the left  panel show estimated means and 95\% confidence intervals from 100 trials where in each trial the population coefficients are sampled as $\tilde{\beta} \sim \mathcal{N}(0,I_d/d)$. Boxplots in the right panel show the empirical distribution of the training-conditional miscoverage evaluated over the same 100 trials where in each trial the miscoverage is estimated on a test set of size 2000. The red line shows the target miscoverage of $1-\tau = 0.1$. }
    \label{fig:level_tuning}
\end{figure}

Unfortunately, tuning the level alone is not sufficient to regain coverage at higher aspect ratios. The right panel of Figure \ref{fig:level_tuning} shows the realized miscoverage of $\hat{q}_{\text{level-adj.}}(X_{n+1})$ for increasing values of $d/n$ on data generated from the Gaussian linear model. We see that for $d/n \leq 0.1$ leave-one-out cross-validation successfully finds an adjusted level that restores coverage. On the other hand, for larger aspect ratios all values of $\hat{\tau}^{\text{adj.}}$ undercover. As a result, despite selecting the largest possible adjustment of $\hat{\tau}^{\text{adj.}} \approx 1$,\footnote{In this case, we set $\hat{\tau}^{\text{adj.}}$ to be slightly less than $1$ to have a well-defined quantile regression fit.} this method still realizes a significant bias.

To obtain uniform coverage across higher aspect ratios, we will add regularization to the regression. For simplicity, we focus our experiments on ridge regularization, though we anticipate that other choices would also be effective. Proceeding as above, let
\[
(\hat{\beta}_0(\lambda,\tau^{\text{adj.}}), \hat{\beta}(\lambda,\tau^{\text{adj.}})) = \underset{(\beta_0,\beta) \in \mmr^{d+1}}{\text{argmin}} \sum_{i=1}^n \ell_{{\tau}^{\text{adj.}}}(Y_i - \beta_0 - X_i^\top \beta) + {\lambda} \|\beta\|_2^2,
\]
denote the coefficients fit with regularization $\lambda$ and adjusted level $\tau^{\text{adj.}}$, and $\hat{\beta}^{(-i)}_0(\lambda,\tau^{\text{adj.}})$ and $ \hat{\beta}^{(-i)} (\lambda,\tau^{\text{adj.}})$ denote the corresponding coefficients obtained when the $i_{\text{th}}$ sample is omitted. Let 
\[
\text{LOOCov}(\lambda,\tau^{\text{adj.}}) = \frac{1}{n} \sum_{i=1}^n \bone\left\{ Y_i \leq \hat{\beta}^{(-i)}_0(\lambda,\tau^{\text{adj.}}) +  X_i^\top\hat{\beta}^{(-i)} (\lambda,\tau^{\text{adj.}}) \right\},
\]
denote the leave-one-out coverage at parameters $(\lambda,\tau^{\text{adj.}})$. Then, our goal is to find a specific choice $(\hat{\lambda},\hat{\tau}^{\text{adj.}})$ such that $\text{LOOCov}(\hat{\lambda},\hat{\tau}^{\text{adj.}})$ is close to $\tau$. 

In general, there will be more than one setting of $(\lambda,\tau^{\text{adj.}})$ that provides valid coverage. To choose amongst these values, we will use an auxiliary multiaccuracy target. Briefly, we aim to ensure that the miscoverage of the quantile estimate is uncorrelated with the covariates. We defer a detailed discussion of the motivation behind this metric to Section \ref{sec:empirics_metrics} where we discuss other goals for quantile regression beyond marginal coverage. Now, given a discrete grid of candidate values $\Lambda$ for $\lambda$, we select the parameters using the following two-step procedure:
\begin{enumerate}
    \item For $\lambda \in \Lambda$ define
    \[
    \hat{\tau}^{\text{adj.}}(\lambda) = \underset{\tau^{\text{adj.}} \in [0,1]}{\text{argmin}} \left|\text{LOOCov}(\lambda, \tau^{\text{adj.}}) - \tau \right|,
    \]
    as the adjusted level that gives the smallest leave-one-out coverage error.
    \item Let $\Lambda_{\tau} = \{\lambda \in \Lambda : \left|\text{LOOCov}(\lambda, \hat{\tau}^{\text{adj.}}(\lambda)) - \tau \right| \leq 1/n\}$ denote the set of regularization levels that provide a leave-one-out coverage of approximately $\tau$ and
    \begin{equation}\label{eq:lambda_hat_level}
    \hat{\lambda} = \underset{\lambda \in \Lambda_{\tau}}{\text{argmin}}  \max_{j \in \{1,\dots,d\}} \frac{\left|\frac{1}{n} \sum_{i=1}^n X_{i,j} \left(\bone\left\{Y_i \leq \hat{\beta}_0^{(-i)}(\lambda, \hat{\tau}^{\text{adj.}}(\lambda)) + X_i^\top\hat{\beta}^{(-i)}(\lambda, \hat{\tau}^{\text{adj.}}(\lambda)) \right\} - \tau\right) \right|}{ \frac{1}{n} \sum_{i=1}^n |X_{i,j}|},
    \end{equation}
    as the regularization level that minimizes the leave-one-out multiaccuracy error (see Section \ref{sec:empirics_metrics} for a detailed explanation of this error metric). 
\end{enumerate}
As above, in our experiments we implement the first step of this procedure using binary search. This gives us the final adjusted quantile estimate,
\[
\hat{q}_{\text{level-reg.}}(X_{n+1}) = \hat{\beta}_0(\hat{\lambda},\hat{\tau}^{\text{adj.}}(\hat{\lambda})) + X_{n+1}^\top \hat{\beta}(\hat{\lambda},\hat{\tau}^{\text{adj.}}(\hat{\lambda})).
\]

\begin{figure}[ht]
    \centering
    \includegraphics[width=\textwidth]{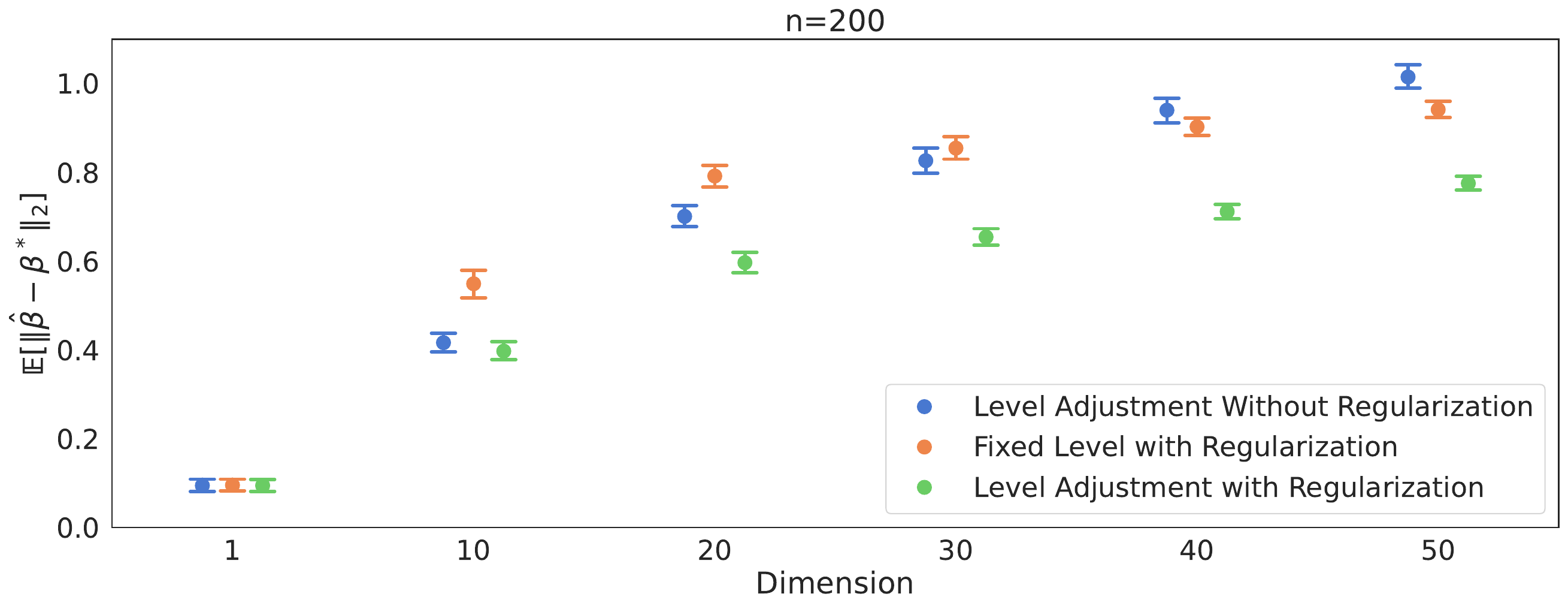}
    \caption{Average coefficient estimation error of quantile regression fit with an adjusted level (blue), adjusted regularization (orange), and a joint level and regularization adjustment (green) as the dimension of the data varies. Data for these experiments are sampled as in Figure \ref{fig:level_tuning} and the target miscoverage is set as $1-\tau = 0.9$. Dots and error bars show estimated means and 95\% confidence intervals from 100 trials. All regularization levels are chosen from the grid $n^{-1}\Lambda = \{0,0.005,0.01,\dots,0.1\}$.}
    \label{fig:level_reg_accuracy}
\end{figure}

Before moving on, it is worthwhile to ask if level-tuning is necessary or if coverage could be more easily obtained by simply holding $\tau^{\text{adj.}} = \tau$ fixed and adjusting the regularization alone. Empirically, we find that while such a strategy is feasible, it typically leads to over regularization. To illustrate this, Figure \ref{fig:level_reg_accuracy} compares the estimation error of $\hat{\beta}(\hat{\tau}^{\text{adj.}})$ and $\hat{\beta}(\hat{\lambda},\hat{\tau}^{\text{adj.}}(\hat{\lambda}))$ against that of $\hat{\beta}(\hat{\lambda}_{\tau},\tau)$ where
\[
\hat{\lambda}_{\tau} = \min \left\{\lambda \in [0,\infty) : \text{LOOCov}(\lambda,\tau) \geq \tau-1/n \right\},
\]
denotes the smallest regularization level that obtains a leave-one-out coverage of at least $\tau-1/n$. Data for this experiment are sampled from a well-specified Gaussian linear model and we target a coverage level of $\tau = 0.9$. We find that joint regularization and level tuning provides the smallest estimation error uniformly across all aspect ratios. As a result, we will prefer this method in the sections that follow and omit further investigation of sole regularization adjustment.

\subsection{Additive adjustment}

The second method we will consider is applying an additive adjustment to the quantile estimate. One way to implement such an adjustment would be to fit the parameters $(\hat{\beta}_0,\hat{\beta})$ using a standard quantile regression and then, at prediction time, output the corrected estimate $c + \hat{\beta}_0 + X_{n+1}^\top \hat{\beta}$ for some constant $c \in \mmr$. This approach has been previously considered by \citet{Romano2019} under the name (split) conformalized quantile regression. They propose to fit the parameter $c$ using a held out subset of the training data that is not used in the quantile regression. In high-dimensional problems where data is scarce, withholding data from the initial regression may lead to a considerable drop in efficiency. In the following section, we will develop a computationally efficient leave-one-out cross-validation procedure that facilitates accurate parameter tuning without data splitting. To leverage that theory here, we now introduce an alternative method for computing an additive adjustment. 

\begin{figure}[ht]
    \centering
    \includegraphics[width=\textwidth]{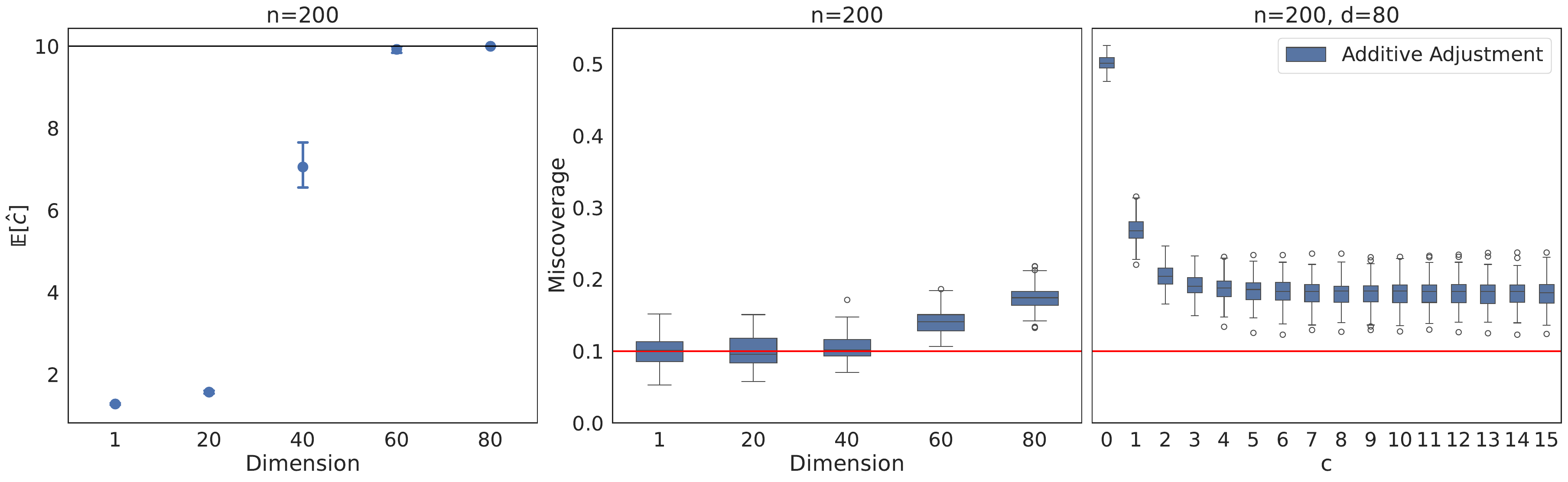}
    \caption{Empirical estimate of the mean selected value of $\hat{c}$ (left panel), realized miscoverage for varying dimension (center panel), and realized miscoverage as $c$ varies (right panel) of the unregularized additive adjustment. Data for this experiment are sampled from the Gaussian linear model $Y_i = X_i^\top\tilde{\beta} + \epsilon_i$ with $X_i \sim \mathcal{N}(0,I_d)$, $\epsilon_i \sim \mathcal{N}(0,1)$, and $\epsilon_i \independent X_i$.  Dots and error bars in the left panel show estimated means and 95\% confidence intervals taken over 100 trials where in each trial the population coefficients are sampled as $\tilde{\beta} \sim \mathcal{N}(0,I_d/d)$. Boxplots in the center and right panel show the empirical distribution of the training-conditional miscoverage evaluated over the same 100 trials where in each trial the miscoverage is estimated on a test set of size 2000. The black line in the left panel shows the maximum allowable value for $\hat{c}$, while red lines in the center and right panel show the target miscoverage of $1-\tau = 0.1$.}
    \label{fig:additive_adjustment}
\end{figure}

For any $c \in \mmr$, let $\hat{\beta}^c$ denote the coefficients fit in the intercept-less quantile regression,
\begin{equation}\label{eq:add_adj_no_reg_params}
\hat{\beta}^c = \underset{\beta \in \mmr^d}{\text{argmin}} \sum_{i=1}^n \ell_{\tau}(Y_i - c - X_i^\top\beta).
\end{equation}
Let $\hat{\beta}^{c,(-i)}$ denote the corresponding coefficients obtained when the $i_{\text{th}}$ sample is excluded from the fit. Similar to the previous section, one reasonable proposal is to select the adjustment
\begin{equation}\label{eq:sole_add_adj}
\hat{c} = \underset{c \in C}{\text{argmin}} \left| \frac{1}{n} \sum_{i=1}^n \bone\left\{ Y_i \leq c + X_i^\top \hat{\beta}^{c,(-i)} \right\}  -\ \tau\right|,
\end{equation}
that provides the smallest leave-one-out coverage gap over some appropriate set of candidate values $C$. This would then give us the adjusted quantile estimate $\hat{q}_{\text{add.-adj.}}(X_{n+1}) = c + X_{n+1}^\top \hat{\beta}^{\hat{c}}$. Unfortunately, as with the level adjustment procedure, we find that at larger aspect ratios this is insufficient to ensure coverage. Figure \ref{fig:additive_adjustment} demonstrates this on simulated data from a Gaussian linear model. For simplicity, in this experiment we restrict the set of candidate values for $c$ to $C = [-10,10]$. Similar to the previous section, we find empirically that the leave-one-out coverage is non-decreasing in $c$ and thus we solve (\ref{eq:sole_add_adj}) using binary search. We find that for $d/n \geq  0.3$ this method almost always selects the maximum value of $\hat{c} = 10$ (left panel) and, despite selecting such a large value, still undercovers (center panel). This issue cannot be alleviated by increasing the cap on $\hat{c}$ as larger values do not change the coverage (right panel). 

To overcome this shortcoming, we will once again add regularization to the regression. Let 
\[
\hat{\beta}^{\lambda,c} = \underset{\beta \in \mmr^d}{\text{argmin}} \sum_{i=1}^n \ell_{\tau}(Y_i - c - X_i^\top \beta) + \lambda \|\beta\|_2^2,
\]
denote the coefficients fit with regularization level $\lambda$ and additive adjustment $c$, and $\hat{\beta}^{\lambda,c,(-i)}$ denote the corresponding coefficients obtained when the $i_{\text{th}}$ sample is excluded from the fit. Let $\text{LOOCov}^{\text{add}}(\lambda,c) = \frac{1}{n} \sum_{i=1}^n \bone\{Y_i \leq c + X_i^\top\hat{\beta}^{\lambda,c,(-i)}\}$ denote the leave-one-out coverage. As above, we search for a pair $(\lambda,c)$ that obtains the desired leave-one-out coverage while minimizing multiaccuracy error. Namely, we fix a grid $\Lambda$ of possible values for $\lambda$ and consider the two-step procedure:
\begin{enumerate}
    \item For $\lambda \in \Lambda$ define
    \[
    \hat{c}(\lambda) = \underset{c \in C}{\text{argmin}} \left|\text{LOOCov}^{\text{add}}(\lambda, c) - \tau \right|,
    \]
    as the additive adjustment that gives the smallest leave-one-out coverage error. 
    \item Let $\Lambda_{\tau} = \{\lambda \in \Lambda : \left|\text{LOOCov}^{\text{add}}(\lambda, \hat{c}(\lambda)) - \tau \right| \leq 1/n\}$ denote the set of regularization levels that provide a leave-one-out coverage of approximately $\tau$ and
    \begin{equation}\label{eq:lambda_hat_add}
    \hat{\lambda} = \underset{\lambda \in \Lambda_{\tau}}{\text{argmin}}  \max_{j \in \{1,\dots,d\}} \frac{\left|\frac{1}{n} \sum_{i=1}^n X_{i,j} \left(\bone\left\{Y_i \leq \hat{c}(\lambda) + X_i^\top\hat{\beta}^{\lambda, \hat{c}(\lambda),(-i)}\right\} - \tau\right) \right|}{ \frac{1}{n} \sum_{i=1}^n |X_{i,j}|},
    \end{equation}
    as the regularization level that minimizes the leave-one-out multiaccuracy error. 
\end{enumerate}
As above, in our experiments step one of this procedure is computed using binary search. This gives us the final quantile adjustment,
\[
\hat{q}_{\text{add.-reg.}}(X_{n+1})  = \hat{c}(\hat{\lambda}
) + X_{n+1}^\top\hat{\beta}^{\hat{c}(\hat{\lambda}),\hat{\lambda}}.
\]

\subsection{Fixed dual thresholding}\label{sec:dual_thresh}

The final method we will consider is a derandomized variant of the full conformal quantile regression procedure proposed in \citet{GCC2025}. Unlike the previous two methods which used leave-one-out estimates to adjust the quantile fit, \citet{GCC2025} instead propose to augment the regression with an imputed guess for the test point. Concretely, they consider unpenalized regressions of the form
\begin{equation}\label{eq:adusted_qr}
(\hat{\beta}^{\text{adj.},y}_0,\hat{\beta}^{\text{adj.},y}) = \underset{(\beta_0,\beta) \in \mmr^{d+1}}{\text{argmin}} \sum_{i=1}^{n} \ell_{\tau}(Y_i - \beta_0 - X_i^\top\beta) + \ell_{\tau}(y - \beta_0 - X_{n+1}^\top\beta),
\end{equation}
and define the adjusted quantile estimate
\[
\hat{q}_{\textup{GCC}}(X_{n+1}) = \sup\{y : y \leq \hat{\beta}^{\text{adj.},y}_0 + X_{n+1}^T\hat{\beta}^{\text{adj.},y}\},
\]
as the maximum value of $y$ that is covered by the regression fit with $y$ in place of $Y_{n+1}$. Under no assumptions on the data beyond that they are i.i.d., this adjustment has the conservative coverage guarantee $\mmp(Y_{n+1} \leq \hat{q}_{\textup{GCC}}(X_{n+1})) \geq \tau$. 

Unfortunately, this guarantee is not typically tight and the authors find that $\hat{q}_{\textup{GCC}}(X_{n+1})$ can exhibit significant overcoverage bias in high dimensions. To further correct this estimate, they additionally introduce a smaller, randomized threshold that is constructed using the quantile regression dual. More formally, let $r_{n+1} = y- \beta_0 - X_{n+1}^\top\beta$ and $r_i = Y_i - \beta_0 - X_i^\top\beta$ for $i \in \{1,\dots,n\}$ denote a set of primal variables that are constrained to be equal to the residuals. Let $\eta \in \mmr^{n+1}$ denote the corresponding dual variables for these constraints. Then, the adjusted quantile regression (\ref{eq:adusted_qr}) can be equivalently written in its primal form as
\begin{align*}\label{}
 (\hat{\beta}_0^{\text{adj.,y}},\hat{\beta}^{\text{adj.,y}} , \hat{r}^{\text{adj.,y}}) = \underset{(\beta_0,\beta) \in \mmr^{d+1}, r \in \mmr^{n+1}}{\text{argmin}}\  & \sum_{i=1}^{n+1} \ell_{\tau}(r_i)\\
 \text{subject to } \hspace{0.5cm} & r_{n+1} = y - \beta_0 - X_{n+1}^\top\beta,\\
 & r_i = Y_i - \beta_0 - X_i^\top\beta,\  \forall i \in \{1,\dots,n\},
\end{align*}
with associated Lagrangian,
\[
L(\beta_0,\beta,r,\eta) = \sum_{i=1}^{n+1} \ell_{\tau}(r_i) + \sum_{i=1}^n \eta_i(Y_i - \beta_0 - X_i^\top \beta - r_i) + \eta_{n+1}(y-\beta_0 - X_{n+1}^\top \beta - r_{n+1}),
\]
and dual program,
\begin{align*}
\hat{\eta}^{\text{adj.},y} = & \underset{\eta \in \mmr^{n+1}}{\text{ argmax}} \sum_{i=1}^n \eta_i Y_i + \eta_{n+1} y\\
& \text{subject to } \sum_{i=1}^{n+1} \eta_i = 0,\  \sum_{i=1}^{n+1} \eta_i X_i = 0,\ -(1-\tau) \preceq \eta \preceq \tau.
\end{align*}
To connect the dual variables to coverage, note that differentiating the Lagrangian with respect to $r_{n+1}$ gives the first-order condition
\[
\hat{\eta}^{\text{adj.},y}_{n+1} \in \begin{cases}
 \{\tau\},\ & y > \hat{\beta}^{\text{adj.},y}_0 + X_{n+1}^\top \hat{\beta}^{\text{adj.},y},\\
 \{-(1-\tau)\},\ & y < \hat{\beta}^{\text{adj.},y}_0 + X_{n+1}^\top \hat{\beta}^{\text{adj.},y},\\
 [-(1-\tau),\tau],\ & y = \hat{\beta}^{\text{adj.},y}_0 + X_{n+1}^\top \hat{\beta}^{\text{adj.},y}.\\
\end{cases}
\]
This connection, along with some additional calculations, motivates the randomized quantile adjustment $\hat{q}_{\textup{GCC, rand.}}(X_{n+1}) = \sup\{y : \hat{\eta}_{n+1}^y \leq U\}$, where $U \sim \text{Unif}(-(1-\tau),\tau)$ is uniformly distributed on the interval $[-(1-\tau),\tau]$. Crucially, this method has the desired exact coverage guarantee, $\mmp(Y_{n+1} \leq \hat{q}_{\textup{GCC, rand.}}(X_{n+1})) = \tau$.

\begin{figure}[ht]
    \centering
    \includegraphics[width=\textwidth]{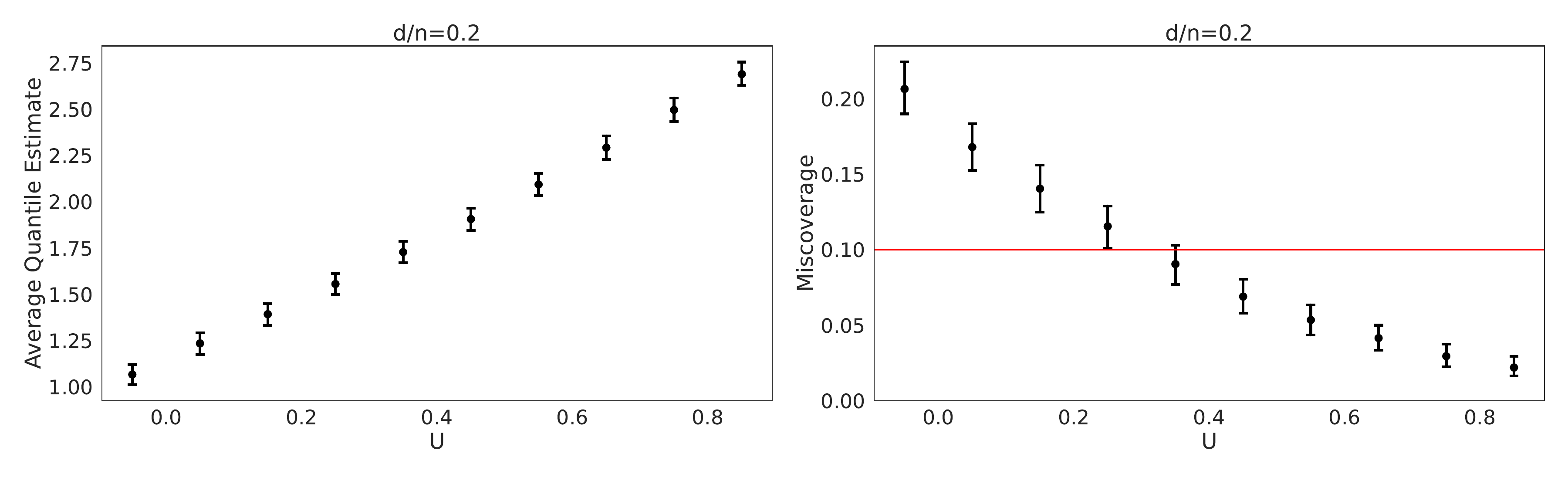}
    \caption{Empirical estimates of the average adjusted quantile (left panel) and miscoverage (right panel) of the randomized method of \citet{GCC2025} conditional on the cutoff $U$. Data for this experiment are sampled from the Gaussian linear model $Y_i = X_i^\top\beta + \epsilon_i$ where $X_i \sim \mathcal{N}(0,I_d)$ and $\epsilon_i \sim \mathcal{N}(0,1)$ with $X_i \independent \epsilon_i$. Dots and error bars show means and 95\% confidence intervals obtained over 2000 samples of the combined training and test dataset $\{(X_i,Y_i)\}_{i=1}^{n+1}$ where in each sample the population coefficients are generated as $\tilde{\beta} \sim \mathcal{N}(0,I_d/d)$. Throughout, we set $d=40$ and $n=200$. The red line in the right panel indicates the target miscoverage level of $1-\tau = 0.1$.} %and black dotted lines show a third degree polynomial fit to the plotted points.  }
    \label{fig:gcc_randomization_dependence}
\end{figure}

As discussed in the introduction, this method has two shortcomings. The first is that to compute the cutoff we need to evaluate the solution path of $\hat{\eta}_{n+1}^y$ as $y$ varies. Although \citet{GCC2025} give some strategies for accomplishing this in an efficient manner, their methods still typically require additional computational time of at least $\Omega(d^3)$\footnote{This comes from the cost of inverting a $d \times d$ matrix, which we shorthand as requiring $\Omega(d^3)$ time, although some algorithms with faster scaling are known.} per test point. Adapting their methods to penalized regressions is more challenging and requires even higher computational complexity. This contrasts sharply with both standard quantile regression and the level and additive adjustment methods proposed in the previous sections which can issue predictions quickly at the low cost of computing a single inner product of the form $X_{n+1}^\top \hat{\beta}$.  The second major shortcoming of $\hat{q}_{\textup{GCC, rand.}}(X_{n+1})$ is that its value depends heavily on the randomized choice of $U$. Figure \ref{fig:gcc_randomization_dependence} displays estimates of the average conditional cutoff, $\mme[\hat{q}_{\textup{GCC, rand.}}(X_{n+1}) \mid U]$ and miscoverage, $\mmp(Y_{n+1} > \hat{q}_{\textup{GCC, rand.}}(X_{n+1}) \mid U)$ as $U$ varies on data sampled from a Gaussian linear model with $d/n = 0.2$. We see that the average cutoff can change by a factor of almost $2.5$ and the miscoverage can vary by over $0.5-2$ times the target level depending on the sampled value of $U$. As an aside, we note that the exact magnitude of these values depends directly on the aspect ratio. In the classical case where $d/n \to 0$ the randomization disappears and the method (asymptotically) produces a fixed cutoff, while larger aspect ratios produce greater variability.

The methods proposed in the previous sections correct both of the above shortcomings. As an alternative approach to contrast with these procedures, we now also propose an unrandomized dual thresholding method that adjusts $\hat{q}_{\text{GCC, rand.}}(X_{n+1})$ by replacing the random cutoff $U$ with a fixed threshold. This method has the advantage of being deterministic, but still suffers similar computational complexity to $\hat{q}_{\text{GCC, rand.}}(X_{n+1})$.

To define our adjustment, let 
\[
\hat{q}_{\text{dual thresh.}}(X_{n+1}; t) = \sup\left\{y : \hat{\eta}_{n+1}^{\text{adj.}, y} \leq t\right\},
\]
denote the quantile estimate obtained when we threshold the dual variable at level $t$. The coverage of this estimate is given by 
\[
\mmp(Y_{n+1} \leq \hat{q}_{\text{dual thresh.}}(X_{n+1}; t)) = \mmp(\hat{\eta}_{n+1}^{\text{adj.}, Y_{n+1}} \leq t).
\]
In particular, to obtain the target coverage level of $\tau$ we see that we should set $t$ as the $\tau$ quantile of $\hat{\eta}^{\text{adj.}, Y_{n+1}}$. Since this quantity is unknown, we replace it with the empirical estimate
\begin{equation}\label{eq:quantile_plug_in}
\hat{t} = \text{Quantile}\left( \tau, \frac{1}{n} \sum_{i=1}^n \delta_{\hat{\eta}_i} \right),
\end{equation}
where $\hat{\eta}$ denotes the dual variables fit using just the training data $\{(X_i,Y_i)\}_{i=1}^n$ and  $\text{Quantile}(\tau,P)$ denotes the $\tau$ quantile of the distribution $P$. This gives us the final adjusted quantile estimate
\[
\hat{q}_{\text{fixed thresh.}}\left(X_{n+1}\right) = \hat{q}_{\text{dual thresh.}}\left(X_{n+1}; \hat{t}\right).
\]

Our theoretical results stated in Theorem \ref{thm:main_asym_consistency} and Corollary \ref{corr:quantile_consistency} show that the quantile estimate given in (\ref{eq:quantile_plug_in}) is consistent and thus that this method obtains the desired coverage. Notably, the proofs of these results are quite different from the approach taken in \citet{GCC2025} to derive a coverage guarantee for $\hat{q}_{\text{GCC, rand.}}(X_{n+1})$. While the arguments given there center on the exchangeability of the fitted dual variables, here we will derive a set of asymptotic consistency results in the proportional asymptotic regime that more directly elucidate the behaviour of our estimates in high dimensions.

\subsection{Simulated example}\label{sec:simulated_example}

We conclude this section with a brief simulated example demonstrating that all of the methods proposed above give accurate coverage in high dimensions. More extensive comparisons that evaluate these procedures across a number of additional metrics are given in Section \ref{sec:empirics}. Similar to Figure \ref{fig:qr_cov} from the introduction, we generate data from the Gaussian linear model $Y_i = X_i^\top \tilde{\beta} + \epsilon_i$ with $X_i \sim \mathcal{N}(0,I_d)$, $\epsilon_i \sim \mathcal{N}(0,1)$, and $\epsilon_i \independent X_i$. Figure \ref{fig:simulated_example} shows the resulting coverage of both our methods and that of standard quantile regression. We see that all three of our methods offer robust coverage irrespective of the dimension (left panel). As anticipated by the theory presented below in Section \ref{sec:theory}, this coverage becomes more tightly concentrated on the target level as $n$ and $d$ increase (right panel).

\begin{figure}[ht]
    \centering
    \includegraphics[width=\textwidth]{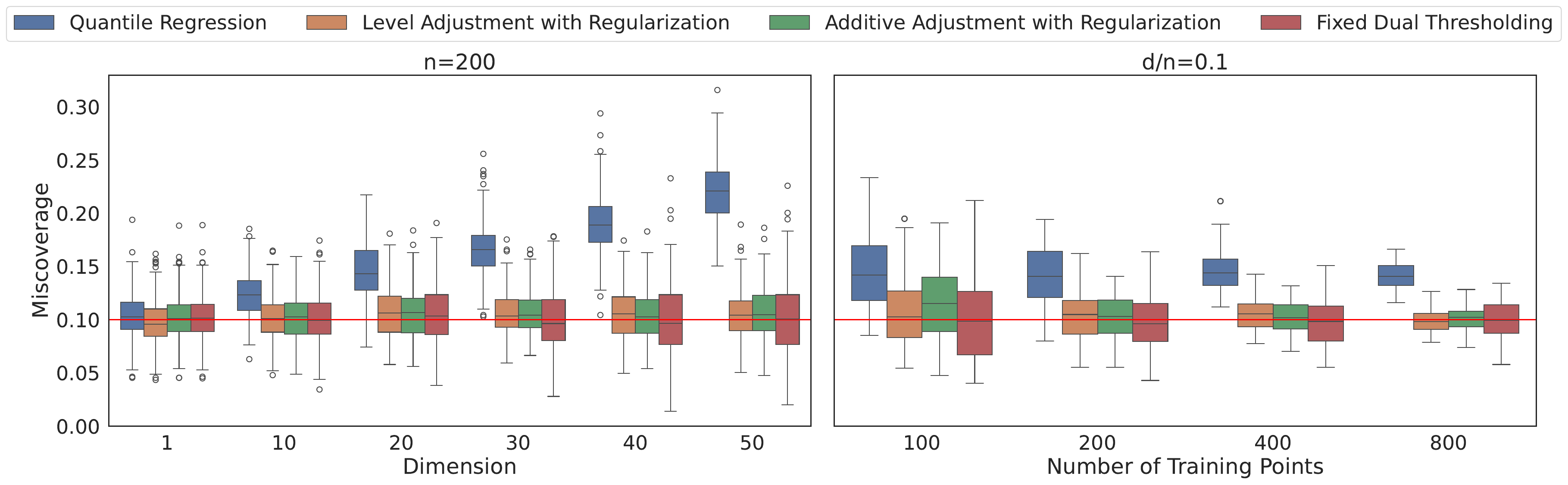}
    \caption{Empirical distribution of the training-conditional miscoverage of quantile regression (blue) and our level adjustment (orange), additive adjustment (green), and fixed dual thresholding (red) methods. The left panel shows results obtained with varying dimension and a fixed sample size of $n=200$, while the right panel varies $n$ and $d$ together at a fixed aspect ratio of $d/n=0.1$. Boxplots show results from 200 trials where in each trial the miscoverage is evaluated empirically over a test set of size 2000 and the population coefficients are sampled as $\tilde{\beta} \sim \mathcal{N}(0,I_d/d)$. The additive adjustment procedure is implemented with range $C = [-10,10]$ for $c$ and all regularization levels are chosen from the grid $n^{-1}\Lambda = \{0,0.005,0.01,\dots,0.1\}$.}
    \label{fig:simulated_example}
\end{figure}

\section{Efficient leave-one-out cross-validation}\label{sec:loo_cv}

Two of the methods developed in the previous section use leave-one-out cross-validation to select their hyperparameters. The typical implementation of these procedures requires fitting $n$ quantile regressions across a range of hyperparameter settings. In this section, we derive a connection between the leave-one-out coverage and the quantile regression dual variables that allows us to obtain all $n$ leave-one-out coverage indicators with just a single fit. Hyperparameter tuning can then be performed at the cost of just a few regression fits across different parameter values.

To introduce this method, we define a few pieces of additional notation. Throughout this section, we consider quantile regressions of the form
\begin{equation}\label{eq:loo_main_qr}
\hat{w} = \underset{w \in \mmr^p}{\text{argmin}} \sum_{i=1}^n \ell_{\tau}(\tilde{Y}_i - \tilde{X}_i^\top \hat{w}) + \mathcal{R}(w).
\end{equation}
Note that unlike the previous sections, here we have chosen to omit an explicit intercept parameter. This allows us to unify the notation to encompass both our  level-based adjustment, in which $\tilde{Y}_i = Y_i$, $\tilde{X}_i = (1,X_i)$, and $p = d+1$, and our additive adjustment, in which $\tilde{Y}_i = Y_i - c$, $\tilde{X}_i = X_i$, and $p = d$. Following the same steps as in Section \ref{sec:dual_thresh}, a useful dual for this regression can be obtained by defining the additional primal variables $r_i = \tilde{Y}_i - \tilde{X}_i^\top w$  for $i \in \{1,\dots,n\}$ and corresponding dual variables $\eta \in \mmr^{n}$ for these constraints.  This gives the dual program
\begin{align*}
 \hat{\eta} =\ & \underset{\eta \in \mmr^n}{\text{argmax}}  \sum_{i=1}^n \eta_i \tilde{Y}_i - \mathcal{R}^*\left( \sum_{i=1}^n \eta_i\tilde{X}_i \right) \numberthis \label{eq:general_dual}\\
& \text{subject to } -(1-\tau) \preceq \eta_i \preceq \tau,
\end{align*}
where $\mathcal{R}^*$ denotes the convex conjugate of $\mathcal{R}$. Finally, we let $\hat{w}^{(-i)}$ denote the corresponding primal solution when the $i_{\text{th}}$ sample is omitted from the fit. In what follows, we will assume without further comment that all of the primal and dual solutions defined above always exist and satisfy strong duality. For common regularization functions (e.g., $\mathcal{R}(w) = \lambda \|w\|_2^2$ or $\mathcal{R}(w) = \lambda\|w\|_1$ for $\lambda \geq 0$), it is straightforward to verify that the primal and dual programs satisfy Slater's condition and thus that this assumption holds (cf.~Section 5.3.2 of \citet{Boyd2004}). 

Our first result derives a general connection between the leave-one-out coverage and the sign of the dual variables. 

\begin{proposition}\label{prop:initial_loo_comparison}
    Assume that $\mathcal{R}$  is convex. Then, all dual solutions $\hat{\eta}$ and leave-one-out primal solutions $\hat{w}^{(-i)}$  satisfy the conditions
    \[
    \tilde{Y}_i < \tilde{X}_i^\top \hat{w}^{(-i)} \implies \hat{\eta}_i \leq 0,
    \]
    and
    \[
    \tilde{Y}_i > \tilde{X}_i^\top \hat{w}^{(-i)} \implies \hat{\eta}_i \geq 0.
    \]
\end{proposition}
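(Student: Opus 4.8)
The plan is to ``profile out'' every dual coordinate except the $i$-th one and then exploit concavity. Starting from the dual program~(\ref{eq:general_dual}), define, for $t\in[-(1-\tau),\tau]$, the partial dual value
\[
g(t) = \sup\Big\{ \sum_{j=1}^n \eta_j\tilde Y_j - \mathcal{R}^*\Big(\sum_{j=1}^n\eta_j\tilde X_j\Big) : \eta_i = t,\ -(1-\tau)\le\eta_j\le\tau \ \forall\, j\neq i \Big\}.
\]
Since $\mathcal{R}$ is convex (and proper), $\mathcal{R}^*$ is a proper convex function, so the dual objective is an affine function of $\eta$ minus the composition of $\mathcal{R}^*$ with a linear map of $\eta$, hence jointly concave; consequently $g$ is concave on $[-(1-\tau),\tau]$. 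Moreover, optimality of $\hat\eta$ for~(\ref{eq:general_dual}) implies that $\hat\eta_i$ maximizes $g$ over that interval.

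The crucial observation is that $g(0)$ is exactly the optimal value of the dual of the leave-one-out regression: setting $\eta_i=0$ deletes every occurrence of $(\tilde X_i,\tilde Y_i)$ from both the objective and the constraints of~(\ref{eq:general_dual}), leaving verbatim the dual of $\min_{w}\sum_{j\neq i}\ell_\tau(\tilde Y_j-\tilde X_j^\top w)+\mathcal{R}(w)$. Under the maintained strong-duality assumption, a leave-one-out dual optimum $\hat\eta^{(-i)}$ attaining $g(0)$ together with $\hat w^{(-i)}$ satisfies the stationarity relation $\sum_{j\neq i}\hat\eta^{(-i)}_j\tilde X_j\in\partial\mathcal{R}(\hat w^{(-i)})$, equivalently $\hat w^{(-i)}\in\partial\mathcal{R}^*\big(\sum_{j\neq i}\hat\eta^{(-i)}_j\tilde X_j\big)$. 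An envelope/Danskin argument applied to $g$ at $t=0$ then shows that the partial superdifferential of the dual objective in the $\eta_i$-direction, evaluated at this inner optimum, is contained in $\partial g(0)$; that superdifferential contains $\tilde Y_i - \tilde X_i^\top w$ for every $w\in\partial\mathcal{R}^*\big(\sum_{j\neq i}\hat\eta^{(-i)}_j\tilde X_j\big)$, and in particular $\tilde Y_i-\tilde X_i^\top\hat w^{(-i)}$ is a supergradient of $g$ at $0$.

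The conclusion is then immediate from concavity. If $\tilde Y_i<\tilde X_i^\top\hat w^{(-i)}$ this supergradient is negative, so $g(t)\le g(0)+(\tilde Y_i-\tilde X_i^\top\hat w^{(-i)})\,t<g(0)$ for every $t>0$; hence no maximizer of $g$ on $[-(1-\tau),\tau]$ is positive, and in particular $\hat\eta_i\le 0$. The case $\tilde Y_i>\tilde X_i^\top\hat w^{(-i)}$ is symmetric and forces $\hat\eta_i\ge 0$ (and the case $\tilde Y_i=\tilde X_i^\top\hat w^{(-i)}$ needs no argument).

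The step that requires the most care is the envelope argument: $\mathcal{R}^*$ need not be differentiable and the inner maximizer attaining $g(0)$ need not be unique, so one must verify that the supergradient formula is valid for the \emph{particular} solution $\hat w^{(-i)}$ (it is, precisely because $\hat w^{(-i)}\in\partial\mathcal{R}^*(\sum_{j\neq i}\hat\eta^{(-i)}_j\tilde X_j)$ by strong duality), and that the standing existence and strong-duality assumptions genuinely let us identify the $\{\eta_i=0\}$ slice of~(\ref{eq:general_dual}) with the leave-one-out dual and pass back to its primal solution. A fully self-contained version can avoid invoking Danskin: write the supergradient inequality for the full dual objective at the padded point obtained from $\hat\eta^{(-i)}$ by inserting $\eta_i=0$, and use first-order optimality of $\hat\eta^{(-i)}$ in the remaining coordinates to discard the cross terms, leaving exactly $g(t)\le g(0)+(\tilde Y_i-\tilde X_i^\top\hat w^{(-i)})\,t$.
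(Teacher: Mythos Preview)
Your proof is correct but follows a genuinely different route from the paper's. The paper argues via the one-parameter family of dual problems obtained by varying $\tilde Y_i$: it first shows (Lemma~\ref{lem:dual_at_0_to_interpolation}) that replacing $\tilde Y_i$ by the leave-one-out prediction $\tilde X_i^\top\hat w^{(-i)}$ yields a dual solution with $i$-th coordinate equal to $0$, and then invokes a monotonicity result from \citet{GCC2025} (Lemma~\ref{lem:monotone_dual}) stating that $y\mapsto\hat\eta_i^{\tilde Y_i\to y}$ is nondecreasing to push $\hat\eta_i$ to the correct side of $0$. Your argument instead stays inside the single dual program and profiles in the $\eta_i$-direction: you identify the slice $\{\eta_i=0\}$ with the leave-one-out dual, exhibit $\tilde Y_i-\tilde X_i^\top\hat w^{(-i)}$ as a supergradient of the concave value function $g$ at $0$, and conclude by the supergradient inequality.

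What each buys: the paper's proof is very short once Lemmas~\ref{lem:dual_at_0_to_interpolation} and~\ref{lem:monotone_dual} are in hand, and the monotonicity lemma is of independent interest elsewhere in the paper. Your approach is more self-contained (no external monotonicity result) and arguably exposes the mechanism more directly. Your ``discard the cross terms'' step can be made fully explicit by noting that for $j\ne i$ the KKT relation $\hat\eta_j^{(-i)}\in\partial\ell_\tau(\tilde Y_j-\tilde X_j^\top\hat w^{(-i)})$ forces $(\tilde Y_j-\tilde X_j^\top\hat w^{(-i)})(\eta_j-\hat\eta_j^{(-i)})\le0$ for every feasible $\eta_j$; alternatively, one can bypass Danskin entirely by bounding $-\mathcal R^*(z)\le -z^\top\hat w^{(-i)}+\mathcal R(\hat w^{(-i)})$ and $\eta_j r\le\ell_\tau(r)$, then invoking strong duality for the leave-one-out problem to recognize the resulting upper bound as $g(0)+t(\tilde Y_i-\tilde X_i^\top\hat w^{(-i)})$.
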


Now, recall that our goal is to compute the leave-one-out coverage, $\frac{1}{n} \sum_{i=1}^n \bone\{\tilde{Y}_i \leq \tilde{X}_i^\top \hat{w}^{(-i)}\}$. The above proposition suggests that this quantity should be comparable to $\frac{1}{n} \sum_{i=1}^n \bone\{\hat{\eta}_i \leq 0\}$. Unfortunately however, deriving an exact equivalence between these two quantities is not possible due to the ambiguity around the edge cases $\tilde{Y}_i = \tilde{X}_i^\top \hat{w}^{(-i)}$ and $\hat{\eta}_i = 0$. We are not aware of any simple method for resolving these cases in full generality. One of the key difficulties is that without additional assumptions both the primal and dual solutions may not be unique and at these edge cases the coverage can vary depending on which solution we select. The following example illustrates one such instance where this occurs. 

\begin{example}\label{ex:x_cont_necessary}
Consider fitting an intercept only quantile regression with features $\tilde{X}_i = 1$ and level $\tau = 1/2$ to find the median of the three data points $(\tilde{Y}_1,\tilde{Y}_2,\tilde{Y}_3)$. For simplicity, assume that $\tilde{Y}_1 < \tilde{Y}_2 < \tilde{Y}_3$. The primal solution is $\hat{w} = \tilde{Y}_2$ with corresponding dual solution $\hat{\eta} = (-1/2,0,1/2)$. Critically, we have that $\hat{\eta}_2 = 0$. Now, consider the leave-one-out problem when $(1,\tilde{Y}_2)$ is omitted. Then, the median is any point $\hat{w}^{(-2)} \in [\tilde{Y}_1,\tilde{Y}_3]$ and it is ambiguous whether $\tilde{Y}_2$ is covered. 
\end{example}

We will now introduce two different techniques for modifying the regression to avoid the above ambiguity. For simplicity, we focus on cases where $\mathcal{R}$ is a quadratic regularizer, although we expect similar results to hold for other choices. Our first method is to perturb the covariates by adding a small amount of independent noise to each of their values. The magnitude of this noise is not critical and can be made arbitrarily small such that it has a vanishing impact on the quantile regression objective. Our insight is that even a small amount of noise is sufficient to push the dual solutions away from zero and, correspondingly, to enforce a unique value for the leave-one-out coverage. To illustrate this, the following demonstrates how added noise removes the ambiguity observed in Example \ref{ex:x_cont_necessary}. 

\begin{example}\label{ex:effect of x_noise}
    Consider adding noise to the intercept feature in Example \ref{ex:x_cont_necessary}, i.e., consider fitting a quantile regression at level $\tau = 1/2$ with features $\tilde{X}_i = 1 + \xi_i$, where $\{\xi_i\}_{i=1}^3$ are i.i.d. continuously distributed random variables independent of $\{\tilde{Y}_i\}_{i=1}^3$. As before, assume for simplicity that $\tilde{Y}_1 < \tilde{Y}_2 < \tilde{Y}_3$. For sufficiently small values of $(\xi_1, \xi_2,\xi_3)$, the dual solution is uniquely specified as $\hat{\eta} = (-1/2,\frac{\xi_1 - \xi_3}{2(1+\xi_2)} ,1/2)$ and the leave-one-out primal solution with point $(1+\xi_2,\tilde{Y}_2)$ omitted is (with probability one) unique and given by
    \[
    \hat{w}^{(-2)} = \begin{cases}
         \frac{\tilde{Y}_1}{1+\xi_1}, \text{ if $|1+\xi_1| > |1+\xi_3|$},\\
         \frac{\tilde{Y}_3}{1+\xi_3}, \text{ if $|1+\xi_1| < |1+\xi_3|$}.
    \end{cases}
    \]
    For $(\xi_1,\xi_2,\xi_3)$ sufficiently small, we see that with probability one $\tilde{Y}_2 \neq (1+\xi_2)\hat{w}^{(-2)}$ and thus there is no ambiguity in the coverage of the leave-one-out solution.
\end{example}

The second method we will consider is to add non-zero $L_2$ regularization to all of the primal variables. Similar to the added noise, the magnitude of this regularization is not critical and, in particular, can be taken to be vanishingly small such that it has almost no impact on the regression. The only important consideration is that the regularization makes the fitted leave-one-out solutions unique and thus removes ambiguity in the coverage.

Assumptions \ref{assump:x_pertubation} and \ref{assump:reg_condition} give more formal statements of our two approaches for ensuring leave-one-out uniqueness. We note that both of these assumptions require that the distribution of $\tilde{Y}_i \mid \tilde{X}_i$ is continuous. This can always be guaranteed by adding a small amount of noise to $\tilde{Y}_i$. The main result of this section is stated in Theorem \ref{thm:loo_equivalence}, which shows that these assumptions are sufficient to ensure a one-to-one equivalence between the leave-one-out coverage and the signs of the dual variables.

\begin{assumption}\label{assump:x_pertubation}
    The distribution of $\tilde{Y}_i \mid \tilde{X}_i$ is continuous. Moreover, the regularization can be written as $\mathcal{R}(w) = \sum_{j=1}^p \lambda_j w_j^2$ for some non-negative constants $\lambda_1,\dots,\lambda_p \geq 0$ and the covariates can be written as $\tilde{X}_i = Z_i + \xi_i$ where $\xi_i \independent (Z_i,Y_i)$ has independent, continuously distributed entries. Finally, we have that $p < n$.
\end{assumption}

\begin{assumption}\label{assump:reg_condition}
    The distribution of $\tilde{Y}_i \mid \tilde{X}_i$ is continuous. Moreover, the regularization can be written as $\mathcal{R}(w) = \sum_{j=1}^p \lambda_j w_j^2$ for some positive constants $\lambda_1,\dots,\lambda_p > 0$. 
\end{assumption}

\begin{theorem}\label{thm:loo_equivalence}
Assume that $\mathcal{R}$ is convex, $\{(\tilde{X}_i,\tilde{Y}_i)\}_{i=1}^n$ are i.i.d., and that the conditions of either Assumption \ref{assump:x_pertubation} or \ref{assump:reg_condition} are satisfied. Then, with probability one we have that for all $i \in \{1,\dots,n\}$ either all dual solutions satisfy $\hat{\eta}_i < 0$ or all dual solutions satisfy $\hat{\eta}_i > 0$. Similarly, with probability one either all leave-one-out primal solutions satisfy $\tilde{Y}_i <  \tilde{X}_i^\top \hat{w}^{(-i)}$ or all leave-one-out primal solutions satisfy $\tilde{Y}_i > \tilde{X}^\top_i\hat{w}^{(-i)} $. Finally, letting $\hat{\eta}$ and $\{\hat{w}^{(-i)}\}_{i=1}^n$ denote any such solutions we have that
\[
 \frac{1}{n} \sum_{i=1}^n \bone\{\hat{\eta}_i \leq 0\} \stackrel{a.s.}{=} \frac{1}{n} \sum_{i=1}^n \bone\{\tilde{Y}_i \leq \tilde{X}_i^\top \hat{w}^{(-i)}\}.
\]
\end{theorem}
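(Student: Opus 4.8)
The plan is to deduce everything from Proposition \ref{prop:initial_loo_comparison} once we establish two ``generic non-degeneracy'' facts that hold almost surely and simultaneously for every $i$: (A) $\tilde{Y}_i \neq \tilde{X}_i^\top \hat{w}^{(-i)}$ for every leave-one-out primal solution, and (B) $\hat{\eta}_i \neq 0$ for every dual solution. Given (A) and (B), the sign-constancy claims are immediate from convexity: the set of leave-one-out solutions and the set of dual solutions are convex, and $w \mapsto \tilde{X}_i^\top w$, $\eta \mapsto \eta_i$ are linear, so their images are intervals; by (A) the first avoids $\tilde{Y}_i$ and by (B) the second avoids $0$, hence each has constant sign. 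The displayed identity then follows termwise: fixing an outcome and an index $i$, if the (common) sign of $\tilde{Y}_i - \tilde{X}_i^\top \hat{w}^{(-i)}$ is negative then Proposition \ref{prop:initial_loo_comparison} forces $\hat{\eta}_i \leq 0$, which together with (B) gives $\hat{\eta}_i < 0$, so $\bone\{\hat{\eta}_i \leq 0\} = 1 = \bone\{\tilde{Y}_i \leq \tilde{X}_i^\top \hat{w}^{(-i)}\}$; the positive case is symmetric. Averaging over $i$ proves the theorem, so it remains to prove (A) and (B).

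For (A), observe that the leave-one-out solution set is a function of $\{(\tilde{X}_j,\tilde{Y}_j)\}_{j\neq i}$ only, hence is conditionally independent of $\tilde{Y}_i$ given $\tilde{X}_i$ and the remaining data; consequently $J_i := \{\tilde{X}_i^\top w : w \text{ a leave-one-out solution}\}$ is a compact interval measurable with respect to this conditioning. Since $\tilde{Y}_i \mid \tilde{X}_i$ is continuous, $\mmp(\tilde{Y}_i \in J_i) = 0$ as soon as $J_i$ is almost surely a single point, which holds once the leave-one-out fit is almost surely unique. Under Assumption \ref{assump:reg_condition} this is immediate, since $\mathcal{R}(w) = \sum_j \lambda_j w_j^2$ with all $\lambda_j > 0$ makes the leave-one-out objective strictly convex. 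Under Assumption \ref{assump:x_pertubation} it is a general-position statement: because $\tilde{X}_j = Z_j + \xi_j$ with $\xi_j$ admitting a density and $p < n$ (so the leave-one-out problem has at least $p$ observations in general position), with probability one the optimal polytope of the leave-one-out problem is a single vertex, with active set of size exactly $p$ and linearly independent active covariates -- the exceptional events (a rank-deficient or too-small active set, more than $p$ interpolated observations, an optimal face of positive dimension) being finite unions of Lebesgue-null algebraic sets.

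For (B), use the fact recorded in Section \ref{sec:loo_cv} that the leave-one-out dual is exactly the full-data dual (\ref{eq:general_dual}) with the coordinate $\eta_i$ held at $0$. Suppose (the a.s.\ event established in (A)) that $\tilde{Y}_i - \tilde{X}_i^\top \hat{w}^{(-i)} > 0$ but some full-data dual optimum has $\hat{\eta}_i = 0$; then $\hat{\eta}$ is leave-one-out feasible and $\hat{\eta}^{(-i)}$ is full-data feasible, so they share the optimal value and $\hat{\eta}^{(-i)}$ is itself a full-data optimum. I then exhibit a feasible ascent direction $v$ at $\hat{\eta}^{(-i)}$: since $\eta_i = 0$ is interior to its box constraint, set $v_i > 0$ and let $v$ vanish outside $\{i\}\cup A_i$ (with $A_i$ the leave-one-out active set), choosing $(v_j)_{j\in A_i}$ so that $v_i\tilde{X}_i + \sum_{j\in A_i} v_j \tilde{X}_j = 0$ -- solvable because $\{\tilde{X}_j\}_{j\in A_i}$ spans $\mmr^p$ by (A), and feasible because the dual values $\hat{\eta}^{(-i)}_j$, $j\in A_i$, are interior. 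A short computation -- using $\nabla\mathcal{R}^*(\sum_{j\neq i}\hat{\eta}^{(-i)}_j\tilde{X}_j) = \hat{w}^{(-i)}$, the hard constraints from $\mathcal{R}^*$ annihilating $\sum_j v_j \tilde{X}_j$ in the unregularized coordinates, and the leave-one-out residuals vanishing on $A_i$ -- shows the directional derivative of the dual objective along $v$ equals $v_i(\tilde{Y}_i - \tilde{X}_i^\top \hat{w}^{(-i)}) > 0$, contradicting optimality. Hence every full-data optimum has $\hat{\eta}_i > 0$; the case $\tilde{Y}_i - \tilde{X}_i^\top \hat{w}^{(-i)} < 0$ is symmetric. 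Under Assumption \ref{assump:reg_condition}, $\mathcal{R}^*$ is smooth and finite, so no compensation is needed and one simply takes $v = e_i$.

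The main obstacle is the general-position argument under Assumption \ref{assump:x_pertubation}: one must verify that the covariate noise together with $p < n$ forces, almost surely, both uniqueness of the leave-one-out fit with a non-degenerate ($p$-element, full-rank) active set and the coincidence of the dual-interior coordinates with that active set. Each bad configuration is a lower-dimensional algebraic event and hence null, but cataloguing them and tracking their interaction with the box constraints in the dual program is the delicate part. Under Assumption \ref{assump:reg_condition}, strict convexity of the primal and smoothness of the dual make both (A) and (B) essentially routine.
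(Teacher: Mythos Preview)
Your approach is correct in outline but genuinely different from the paper's. Both routes reduce the theorem to the pair of non-degeneracy facts you call (A) and (B); the difference is which one is proved first. Under Assumption~\ref{assump:reg_condition} you and the paper do the same thing: strict convexity gives leave-one-out uniqueness, hence (A), and then (B) follows from Lemma~\ref{lem:dual_at_0_to_interpolation}. Under Assumption~\ref{assump:x_pertubation}, however, the paper attacks (B) directly and never establishes leave-one-out uniqueness: it writes the interior dual coordinates $\hat{\eta}_{I_{\text{int}}}$ (together with the unregularized primal coordinates) as the solution of an explicit block linear system, shows the coefficient matrix is almost surely invertible (Lemmas~\ref{lem:interpolated_set_size} and \ref{lem:full_rank_interp_unreg}), and then argues that the right-hand side is continuously distributed conditional on the matrix, so no entry is zero. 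Your route instead front-loads the general-position argument for (A) and then recovers (B) by an ascent direction.

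Two remarks on your Assumption~\ref{assump:x_pertubation} argument. First, your ascent step is essentially a reproof of the reverse direction of Lemma~\ref{lem:dual_at_0_to_interpolation}; once (A) is in hand, that lemma already rules out $\hat\eta_i=0$, so the ascent construction is not needed. Second, your ascent direction requires $\tilde X_i\in\mathrm{span}\{\tilde X_j:j\in A_i\}$, which you justify by saying the active set spans $\mmr^p$. In the mixed case (some $\lambda_j>0$) the interpolated set generically has size at most $|J_+^c|<p$, so it cannot span $\mmr^p$; what you actually need (and what holds, cf.\ Lemma~\ref{lem:full_rank_interp_unreg}) is that $\{\tilde X_{j,J_+^c}\}_{j\in A_i}$ spans $\mmr^{|J_+^c|}$, which suffices since only the unregularized coordinates of $\sum_k v_k\tilde X_k$ must vanish for feasibility. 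With that fix your computation goes through. The tradeoff: your route is more geometric and reuses the primal structure, but the general-position catalogue you flag as ``delicate'' is exactly what the paper's algebraic linear-system argument sidesteps.
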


In general, on real data we find that the conditions outlined in Assumptions \ref{assump:x_pertubation} and \ref{assump:reg_condition} tend to be redundant. In our experiments in Sections \ref{sec:simulated_example} and \ref{sec:empirics} we ignore these assumptions and use the dual variables to estimate the leave-one-out coverage and perform hyperparameter selection across a variety of different datasets and regularization settings that do not satisfy these conditions. In all cases, our results show that the dual estimate is accurate and facilitates the selection of hyperparameter values that yield reliable coverage. As a result, outside of rare edge cases we find that $\frac{1}{n}\sum_{i=1}^n\bone\{\hat{\eta}_i \leq 0\}$ can typically be used to estimate the leave-one-out coverage without the need to modify the data or estimation procedure.

\section{High-dimensional consistency}\label{sec:theory}

We now develop our main theoretical results establishing the high-dimensional consistency of the estimates proposed in the previous sections. Throughout, we will work in a stylized linear model with Gaussian covariates that is commonly used in work in this area (e.g., \cite{Bayati2012, Donoho2013, Thram2015, Thram2018, Dicker2016, Sur2019}). While we will not pursue this in detail, universality results derived for similar problems suggest that one should expect our results to also hold under more relaxed assumptions (e.g.~$X_i$ having i.i.d.~entries) \citep{Han2023}. This is validated by the empirical results presented in the following section which demonstrate the robustness of our methods on real datasets. 

\begin{assumption}\label{assump:data_high_dim}
    The data $\{(X_i,Y_i)\}_{i=1}^{n}$ are i.i.d.~and distributed as $Y_i = X_i^\top \tilde{\beta} + \epsilon_i$ with $X_i \sim \mathcal{N}(0,I_d)$, $\epsilon_i \sim P_{\epsilon}$, and $\epsilon_i \independent X_i$. Moreover, the error distribution $P_{\epsilon}$ is continuous, mean zero, and has at least two bounded moments. Additionally, the density of $P_{\epsilon}$ is bounded, continuous, and positive on $\mmr$. Finally, the population coefficients are themselves random and sampled as $(\sqrt{d}\tilde{\beta}_j)_{j=1}^d \overset{i.i.d.}{\sim} P_{\beta}$ independent of $\{(X_i,\epsilon_i)\}_{i=1}^n$. 
\end{assumption}

We will focus on quantile regressions of the form 
\begin{equation}\label{eq:high_dim_qr}
  \underset{(\beta_0,\beta) \in \mmr^{d+1}}{\min} \sum_{i=1}^n \ell_{\tau}(Y_i - \beta_0 - X_i^\top \beta) + \mathcal{R}_d(\beta).  
\end{equation}
We make two remarks about this set-up. First, for simplicity, we have chosen to focus on regressions containing an intercept. To obtain results for our additive adjustment method we will also need to consider cases where $\beta_0$ is replaced by a fixed constant. This extension is stated at the end of this section in Theorem \ref{thm:add_adj_high_dim}. Second, here we have allowed the regularization function $\mathcal{R}_d$ to depend explicitly on the dimension. This is done to account for the fact that the regularization level may need to be rescaled as $n$ and $d$ increase. Our formal assumptions on the regularizer are stated in Assumption \ref{assump:reg_high_dim_assumptions} in the appendix. At a high-level, we require that $\mathcal{R}_d$ is convex and that the data have enough bounded moments to ensure that various functions of $\mathcal{R}_d$ satisfy the law of large numbers. As an example, Lemma \ref{lem:reg_verification} verifies that our assumptions are met if $P_{\beta}$ has four bounded moments and $\mathcal{R}_d(\beta) = \sqrt{d}\lambda \|\beta\|_1$ or $\mathcal{R}_d(\beta) = d\lambda \|\beta\|^2_2$ is $L_1$ or $L_2$ regularization.

We now state the main result of this section, which establishes that the coordinate-wise empirical distribution of the dual variables converges to an asymptotic limit. Although we only state this result for aspect ratios $d/n \to \gamma \in (0,2/\pi)$, we expect similar
conclusions to hold for $\gamma \geq 2/\pi$ under appropriate assumptions on the regularization.

\begin{theorem}\label{thm:main_asym_consistency}
    Fix any $\tau \in (0,1)$ and suppose that the data and regularizer satisfy the conditions of Assumptions \ref{assump:data_high_dim} and \ref{assump:reg_high_dim_assumptions}. Suppose that $d,n \to \infty$ with $d/n \to \gamma \in (0,2/\pi)$. Then, there exists a limiting distribution $P_{\eta}$ such that for any bounded, Lipschitz function $\psi$ and any $\delta > 0$,
    \[
    \mmp\left(\text{For all dual solutions $\hat{\eta}$ to (\ref{eq:high_dim_qr}), }
    \left| \frac{1}{n} \sum_{i=1}^n \psi(\hat{\eta}_i) - \mme_{Z \sim P_{\eta}}[\psi(Z)] \right| \leq \delta \right) \to 1.
    \]
    Moreover, the distribution $P_{\eta}$ is supported on $[-(1-\tau),\tau]$ with discrete masses at $-(1-\tau)$ and $\tau$ and a continuous distribution on $(-(1-\tau),\tau)$.
\end{theorem}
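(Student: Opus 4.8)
The approach is to analyze the dual through the Convex Gaussian Min-Max Theorem (CGMT) of \citet{Gordon1985, Gordon1988} in the M-estimation form of \citet{Thram2018}: the $n\times d$ Gaussian saddle problem defining the regression is replaced by an ``auxiliary optimization'' built from two independent Gaussian vectors, which concentrates on a deterministic scalar fixed-point system, and the limit $P_{\eta}$ is read off coordinate by coordinate. Concretely, substituting $v = \beta - \tilde{\beta}$ in (\ref{eq:high_dim_qr}) and writing the pinball loss via its conjugate, $\ell_{\tau}(r) = \max_{\eta \in [-(1-\tau),\tau]} \eta r$, the regularized quantile regression becomes the saddle problem
\[
\min_{\beta_0 \in \mmr,\, v \in \mmr^d}\ \max_{\eta \in [-(1-\tau),\tau]^n}\ \sum_{i=1}^n \eta_i(\epsilon_i - \beta_0)\ -\ \eta^\top X v\ +\ \mathcal{R}_d(v + \tilde{\beta}),
\]
where $X \in \mmr^{n \times d}$ has i.i.d.\ standard Gaussian entries; the objective is jointly convex in $(\beta_0, v)$ and affine in $\eta$, optimizing out $\beta_0$ enforces $\mathbf{1}^\top \eta = 0$, and by strong duality the $\eta$-component of any saddle point is a dual solution $\hat{\eta}$ to (\ref{eq:high_dim_qr}). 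The term $-\eta^\top X v$ is of the Gordon form, so after localizing to high-probability compact sets --- which follows from coercivity of the objective in $(\beta_0, v)$ together with an a priori bound $\|\hat{v}\|_2 = O(1)$ (the box gives $\|\hat{\eta}\|_2 = O(\sqrt{n})$ for free and $\|\tilde{\beta}\|_2 = O(1)$ under Assumption \ref{assump:data_high_dim}) --- the CGMT applies.

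Second, the CGMT replaces this with the auxiliary problem
\[
\min_{\beta_0,\, v}\ \max_{\eta \in [-(1-\tau),\tau]^n,\, \mathbf{1}^\top\eta = 0}\ \|v\|_2\, g^\top \eta\ +\ \|\eta\|_2\, h^\top v\ +\ \sum_{i=1}^n \eta_i(\epsilon_i - \beta_0)\ +\ \mathcal{R}_d(v + \tilde{\beta}),
\]
with $g \sim \mathcal{N}(0, I_n)$ and $h \sim \mathcal{N}(0, I_d)$ independent. Optimizing over the direction of $v$, and invoking the law of large numbers on the regularizer (Assumption \ref{assump:reg_high_dim_assumptions}, verified for ridge and lasso penalties in Lemma \ref{lem:reg_verification}) and on the empirical measure of $(\epsilon_i, g_i)$, collapses this to a deterministic optimization over a handful of scalars --- the limiting proximal/residual scale $\mu > 0$ and the effective Gaussian confusion level $\theta$ --- with a separable per-coordinate inner problem in each $\eta_i$. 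Solving that inner problem is a one-dimensional proximal computation for $\ell_{\tau}$, yielding a clipped-linear (soft-thresholding) map, so $\hat{\eta}_i$ concentrates around $\mathrm{clip}\big((\epsilon_i + \theta g_i)/\mu,\ [-(1-\tau),\tau]\big)$. The hypothesis $\gamma \in (0, 2/\pi)$ is exactly what guarantees that this scalar system admits a finite solution with $\mu > 0$ even as $\mathcal{R}_d$ degenerates --- equivalently, that the quantile-regression estimation error retains a bounded limit --- with $2/\pi$ arising from the stationarity equation of the scalar problem (numerically, $2/\pi = (\mme|G|)^2$ for $G \sim \mathcal{N}(0,1)$).

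Third, define $P_{\eta}$ to be the law of $\mathrm{clip}\big((\epsilon + \theta G)/\mu,\ [-(1-\tau),\tau]\big)$ with $\epsilon \sim P_{\epsilon}$ and $G \sim \mathcal{N}(0,1)$ independent. Since $P_{\epsilon}$ is continuous with a density bounded, continuous, and positive on $\mmr$ (Assumption \ref{assump:data_high_dim}), the pre-clip variable has a continuous, everywhere-positive density; hence $P_{\eta}$ places strictly positive atoms $\mmp((\epsilon + \theta G)/\mu < -(1-\tau))$ at $-(1-\tau)$ and $\mmp((\epsilon + \theta G)/\mu > \tau)$ at $\tau$, and is absolutely continuous on $(-(1-\tau),\tau)$ --- precisely the claimed structure.

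Finally, to transfer the conclusion to the original problem for the statistic $\frac{1}{n}\sum_i \psi(\hat{\eta}_i)$, which is \emph{not} concave in $\eta$ and so rules out a direct perturbed-cost argument, I would use the deviation form of the CGMT: if, with probability not tending to zero, some dual optimizer $\hat{\eta}$ satisfied $|\frac{1}{n}\sum_i \psi(\hat{\eta}_i) - \mme_{P_{\eta}}[\psi]| > \delta$, then restricting the max over $\eta$ in the original saddle problem to the closed set $S_{\delta} = \{\eta : |\frac{1}{n}\sum_i \psi(\eta_i) - \mme_{P_{\eta}}[\psi]| \ge \delta\}$ would leave its optimal value within $o(1)$ of the unrestricted value, whereas the corresponding restriction of the auxiliary problem strictly decreases its value (in the decoupled auxiliary problem the unique per-coordinate optimum lies outside $S_{\delta}$ with high probability, by the previous step's law of large numbers), and the CGMT comparison yields a contradiction. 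This argument is insensitive to non-uniqueness, so it applies simultaneously to \emph{all} dual solutions; a routine net over $\psi$ and $\delta$ then gives the stated probability bound. The main obstacle is exactly this last step together with the localization input to the first: one must run the CGMT in its one-sided deviation form on carefully constructed restricted sets, show the decoupled auxiliary optimizer avoids them, and control the non-uniqueness of both $\hat{\eta}$ and $\hat{v}$, all while verifying that the scalar fixed-point system has a bona fide solution (with $\mu>0$) throughout $\gamma < 2/\pi$ --- the regime where the regularization may vanish.
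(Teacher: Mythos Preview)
Your proposal is correct and follows essentially the same CGMT route as the paper. A few refinements worth noting: (i) your formula for $P_\eta$ should carry the limiting intercept, i.e.\ it is the law of $\mathrm{clip}\big((\epsilon - \beta_0^* + \theta G)/\mu,\ [-(1-\tau),\tau]\big)$, with $\beta_0^*$ the optimizer of the scalar program; (ii) the paper treats the degenerate case $M_u^*=0$ (your $\theta=0$, possible under heavy regularization) separately, where $P_\eta$ collapses to a two-point mass and your $\mu>0$ clip description does not apply directly; (iii) in the final deviation step the paper does not argue per coordinate but first proves $\|\hat\eta\|_2/\sqrt{n}\to M_\eta^*$ via a preliminary CGMT application, then uses the Lipschitz property of $\psi$ to convert membership in $S_\delta$ into a uniform $L_2$ gap from the auxiliary optimizer, which reduces the restricted-value comparison to an explicit inner-product calculation rather than a coordinatewise one.
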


As an aside, we remark that explicit formulas for the asymptotic distribution $P_{\eta}$ are given in Proposition \ref{prop:M_u_is_0} and equation (\ref{eq:asymptotic_dist}) in the appendix. The exact definition of this distribution is somewhat involved and thus we defer a more precise description of the relevant quantities to Appendix \ref{sec:app_main_asymp}.

Theorem \ref{thm:main_asym_consistency} has two critical corollaries for our debiasing methods. The first establishes the consistency of our leave-one-out coverage estimates. Unlike in Section \ref{sec:loo_cv} where we restricted our attention to $L_2$ regularization, here our extra assumptions on the data allow us to derive a result for much more general regularizers.

\begin{corollary}\label{corr:loo_cov_consistency}
    Let $(X_{n+1}, Y_{n+1})$ denote an independent sample from the same distribution as $\{(X_i,Y_i)\}_{i=1}^n$. \iffalse Suppose that $\mathcal{R}_d(\beta) = \sum_{i=1}^d \lambda_i^d \beta_i^2$ for some $\lambda_1^d,\dots,\lambda_d^d \geq 0$. \fi Let $(\hat{\beta}_0,\hat{\beta})$ denote any primal solution to (\ref{eq:high_dim_qr}) chosen independently of $(X_{n+1},Y_{n+1})$. Then, under the assumptions of Theorem \ref{thm:main_asym_consistency}, it holds that for any $\delta > 0$,
    \begin{align*}
    & \mmp\left(\text{For all dual solutions $\hat{\eta}$ to (\ref{eq:high_dim_qr}), }  \left| \frac{1}{n} \sum_{i=1}^n \bone\{\hat{\eta}_i \leq 0\} - \mmp\left(Y_{n+1} \leq \hat{\beta}_0 + X_{n+1}^\top\hat{\beta} \right)  \right| \leq \delta \right)\to 1.
    \end{align*}
\end{corollary}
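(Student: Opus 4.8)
The plan is to show that the two quantities inside the absolute value each concentrate, with probability tending to one, around a common deterministic limit; almost all of the work is in identifying that limit and matching the two characterizations. For the dual side I would first upgrade Theorem~\ref{thm:main_asym_consistency} from bounded Lipschitz test functions to the indicator $z\mapsto\bone\{z\le 0\}$. Since $\tau\in(0,1)$ the point $0$ lies strictly inside the interval $(-(1-\tau),\tau)$ on which $P_{\eta}$ is continuous, so $P_{\eta}$ has no atom at $0$ and $t\mapsto P_{\eta}((-\infty,t])$ is continuous at $0$. Sandwiching $\bone\{\cdot\le 0\}$ between bounded Lipschitz functions $\psi^{-}_{\varepsilon}\le\bone\{\cdot\le 0\}\le\psi^{+}_{\varepsilon}$ with $\bone\{\cdot\le-\varepsilon\}\le\psi^{-}_{\varepsilon}$ and $\psi^{+}_{\varepsilon}\le\bone\{\cdot\le\varepsilon\}$, applying Theorem~\ref{thm:main_asym_consistency} to $\psi^{\pm}_{\varepsilon}$, and sending $\varepsilon\downarrow 0$ shows that, with probability tending to one, \emph{every} dual solution $\hat{\eta}$ satisfies $\bigl|\tfrac1n\sum_{i=1}^{n}\bone\{\hat{\eta}_i\le 0\}-P_{\eta}((-\infty,0])\bigr|\le\delta/3$; the uniformity over dual solutions is inherited verbatim from the theorem.

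For the primal side, because $(\hat{\beta}_0,\hat{\beta})$ is chosen without using $(X_{n+1},Y_{n+1})$, conditioning on the training set makes $X_{n+1}^{\top}(\hat{\beta}-\tilde{\beta})\sim\mathcal{N}(0,\|\hat{\beta}-\tilde{\beta}\|_2^2)$ independent of $\epsilon_{n+1}$, so
\[
\mmp\!\left(Y_{n+1}\le\hat{\beta}_0+X_{n+1}^{\top}\hat{\beta}\;\middle|\;\{(X_i,Y_i)\}_{i=1}^{n}\right)=\mme_{G\sim\mathcal{N}(0,1)}\!\left[F_{\epsilon}\!\left(\hat{\beta}_0+\|\hat{\beta}-\tilde{\beta}\|_2\,G\right)\right],
\]
where $F_{\epsilon}$ is the error c.d.f., which is Lipschitz by Assumption~\ref{assump:data_high_dim}. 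The Gaussian-comparison analysis behind Theorem~\ref{thm:main_asym_consistency} pins down exactly the scalar summary statistics $\hat{\beta}_0$ and $\|\hat{\beta}-\tilde{\beta}\|_2$: with probability tending to one every primal solution has these within $o(1)$ of deterministic constants $\mu^{\star},\sigma^{\star}$ coming from the fixed-point system of Appendix~\ref{sec:app_main_asymp}. Hence the displayed test coverage is within $\delta/3$ of $\mme_{G}[F_{\epsilon}(\mu^{\star}+\sigma^{\star}G)]$ with probability tending to one.

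It then remains to verify the identity $P_{\eta}((-\infty,0])=\mme_{G}[F_{\epsilon}(\mu^{\star}+\sigma^{\star}G)]$, after which a union bound and the triangle inequality finish the proof. I would obtain this from the explicit description of $P_{\eta}$ in Proposition~\ref{prop:M_u_is_0} and equation~(\ref{eq:asymptotic_dist}) together with the stationarity relation $\hat{\eta}_i\in\partial\ell_{\tau}(Y_i-\hat{\beta}_0-X_i^{\top}\hat{\beta})$: since $\partial\ell_{\tau}(r)$ is $\{\tau\}$ for $r>0$, $\{-(1-\tau)\}$ for $r<0$, and $[-(1-\tau),\tau]$ for $r=0$, the event $\{\hat{\eta}_i\le 0\}$ agrees with $\{Y_i\le\hat{\beta}_0+X_i^{\top}\hat{\beta}\}$ except on the interpolated coordinates ($r_i=0$), whose dual values form precisely the continuous part of $P_{\eta}$; in the limit the residual is the proximal image of $\epsilon+\mu^{\star}+\sigma^{\star}G$, whose sign changes at exactly the threshold appearing in the $\tau$-coverage fixed-point equation, and substituting the fixed-point relations collapses $P_{\eta}((-\infty,0])$ to $\mme_{G}[F_{\epsilon}(\mu^{\star}+\sigma^{\star}G)]$. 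A conceptually parallel route, perhaps cleaner for general regularizers, is to use Proposition~\ref{prop:initial_loo_comparison}---with the atom-freeness of $P_{\eta}$ near $0$ and the continuity of $Y_i\mid X_i$ killing the boundary cases---to equate $\tfrac1n\sum_i\bone\{\hat{\eta}_i\le 0\}$ with the leave-one-out coverage up to $o_p(1)$, and then invoke exchangeability to identify the mean of the leave-one-out coverage with the test coverage at aspect ratio $d/(n-1)\to\gamma$, which has the same limiting order parameters as the $n$-sample fit.

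The main obstacle is this last identity: one has to control how the continuous (``interpolated'') portion of $P_{\eta}$ sits relative to $0$ and reconcile the dual-side zero-crossing with the primal-side $\tau$-coverage fixed-point, which is where the specific proximal structure of the pinball loss enters. By contrast, the indicator-approximation step on the dual side and the order-parameter convergence on the primal side are essentially already contained in the proof of Theorem~\ref{thm:main_asym_consistency}.
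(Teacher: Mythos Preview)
Your proposal is correct, and your secondary route---using Proposition~\ref{prop:initial_loo_comparison} to sandwich the leave-one-out coverage between $\tfrac1n\sum_i\bone\{\hat{\eta}_i<0\}$ and $\tfrac1n\sum_i\bone\{\hat{\eta}_i\le 0\}$, killing the boundary via the vanishing of $\tfrac1n\sum_i\bone\{\hat{\eta}_i=0\}$, and then passing to the test coverage by taking expectations and reindexing---is exactly what the paper does. The paper's argument is somewhat slicker than your primary route because it never computes either limit explicitly and never invokes the primal convergence theorem: once the leave-one-out coverage is shown to converge in probability to $P_{\eta}((-\infty,0])$, bounded convergence gives the same limit for its expectation, which by exchangeability equals the test-point coverage of the $(n{-}1)$-sample fit, hence (after the trivial index shift $n\mapsto n{+}1$) of the $n$-sample fit.

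Your primary route is also valid, but you are overestimating the difficulty of the identity $P_{\eta}((-\infty,0])=\mme_{G}[F_{\epsilon}(\beta_0^{\star}+M_u^{\star}G)]$. From the explicit form of $P_{\eta}$ in~(\ref{eq:asymptotic_dist}) and the pinball proximal formula (Lemma~\ref{lem:pinball_env}), writing $Z=M_u^{\star}g_1+\epsilon_1-\beta_0^{\star}$ and $\rho=\rho_1^{\star}/M_{\eta}^{\star}$ one has $Z-\mathrm{prox}_{\ell_{\tau}}(Z;\rho)\in\{\rho\tau,\,Z,\,-\rho(1{-}\tau)\}$ according to $Z>\rho\tau$, $Z\in[-\rho(1{-}\tau),\rho\tau]$, $Z<-\rho(1{-}\tau)$; dividing by $\rho$ shows that the $P_{\eta}$-variable is $\le 0$ if and only if $Z\le 0$, i.e.\ $\epsilon_1\le\beta_0^{\star}-M_u^{\star}g_1$, which gives the identity immediately by the symmetry of $g_1$. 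No fixed-point relations are needed. The trade-off is that your primary route consumes Theorem~\ref{thm:primal_convergence} (convergence of $\hat{\beta}_0$ and $\|\hat{\beta}-\tilde{\beta}\|_2$), whereas the paper's leave-one-out route uses only Theorem~\ref{thm:main_asym_consistency} and Proposition~\ref{prop:initial_loo_comparison}.
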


Our second corollary shows that the quantile estimate used by our fixed dual thresholding method is consistent. 

\begin{corollary}\label{corr:quantile_consistency}
    Consider unregularized quantile regression with $\mathcal{R}_d(\beta) = 0$. Under the assumptions of Theorem \ref{thm:main_asym_consistency}, it holds that for any $\delta > 0$, 
    \[
    \mmp\left(\text{For all dual solution $\hat{\eta}$ to (\ref{eq:high_dim_qr}), }
    \left|  \textup{Quantile}\left(\tau, \frac{1}{n} \sum_{i=1}^n \delta_{\hat{\eta}_i}  \right) -  \textup{Quantile}\left(\tau, P_{\eta} \right) \right| \leq \delta \right) \to 1.
    \]
\end{corollary}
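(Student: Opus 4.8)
\emph{Proof proposal.} The claim is that the empirical $\tau$-quantile of the dual variables converges, uniformly over the choice of dual solution and with probability tending to one, to $\textup{Quantile}(\tau,P_\eta)$. The plan is to read Theorem~\ref{thm:main_asym_consistency} as a weak-convergence statement for the empirical measure $\hat P_n := \frac1n\sum_{i=1}^n \delta_{\hat\eta_i}$ — more precisely, as supplying, for each bounded Lipschitz test function $\psi$ and each tolerance $\delta'>0$, a bound $|\frac1n\sum_{i=1}^n\psi(\hat\eta_i) - \mme_{Z\sim P_\eta}[\psi(Z)]| \le \delta'$ that holds simultaneously over all dual solutions $\hat\eta$ on an event of probability tending to one — and then to invoke the elementary fact that weak convergence forces convergence of quantiles at any level $\tau$ for which the limiting CDF is continuous and strictly increasing around its $\tau$-quantile. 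Only finitely many test functions are needed, so per-$\psi$ control suffices.

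I would first record the two regularity properties of $P_\eta$ at level $\tau$ that make the argument go through. Write $q_\tau := \textup{Quantile}(\tau,P_\eta)$ and $F_{P_\eta}$ for the CDF of $P_\eta$. The needed properties are: (i) $q_\tau$ lies in the open interval $(-(1-\tau),\tau)$, so that — since by Theorem~\ref{thm:main_asym_consistency} the only atoms of $P_\eta$ are at the endpoints $-(1-\tau)$ and $\tau$ — the CDF $F_{P_\eta}$ is continuous at $q_\tau$; and (ii) $P_\eta$ assigns positive mass to every one-sided neighbourhood of $q_\tau$, so that $F_{P_\eta}(q_\tau - \epsilon) < \tau < F_{P_\eta}(q_\tau + \epsilon)$ for all small $\epsilon>0$ and $q_\tau$ is the unique $\tau$-quantile of $P_\eta$. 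Both (i) and (ii) are read off from the explicit description of $P_\eta$ furnished by Proposition~\ref{prop:M_u_is_0} and equation~(\ref{eq:asymptotic_dist}) in Appendix~\ref{sec:app_main_asymp}.

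Granting (i) and (ii), the convergence follows from a routine sandwich. Fix $\delta>0$, shrunk if necessary so that $[q_\tau-\delta,\,q_\tau+\delta]\subset(-(1-\tau),\tau)$. By (ii), the numbers $a := F_{P_\eta}(q_\tau-\delta/2)$ and $b := F_{P_\eta}(q_\tau+\delta/2)$ satisfy $a<\tau<b$; set $\delta' := \tfrac12\min\{\tau-a,\,b-\tau\}>0$. Choose bounded Lipschitz functions $\psi_-,\psi_+$ with $\bone\{t\le q_\tau-\delta\}\le\psi_-(t)\le\bone\{t\le q_\tau-\delta/2\}$ and $\bone\{t\le q_\tau+\delta/2\}\le\psi_+(t)\le\bone\{t\le q_\tau+\delta\}$, so that $\mme_{Z\sim P_\eta}[\psi_-(Z)]\le a$ and $\mme_{Z\sim P_\eta}[\psi_+(Z)]\ge b$. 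Applying Theorem~\ref{thm:main_asym_consistency} to $\psi_-$ and to $\psi_+$ with tolerance $\delta'$ and intersecting the two resulting events, with probability tending to one we have, simultaneously for every dual solution $\hat\eta$,
\[
\frac1n\sum_{i=1}^n\bone\{\hat\eta_i\le q_\tau-\delta\}\ \le\ \frac1n\sum_{i=1}^n\psi_-(\hat\eta_i)\ \le\ a+\delta'\ <\ \tau
\]
and
\[
\frac1n\sum_{i=1}^n\bone\{\hat\eta_i\le q_\tau+\delta\}\ \ge\ \frac1n\sum_{i=1}^n\psi_+(\hat\eta_i)\ \ge\ b-\delta'\ >\ \tau .
\]
The first inequality says the CDF of $\hat P_n$ is below $\tau$ at $q_\tau-\delta$ and the second says it is at least $\tau$ at $q_\tau+\delta$; since this CDF is non-decreasing, $q_\tau-\delta\le\textup{Quantile}(\tau,\hat P_n)\le q_\tau+\delta$, which is exactly the asserted conclusion.

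The bookkeeping above — converting bounded-Lipschitz test-function control into CDF control and then into a quantile bound — is standard, and the endpoint configurations that (i) excludes, should they occur, are in fact harmless, since there the empirical quantile equals the corresponding endpoint with high probability too (the endpoint atom masses are consistently estimated). The one step that genuinely requires work is verifying properties (i) and (ii) against the fixed-point system defining $P_\eta$: one must show that the atom $P_\eta(\{\tau\})$ has mass strictly less than $1-\tau$ (so $q_\tau\neq\tau$), that the atom $P_\eta(\{-(1-\tau)\})$ has mass strictly less than $\tau$ (so $q_\tau\neq-(1-\tau)$), and that the continuous part of $P_\eta$ has no flat stretch of its CDF about $q_\tau$. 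I expect ruling out these degenerate configurations — in which the $\tau$-quantile collapses onto an endpoint atom or lands in a null interval — to be the main obstacle, as it is the only place where the precise form of $P_\eta$, rather than merely its status as the weak limit of $\hat P_n$, enters.
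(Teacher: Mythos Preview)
Your sandwich argument with bounded Lipschitz step functions is correct and essentially matches the paper's. The gap is in establishing your properties (i) and (ii), specifically in showing that $P_\eta$ has a nontrivial continuous part with positive density on $(-(1-\tau),\tau)$. You propose to ``read this off'' Proposition~\ref{prop:M_u_is_0} and equation~(\ref{eq:asymptotic_dist}), but these describe two mutually exclusive cases: Proposition~\ref{prop:M_u_is_0} is precisely the case $M_u^*=0$, in which $P_\eta$ is a pure two-point mass with no continuous part at all, while~(\ref{eq:asymptotic_dist}) applies only when $M_u^*>0$ (and then $\rho_1^*>0$ by Lemma~\ref{lem:unique_pos_solution}, giving positive density on the interior). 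Citing both does not tell you which case you are in, and examining ``the fixed-point system defining $P_\eta$'' will not settle it either.

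The paper resolves this by a finite-sample argument that is specific to the unregularized setting and is where the hypothesis $\mathcal{R}_d=0$ actually enters. The first-order condition in $\beta$ forces $X^\top\hat\eta=0$; a rank argument against the Gaussian design then shows that with probability one every dual solution must have at least $d$ coordinates strictly in the interior $(-(1-\tau),\tau)$. But if $M_u^*=0$, Proposition~\ref{prop:M_u_is_0} would force the fraction of interior coordinates to vanish, contradicting $d/n\to\gamma>0$. Hence $M_u^*>0$, the continuous part of $P_\eta$ has positive density everywhere on $(-(1-\tau),\tau)$, and your (i)--(ii) follow. This is the step you have not supplied, and your proposed route to it (analysing the asymptotic fixed-point equations) is not the one the paper takes.
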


Proofs of Theorem \ref{thm:main_asym_consistency} as well as Corollaries \ref{corr:loo_cov_consistency}  and \ref{corr:quantile_consistency} are given in the appendix. Our arguments build heavily on Gordon's comparison inequalities \citep{Gordon1985, Gordon1988} and their application to high-dimensional regression developed in \citet{Thram2018}. At a high level, these results allow us to derive a correspondence between the dual quantile regression program and a simplified auxiliary optimization problem that replaces the covariate matrix with vector-valued random variables. The main technical difficulty is then to characterize the solutions of this auxiliary program. One key difference between our result and that of the original work of \citet{Thram2018} is that we consider the behaviour of the solutions under arbitrary bounded, Lipschitz functions. We are not the first to derive an extension of this type. However, to the best of our knowledge previous extensions typically rely on strong convexity of the auxiliary optimization (see e.g., \citet{Abassi2016, Miolane2021, Celentano2023}). Here, we derive a similar result under weaker conditions. 

It is worthwhile to contrast Theorem \ref{thm:loo_equivalence} with the results of \citet{Bai2021}. In that paper, the authors derive a number of asymptotic consistency results for the primal quantile regression estimates $(\hat{\beta}_0, \hat{\beta})$. Here, we provide a set of complementary asymptotics for the dual. In addition, we also treat a more general setting that removes the restrictions to small aspect ratios and unregularized regressions present in their work. While not our main focus, a corollary of our analysis is that the intercept, $\hat{\beta}_0$ and estimation error, $\|\hat{\beta} - \tilde{\beta}\|_2$ both converge to constants under the assumptions of Theorem \ref{thm:main_asym_consistency}. This is formally stated in Theorem \ref{thm:primal_convergence} in the appendix, which directly generalizes Theorem C.1 of \citet{Bai2021}.

Finally, as a last remark, we note that all of the conclusions stated above also hold for the intercept-less regressions used in the additive adjustment procedure. The proof of this result is nearly identical to that of Theorem \ref{thm:main_asym_consistency} and thus is omitted.

\begin{theorem}\label{thm:add_adj_high_dim}
    Under identical assumptions, the conclusions of Theorem \ref{thm:main_asym_consistency} and Corollaries  \ref{corr:loo_cov_consistency} and \ref{corr:quantile_consistency} remain true when the intercept $\beta_0$ is replaced by a fixed, real-valued constant. 
\end{theorem}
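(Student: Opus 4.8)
The plan is to re-run the proof of Theorem~\ref{thm:main_asym_consistency} with the scalar intercept deleted from the primal problem and to check that this change is absorbed without effort at every step. Recall that that proof begins by writing the pinball loss in conjugate form, $\ell_\tau(r) = \max_{u \in [-(1-\tau),\tau]} ur$, which recasts (\ref{eq:high_dim_qr}) as a min-max problem over $(\beta_0,\beta)$ and a dual vector $\eta \in [-(1-\tau),\tau]^n$ in which the Gaussian design $X$ enters only through the bilinear coupling $\eta^\top X(\beta - \tilde\beta)$; the Gaussian comparison inequalities of \citet{Gordon1985,Gordon1988} in the form developed by \citet{Thram2018} then replace this coupling by two Gaussian linear forms, producing the auxiliary optimization problem that is subsequently scalarized. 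Fixing $\beta_0 = c$ merely removes the variable $\beta_0$ together with the linear term $-\beta_0\,\mathbf{1}^\top\eta$ and replaces the noise vector $\epsilon$ by the translate $\epsilon - c\,\mathbf{1}$, since the regression residual becomes $\epsilon_i - c - X_i^\top(\beta-\tilde\beta)$. The coupling with $X$ --- the sole object the comparison inequality acts on --- is untouched, so the identical argument yields the same auxiliary problem with one fewer scalar variable and with $\epsilon$ shifted.

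First I would verify that the translation of the noise is harmless. Assumption~\ref{assump:data_high_dim} requires only that $P_\epsilon$ be continuous with bounded, continuous, strictly positive density and two finite moments, and each of these properties survives the deterministic shift $\epsilon \mapsto \epsilon - c$. The one hypothesis lost is mean-zeroness, but it enters the proof of Theorem~\ref{thm:main_asym_consistency} only through the fixed-point equation that pins down the limiting intercept, of which there is now none; equivalently, one may simply translate $P_\epsilon$ by $-c$ at the outset and carry a non-centered but otherwise regular error law through the (location-agnostic) analysis. Next I would push the scalarization through with the $\beta_0$-coordinate suppressed: the reduction to a constant-dimensional optimization over $\|v\|_2$ and a few Lagrange-type scalars is verbatim, the accompanying system of fixed-point equations is the old system minus the single equation encoding $\mathbf{1}^\top\eta = 0$, and the steps establishing existence of a minimizer, the high-probability transfer back to the original program, and the extraction of the limiting law $P_\eta$ (supported on $[-(1-\tau),\tau]$, with atoms only at the endpoints and a continuous distribution in between) carry over unchanged. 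Since removing a degree of freedom can only relax the condition under which the unregularized auxiliary problem admits a finite minimizer, the hypothesis $d/n \to \gamma \in (0,2/\pi)$ still suffices.

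Finally, Corollaries~\ref{corr:loo_cov_consistency} and~\ref{corr:quantile_consistency} follow from Theorem~\ref{thm:main_asym_consistency} by arguments that never reference the intercept: smoothing $\bone\{\cdot \le 0\}$ and using that $P_\eta$ has no atom at $0$ for $\tau \in (0,1)$; weak convergence of $\tfrac1n\sum_i \delta_{\hat\eta_i}$ to $P_\eta$ combined with positivity of the continuous part of $P_\eta$ near its $\tau$-quantile; and the sign/leave-one-out correspondence of Proposition~\ref{prop:initial_loo_comparison}. Running these arguments with the intercept-less dual and primal in place of their counterparts reproduces both corollaries. The only point I expect to require genuine (though still routine) care is the unregularized case $\mathcal{R}_d = 0$: one must confirm that deleting the intercept column does not spoil the boundedness of the relevant sublevel sets, hence the existence of the primal and dual optima and the validity of the high-probability equivalence furnished by the comparison inequality. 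This holds because for $\gamma < 2/\pi < 1$ the design matrix has full column rank with high probability and the pinball loss is coercive in the residuals, exactly as in the intercepted case.
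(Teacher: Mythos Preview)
Your proposal is correct and matches the paper's own approach: the paper simply states that the proof is ``nearly identical to that of Theorem~\ref{thm:main_asym_consistency} and thus is omitted,'' and you have accurately identified why---fixing $\beta_0=c$ amounts to deleting one scalar variable from the min-max formulation and shifting the noise by $-c$, neither of which affects the Gaussian comparison step, the scalarization, or the downstream arguments for the corollaries. Your sketch is in fact more detailed than what the paper provides; the only minor imprecision is the phrase ``removing a degree of freedom can only relax the condition,'' which has the logic slightly backwards (removing $\beta_0$ from the primal shrinks the minimization domain, hence raises the infimum), but the conclusion that $\gamma<2/\pi$ still suffices is correct since the key lower bound $\inf_{\|u\|_2=1}\tfrac{1}{n}\sum_i|X_i^\top u|\gtrsim\sqrt{2/\pi}-\sqrt{\gamma}$ is the $\beta_0=0$ specialization of Lemma~\ref{lem:gaussian_abs_lower}.
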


\section{Real data experiments}\label{sec:empirics}

\subsection{Methods and metrics}\label{sec:empirics_metrics}

We now undertake a series of empirical comparisons of our proposed methods. As baselines, we also evaluate the performance of standard quantile regression, the randomized method of \citet{GCC2025}, and the (split) conformalized quantile regression (CQR) method of \citet{Romano2019}. In all experiments, we implement CQR so that 75\% of the data is used to fit the quantile regression and 25\% is used to calibrate its coverage. 

To evaluate these methods, we compare the quality of prediction sets constructed using their estimated quantiles. More precisely, for a given miscoverage level $\alpha \in (0,1/2)$ (taken to be 0.1 in our experiments) we compute the (adjusted) estimates $\hat{q}^{\alpha/2}(X_{n+1})$ and $\hat{q}^{1-\alpha/2}(X_{n+1})$ of the $\alpha/2$ and $1-\alpha/2$ quantiles using each of the methods. We then evaluate the resulting prediction interval $[\hat{q}^{\alpha/2}(X_{n+1}),\hat{q}^{1-\alpha/2}(X_{n+1})]$ in terms of three criteria: 1) marginal coverage, $\mmp(\hat{q}^{\alpha/2}(X_{n+1}) \leq Y_{n+1} \leq \hat{q}^{1-\alpha/2}(X_{n+1}))$, 2) interval length, $\max\{\hat{q}^{1-\alpha/2}(X_{n+1}) - \hat{q}^{\alpha/2}(X_{n+1}),0\} $, and 3) maximum multiaccuracy error.

Multiaccuracy as introduced in \citet{hebert2018multicalibration} and \citet{kim2019multiaccuracy} is a general criteria for measuring the bias of a predictor over reweightings of the covariate space. In our context, we will consider linear reweightings and thus our goal will be to obtain quantile estimates whose miscoverage events are uncorrelated with the features. This is motivated by results from the classical regime in which $d\log(n)/n \to 0$. There, \citet{Duchi2025} showed that (under appropriate tail bounds on the data) the canonical quantile regression estimates $(\hat{q}_{\text{QR}}^{\alpha/2}, \hat{q}_{\text{QR}}^{1-\alpha/2})$ satisfy the multiaccuracy condition,\footnote{See also \citet{Jung2023} for an earlier appearance of a similar result.}
\begin{equation}\label{eq:qr_ma_conv}
\sup_{\|v\|_2 \leq 1}  \mme\left[X_{n+1}^\top v(\bone\{ Y_{n+1}  \in [\hat{q}^{\alpha/2}_{\text{QR}}(X_{n+1}) ,\hat{q}^{1-\alpha/2}_{\text{QR}}(X_{n+1})] \} - (1-\alpha)) \mid \{(X_i,Y_i)\}_{i=1}^n \right]  \stackrel{\mmp}{\to} 0.
\end{equation}
As a concrete example to motivate the utility of this condition, consider fitting quantile regression with a feature $X_{i,j} = \bone\{X_{i} \in G\}$ that indicates whether sample $i$ falls into group $G$. Then, applying (\ref{eq:qr_ma_conv}) with $v = e_i$ gives the conditional coverage statement,
\[ \mmp( Y_{n+1} \in [\hat{q}^{\alpha/2}_{\text{QR}}(X_{n+1}), \hat{q}^{1-\alpha/2}_{\text{QR}}(X_{n+1})] \mid X_{n+1} \in G, \{(X_i,Y_i)\}_{i=1}^n) \stackrel{\mmp}{\to} 1-\alpha.
\]
More generally, by designing the features appropriately multiaccuracy conditions of this form can be used to ensure that the prediction set provides accurate performance across sensitive attributes of the population.  

Motivated by this, \citet{GCC2025} extend (\ref{eq:qr_ma_conv}) to the high-dimensional setting and show that their randomized adjustment satisfies 
\[
\mme\left[X_{n+1}^\top v(\bone\{ Y_{n+1} \in [\hat{q}^{\alpha/2}_{\text{GCC, rand.}}(X_{n+1}),  \hat{q}^{1-\alpha/2}_{\text{GCC, rand.}}(X_{n+1})]\} - (1-\alpha)) \right] = 0, \ \forall v \in \mmr^d.
\]
Notably, this statement is not directly comparable to (\ref{eq:qr_ma_conv}) since here the expectation is taken marginally  over the random draw of the training set. In general, one cannot expect to obtain training-conditional convergence uniformly over $v$ in high dimensions. Nevertheless, as we will see shortly, empirically  $\hat{q}_{\text{GCC, rand.}}(X_{n+1})$ can provide approximate validity when $v$ is restricted to a smaller set (e.g.~to the coordinate axes). 

The methods developed in the previous section are not designed to explicitly guarantee multiaccuracy. Regardless, since they are built on top of quantile regression one may hope that they still approximately satisfy these conditions. To evaluate this, we will examine the coordinatewise multiaccuracy error of each method defined as
\begin{equation}\label{eq:ma_target}
 \max_{j \in \{1,\dots,d\}} \frac{\mme[X_{n+1,j} (\bone\{ Y_{n+1} \in [\hat{q}^{\alpha/2}(X_{n+1}),  \hat{q}^{1-\alpha/2}(X_{n+1})]\} - (1-\alpha)) \mid \{(X_i,Y_i)\}_{i=1}^n ]}{\mme[|X_{n+1,j}|]}.
\end{equation}
We recall that in order to improve the performance on this metric, in Section \ref{sec:methods} we defined the parameters for our regularized level and additive adjustment procedures to minimize a leave-one-out estimate of (\ref{eq:ma_target}) (cf.~equations (\ref{eq:lambda_hat_level}) and (\ref{eq:lambda_hat_add})).

\subsection{Results}

We compare the methods on two datasets in which the goals are to predict the per capita violent crime rate of various communities \citep{Redmond2002} and the number of times a news article was shared online \citep{Fernandes2015}. Both datasets are publicly available from the University of California, Irvine machine learning repository \citep{Dua2019}. After filtering out features with missing values and removing (approximately) linearly dependent columns, the datasets have 99 and 55 covariates, respectively. We normalize both the features and the target to have mean zero and variance one and then compare the methods discussed above in terms of their miscoverage, median length, and multiaccuracy error.

\begin{figure}[ht]
    \centering
    \includegraphics[width=\textwidth]{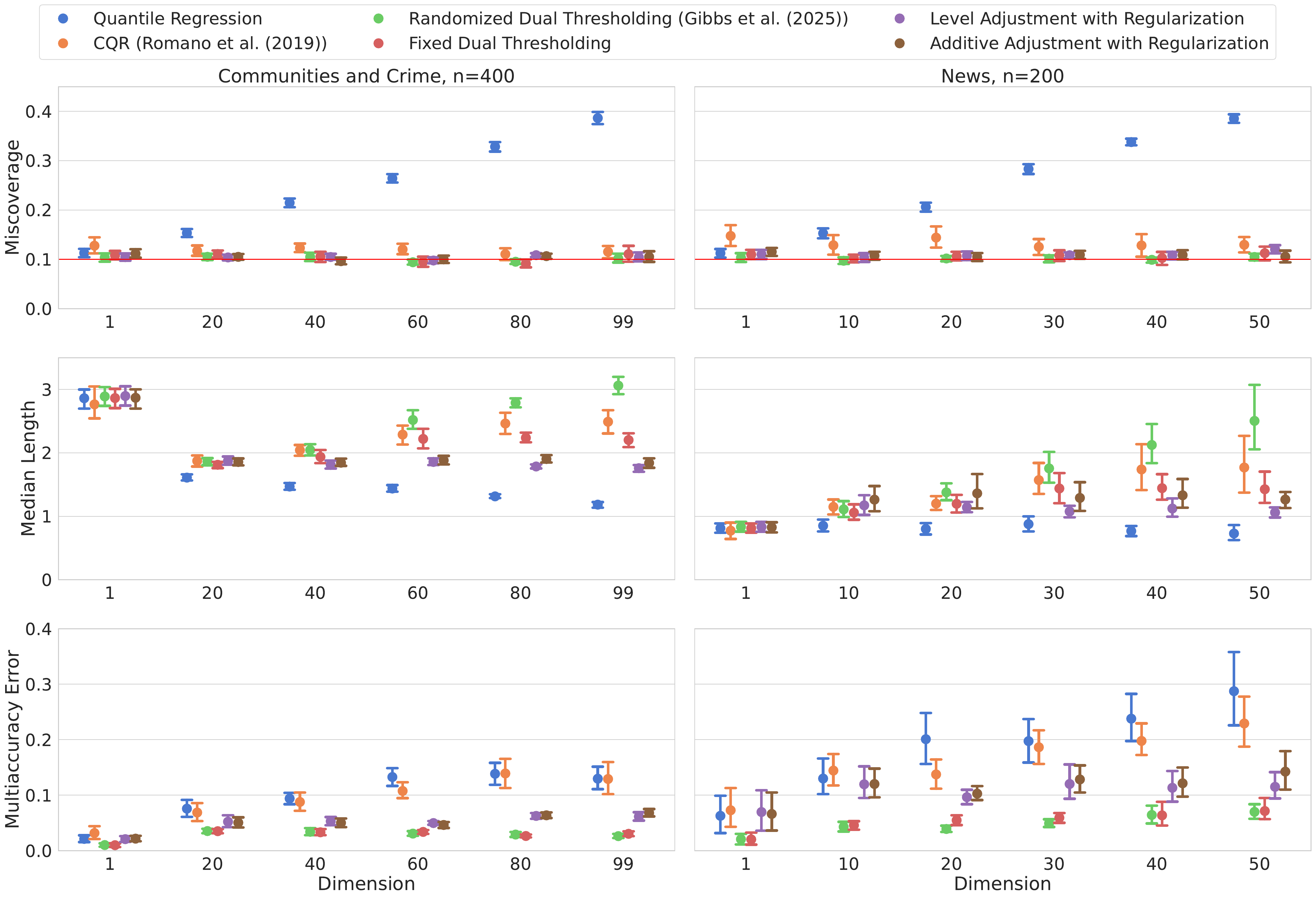}
    \caption{Empirical miscoverage (top row), median length (center row), and multiaccuracy error (bottom row) of quantile regression (blue), the baseline methods of \citet{Romano2019} (orange) and \citet{GCC2025} (green) and our fixed dual thresholding (red), level adjustment (purple), and additive adjustment (brown) methods on the communities and crime (left panels) and news (right panels) datasets as the dimension varies. Dots and error bars show means and 95\% confidence intervals obtained over 20 trials. The red lines in the top panels indicate the target level of $\alpha = 0.1$. In all experiments, the additive adjustment procedure is implemented with range $C = [-10,10]$ for $c$ and regularization levels are chosen from the grid $n^{-1}\Lambda = \{0,0.005,0.01,\dots,0.2\}$.}
    \label{fig:all_comparisons}
\end{figure}

Figure \ref{fig:all_comparisons} shows the outcome of this experiment. For the communities and crimes (resp.~news) dataset results in the figure summarize 20 trials where in each trial the data are randomly split into a training set of size 400 (resp.~200) and a test set of size 1594 (resp.~2000)\footnote{The communities and crimes dataset only has 1994 samples, so we utilize a smaller test set of size 1594 for its experiments.} and a random subset of the features are selected for use. As shown in the top row,  all methods provide the desired coverage except for standard quantile regression which realizes significant bias as the dimension increases. Among the methods with accurate coverage, our level and additive adjustment procedures (purple and brown) yield the smallest intervals (center row). The largest intervals are output by the randomized method of \citet{GCC2025} (green), which obtains a median interval length of up to two times that of the level adjustment procedure in higher dimensions. 

In terms of multiaccuracy, the lowest error is obtained by the dual thresholding methods (bottom row). Interestingly, while randomization is necessary to obtain a theoretical multiaccuracy bias of zero, we find that the fixed thresholding method (red) offers nearly identical performance in practice. On the other hand, our level and additive adjustment procedures realize a higher multiaccuracy error (purple and brown). This is to be expected since by adding regularization to these methods we have introduced bias. To see this, note that letting $(\hat{\beta}_0(\lambda),\hat{\beta}(\lambda))$ denote the fitted coefficients at quantile level $\tau$ with $L_2$ regularization $\lambda$ and $\tilde{\beta}$ denote the population quantile regression coefficients, we have that in the classical regime where $d\log(n)/n \to 0$,
\[
\mme\left[X_{n+1}^\top v(\bone\{Y_{n+1} \leq \hat{\beta}_0(\lambda) + X_{n+1}^\top \hat{\beta}(\lambda)\} - \tau)  \mid \{(X_i,Y_i)\}_{i=1}^n \right] \stackrel{\mmp}{\to} -2\lambda v^\top\tilde{\beta}.
\]
This follows directly from the first-order conditions of quantile regression and the arguments of \citet{Duchi2025}. Notably, while non-negligible, we find that this bias is small relative to the effects of overfitting and our level and additive adjustment procedures still produce much lower multiaccuracy errors than the baseline approaches of quantile regression and conformalized quantile regression. 

\section{Conclusions}

In this paper we developed three methods for correcting the coverage bias of quantile regression. Theoretical and empirical results show that all of these procedures provide robust coverage irrespective of the dimension of the data. In terms of prediction interval length and multiaccuracy error, none of the methods dominate. Across our empirical results we find that the fixed dual thresholding method offers the lowest multiaccuracy error. However, this comes at the cost of wider prediction intervals and greater test-time computational complexity as compared to the level and additive adjustment procedures.

\section{Acknowledgments}

E.C.~was supported by the Office of Naval Research grant N00014-24-1-2305, the National Science Foundation grant DMS-2032014, and the Simons Foundation under award 814641. J.J.C.~was supported by
the John and Fannie Hertz Foundation.

%\section{Disclosure statement}
%
%The authors have no conflicts of interests to report.

\newpage

\bibliography{qr_bib}

\newpage

% \phantomsection\label{supplementary-material}
% \bigskip

% \begin{center}

% {\large\bf SUPPLEMENTARY MATERIAL}

% \end{center}

\appendix

\section{Proofs for Section \ref{sec:loo_cv}}

We will now give formal proofs for the results stated in Section \ref{sec:loo_cv}. Throughout, we use the same notation as was defined in the main text. Namely, we let $\{(\tilde{X}_i,\tilde{Y}_i)\}_{i=1}^n\subseteq \mmr^p \times \mmr$ denote the training data and we consider quantile regressions of the form
\[
\min_{w \in \mmr^p} \sum_{i=1}^n \ell_{\tau}(\tilde{Y}_i - \tilde{X}_i^\top w) + \mathcal{R}(w).
\]
We use $\hat{w}$ and $\hat{\eta}$ to denote primal and dual solutions to this regression and $\hat{w}^{(-i)}$ and $\hat{\eta}^{(-i)}$ to denote corresponding leave-one-out primal and dual solutions when the $i_{\text{th}}$ sample is omitted from the fit. To begin, we prove a useful technical lemma that relates a dual value of zero to interpolation of the leave-one-out prediction. 

\begin{lemma}\label{lem:dual_at_0_to_interpolation}
    Fix any $i \in \{1,\dots,n\}$ and suppose there exists a leave-one-out primal solution with $\tilde{Y}_i = \tilde{X}_i^\top \hat{w}^{(-i)}$. Then, there exists a dual solution to the full program with $\hat{\eta}_i = 0$. Symmetrically, if there exists a dual solution to the full program with $\hat{\eta}_i = 0$, then there exists a leave-one-out primal solution with $\tilde{Y}_i = \tilde{X}_i^\top \hat{w}^{(-i)}$.
\end{lemma}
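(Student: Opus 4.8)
The plan is to argue entirely through the Karush--Kuhn--Tucker (KKT) conditions of the primal--dual pair for (\ref{eq:loo_main_qr}) and (\ref{eq:general_dual}), exploiting the observation that imposing $\hat\eta_i = 0$ decouples the $i$-th observation: its summand then contributes nothing to the dual objective of (\ref{eq:general_dual}), so the full program restricted to $\{\eta_i = 0\}$ is literally the leave-one-out program, and similarly on the primal side. Concretely, writing the Lagrangian exactly as in Section \ref{sec:dual_thresh} with primal residual variables $r_j = \tilde Y_j - \tilde X_j^\top w$ and multipliers $\eta \in \mmr^n$, strong duality (assumed throughout this section) gives that $\hat w$, $\hat r$, $\hat\eta$ are jointly optimal for the full program iff: (a) $\hat r_j = \tilde Y_j - \tilde X_j^\top \hat w$ for all $j$; (b) $\hat\eta_j \in [-(1-\tau),\tau]$; (c) $\hat\eta_j \in \partial\ell_\tau(\hat r_j)$; and (d) $\sum_{j=1}^n \hat\eta_j \tilde X_j \in \partial\mathcal R(\hat w)$. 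The analogous equivalence holds for the leave-one-out program with all indices and sums restricted to $j \neq i$. The one elementary computation I will lean on is that $\partial\ell_\tau(r)=\{\tau\}$ for $r>0$, $\{-(1-\tau)\}$ for $r<0$, and $[-(1-\tau),\tau]$ for $r=0$; since $\tau \in (0,1)$ this yields $0 \in \partial\ell_\tau(r) \iff r=0$.

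For the forward implication, suppose $\hat w^{(-i)}$ is a leave-one-out primal solution with $\tilde Y_i = \tilde X_i^\top \hat w^{(-i)}$, and let $(\hat\eta_j^{(-i)})_{j\neq i}$ be a leave-one-out dual optimum paired with it via (a)--(d) restricted to $j\neq i$. Define $\eta\in\mmr^n$ by $\eta_j=\hat\eta_j^{(-i)}$ for $j\neq i$ and $\eta_i=0$, and set $\hat r_j = \tilde Y_j-\tilde X_j^\top \hat w^{(-i)}$ for every $j\in\{1,\dots,n\}$. I then verify that $(\hat w^{(-i)},\hat r,\eta)$ satisfies (a)--(d) for the full program: conditions (a) and (b) are immediate; (c) holds for $j\neq i$ by the leave-one-out optimality and for $j=i$ because $\hat r_i=0$ by hypothesis while $0\in\partial\ell_\tau(0)$; and (d) holds because $\eta_i=0$ forces $\sum_{j=1}^n\eta_j\tilde X_j = \sum_{j\neq i}\hat\eta_j^{(-i)}\tilde X_j \in \partial\mathcal R(\hat w^{(-i)})$. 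Hence $\eta$ is a dual solution of the full program with $\eta_i=0$.

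For the converse, suppose $\hat\eta$ is a dual solution of the full program with $\hat\eta_i=0$, and let $\hat w$ (with residuals $\hat r_j = \tilde Y_j - \tilde X_j^\top \hat w$) be a primal solution paired with it via (a)--(d). Condition (c) at $j=i$ reads $0=\hat\eta_i\in\partial\ell_\tau(\hat r_i)$, which by the elementary fact above forces $\hat r_i = 0$, i.e. $\tilde Y_i = \tilde X_i^\top \hat w$. It remains to check that $\hat w$ also solves the leave-one-out program: conditions (a), (b), (c) restricted to $j\neq i$ are inherited unchanged, and the leave-one-out version of (d), $\sum_{j\neq i}\hat\eta_j\tilde X_j\in\partial\mathcal R(\hat w)$, coincides with the full (d) because $\hat\eta_i=0$. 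Taking $\hat w^{(-i)}=\hat w$ completes the argument.

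I do not anticipate a substantive obstacle; the proof is bookkeeping with KKT systems. The only points deserving care are (i) stating the optimality conditions in a form that makes the $\eta_i=0$ decoupling literally visible on both the primal and dual sides, and (ii) handling the kink of $\ell_\tau$ at $0$ correctly, in particular using $\tau\in(0,1)$ to conclude that a zero multiplier can only accompany an interpolated residual. It is also worth stressing that no uniqueness of primal or dual solutions is needed: in each direction it suffices to exhibit one valid primal--dual pair.
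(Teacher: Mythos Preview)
Your proposal is correct and takes essentially the same approach as the paper: both arguments pass between the full and leave-one-out problems by checking that the KKT (equivalently, Lagrangian saddle-point) conditions transfer once one sets $\eta_i=0$ and observes that $0\in\partial\ell_\tau(r)$ forces $r=0$. The paper phrases the verification as ``taking first-order derivatives'' of the Lagrangian, while you list the KKT conditions (a)--(d) explicitly, but the content is identical.
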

\begin{proof}
    For notational simplicity, we will focus on the case $i = n$. Suppose there exists a leave-one-out primal solution with $\tilde{Y}_n = \tilde{X}_n^\top \hat{w}^{(-n)}$. For $i \in \{1,\dots,n-1\}$, let $\hat{r}_i^{(-n)} =  \tilde{Y}_i - \tilde{X}_i^\top \hat{w}^{(-n)}$ denote the additional primal variables. Let $\hat{\eta}^{(-n)} \in \mmr^{n-1}$ denote a corresponding dual solution to the leave-one-out program. The Lagrangian for the leave-one-out program is
    \[
    L^{(-n)}\left(w^{(-n)}, r^{(-n)}, \eta^{(-n)}\right) = \sum_{j=1}^{n-1} \ell_{\tau}\left(r_j^{(-n)}\right) + \sum_{j=1}^{n-1} \eta^{(-n)}_j \left(\tilde{Y}_j - \tilde{X}_j^\top w^{(-n)} -r^{(-n)}_j\right) + \mathcal{R}\left(w^{(-n)}\right),
    \]
    and the Lagrangian for the full program is
    \begin{equation}\label{eq:qr_lagrangian}
    L(w, r, \eta) = \sum_{j=1}^{n} \ell_{\tau}(r_j) + \sum_{j=1}^{n} \eta_j\left(\tilde{Y}_j - \tilde{X}_j^\top w -r_j\right)+ \mathcal{R}(w).
    \end{equation}
    By assumption, $(\hat{w}^{(-n)},\hat{r}^{(-n)},\hat{\eta}^{(-n)})$ is a saddle point of $L^{(-n)}$. Using this fact, and taking first-order derivatives, it is straightforward to verify that $(\hat{w}^{(-n)}, (\hat{r}^{(-n)},0), (\hat{\eta}^{(-n)},0))$ is a saddle point of $L$. Thus, $\hat{\eta} = (\hat{\eta}^{(-n)},0)$ is a solution to the full dual program, as desired.

    For the reverse direction, suppose there exists a solution to the full dual program with $\hat{\eta}_n = 0$. Let $(\hat{w}, \hat{r}, \hat{\eta})$ denote the corresponding saddle point of $L$. By differentiating $L$ with respect to $r_n$, we see that we must have $\hat{r}_n = 0$. Moreover, differentiating $L$ with respect to $\hat{\eta}_n$, we then also find that $\tilde{Y}_n - \tilde{X}_n^\top \hat{w} = \hat{r}_n = 0$. So, using the notation $v_{1:(n-1)}$ to denote the first $n-1$ entries of a vector $v \in \mmr^n$ and taking first-order derivatives of $L^{(-n)}$, it is straightforward to verify that $(\hat{w}, \hat{r}_{1:(n-1)}, \hat{\eta}_{1:(n-1)})$ is a saddle point of $L^{(-n)}$. Since $\tilde{Y}_n = \tilde{X}_n^\top \hat{w}$, this proves the desired result. 
\end{proof}

To prove Proposition \ref{prop:initial_loo_comparison}, we will need one additional technical result demonstrating that the $i_{\text{th}}$ coordinate of the dual solution, $\hat{\eta}_i$, behaves monotonically in $\tilde{Y}_i$. This result was originally derived in \citet{GCC2025} where it was used to obtain efficient algorithms for computing $\hat{q}_{\text{GCC, rand.}}(\cdot)$. To state the result formally, let
\begin{equation}\label{eq:app_dual_with_swap}
\hat{\eta}^{\tilde{Y}_i \to y} = \underset{\eta \in [-(1-\tau),\tau]^n}{\text{argmax}} \sum_{j \neq i} \eta_j \tilde{Y}_j + \eta_i y - \mathcal{R}^*\left(\sum_{j=1}^n \eta_j \tilde{X}_j\right),
\end{equation}
denote the dual solution obtained when $\tilde{Y}_i$ is replaced with $y \in \mmr$. We have the following lemma.

\begin{lemma}\label{lem:monotone_dual}[Theorem 4 of \citet{GCC2025}]
Fix any $i \in \{1,\dots,n\}$ and let $\{\hat{\eta}^{\tilde{Y}_i \to y}\}_{y \in \mmr}$ denote any collection of solutions to (\ref{eq:app_dual_with_swap}). Then, $y \mapsto \hat{\eta}^{\tilde{Y}_i \to y}$ is non-decreasing.  
\end{lemma}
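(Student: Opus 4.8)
The plan is to use the textbook ``sum of two optimality inequalities'' argument for monotone comparative statics. The three facts that make this work are: the feasible set $[-(1-\tau),\tau]^n$ does not depend on $y$; the objective is concave in $\eta$, since $\mathcal{R}^*$ is a convex conjugate and hence convex; and the parameter $y$ enters the objective only through the single term $\eta_i y$, whose coefficient is increasing in $y$. So raising $y$ strictly increases the marginal reward for a larger $\eta_i$, and the optimal $i$-th coordinate cannot move down.

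To carry this out, I would write the objective at parameter $y$ as $G_y(\eta) = F(\eta) + \eta_i y$, where $F(\eta) = \sum_{j \neq i} \eta_j \tilde{Y}_j - \mathcal{R}^*\bigl(\sum_{j=1}^n \eta_j \tilde{X}_j\bigr)$ gathers everything that does not depend on $y$. Fix $y_1 < y_2$ and let $\eta^1 = \hat{\eta}^{\tilde{Y}_i \to y_1}$ and $\eta^2 = \hat{\eta}^{\tilde{Y}_i \to y_2}$ be any members of the two collections of solutions; both are feasible points of the common feasible set. Optimality of $\eta^1$ against the feasible competitor $\eta^2$ gives $F(\eta^1) + \eta^1_i y_1 \geq F(\eta^2) + \eta^2_i y_1$, and optimality of $\eta^2$ against $\eta^1$ gives $F(\eta^2) + \eta^2_i y_2 \geq F(\eta^1) + \eta^1_i y_2$. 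Adding the two inequalities cancels the $F$-terms and leaves $\eta^1_i y_1 + \eta^2_i y_2 \geq \eta^2_i y_1 + \eta^1_i y_2$, i.e.\ $(\eta^1_i - \eta^2_i)(y_1 - y_2) \geq 0$. Since $y_1 - y_2 < 0$ this yields $\eta^1_i \leq \eta^2_i$, which is precisely the asserted monotonicity of $y \mapsto \hat{\eta}^{\tilde{Y}_i \to y}_i$.

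I do not expect a genuine obstacle here: the argument invokes only feasibility and the defining optimality of the two solutions, so it needs neither uniqueness of the primal/dual solutions nor smoothness of $\mathcal{R}^*$, and it is insensitive to which solution is selected from each collection. The only points worth a sentence of care are clarifying that ``non-decreasing'' is meant coordinatewise in the $i$-th entry $\hat{\eta}^{\tilde{Y}_i \to y}_i$ (the quantity used in the dual-thresholding construction of Section \ref{sec:dual_thresh}), and recording that convexity of $\mathcal{R}$ makes $\mathcal{R}^*$ a proper convex function, so that $G_y$ is a bona fide concave maximization on a fixed feasible set with the stated maximizers; the analogous claim for the whole vector $\hat{\eta}^{\tilde{Y}_i \to y}$ under a coordinatewise order would be false in general, so I would keep the statement and proof restricted to the $i$-th coordinate as in the application.
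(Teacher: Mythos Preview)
The paper does not supply its own proof of this lemma; it is simply quoted as Theorem~4 of \citet{GCC2025}. Your argument---sum the two optimality inequalities to obtain $(\eta^1_i-\eta^2_i)(y_1-y_2)\geq 0$---is the standard exchange/monotone comparative statics proof and is correct. It uses only that the feasible box is independent of $y$ and that $y$ enters linearly through $\eta_i y$; concavity of the objective is in fact not even needed for the inequality, only the existence of maximizers.

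Your remark that the conclusion should be read as monotonicity of the single coordinate $\hat{\eta}^{\tilde{Y}_i\to y}_i$, and not of the full vector under the coordinatewise order, is exactly right: this is precisely how the lemma is invoked in the proof of Proposition~\ref{prop:initial_loo_comparison}, and easy two-point examples (e.g.\ $n=2$, $\tilde{X}_1=\tilde{X}_2$, $\mathcal{R}=0$, forcing $\eta_1+\eta_2=0$) show the other coordinates can move in the opposite direction.
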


We are now ready to prove Proposition \ref{prop:initial_loo_comparison}.

\begin{proof}[Proof of Proposition \ref{prop:initial_loo_comparison}]
    Fix any $i \in \{1,\dots,n\}$. Suppose there exists a leave-one-out primal solution with $\tilde{Y}_i < \tilde{X}_i^\top \hat{w}^{(-i)}$. By Lemma \ref{lem:dual_at_0_to_interpolation}, when $y = \tilde{X}_i^\top \hat{w}^{(-i)}$ there exists a solution to (\ref{eq:app_dual_with_swap}) with $\hat{\eta}^{\tilde{Y}_i \to \tilde{X}_i^\top \hat{w}^{(-i)}}_i = 0$. By the monotonicity of the dual solutions (Lemma \ref{lem:monotone_dual}), this immediately implies that any dual solution to the full program must satisfy 
    \[
    \hat{\eta}_i \leq \hat{\eta}^{\tilde{Y}_i \to \tilde{X}_i^\top \hat{w}^{(-i)}}_i = 0,
    \]
    as desired. 
    
    The case where $\tilde{Y}_i > \tilde{X}_i^\top \hat{w}^{(-i)}$ follows by an identical argument.
\end{proof}

We conclude this section with a proof of Theorem \ref{thm:loo_equivalence}. To aid in this proof, we introduce a number of additional pieces of notation. We let $\tilde{X} \in \mmr^{n \times p}$ denote the matrix with rows $\tilde{X}_1,\dots,\tilde{X}_n$ and $\tilde{Y} \in \mmr^n$ denote the vector with entries $\tilde{Y}_1,\dots,\tilde{Y}_n$. For any vector $v \in \mmr^k$ and set $I \subseteq \{1,\dots,k\}$ we let $(v_{I}) = (v_i)_{i \in I}$ denote the subvector consisting of the entries in $I$. Similarly, for any $I \subseteq \{1,\dots,n\}$ and $J \in \{1,\dots,p\}$ we let $\tilde{X}_{I, J} = (\tilde{X}_{i,j})_{i \in I, j \in J}$ denote the submatrix consisting of the rows in $I$ and columns in $J$. Finally, for any $k \in \mmn$ we let $[k]$ denote the set $\{1,\dots,k\}$.

We begin by presenting a preliminary lemma which controls the rank of the submatrix of $\tilde{X}$ corresponding to the interpolated points of the quantile regression.

\begin{lemma}\label{lem:interpolated_set_size}
    Assume that $\{(\tilde{X}_i,\tilde{Y}_i)\}_{i=1}^n$ are i.i.d.~and that the distribution of $\tilde{Y}_i \mid \tilde{X}_i$ is continuous. Then, with probability one all primal solutions $\hat{w}$ satisfy
    \[
\textup{rank}\left(\tilde{X}_{\{i : \tilde{Y}_i = \tilde{X}_i^\top \hat{w}\},[p]} \right) = |\{i : \tilde{Y}_i = \tilde{X}_i^\top \hat{w}\}|.
    \]
\end{lemma}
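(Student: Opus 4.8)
The plan is to condition on the design matrix $\tilde{X}$ and reduce the claim to a union, over the finitely many index sets $I \subseteq [n]$, of probability-zero events. The key observation driving the proof is that if the rows $\{\tilde{X}_i\}_{i \in I}$ are linearly dependent and some $w \in \mmr^p$ interpolates all of $I$ (meaning $\tilde{Y}_i = \tilde{X}_i^\top w$ for every $i \in I$), then the responses must satisfy an exact linear relation: picking a nonzero $c$ with $\sum_{i \in I} c_i \tilde{X}_i = 0$ gives $\sum_{i \in I} c_i \tilde{Y}_i = (\sum_{i \in I} c_i \tilde{X}_i)^\top w = 0$. Since the distribution of $\tilde{Y}_i \mid \tilde{X}_i$ is continuous, such an exact relation among the conditionally independent responses has probability zero.

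Concretely, first I would fix $I$ and, on the event that $\{\tilde{X}_i\}_{i \in I}$ are linearly dependent, choose — by a fixed rule depending on $\tilde{X}$ only — a nonzero dependence vector $c$ and an index $j \in I$ with $c_j \neq 0$. Then the event ``some $w \in \mmr^p$ interpolates all of $I$'' is contained in $\{\sum_{i \in I} c_i \tilde{Y}_i = 0\} = \{\tilde{Y}_j = -c_j^{-1}\sum_{i \in I,\, i \neq j} c_i \tilde{Y}_i\}$; conditioning on $\tilde{X}$ and on $(\tilde{Y}_i)_{i \neq j}$, the right-hand side is a constant while $\tilde{Y}_j$ remains non-atomic (using the i.i.d. assumption together with continuity of $\tilde{Y}_i \mid \tilde{X}_i$), so this event has probability zero. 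Taking a union over the at most $2^n$ choices of $I$ (and, within each, over the choice of $j$) keeps us inside a null set. Off this null set every $w \in \mmr^p$ — in particular every primal solution $\hat{w}$ — has the property that its interpolation set $I(\hat{w}) := \{i : \tilde{Y}_i = \tilde{X}_i^\top \hat{w}\}$ contains no linearly dependent collection of rows, which is exactly the statement $\textup{rank}(\tilde{X}_{I(\hat{w}),[p]}) = |I(\hat{w})|$.

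I do not expect a serious obstacle here; the statement is essentially a genericity argument and no optimality properties of $\hat{w}$ are used beyond the definition of $I(\hat{w})$. The two points that need a little care are (i) the dependence vector $c$ must be selected as a function of $\tilde{X}$ alone, so that after conditioning on $\tilde{X}$ and $(\tilde{Y}_i)_{i \neq j}$ the coordinate $\tilde{Y}_j$ is still genuinely random; and (ii) one should avoid fussing over the measurability of ``there exists a primal solution with $I \subseteq I(\hat{w})$'' by simply sandwiching the bad event inside the explicitly measurable null set $\bigcup_I \bigl(\{\{\tilde{X}_i\}_{i\in I} \text{ linearly dependent}\} \cap \{\sum_{i\in I} c_i \tilde{Y}_i = 0\}\bigr)$, which is all that is needed for an almost-sure conclusion.
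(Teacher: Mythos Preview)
Your proposal is correct and follows essentially the same route as the paper: fix an index set, observe that linear dependence among the rows forces an exact linear relation among the corresponding responses, argue this has probability zero by continuity of $\tilde{Y}_i \mid \tilde{X}_i$ and conditional independence, then take a union bound over the finitely many index sets. The only cosmetic difference is that the paper picks a maximal-rank subset $I' \subset I_{=}$ and writes the redundant rows as $A(\tilde{X})\tilde{X}_{I',[p]}$ (handling all redundant responses at once), whereas you pick a single dependence vector $c$ and a single coordinate $j$; both are the same argument.
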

\begin{proof}
    Fix any primal solution $\hat{w}$. Let $I_{=}(\hat{w}) = \{i \in \{1,\dots,n\} : \tilde{Y}_i = \tilde{X}_i^\top \hat{w}\}$ denote the set of interpolated points. By definition, we have that 
    \[
    \tilde{Y}_{I_{=}(\hat{w})} = \tilde{X}_{I_{=}(\hat{w}), [p]}\hat{w}.
    \]
    For the sake of deriving a contradiction, suppose that $\textup{rank}\left(\tilde{X}_{\{i : \tilde{Y}_i = \tilde{X}_i^\top \hat{w}\},[p]}\right) < |I_{=}(\hat{w})|$. Let $I(\hat{w}) \subset |I_{=}(\hat{w})|$ be such that $\tilde{X}_{I(\hat{w}), [p]}$ is of maximal rank. Then, there exists a matrix $A(\tilde{X})$ such that $\tilde{X}_{I_{=}(\hat{w}) \setminus I(\hat{w}), [p]} = A(\tilde{X})\tilde{X}_{I(\hat{w}),[p]}$ and, in particular,
    \[
    \tilde{Y}_{I_{=}(\hat{w})\setminus I(\hat{w}) } = \tilde{X}_{I_{=}(\hat{w}) \setminus I(\hat{w}), [p]} \hat{w}  = A(\tilde{X})\tilde{X}_{I(\hat{w}),[p]} \hat{w} = A(\tilde{X}) \tilde{Y}_{I(\hat{w})}
    \]
    However, for any fixed sets $ I' \subset I'_{=} \subseteq [n]$, the distribution of $\tilde{Y}_{I'_{=} \setminus I' } \mid (\tilde{X}, \tilde{Y}_{I'})$ is continuous. So, taking a union bound over all choices of $I_{=}(\hat{w})$ and $I(\hat{w})$ we find that this occurs with probability zero, as claimed.
    
\end{proof}

We now prove Theorem \ref{thm:loo_equivalence}.

\begin{proof}[Proof of Theorem \ref{thm:loo_equivalence}]
    We split into two cases corresponding to the two sets of assumptions.

    \noindent \textbf{Case 1, Assumption \ref{assump:reg_condition} holds:} In this case the primal program is strictly convex. Thus, for any $i \in \{1,\dots,n\}$, the leave-one-out solution $\hat{w}^{(-i)}$ is unique and by the continuity of the distribution of $\tilde{Y}_{i} \mid \tilde{X}_{i}$ we must have that $\mmp(\tilde{Y}_{i} = \tilde{X}_{i}^\top \hat{w}^{(-i)} )=0$. By Lemma \ref{lem:dual_at_0_to_interpolation}, this implies that $\mmp(\text{Any dual solution satisfies }\hat{\eta}_{i} = 0) = 0$. The desired result then follows from Proposition \ref{prop:initial_loo_comparison} and the convexity of the space of primal and dual solutions.

    \noindent \textbf{Case 2, Assumption \ref{assump:x_pertubation} holds:} This case is considerably more involved. Without loss of generality, it is sufficient to prove the result for $i = n$. To begin, note that by the convexity of the set of primal and dual solutions and the results of Proposition \ref{prop:initial_loo_comparison} and Lemma \ref{lem:dual_at_0_to_interpolation}, it is sufficient to show that $\mmp(\text{There exists a dual solution with } \hat{\eta}_n = 0) = 0$. Recall that all dual solutions are supported on the domain $[-(1-\tau),\tau]^n$ (cf.~(\ref{eq:general_dual}) in the main text). Fix any dual solution $\hat{\eta}$ and let $I_{\text{int.}}(\hat{\eta}) = \{i \in \{1,\dots,n\} : -(1-\tau) < \hat{\eta}_i < \tau \}$ denote the set of coordinates that lie in the interior of the feasible region. We need to show that with probability one $\hat{\eta}_{I_{\text{int.}}(\hat{\eta})}$ has all non-zero entries. 
    
    To do this, let $(\hat{w},\hat{r})$ denote any corresponding primal solution. Recall that the Lagrangian for this optimization problem is 
     \[
    L(w,r,\eta) = \sum_{i=1}^n \ell_{\tau}(r_i) + \sum_{i=1}^n \eta_i(\tilde{Y}_i - \tilde{X}_i^\top w - r_i) + \sum_{j=1}^p \lambda_j w_j^2.
    \]
    Let $J_{+} = \{j \in \{1,\dots,d\} : \lambda_j > 0\}$ denote the set of coordinates which receive positive regularization. Let $\Lambda_{J_+} = \text{diag}((\lambda_j)_{j \in J_{+}})$ be the diagonal matrix with diagonal entries $(\lambda_j)_{j \in J_{+}}$. Differentiating $L$ with respect to $w$ gives us that
     \[
    \tilde{X}^\top \hat{\eta} = (2\lambda_j \hat{w}_j)_{j=1}^p \iff  \tilde{X}_{I_{\text{int.}}(\hat{\eta}),J_+^c}^\top \hat{\eta}_{I_{\text{int.}}(\hat{\eta})} = - \tilde{X}_{I_{\text{int.}}(\hat{\eta})^c,J_+^c}^\top \hat{\eta}_{I_{\text{int.}}(\hat{\eta})^c}  \text { and } \hat{w}_{J_+} = \frac{1}{2} \Lambda_{J_+}^{-1} \tilde{X}_{[n],J_+}^\top \hat{\eta}.
    \]
    Moreover, differentiating $L$ with respect to $r_i$ gives the first-order condition $\hat{\eta}_i \in \partial \ell_{\tau}(\hat{r}_i)$. Now, recall that by construction $\hat{r}_i = \tilde{Y}_i - \tilde{X}_i \hat{w}$. Combining these two conditions, we find that for all $i \in I_{\text{int.}}(\hat{\eta})$, $\tilde{Y}_i = \tilde{X}_i \hat{w} $ and thus,
    \begin{align*}
    & \tilde{Y}_{I_{\text{int.}}(\hat{\eta})}  = \tilde{X}_{I_{\text{int.}}(\hat{\eta}),[p]}\hat{w} = \tilde{X}_{I_{\text{int.}}(\hat{\eta}),J_+^c}\hat{w}_{J_+^c} + \frac{1}{2} \tilde{X}_{I_{\text{int.}}(\hat{\eta}),J_+} \Lambda_{J_+}^{-1} \tilde{X}_{[n],J_+}^\top \hat{\eta}\\
    & = \tilde{X}_{I_{\text{int.}}(\hat{\eta}),J_+^c}\hat{w}_{J_+^c} + \frac{1}{2} \tilde{X}_{I_{\text{int.}}(\hat{\eta}),J_+} \Lambda_{J_+}^{-1} X_{I_{\text{int.}}(\hat{\eta}),J_+}^\top \hat{\eta}_{I_{\text{int.}}(\hat{\eta})} + \frac{1}{2} \tilde{X}_{I_{\text{int.}}(\hat{\eta}),J_+} \Lambda_{J_+}^{-1} \tilde{X}_{I_{\text{int.}}(\hat{\eta})^c,J_+}^\top \hat{\eta}_{I_{\text{int.}}(\hat{\eta})^c}.
    \end{align*}
    Combining all of the above observations, we arrive at the system of equations
    \begin{align*}
    & \left[\begin{matrix}
        \frac{1}{2} \tilde{X}_{I_{\text{int.}}(\hat{\eta}),J_+} \Lambda_{J_+}^{-1} \tilde{X}_{I_{\text{int.}}(\hat{\eta}),J_+}^\top  & \tilde{X}_{I_{\text{int.}}(\hat{\eta}),J_+^c} \\
        \tilde{X}_{I_{\text{int.}}(\hat{\eta}),J_+^c}^\top & \pmb{0}_{|J_{+}^c|, |J_{+}^c|} 
    \end{matrix} \right] \left[ \begin{matrix} \hat{\eta}_{I_{\text{int.}}(\hat{\eta})} \\ \hat{w}_{J_+^c} \end{matrix} \right] = \left[\begin{matrix}
        \tilde{Y}_{I_{\text{int.}}(\hat{\eta})} - \frac{1}{2} \tilde{X}_{I_{\text{int.}}(\hat{\eta}),J_+} \Lambda_{J_+}^{-1} \tilde{X}_{I_{\text{int.}}(\hat{\eta})^c,J_+}^\top \hat{\eta}_{I_{\text{int.}}(\hat{\eta})^c}\\
       -\tilde{X}_{I_{\text{int.}}(\hat{\eta})^c,J_+^c}^\top \hat{\eta}_{I_{\text{int.}}(\hat{\eta})^c} 
    \end{matrix} \right],
    \end{align*}
    where $\pmb{0}_{|J_{+}^c|, |J_{+}^c|} \in \mmr^{|J_{+}^c| \times |J_{+}^c|}$ denotes the zero matrix. We claim that with probability one the matrix appearing on the left-hand side above is invertible. To see this, suppose that $(v_1,v_2) \in \mmr^{|I_{\text{int.}}(\hat{\eta})| \times |J_+^c|}$ is in the kernel of this matrix, i.e., suppose that 
    \begin{equation}\label{eq:kernel_matrix_system}
    \frac{1}{2} \tilde{X}_{I_{\text{int.}}(\hat{\eta}),J_+} \Lambda_{J_+}^{-1} \tilde{X}_{I_{\text{int.}}(\hat{\eta}),J_+}^\top v_1 +   \tilde{X}_{I_{\text{int.}}(\hat{\eta}),J_+^c} v_2 = 0 \ \ \ \text{ and } \ \ \ \tilde{X}^{\top}_{I_{\text{int.}} (\hat{\eta}),J_+^c} v_1 = 0.
    \end{equation}
    Taking the inner product of the first equation with $v_1$, we find that 
    \begin{align*}
    & 0 = v_1^\top \frac{1}{2} \tilde{X}_{I_{\text{int.}}(\hat{\eta}),J_+} \Lambda_{J_+}^{-1} \tilde{X}_{I_{\text{int.}}(\hat{\eta}),J_+}^\top v_1 +   v_1^\top \tilde{X}_{I_{\text{int.}}(\hat{\eta}),J_+^c} v_2 =  v_1^\top \frac{1}{2} \tilde{X}_{I_{\text{int.}}(\hat{\eta}),J_+} \Lambda_{J_+}^{-1} \tilde{X}_{I_{\text{int.}}(\hat{\eta}),J_+}^\top v_1\\
    & \implies \tilde{X}_{I_{\text{int.}}(\hat{\eta}),J_+}^\top v_1 = 0.
    \end{align*}
    Combining this fact with the second equation in (\ref{eq:kernel_matrix_system}) gives $v_1^\top \tilde{X}_{I_{\text{int.}}(\hat{\eta}),[p]} = 0$. By Lemma \ref{lem:interpolated_set_size}, with probability one $\tilde{X}_{I_{\text{int.}}(\hat{\eta}),[p]}$ has linearly independent rows, and thus, with probability one we must have that $v_1 = 0$. Returning to  (\ref{eq:kernel_matrix_system}), this implies that $\tilde{X}_{I_{\text{int.}}(\hat{\eta}), J_+^c} v_2 = 0$. Lemma \ref{lem:full_rank_interp_unreg} below shows that with probability one $\tilde{X}_{I_{\text{int.}}(\hat{\eta}), J_+^c}$ is of rank $|J_+^c|$. Thus, with probability one we must have that $v_2 = 0$. This proves the desired claim.

    Applying this claim, we find that 
    \begin{align*}
     & \left[ \begin{matrix} \hat{\eta}_{I_{\text{int.}}(\hat{\eta})} \\ \hat{w}_{J_+^c} \end{matrix} \right]\\
     & =  \left[\begin{matrix}
        \frac{1}{2} \tilde{X}_{I_{\text{int.}}(\hat{\eta}),J_+} \Lambda_{J_+}^{-1} \tilde{X}_{I_{\text{int.}}(\hat{\eta}),J_+}^\top  & \tilde{X}_{I_{\text{int.}}(\hat{\eta}),J_+^c} \\
        \tilde{X}_{I_{\text{int.}}(\hat{\eta}),J_+^c}^\top & \pmb{0}_{|J_{+}^c|, |J_{+}^c|} 
    \end{matrix} \right]^{-1} \left[\begin{matrix}
        \tilde{Y}_{I_{\text{int.}}(\hat{\eta})} - \frac{1}{2} \tilde{X}_{I_{\text{int.}}(\hat{\eta}),J_+} \Lambda_{J_+}^{-1} \tilde{X}_{I_{\text{int.}}(\hat{\eta})^c,J_+}^\top \hat{\eta}_{I_{\text{int.}}(\hat{\eta})^c}\\
       -\tilde{X}_{I_{\text{int.}}(\hat{\eta})^c,J_+^c}^\top \hat{\eta}_{I_{\text{int.}}(\hat{\eta})^c} 
    \end{matrix} \right]
    \end{align*}
    
    Now, let us consider the behavior of the random variable appearing on the last line above when $I_{\text{int.}}(\hat{\eta})$ is a fixed set and $\hat{\eta}_{I_{\text{int.}}(\hat{\eta})^c}$ is a fixed vector. By Lemma \ref{lem:interp_set_size_less_p} below, $|I_{\text{int.}}(\hat{\eta})| <n$. So, fix any set $\emptyset \subset I_{\text{int.}} \subset [n]$ and vector $\eta_{I_{\text{int.}}^c} \in \{-(1-\tau),\tau\}^{|I_{\text{int.}}^c|}$ with the property that the matrix inverse above exists and consider the behaviour of the random variable
    \begin{equation}\label{eq:fixed_set_eta_formula}
    \left[\begin{matrix}
        \frac{1}{2} \tilde{X}_{I_{\text{int.}},J_+} \Lambda_{J_+}^{-1} \tilde{X}_{I_{\text{int.}},J_+}^\top  & \tilde{X}_{I_{\text{int.}},J_+^c} \\
        \tilde{X}_{I_{\text{int.}},J_+^c}^\top & \pmb{0}_{|J_{+}^c|, |J_{+}^c|} 
    \end{matrix} \right]^{-1} \left[\begin{matrix}
        \tilde{Y}_{I_{\text{int.}}} - \frac{1}{2} \tilde{X}_{I_{\text{int.}},J_+} \Lambda_{J_+}^{-1} \tilde{X}_{I_{\text{int.}}^c,J_+}^\top {\eta}_{I_{\text{int.}}^c}\\
       -\tilde{X}_{I_{\text{int.}}^c,J_+^c}^\top {\eta}_{I_{\text{int.}}^c} 
    \end{matrix} \right].
    \end{equation}
    Recall that by assumption the covariates can be written as $\tilde{X} = Z + \xi$ where $\xi \in \mmr^{n\times p}$ has i.i.d., continuously distributed entries independent of $Z \in \mmr^{n\times p}$ and $\tilde{Y}$. Additionally, recall that $\tilde{Y} \mid \tilde{X}$ is continuously distributed. Applying these facts, we find that conditional on the  $(\tilde{X}_{I_{\text{int.}},J_+}, \tilde{X}_{I_{\text{int.}},J^c_+}, Z)$ the vector appearing in (\ref{eq:fixed_set_eta_formula}) is continuously distributed. In particular, conditional on $(\tilde{X}_{I_{\text{int.}},J_+}, \tilde{X}_{I_{\text{int.}},J^c_+}, $Z$)$ and the event that the matrix appearing in this expression is invertible, we find that with probability one the entire random variable appearing in (\ref{eq:fixed_set_eta_formula}) is continuously distributed, and thus, with probability one, has all non-zero entries. Taking a union bound over the values of $I_{\text{int.}}$ and ${\eta}_{I_{\text{int.}}^c}$ gives the desired result.

\end{proof}

We close this section with two lemmas that were helpful in the proof of Theorem \ref{thm:loo_equivalence}.

\begin{lemma}\label{lem:interp_set_size_less_p}
    Suppose that $\{(\tilde{X},\tilde{Y})\}_{i=1}^n$ are i.i.d.~and that Assumption \ref{assump:x_pertubation} holds. Then,
    \[
    \mmp(\text{For all dual solutions $\hat{\eta}$, } |\{i : -(1-\tau) < \hat{\eta}_i < \tau \}| < n) = 1.
    \]
\end{lemma}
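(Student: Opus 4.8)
The plan is to argue by contradiction: I will show that the existence of a dual solution all of whose coordinates lie strictly inside $(-(1-\tau),\tau)$ would force the primal fit to interpolate every data point, and that this latter event has probability zero.

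First I would set up the primal--dual correspondence exactly as in the proof of Lemma \ref{lem:dual_at_0_to_interpolation}. Fix any dual solution $\hat{\eta}$ and, invoking strong duality (assumed throughout Section \ref{sec:loo_cv}), let $(\hat{w},\hat{r})$ be any corresponding primal solution, so $(\hat{w},\hat{r},\hat{\eta})$ is a saddle point of the Lagrangian $L(w,r,\eta)=\sum_{i=1}^n\ell_\tau(r_i)+\sum_{i=1}^n\eta_i(\tilde{Y}_i-\tilde{X}_i^\top w-r_i)+\mathcal{R}(w)$ from \eqref{eq:qr_lagrangian}. Differentiating in $r_i$ gives the stationarity condition $\hat{\eta}_i\in\partial\ell_\tau(\hat{r}_i)$; since $\partial\ell_\tau(r)=\{\tau\}$ for $r>0$, $\{-(1-\tau)\}$ for $r<0$, and $[-(1-\tau),\tau]$ for $r=0$, any coordinate with $\hat{\eta}_i\in(-(1-\tau),\tau)$ must satisfy $\hat{r}_i=0$. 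Because $\hat{r}_i=\tilde{Y}_i-\tilde{X}_i^\top\hat{w}$ by construction, a dual solution with all $n$ coordinates in the interior would force $\tilde{Y}_i=\tilde{X}_i^\top\hat{w}$ for every $i$, i.e.\ $\tilde{Y}\in\mathrm{col}(\tilde{X})$, where $\tilde{X}\in\mmr^{n\times p}$ has rows $\tilde{X}_1,\dots,\tilde{X}_n$ and $\tilde{Y}\in\mmr^n$ has entries $\tilde{Y}_1,\dots,\tilde{Y}_n$.

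Next I would show $\mmp(\tilde{Y}\in\mathrm{col}(\tilde{X}))=0$. Condition on $\tilde{X}$. Since $p<n$ under Assumption \ref{assump:x_pertubation}, the subspace $\mathrm{col}(\tilde{X})$ has dimension at most $p\le n-1$, hence is contained in a hyperplane $\{y:a^\top y=0\}$ for some measurably chosen $a=a(\tilde{X})\neq 0$. Because the pairs $(\tilde{X}_i,\tilde{Y}_i)$ are i.i.d., conditionally on $\tilde{X}$ the entries $\tilde{Y}_1,\dots,\tilde{Y}_n$ are independent, each drawn from the continuous law of $\tilde{Y}_i\mid\tilde{X}_i$. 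Picking an index $k$ with $a_k\neq 0$ and conditioning further on $(\tilde{Y}_i)_{i\neq k}$, the event $a^\top\tilde{Y}=0$ pins $\tilde{Y}_k$ to a single value, which has conditional probability zero by continuity; integrating back up yields $\mmp(\tilde{Y}\in\mathrm{col}(\tilde{X})\mid\tilde{X})=0$, and then $\mmp(\tilde{Y}\in\mathrm{col}(\tilde{X}))=0$.

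Finally, the implication in the first step is a pointwise statement valid for every optimal $\hat{\eta}$, so the event ``there exists a dual solution $\hat{\eta}$ with $|\{i:-(1-\tau)<\hat{\eta}_i<\tau\}|=n$'' is contained in $\{\tilde{Y}\in\mathrm{col}(\tilde{X})\}$ and therefore has probability zero; its complement is exactly the event in the statement. I expect the only mildly delicate points to be the bookkeeping around the ``for all dual solutions'' quantifier — handled because the reduction in step one holds simultaneously for all optimal $\hat{\eta}$ — and the routine measurable-selection detail for the hyperplane normal $a(\tilde{X})$; the perturbation structure of Assumption \ref{assump:x_pertubation} is not actually needed here, only the continuity of $\tilde{Y}_i\mid\tilde{X}_i$ and $p<n$.
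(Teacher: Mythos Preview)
Your proposal is correct and follows essentially the same route as the paper: the first-order condition $\hat{\eta}_i\in\partial\ell_\tau(\hat{r}_i)$ forces interior dual coordinates to correspond to interpolated points, and then a dimension-plus-continuity argument rules out interpolating all $n$ points. The only difference is that the paper routes the second step through Lemma~\ref{lem:interpolated_set_size}, obtaining the sharper intermediate bound $|\{i:\tilde{Y}_i=\tilde{X}_i^\top\hat{w}\}|\le p$ before invoking $p<n$, whereas you argue directly that $\tilde{Y}\in\mathrm{col}(\tilde{X})$ has probability zero; your shortcut is perfectly adequate since the lemma only asks for $<n$, and your observation that only continuity of $\tilde{Y}_i\mid\tilde{X}_i$ and $p<n$ are needed (not the perturbation structure) is correct and applies equally to the paper's proof.
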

\begin{proof}
    Let $(\hat{w},\hat{r},\hat{\eta})$ denote any primal-dual solution, i.e., any saddle point of the Lagrangian 
    \[
    L(w,r,\eta) = \sum_{i=1}^n \ell_{\tau}(r_i) + \sum_{i=1}^n \eta_i(\tilde{Y}_i - \tilde{X}_i^\top w - r_i) + \sum_{j=1}^p \lambda_j w_j^2.
    \]
   Differentiating $L$ with respect to $r$, we must have that for all $i \in [n]$, $\hat{\eta}_i \subseteq \partial \ell_{\tau}(\hat{r}_i)$. Recalling the constraint $\hat{r}_i = \tilde{Y}_i - \tilde{X}_i \hat{w}$, this in particular implies that 
   \[
   \{i : -(1-\tau) < \hat{\eta}_i < \tau \} \subseteq \{i : \tilde{Y}_i = \tilde{X}_i \hat{w}\}.
   \]
   Now, by Lemma  \ref{lem:interpolated_set_size} we have that with probability one all primal solutions are such that $| \{i : \tilde{Y}_i = \tilde{X}_i \hat{w}\}| \leq p$. Thus, with probability one all dual solutions satisfy
   \[
   |\{i : -(1-\tau) < \hat{\eta}_i < \tau \}| \leq p.
   \]
   Since by assumption $p < n$, this gives the desired result.  
\end{proof}

\begin{lemma}\label{lem:full_rank_interp_unreg}
    Suppose that $\{(\tilde{X},\tilde{Y})\}_{i=1}^n$ are i.i.d.~and that Assumption \ref{assump:x_pertubation} holds. For any dual solution $\hat{\eta}$, let $I_{\text{int.}}(\hat{\eta}) = \{i : -(1-\tau) < \hat{\eta}_i < \tau\}$ denote the set of coordinates of $\hat{\eta}$ that lie in the interior of the feasible region. Let $J_+^c = \{j : \lambda_j = 0\}$ denote the set of unregularized coordinates of the primal variables. Then,
    \[
    \mmp\left(\text{For all dual solutions $\hat{\eta}$, }  \textup{rank}(\tilde{X}_{I_{\text{int.}}(\hat{\eta}), J_+^c}) = |J_+^c|\right) = 1.
    \]
\end{lemma}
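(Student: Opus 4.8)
The plan is to prove the lemma by a union bound over the finitely many possible realizations of the index set $I_{\text{int.}}(\hat\eta)$, which I abbreviate as $T$, combined with the first-order optimality conditions of the regression. The starting point is that on the event $\{I_{\text{int.}}(\hat\eta)=T\}$ the random submatrix $\tilde X_{I_{\text{int.}}(\hat\eta),J_+^c}$ coincides with the fixed-index matrix $\tilde X_{T,J_+^c}$, so it suffices to show
\[
\mmp\left(\textup{rank}(\tilde X_{T,J_+^c})<|J_+^c|\ \text{ and some dual solution }\hat\eta\ \text{has }I_{\text{int.}}(\hat\eta)=T\right)=0
\]
for every $T\subseteq\{1,\dots,n\}$, and then sum over the (finitely many) choices of $T$. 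I would split into two cases according to the size of $T$.

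When $|T|\ge|J_+^c|$, the matrix $\tilde X_{T,J_+^c}$ has at least as many rows as columns, and since Assumption~\ref{assump:x_pertubation} lets us write $\tilde X=Z+\xi$ with $\xi$ having i.i.d.\ absolutely continuous entries independent of $(Z,\tilde Y)$, any such fixed-index submatrix has full column rank almost surely; hence the first event in the display already has probability zero.

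The substantive case is $|T|<|J_+^c|$, where I would exploit the stationarity condition obtained by differentiating the Lagrangian in $w$ — exactly the identity $\tilde X_{[n],j}^\top\hat\eta=2\lambda_j\hat w_j$ used in the proof of Theorem~\ref{thm:loo_equivalence} — which for the unregularized coordinates $j\in J_+^c$ gives, after splitting the sum over rows into $T$ and $T^c$,
\[
\tilde X_{T^c,J_+^c}^\top\hat\eta_{T^c}=-\,\tilde X_{T,J_+^c}^\top\hat\eta_T .
\]
Its right-hand side lies in the column space of $\tilde X_{T,J_+^c}^\top$, a subspace $V\subseteq\mmr^{|J_+^c|}$ of dimension at most $|T|<|J_+^c|$, hence a proper subspace; and since $|J_+^c|\le p<n$ we have $T^c\neq\emptyset$. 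By definition of $I_{\text{int.}}$ and feasibility of $\hat\eta$, on $\{I_{\text{int.}}(\hat\eta)=T\}$ each coordinate of $\hat\eta_{T^c}$ equals $-(1-\tau)$ or $\tau$, so $\hat\eta_{T^c}$ ranges over the finitely many vectors in $\{-(1-\tau),\tau\}^{|T^c|}$. Fix one such vector $\eta_{T^c}$ and an index $i_0\in T^c$, and condition on $Z$, on the rows $\xi_{T,\cdot}$ (so that $V$ is frozen), and on every entry of $\xi$ in rows $T^c$ except $\xi_{i_0,J_+^c}$. Then $\tilde X_{T^c,J_+^c}^\top\eta_{T^c}=\eta_{i_0}\tilde X_{i_0,J_+^c}+(\text{fixed vector})$ is absolutely continuous on $\mmr^{|J_+^c|}$, since $\eta_{i_0}\in\{-(1-\tau),\tau\}$ is nonzero when $\tau\in(0,1)$ and $\xi_{i_0,J_+^c}$ has a density; hence it lands in the proper subspace $V$ with conditional probability zero. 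A union bound over the finitely many choices of $\eta_{T^c}$ then gives $\mmp(\text{some dual solution has }I_{\text{int.}}(\hat\eta)=T)=0$, which already implies the displayed identity, and summing over $T$ finishes the proof.

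I expect the main obstacle to be arranging the conditioning in the last step so that it simultaneously freezes the target subspace $V$ (which depends only on the rows $\tilde X_{T,\cdot}$) while leaving enough of the perturbation $\xi$ free for $\tilde X_{T^c,J_+^c}^\top\eta_{T^c}$ to be absolutely continuous; once that is set up, the null-set conclusion is automatic, and the remaining points to verify — that $T^c$ is nonempty (using $p<n$) and that the boundary dual values $-(1-\tau)$ and $\tau$ are nonzero (using $\tau\in(0,1)$) — are immediate.
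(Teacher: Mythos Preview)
Your proposal is correct and follows essentially the same strategy as the paper: a union bound over the finitely many candidate interior sets $T$ and boundary vectors $\eta_{T^c}\in\{-(1-\tau),\tau\}^{|T^c|}$, combined with the stationarity condition $\tilde X_{[n],J_+^c}^\top\hat\eta=0$ and the absolute continuity of the perturbation $\xi$. The only cosmetic difference is that the paper, in the case $|T|<|J_+^c|$, picks a square subblock $J_{\text{sub.}}\subseteq J_+^c$ and inverts $\tilde X_{T,J_{\text{sub.}}}^\top$ explicitly to produce a single scalar-vector equality, whereas you phrase the same obstruction as ``an absolutely continuous vector lands in a proper subspace''; your version is slightly cleaner and, because you deduce $T^c\neq\emptyset$ directly from $|T|<|J_+^c|\le p<n$, it also sidesteps the paper's appeal to Lemma~\ref{lem:interp_set_size_less_p}.
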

\begin{proof}
    By our assumptions on the distribution of $\tilde{X}$, we have that with probability one $\text{rank}(X_{A,B}) = \min\{|A|, |B|\}$ for all $A \subseteq [n]$ and $B \subseteq [p]$. Thus, it is sufficient to show that with probability one all dual solutions are such that $|J_+^c| \leq |I_{\text{int.}}(\hat{\eta})$|. 

    Fix any dual solution $\hat{\eta}$ and corresponding primal solution $(\hat{w},\hat{r})$. Recall that $(\hat{w},\hat{r},\hat{\eta})$ is a saddle point of the Lagrangian
     \[
    L(w,r,\eta) = \sum_{i=1}^n \ell_{\tau}(r_i) + \sum_{i=1}^n \eta_i(\tilde{Y}_i - \tilde{X}_i^\top w - r_i) + \sum_{j=1}^p \lambda_j w_j^2.
    \]
    Differentiating $L$ with respect to $r$ gives us that $\hat{\eta} \in [-(1-\tau),\tau)]^n$. Moreover, differentiating $L$ with respect to $w$ gives 
    \[
    \tilde{X}_{[n], J_+^c}^\top \hat{\eta} = 0 \iff \tilde{X}_{I_{\text{int.}}(\hat{\eta}), J_+^c}^\top \hat{\eta}_{I_{\text{int.}}(\hat{\eta})} = -\tilde{X}_{I_{\text{int.}}(\hat{\eta})^c, J_+^c}^\top \hat{\eta}_{I_{\text{int.}}(\hat{\eta})^c}.
    \]
    For the sake of deriving a contradiction, suppose that $|I_{\text{int.}}(\hat{\eta})| < J_+^c$. Let $J_{\text{sub.}}(\hat{\eta}) \subseteq J_+^c$ be a subset of size $|J_{\text{sub.}}(\hat{\eta})| = |I_{\text{int.}}(\hat{\eta})|$. Rearranging the above, we have that 
    \begin{align*}
    & \bigg( \hat{\eta}_{I_{\text{int.}}(\hat{\eta})} = -(\tilde{X}_{I_{\text{int.}}(\hat{\eta}),J_{\text{sub.}}(\hat{\eta})}^\top)^{-1} \tilde{X}_{I_{\text{int.}}(\hat{\eta})^c ,J_{\text{sub.}}(\hat{\eta})}^\top \hat{\eta}_{I_{\text{int.}}(\hat{\eta})^c},\\
     & \hspace{2cm} \text{and }  \ \  \tilde{X}_{I_{\text{int.}}(\hat{\eta}), J_{\text{sub.}}(\hat{\eta})^c}^\top \hat{\eta}_{I_{\text{int.}}(\hat{\eta})} = -\tilde{X}_{I_{\text{int.}}(\hat{\eta})^c, J_{\text{sub.}}(\hat{\eta})^c}^\top \hat{\eta}_{I_{\text{int.}}(\hat{\eta})^c} \bigg)\\
    & \implies \tilde{X}_{I_{\text{int.}}(\hat{\eta}), J_{\text{sub.}}(\hat{\eta})^c}^\top (\tilde{X}_{I_{\text{int.}}(\hat{\eta}),J_{\text{sub.}}(\hat{\eta})}^\top)^{-1} \tilde{X}_{I_{\text{int.}}(\hat{\eta})^c ,J_{\text{sub.}}(\hat{\eta})}^\top \hat{\eta}_{I_{\text{int.}}(\hat{\eta})^c} = \tilde{X}_{I_{\text{int.}}(\hat{\eta})^c, J_{\text{sub.}}(\hat{\eta})^c}^\top \hat{\eta}_{I_{\text{int.}}(\hat{\eta})^c} \numberthis \label{eq:last_eta_equal_cond}
    \end{align*}

Now, recall that by Lemma \ref{lem:interp_set_size_less_p}, $|I_{\text{int.}}(\hat{\eta})| < n$. Moreover, recall that by assumption the covariates can be written as $\tilde{X} = Z + \xi$ where $\xi \in \mmr^{n\times p}$ has i.i.d., continuously distributed entries independent of $Z \in \mmr^{n\times p}$. Thus, in particular, for any fixed sets $I_{\text{int.}} \subset [n]$ and $J_{\text{sub.}} \subseteq [p]$ with $|J_{\text{sub.}}| = |I_{\text{int.}}|$ and any fixed vector $\eta_{I_{\text{int.}}^c} \in \{-(1-\tau),\tau\}^n$, the random variable $\tilde{X}_{I_{\text{int.}}^c, J_{\text{sub.}}^c}^\top \eta_{I_{\text{int.}}^c}$ has a continuous distribution conditional on  $(Z,\tilde{X}_{I_{\text{int.}}, J_{\text{sub.}}^c}^\top (\tilde{X}_{I_{\text{int.}},J_{\text{sub.}}}^\top)^{-1} \tilde{X}_{I_{\text{int.}}^c ,J_{\text{sub.}}}^\top {\eta}_{I_{\text{int.}}^c}$). Therefore,
\[
\mmp\left(\tilde{X}_{I_{\text{int.}}, J_{\text{sub.}}^c}^\top (\tilde{X}_{I_{\text{int.}},J_{\text{sub.}}}^\top)^{-1} \tilde{X}_{I_{\text{int.}}^c ,J_{\text{sub.}}}^\top {\eta}_{I_{\text{int.}}^c} = \tilde{X}_{I_{\text{int.}}^c, J_{\text{sub.}}^c}^\top {\eta}_{I_{\text{int.}}^c}\right) = 0.
\]
Taking a union bound over all possible choices of the sets $I_{\text{int.}}$ and $J_{\text{sub.}}$ and vector $\eta_{I_{\text{int.}}^c}$, we find that with probability one no dual solution can satisfy (\ref{eq:last_eta_equal_cond}). This proves the desired result.
\end{proof}

\section{Proofs for Section \ref{sec:theory}}\label{sec:app_main_asymp}

The bulk of this section is devoted to a proof of Theorem \ref{thm:main_asym_consistency}. Proofs of Corollaries \ref{corr:loo_cov_consistency} and \ref{corr:quantile_consistency}  are then given at the end. In what follows, we use $X \in \mmr^{n \times p}$ to denote the matrix with rows $X_1,\dots,X_n$ and $Y \in \mmr^n$ and $\epsilon \in \mmr^n$ to denote the vectors with entries $(Y_1,\dots,Y_n)$ and $(\epsilon_1,\dots,\epsilon_n)$, respectively. With some abuse of notation, we will often use $\sqrt{d}\tilde{\beta}_1$ to denote a generic sample from the distribution of population coefficients $P_{\beta}$. Additionally, for any convex function $f: \mmr^k \to \mmr$, $x \in \mmr^k$ and $\rho > 0$ we recall the definition of the Moreau envelope,
\[
e_f(x;\rho) = \min_{v \in \mmr^k} \frac{1}{2\rho} \|x - v\|^2_2 + f(v).
\]
For ease of notation, we will also define the Moreau envelope at $\rho = 0$ using the continuous extension $e_f(x;0) = f(x)$ (cf.~Lemma \ref{lem:envelope_extension} below). Finally, for $f: \mmr^k \to \mmr$ we recall the definition of the convex conjugate,
\[
f^*(x) = -\inf_{v \in \mmr^k} f(v) - v^\top x.
\]
We have the following assumptions on the regularizer and population coefficients.

\begin{assumption}\label{assump:reg_high_dim_assumptions}
    The distribution of population coefficients $P_{\beta}$ has two bounded moments. Moreover, the regularization function and $P_{\beta}$ are such that:
   \begin{enumerate}
   \item $\mathcal{R}_d$ is convex. Moreover, for all $\beta \in \mmr^d$, $\mathcal{R}(\beta) \geq 0$ and $\mathcal{R}(0) = 0$.
   \item For any $C>0$, the subderivatives of $\mathcal{R}_d$ are bounded as
   \[
   \sup_{d \in \mmn} \sup_{\|\beta\|_2 \leq C} \frac{1}{d}\|\partial \mathcal{R}_d(\beta)\|_2 < \infty.
   \]
       \item For any $\gamma \in (0,2/\pi)$ there exists a function $\nu : \mmr \to \mmr$ with the property that for any  $c \in \mmr$, $\rho \geq 0$, and $h \sim \mathcal{N}(0,I_d)$,
   \[
   \lim_{(d,n \to \infty,\ d/n \to \gamma)} \frac{1}{d} e_{\mathcal{R}_d(\cdot/\sqrt{n})  }\left(ch_i + \sqrt{n}\tilde{\beta}_i ; \rho \right) \stackrel{\mmp}{=} \mme[e_{\nu}(ch_1 + \gamma\sqrt{d}\tilde{\beta}_1 ; \rho) ] < \infty.
   \]
   \item 
   The function $(c,\rho) \mapsto \mme[e_{\nu}(ch_1 + \gamma\sqrt{d}\tilde{\beta}_1 ; \rho) ]$ is jointly continuous on $\mmr \times \mmr_{\geq 0}$.
   \item 
   For any $\rho > 0$, $\partial_x  e_{\nu^*}(x; \rho)$ and $\partial^2_x  e_{\nu^*}(x;\rho)$ exist almost everywhere and satisfy the equations
   \[
   \frac{d}{dc}\mme[ e_{\nu^*}(ch_1 + \rho \gamma\sqrt{d}\tilde{\beta}_1 ; \rho) ] = \mme[h_1 \partial_x  e_{\nu^*}(ch_1 + \rho \gamma\sqrt{d}\tilde{\beta}_1 ; \rho) ] = c\mme[ \partial_x^2 e_{\nu^*}(ch_1 + \rho \gamma\sqrt{d}\tilde{\beta}_1 ; \rho) ].
   \]
   \item
   For any compact set $A \subseteq \mmr_{> 0}$ and constant $C_{\rho}>0$, 
   \[
   \inf_{c \in A, 0 < \rho \leq C_{\rho}} \mme[ \partial_x^2 e_{\nu^*}(ch_1 + \rho \gamma\sqrt{d}\tilde{\beta}_1 ; \rho) ] > 0.
   \]
   \end{enumerate}
\end{assumption}

All of the assumptions above are fairly generic and will hold for most common separable regularizers. As an example to illustrate this, the following lemma verifies that all these conditions are satisfied by $L_1$ and $L_2$ regularization.

\begin{lemma}\label{lem:reg_verification}
    Assume $P_{\beta}$ has four bounded moments. Then, for any $\lambda \geq 0$ the conditions of Assumption \ref{assump:reg_high_dim_assumptions} are met for $\mathcal{R}_d(\beta) = \sqrt{d} \lambda \|\beta\|_1$ and $\mathcal{R}_d(\beta) = d\lambda \|\beta\|_2^2$.
\end{lemma}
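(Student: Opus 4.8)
The plan is to verify the six conditions of Assumption~\ref{assump:reg_high_dim_assumptions} one at a time, exploiting that both candidate regularizers are \emph{separable} across coordinates and that the two operations appearing in the assumption — taking Moreau envelopes and convex conjugates — preserve separability. Consequently every quantity in conditions (1)--(6) is an average, over the i.i.d.\ coordinates $j=1,\dots,d$, of an explicit scalar function of $(h_j,\sqrt d\tilde\beta_j)$, and the verification reduces to a one-dimensional closed-form computation together with a law of large numbers. Conditions (1) and (2) are immediate: convexity, nonnegativity and $\mathcal R_d(0)=0$ are clear for $\sqrt d\lambda\|\cdot\|_1$ and $d\lambda\|\cdot\|_2^2$, while $\|\partial(\sqrt d\lambda\|\cdot\|_1)(\beta)\|_2\le d\lambda$ and $\|\partial(d\lambda\|\cdot\|_2^2)(\beta)\|_2=2d\lambda\|\beta\|_2\le 2d\lambda C$ on $\{\|\beta\|_2\le C\}$, so dividing by $d$ gives a bound uniform in $d$.

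The core is condition (3). Using the elementary rescaling $e_{\mathcal R_d(\cdot/\sqrt n)}(x;\rho)=e_{\mathcal R_d}(x/\sqrt n;\rho/n)$ together with separability, I would write
\[
\tfrac1d\, e_{\mathcal R_d(\cdot/\sqrt n)}(ch+\sqrt n\tilde\beta;\rho)=\tfrac1d\sum_{j=1}^d e_{r_{d,n}}\!\big(ch_j+\sqrt{n/d}\,(\sqrt d\tilde\beta_j);\rho\big),
\]
where $r_{d,n}$ is the per-coordinate regularizer: $r_{d,n}(t)=\sqrt{d/n}\,\lambda|t|$ for $L_1$ and $r_{d,n}(t)=(d/n)\lambda t^2$ for $L_2$. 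The scalar Moreau envelope is available in closed form — a (scaled) Huber function $e_{a|\cdot|}(x;\rho)=\tfrac{x^2}{2\rho}\bone\{|x|\le a\rho\}+\big(a|x|-\tfrac{a^2\rho}{2}\big)\bone\{|x|> a\rho\}$ for $L_1$, and the linear-shrinkage value $e_{\mu(\cdot)^2}(x;\rho)=\tfrac{\mu x^2}{1+2\mu\rho}$ for $L_2$ — so letting $d/n\to\gamma$ identifies both the limiting scalar function $\nu$ (a positive multiple of $|\cdot|$, resp.\ of $(\cdot)^2$) and the limiting envelope integrand, after which the weak law of large numbers over the i.i.d.\ pairs $(h_j,\sqrt d\tilde\beta_j)$ gives convergence in probability to the stated expectation, and in particular finiteness of the limit. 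The $L_2$ integrand grows quadratically in $\sqrt d\tilde\beta_j$, so a Chebyshev argument needs its variance finite, which is exactly why the lemma assumes $P_\beta$ has \emph{four} bounded moments; the $L_1$ integrand grows linearly and only needs two. I expect this step to be the main obstacle, and within it the delicate point is matching the precise normalization of $\nu$ and of its argument against the form written in condition (3) (the naive substitution leaves a $\sqrt{n/d}$ multiplying $\sqrt d\tilde\beta_j$), rather than merely its functional shape.

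Conditions (4)--(6) then follow from the explicit $\nu$. For (4), the scalar integrand is jointly continuous in $(c,\rho)$ and, by the closed forms above, dominated locally uniformly in $(c,\rho)$ by an integrable function of $\sqrt d\tilde\beta_1$ (using two, resp.\ four, moments), so dominated convergence yields joint continuity of the expectation. For (5)--(6) one computes $\nu^*$. For $L_2$, $\nu^*$ is again a positive multiple of $(\cdot)^2$, $e_{\nu^*}(\cdot;\rho)$ is a smooth quadratic with constant positive second derivative, and the two equalities in (5) reduce to differentiation under the integral plus Stein's identity in $h_1$; condition (6) is then immediate. For $L_1$, $\nu^*$ is the indicator of a symmetric interval $[-a,a]$, so $e_{\nu^*}(x;\rho)=\tfrac1{2\rho}\operatorname{dist}(x,[-a,a])^2$ is $C^1$ with $\partial_x e_{\nu^*}$ Lipschitz and $\partial_x^2 e_{\nu^*}(x;\rho)=\rho^{-1}\bone\{|x|>a\}$ a.e.; the identities in (5) are then Gaussian integration by parts applied to the Lipschitz function $\partial_x e_{\nu^*}$ (justified by smooth approximation of the indicator and a limiting argument), and for (6), $\mme\big[\partial_x^2 e_{\nu^*}(ch_1+\rho\gamma\sqrt d\tilde\beta_1;\rho)\big]=\rho^{-1}\,\mmp\big(|ch_1+\rho\gamma\sqrt d\tilde\beta_1|>a\big)$ is bounded below on $\{c\in A,\ 0<\rho\le C_\rho\}$ because $A$ is a compact subset of $\mmr_{>0}$ and, conditionally on $\tilde\beta_1$, $ch_1\sim\mathcal N(0,c^2)$ with $c$ bounded away from $0$ — precisely the reason condition (6) restricts $c$ to $\mmr_{>0}$.

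Beyond the scaling bookkeeping in condition (3), the only other point requiring care is the nonsmoothness of $e_{\nu^*}(\cdot;\rho)$ in the $L_1$ case, which forces the derivative identities of condition (5) to be obtained by approximation and passage to the limit rather than by direct differentiation; everything else is routine one-dimensional calculus once $\nu$ has been correctly identified.
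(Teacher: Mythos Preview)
Your proposal is correct and follows essentially the same route as the paper: identify the scalar function $\nu$ explicitly (a multiple of $|\cdot|$ for $L_1$, a multiple of $(\cdot)^2$ for $L_2$), compute the Moreau envelopes and conjugates in closed form, and then appeal to the law of large numbers for condition (3), dominated convergence for condition (4), and Stein's lemma for condition (5). You are in fact more careful than the paper about the moment requirements and the nonsmoothness of $e_{\nu^*}$ in the $L_1$ case, and your flag about the $\sqrt{n/d}$ scaling on $\sqrt d\tilde\beta_j$ is well taken---the paper elides this bookkeeping entirely.
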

\begin{proof}
    For $\mathcal{R}_d(\beta) = d\lambda \|\beta\|_2^2$ define $\nu(b) = \lambda  \gamma b^2$. By a direct calculation, we have that 
    \[
     e_{\nu}(x;\rho) = \frac{\lambda \gamma x^2}{1+2\lambda \gamma \rho}, \ \ \ \nu^*(b) = \frac{b^2}{4\lambda \gamma}, \ \ \text{ and } \ \ e_{\nu^*}(x;\rho) = \frac{x^2}{4\lambda \gamma +  2\rho}.
    \]
    Parts one, two, and four of Assumption \ref{assump:reg_high_dim_assumptions} are immediate. Part 3 follows from the law of large numbers.  Part five follows by the dominated convergence theorem and Stein's lemma (Lemma 1 of \citet{Stein1981}). Part six is also immediate since $\partial_x^2 e_{\nu^*}(x;\rho) = (2\lambda \gamma + \rho)^{-1} > 0$.

    On the other hand, suppose $\mathcal{R}_d(\beta) = \sqrt{d}\lambda \|\beta\|_1$. Define $\nu(b) = \lambda \sqrt{\gamma} |b|$. Then,
    \[
     e_{\nu}(x;\rho) = 
     \begin{cases}
     \lambda \sqrt{\gamma} x-\frac{(\lambda\sqrt{\gamma})^2\rho}{2},\ & x > \lambda \sqrt{\gamma}\rho,\\
     \frac{x^2}{2\rho},\ & |x| \leq \lambda \sqrt{\gamma} \rho,\\
     -\lambda \sqrt{\gamma} x -\frac{(\lambda \sqrt{\gamma})^2\rho}{2},\ & x < -\lambda \sqrt{\gamma}\rho,
     \end{cases} \ \ \ \   \nu^*(b) = \begin{cases}
         0,\ & |x| \leq \lambda\sqrt{\gamma},\\
         \infty,\ & |x| > \lambda\sqrt{\gamma},
     \end{cases}
    \]
    and
    \[
    e_{\nu^*}(x;\rho) = \begin{cases}
         0,\ & |x| \leq \lambda\sqrt{\gamma},\\
         \frac{(|x| - \lambda\sqrt{\gamma})^2}{2\rho},\ & |x| > \lambda\sqrt{\gamma}.
     \end{cases}.
    \]
    Moreover, one can verify that $e_{\nu^*}(x;\rho)$ is twice piecewise continuously differentiable with 
    \[
    \partial_x e_{\nu^*}(x;\rho) = \begin{cases}
         0,\ & |x| \leq \lambda\sqrt{\gamma},\\
         \frac{x - \text{sgn}(x)\lambda\sqrt{\gamma}}{\rho},\ & |x| > \lambda\sqrt{\gamma},
     \end{cases} \ \   \text{ and } \ \ \partial^2_x e_{\nu^*}(x;\rho) = \begin{cases}
         0,\ & |x| < \lambda\sqrt{\gamma},\\
         \frac{1}{\rho},\ & |x| > \lambda\sqrt{\gamma}.
     \end{cases} 
    \]
    The desired results once again follow by the law of large numbers, the dominated convergence theorem and Stein's lemma.
    
\end{proof}

Our main point of study is the joint min-max formulation of the quantile regression,
\[
\max_{\eta} \min_{\beta_0,\beta,r} \frac{1}{n}\sum_{i=1}^n \ell(r_i) + \frac{1}{n} \eta^\top (Y - \beta_0 \pmb{1}_n - X\beta - r) + \frac{1}{n} \mathcal{R}_d(\beta).
\]
Letting $u = \beta - \tilde{\beta}$ and re-writing $\mathcal{R}_d$ in terms of the convex conjugate, this can be equivalently formulated as
\begin{equation}\label{eq:app_min_max_qr}
\max_{\eta, s} \min_{\beta_0,u,r} \frac{1}{n}\sum_{i=1}^n \ell(r_i) + \frac{1}{n} \eta^\top (\epsilon - \beta_0 \pmb{1}_n - Xu - r)  + \frac{1}{\sqrt{n}}  s^\top(\tilde{\beta} + u) -  \frac{1}{n} \mathcal{R}_d^*(\sqrt{n}s).
\end{equation}
To prove Theorem \ref{thm:main_asym_consistency} we will need to study the solutions of this optimization program. We proceed in four main steps. First, in Section \ref{sec:prelims_to_theorem} we give a number of preliminary simplifications which demonstrate that the optimization domain can be restricted to a compact set. Section \ref{sec:auxiliary_reduc} then begins our main study of (\ref{eq:app_min_max_qr}). We show that the solutions to this problem are characterized by an auxiliary optimization program in which the matrix $X$ is replaced by vector-valued Gaussian random variables. Moreover, we additionally demonstrate that the solutions to this auxiliary problem are themselves characterized by the deterministic asymptotic program
\begin{equation}\label{eq:asymp_program}
\begin{split}
    & \min_{(|\beta_0| \leq C_{\beta_0}, 0 \leq M_{u} \leq C_u, 0 \leq \rho_1 \leq C_1)} \max_{(0 < \rho_2 \leq C_2, c_{\eta} \leq M_{\eta} \leq  C_{\eta} )} \bigg( \mme\left[ e_{\ell_{\tau}}\left( M_ug + \epsilon - \beta_0 ; \frac{\rho_1}{M_{\eta}} \right) \right] - \frac{M_{\eta}^2M_u\gamma}{2\rho_2}\\
    & \hspace{5cm}  + \frac{M_{\eta}\rho_1}{2} - \frac{M_u\rho_2}{2} - \mme\left[ e_{\nu}\left( \frac{M_uM_{\eta}}{\rho_2}h_1 + \gamma \sqrt{d}\tilde{\beta}_1; \frac{M_u}{\rho_2} \right) \right] \bigg),
\end{split}
\end{equation}
where  
  $(\beta_0, M_u, \rho_1, \rho_2, M_{\eta})$ are the optimization variables, ($C_{\beta_0}, C_u, C_1, C_2, c_{\eta}, C_{\eta})$ are constants that we will define shortly, and $g_1, h_1 \sim \mathcal{N}(0,1)$  are independent of $\tilde{\beta}_1$ and $\epsilon_1$. The solutions to this asymptotic program are characterized in Section \ref{sec:asymp_program}. Section \ref{sec:final_theorem_proof} then gives a proof of Theorem \ref{thm:main_asym_consistency} and  Section \ref{sec:corollaries} gives proofs of Corollaries \ref{corr:loo_cov_consistency} and \ref{corr:quantile_consistency}.

  Our overall analysis framework is based on the work of \citet{Thram2018}. In what follows, we will focus on the aspects of the analysis that are new to our work and omit the proofs of some results that are minor variations of those appearing in that article.

\subsection{Preliminaries}\label{sec:prelims_to_theorem}

We begin our proof of Theorem \ref{thm:main_asym_consistency} by giving three lemmas which bound the domains of the optimization variables appearing in (\ref{eq:app_min_max_qr}). In what follows, we use the notation $(\hat{\beta}_0,\hat{u},\hat{r},\hat{\eta},\hat{s})$ to denote a generic primal-dual solution to (\ref{eq:app_min_max_qr}), where $(\hat{\beta}_0,\hat{u},\hat{r})$ and $(\hat{\eta},\hat{s})$ are the primal and dual solutions, respectively. Our first result bounds the range of $\hat{\eta}$.

\begin{lemma}\label{lem:dual_norm_bounds}
    Under the assumptions of Theorem \ref{thm:main_asym_consistency}, there exist $C_{\eta} > c_{\eta} > 0$ such that 
    \[
    \mmp\left(\text{For all dual solutions to (\ref{eq:app_min_max_qr}), } \sqrt{n} c_{\eta} \leq \|\hat{\eta}\|_2 \leq \sqrt{n} C_{\eta}\right) \to 1. 
    \]
\end{lemma}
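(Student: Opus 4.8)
The plan is to read off both inequalities directly from the saddle-point (KKT) conditions of (\ref{eq:app_min_max_qr}), together with a bound on the number of interpolated training points coming from Lemma \ref{lem:interpolated_set_size}.

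\textbf{Upper bound.} At any primal-dual solution $(\hat\beta_0,\hat u,\hat r,\hat\eta,\hat s)$ of (\ref{eq:app_min_max_qr}), optimality in the (unconstrained) variable $\eta$ forces $\hat r_i=\epsilon_i-\hat\beta_0-(X\hat u)_i=Y_i-\hat\beta_0-X_i^\top(\tilde\beta+\hat u)$, i.e.\ $\hat r$ is the vector of residuals of the fit with $\hat\beta=\tilde\beta+\hat u$, while stationarity in $r_i$ gives $\hat\eta_i\in\partial\ell_\tau(\hat r_i)\subseteq[-(1-\tau),\tau]$. Hence $\|\hat\eta\|_\infty\le\max\{\tau,1-\tau\}$ for \emph{every} dual solution, so the upper bound holds deterministically with $C_\eta=\max\{\tau,1-\tau\}$ (any larger constant works); no probabilistic input is needed here.

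\textbf{Lower bound.} The same stationarity relation shows that whenever $\hat r_i\ne 0$ we have $\hat\eta_i\in\{\tau,-(1-\tau)\}$, so $|\hat\eta_i|\ge c_0:=\min\{\tau,1-\tau\}>0$; equivalently, the only coordinates on which $\hat\eta$ can be small correspond to interpolated points $\{i:Y_i=\hat\beta_0+X_i^\top\hat\beta\}$. Since $(\hat\beta_0,\hat\beta)$ is a primal solution of the quantile regression (\ref{eq:high_dim_qr}), whose regularizer is convex and penalizes only $\beta$, Lemma \ref{lem:interpolated_set_size} applied with $\tilde X_i=(1,X_i)$, $\tilde Y_i=Y_i$, $p=d+1$ (its hypotheses hold since, by Assumption \ref{assump:data_high_dim}, $Y_i\mid X_i$ has a density, so no $d+2$ of the data points can lie on a common hyperplane) shows that with probability one the interpolated set has cardinality at most $d+1$, simultaneously over all primal solutions and hence over all dual solutions. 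On this probability-one event, for every dual solution $\hat\eta$,
\[
\|\hat\eta\|_2^2 \ge |\{i:\hat r_i\ne 0\}|\,c_0^2 \ge (n-d-1)\,c_0^2 .
\]
Because $d/n\to\gamma<2/\pi<1$, we have $n-d-1\ge\tfrac12(1-\gamma)n$ for all large $n$, giving $\|\hat\eta\|_2\ge\sqrt n\,c_\eta$ with $c_\eta:=\sqrt{(1-\gamma)/2}\,\min\{\tau,1-\tau\}$, and one checks $0<c_\eta<C_\eta$. Combining the two parts, the event in the statement contains the probability-one event of Lemma \ref{lem:interpolated_set_size} for all large $n$, so its probability tends to one.

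I do not expect a genuine obstacle. The only points requiring care are that the interpolation count must hold uniformly over all primal/dual solutions — which is precisely the form in which Lemma \ref{lem:interpolated_set_size} is stated — and the observation that the regularizer $\mathcal{R}_d$ plays no role, since the residual identity $\hat r_i=Y_i-\hat\beta_0-X_i^\top\hat\beta$ and the $r_i$-stationarity condition $\hat\eta_i\in\partial\ell_\tau(\hat r_i)$, the only ingredients used, are unaffected by it.
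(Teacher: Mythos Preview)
Your proof is correct and follows essentially the same argument as the paper: both derive the upper bound from $\hat\eta_i\in\partial\ell_\tau(\hat r_i)\subseteq[-(1-\tau),\tau]$, and the lower bound by combining the observation that non-interpolated points force $|\hat\eta_i|\ge\min\{\tau,1-\tau\}$ with the interpolation-count bound of Lemma \ref{lem:interpolated_set_size}. Your explicit verification that Lemma \ref{lem:interpolated_set_size} applies with $\tilde X_i=(1,X_i)$ and $p=d+1$, and that $\gamma<2/\pi<1$ ensures $n-d-1$ is of order $n$, matches the paper's reasoning exactly.
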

\begin{proof}
    Let $(\hat{\beta}_0,\hat{u},\hat{r},\hat{\eta},\hat{s})$ denote any primal-dual solution. First, note that differentiating (\ref{eq:app_min_max_qr}) with respect to $r$ gives us that for all $i \in \{1,\dots,n\}$, $\hat{\eta}_i \in \partial \ell_{\tau}(\hat{r}_i) \subseteq [-(1-\tau),\tau]$. So, taking $C_{\eta} = \max\{(1-\tau),\tau\}$ gives the upper bound. 
    
    To get the lower bound, note that differentiating (\ref{eq:app_min_max_qr}) with respect to $\eta_i$ gives us that $\hat{r}_i = Y_i - \hat{\beta}_0 - X_i^\top \hat{\beta}$. Combining this with the fact that $\hat{\eta}_i \in \partial\ell_{\tau}(\hat{r}_i)$, we find that 
    \[
    Y_i \neq \hat{\beta}_0 + X_i^\top \hat{\beta} \implies \hat{\eta}_i \in \{-(1-\tau),\tau)\},
    \]
    and thus,
    \[
    \|\hat{\eta}\|_2 \geq \min\{(1-\tau),\tau \} \sqrt{n - |\{i : Y_i = \hat{\beta}_0 + X_i^\top \hat{\beta}\}|}.
    \]
    By Lemma \ref{lem:interpolated_set_size}, with probability one all primal solutions interpolate at most $d+1$ points. Thus, we must have that with probability one all dual solutions satisfy $\|\hat{\eta}\|_2 \geq \sqrt{n-d-1}\min\{(1-\tau),\tau\}$ and so setting $c_{\eta} = (1/2)\sqrt{1-\gamma}\min\{(1-\tau),\tau\}$ gives the desired result.

\end{proof}

Our next lemma gives a similar set of bounds on $\hat{u}$ and $\hat{\beta}_0$. For ease of notation, we state this result in terms of the original primal variable $\hat{\beta} = \hat{u} + \tilde{\beta}$.

\begin{lemma}\label{lem:primal_bound}
Suppose the assumptions of Theorem \ref{thm:main_asym_consistency} hold. Then, there exist constants $C_u, C_{\beta_0} > 0$ such that 
\[
\mmp\left(\text{For all primal solutions to (\ref{eq:app_min_max_qr}), } \|\hat{\beta} - \tilde{\beta}\|_2 \leq C_u \text{ and } |\hat{\beta}_0| \leq C_{\beta_0} \right) \to 1.
\]  
\end{lemma}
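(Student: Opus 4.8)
The plan is to avoid any direct ``convexity on a sphere'' argument (which is too lossy here, see below) and instead proceed through the optimal \emph{value}: optimality bounds the attained pinball loss, this yields an $\ell_1$-bound on the residual vector, and the residual vector lives in a random proportional-dimensional subspace that—because $\gamma<1$—contains no ``spiky'' vectors, so the $\ell_1$-bound converts into an $\ell_2$-bound on $(\hat\beta_0,\hat\beta-\tilde\beta)$.

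\textbf{Step 1 (value bound).} Write $\hat u=\hat\beta-\tilde\beta$ and $\hat r_i = Y_i-\hat\beta_0-X_i^\top\hat\beta=\epsilon_i-\hat\beta_0-X_i^\top\hat u$. Since $(\hat\beta_0,\hat\beta)$ minimizes \eqref{eq:high_dim_qr} and $\mathcal{R}_d\geq 0$,
\[
\sum_{i=1}^n \ell_{\tau}(\hat r_i) \;\le\; \sum_{i=1}^n \ell_{\tau}(\hat r_i) + \mathcal{R}_d(\hat\beta) \;\le\; \sum_{i=1}^n \ell_{\tau}(\epsilon_i) + \mathcal{R}_d(\tilde\beta).
\]
By the law of large numbers $\tfrac1n\sum_i\ell_{\tau}(\epsilon_i)\stackrel{\mmp}{\to}\mme[\ell_{\tau}(\epsilon_1)]<\infty$ (the pinball loss has linear growth and $P_{\epsilon}$ has two bounded moments), and applying part~3 of Assumption~\ref{assump:reg_high_dim_assumptions} with $c=\rho=0$ gives $\tfrac1n\mathcal{R}_d(\tilde\beta)=\tfrac{d}{n}\cdot\tfrac1d e_{\mathcal{R}_d(\cdot/\sqrt n)}(\sqrt n\tilde\beta;0)\stackrel{\mmp}{\to}$ a finite constant. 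Hence there is a constant $M$ with $\mmp(\text{for all primal solutions, }\sum_i\ell_{\tau}(\hat r_i)\le Mn)\to 1$.

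\textbf{Step 2 ($\ell_1$ to $\ell_2$).} Since $\ell_{\tau}(r)\ge\min\{\tau,1-\tau\}|r|$, Step~1 gives $\|\hat r\|_1\le M'n$ with $M'=M/\min\{\tau,1-\tau\}$, so $\|\epsilon-\hat r\|_1\le\|\epsilon\|_1+\|\hat r\|_1\le (M'+\mme|\epsilon_1|+o(1))n$ with probability tending to one. Now $\epsilon-\hat r=\hat\beta_0\pmb 1_n+X\hat u=[\pmb 1_n\,|\,X](\hat\beta_0,\hat u)$ lies in the column space of $[\pmb 1_n\,|\,X]$, which for $n$ large is a random subspace of dimension $d+1<n$. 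Because $\gamma<1$, classical results on Euclidean sections of $\ell_1^n$ (Kashin's theorem) give, with probability tending to one, a constant $c_1=c_1(\gamma)>0$ such that $\|w\|_1\ge c_1\sqrt n\,\|w\|_2$ for every $w$ in this column space; and standard Gaussian random-matrix bounds give $\sigma_{\min}([\pmb 1_n\,|\,X])\ge c_2\sqrt n$ for a constant $c_2=c_2(\gamma)>0$. Chaining these,
\[
c_1c_2\,n\,\|(\hat\beta_0,\hat u)\|_2 \;\le\; c_1\sqrt n\,\|[\pmb 1_n\,|\,X](\hat\beta_0,\hat u)\|_2 \;\le\; \|[\pmb 1_n\,|\,X](\hat\beta_0,\hat u)\|_1 \;=\; \|\epsilon-\hat r\|_1 \;\le\; (M'+\mme|\epsilon_1|+o(1))n,
\]
so $\|(\hat\beta_0,\hat u)\|_2\le C$ for a constant $C$ with probability tending to one; taking $C_u=C_{\beta_0}=C$ finishes the proof.

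\textbf{Main obstacle.} The crux is Step~2. A naive argument—showing $F$ exceeds $F(0,0)$ on a sphere of constant radius and invoking convexity—fails, because along ``spiky'' directions $(\beta_0,u)$ the residual $\beta_0\pmb 1_n+Xu$ can be nearly sparse, the pinball loss then grows only like $R/\sqrt n$ in the radius $R$, and one obtains only the useless bound $\|(\hat\beta_0,\hat u)\|_2=O(\sqrt n)$. What rescues the argument is that the residual subspace contains \emph{no} spiky vectors when $\gamma<1$, i.e.\ it is a proportional Euclidean section of $\ell_1^n$; identifying this as the needed input (and that it holds precisely because $\gamma<2/\pi<1$) is the only non-routine point. (For strongly convex regularizers such as ridge one can alternatively bound $\hat\beta$ directly from the stationarity relation $X^\top\hat\eta=\partial\mathcal{R}_d(\hat\beta)$ together with Lemma~\ref{lem:dual_norm_bounds}; and one could replace Kashin's theorem by a one-sided application of Gordon's comparison inequality, a forerunner of the Gaussian min-max argument used later—but the Euclidean-section route keeps this lemma genuinely preliminary.)
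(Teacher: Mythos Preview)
Your argument is correct and follows the same overall plan as the paper's: bound the optimal pinball-loss value by optimality, pass to an $\ell_1$-bound on $X\hat u+\hat\beta_0\pmb 1_n$, and then convert this to an $\ell_2$-bound on $(\hat\beta_0,\hat u)$ using the fact that the column space of $[\pmb 1_n\,|\,X]$ contains no spiky vectors. The only substantive difference is in this last step: you invoke Kashin's theorem plus a smallest-singular-value bound, whereas the paper proves the needed inequality directly via a one-sided application of Gordon's comparison inequality (its Lemma~\ref{lem:gaussian_abs_lower}), obtaining the explicit constant $\sqrt{2/\pi}-\sqrt{\gamma}$ and thereby making transparent why the hypothesis $\gamma<2/\pi$ enters. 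You already flag the Gordon route as an alternative, and in fact that is exactly what the paper does; this has the small advantage of handling the deterministic column $\pmb 1_n$ cleanly, whereas off-the-shelf Kashin is stated for Haar-random subspaces and would need a word of justification here. A second minor difference: the paper compares the optimum to $(\beta_0,\beta)=(0,0)$ rather than $(0,\tilde\beta)$, so it never needs to bound $\mathcal{R}_d(\tilde\beta)$ via Assumption~\ref{assump:reg_high_dim_assumptions}.
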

\begin{proof}
    Let $(\hat{\beta}_0,\hat{\beta})$ denote any primal solution. By the law of large numbers and the optimality of $(\hat{\beta}_0,\hat{\beta})$,  we have that 
    \begin{align*}
        \mme[\ell_{\tau}(Y_1)] & \geq  \frac{1}{n} \sum_{i=1}^n \ell_{\tau}(Y_i) - o_{\mmp}(1) \geq \frac{1}{n} \sum_{i=1}^n \ell_{\tau}(Y_i - X_i^\top \hat{\beta} - \hat{\beta}_0) + \mathcal{R}_d(\hat{\beta}) - o_{\mmp}(1)\\
        & \geq \min\{1-\tau,\tau\} \frac{1}{n} \sum_{i=1}^n |X_i^\top(\hat{\beta} - \tilde{\beta}) + \hat{\beta}_0| -\min\{1-\tau,\tau\} \frac{1}{n}\sum_{i=1}^n|\epsilon_i| - o_{\mmp}(1)\\
        & \geq \min\{1-\tau,\tau\} \max\{\|\hat{\beta} - \tilde{\beta}\|_2, |\hat{\beta}_0|\} \inf_{(\|u\|_2 \leq 1, |\beta_0| \leq 1, \max\{\|u\|_2, |\beta_0|\} = 1)} \frac{1}{n} \sum_{i=1}^n |X_i^\top u + \beta_0|\\
        & \ \ \ \ \ - \min\{1-\tau,\tau\} \mme[\epsilon_1] - o_{\mmp}(1).
    \end{align*}
    Lemma \ref{lem:gaussian_abs_lower} below shows that 
    \[
   \liminf_{(n,d \to \infty, d/n \to \gamma)} \inf_{(\|u\|_2 \leq 1, |\beta_0| \leq 1, \max\{\|u\|_2, |\beta_0|\} = 1)} \frac{1}{n} \sum_{i=1}^n |X_i^\top u + \beta_0| \stackrel{\mmp}{\geq} \sqrt{\frac{2}{\pi}} - \sqrt{\gamma}.
    \]
    Applying this to the above, we conclude that 
     \[
     \max\{\|\hat{\beta} - \tilde{\beta}\|_2, |\hat{\beta}_0|\} \leq \frac{\mme[\ell_{\tau}(Y_1)] +  \min\{1-\tau,\tau\} \mme[\epsilon_1]}{\min\{1-\tau,\tau\}(\sqrt{2/\pi} - \sqrt{\gamma})} + o_{\mmp}(1),
     \]
     where it should be understood that the $o_{\mmp}(1)$ term on the right hand side is uniform over all primal solutions. Taking \[
     C_u = C_{\beta_0}= 2\frac{\mme[\ell_{\tau}(Y_1)] +  \min\{1-\tau,\tau\} \mme[\epsilon_1]}{\min\{1-\tau,\tau\}(\sqrt{2/\pi} - \sqrt{\gamma})} 
     \]
     gives the desired result.  
\end{proof}

Our final preliminary lemma bounds the size of the solutions for $r$ and $s$.

\begin{lemma}\label{lem:r_s_bounded}
Suppose the assumptions of Theorem \ref{thm:main_asym_consistency} hold. Then, there exist constants $ C_r > 0$ and $C_s > 0$ such that 
\[
\mmp\left(\text{For all solutions to (\ref{eq:app_min_max_qr}), } \|\hat{s}\|_2 \leq  C_s \sqrt{n} \text{ and } \|\hat{r}\|_2 \leq C_r \sqrt{n}\right) \to 1.
\]
\end{lemma}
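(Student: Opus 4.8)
The plan is to read $\hat r$ and $\hat s$ directly off the first-order optimality conditions of (\ref{eq:app_min_max_qr}) and then bound the resulting expressions using a standard operator-norm estimate for the Gaussian design matrix together with the bounds already established in Lemmas \ref{lem:dual_norm_bounds} and \ref{lem:primal_bound}. Since $\eta$ ranges over all of $\mmr^n$ and $u$ over all of $\mmr^d$ and the objective in (\ref{eq:app_min_max_qr}) is differentiable in each of these blocks, any primal-dual solution $(\hat\beta_0,\hat u,\hat r,\hat\eta,\hat s)$ must satisfy the stationarity identities $\partial_\eta = 0$ and $\partial_u = 0$, namely
\[
\hat r = \epsilon - \hat\beta_0 \mathbf{1}_n - X\hat u \qquad\text{and}\qquad \hat s = \tfrac{1}{\sqrt{n}} X^\top \hat\eta .
\]
(The first of these is just the constraint $\hat r_i = Y_i - \hat\beta_0 - X_i^\top\hat\beta$ rewritten using $\hat u = \hat\beta - \tilde\beta$ and $Y_i = X_i^\top\tilde\beta + \epsilon_i$.) Both identities hold simultaneously at every solution.

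Next I would invoke a standard bound on the operator norm of a Gaussian matrix: for $X\in\mmr^{n\times d}$ with i.i.d.\ $\mathcal N(0,1)$ entries and $d/n\to\gamma$, one has $\|X\|_{\mathrm{op}} \leq (1+\sqrt{\gamma}+o_\mmp(1))\sqrt{n}$, and since $\gamma < 2/\pi$ this is at most $2\sqrt{n}$ on an event of probability tending to one (e.g.\ the Davidson--Szarek estimate). On this event, combining with $\|\hat\eta\|_2 \leq C_\eta\sqrt{n}$ from Lemma \ref{lem:dual_norm_bounds} gives, for every solution,
\[
\|\hat s\|_2 \;\leq\; \tfrac{1}{\sqrt{n}}\,\|X\|_{\mathrm{op}}\,\|\hat\eta\|_2 \;\leq\; 2 C_\eta\sqrt{n},
\]
so $C_s = 2C_\eta$ works. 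Similarly, $\|\hat r\|_2 \leq \|\epsilon\|_2 + |\hat\beta_0|\sqrt{n} + \|X\|_{\mathrm{op}}\|\hat u\|_2$; here $\|\epsilon\|_2 = O_\mmp(\sqrt{n})$ because $P_\epsilon$ has a bounded second moment (so $\tfrac1n\|\epsilon\|_2^2 \to \mme[\epsilon_1^2]$ by the law of large numbers), while $|\hat\beta_0|\leq C_{\beta_0}$ and $\|\hat u\|_2 \leq C_u$ hold for every primal solution by Lemma \ref{lem:primal_bound}. Hence $\|\hat r\|_2 \leq C_r\sqrt{n}$ with, say, $C_r = \mme[\epsilon_1^2]^{1/2} + C_{\beta_0} + 2 C_u + 1$ on an event of probability tending to one, the extra additive constant absorbing the $o_\mmp(1)$ slack.

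There is no genuine obstacle here: the only inputs beyond the two preceding lemmas are the Gaussian operator-norm bound and the law of large numbers for $\|\epsilon\|_2$, both of which concern events that do not reference the optimization variables, so the ``for all solutions'' quantifier is preserved automatically once one observes that the bounds from Lemmas \ref{lem:dual_norm_bounds} and \ref{lem:primal_bound} are themselves uniform over solutions. The only point meriting a line of care is the justification that the two stationarity identities hold at \emph{every} saddle point of (\ref{eq:app_min_max_qr}); this is immediate from the fact that the objective is (jointly) convex in $(\beta_0,u,r)$, concave in $(\eta,s)$, and differentiable in the unconstrained blocks $u$ and $\eta$, so the corresponding gradients vanish at any solution.
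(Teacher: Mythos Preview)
Your proof is correct. The bound on $\hat r$ is obtained exactly as in the paper: the first-order condition in $\eta$ gives $\hat r = \epsilon - \hat\beta_0\mathbf 1_n - X\hat u$, after which the triangle inequality, the operator-norm bound on $X$, the law of large numbers for $\|\epsilon\|_2$, and Lemma~\ref{lem:primal_bound} finish the job.

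For $\hat s$, however, you take a genuinely different route from the paper. You use the first-order condition in $u$ (the objective is linear in $u$, so at any saddle point the coefficient of $u$ must vanish) to obtain $\hat s = n^{-1/2} X^\top \hat\eta$, and then bound via $\|X\|_{\mathrm{op}}$ together with the dual-norm bound $\|\hat\eta\|_2 \le C_\eta\sqrt n$ from Lemma~\ref{lem:dual_norm_bounds}. The paper instead uses the first-order condition in $s$, which gives $\hat s \in n^{-1/2}\partial\mathcal R_d(\tilde\beta + \hat u)$, and then appeals to part~2 of Assumption~\ref{assump:reg_high_dim_assumptions} (the uniform bound on subgradients of $\mathcal R_d$ over bounded sets) together with Lemma~\ref{lem:primal_bound}. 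Your argument is arguably cleaner here: it avoids any structural hypothesis on the regularizer and relies only on the Gaussian design and the already-established dual bound. The paper's route, on the other hand, makes explicit where the regularity of $\mathcal R_d$ enters and does not need Lemma~\ref{lem:dual_norm_bounds}. Both are short and valid.
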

\begin{proof}
    Fix any primal-dual solution  $(\hat{\beta}_0,\hat{u},\hat{r},\hat{\eta},\hat{s})$ to (\ref{eq:app_min_max_qr}). By the first-order conditions of $(\ref{eq:app_min_max_qr})$ in $\eta$ we must have 
    \[
    \|\hat{r}\|_2 = \| \epsilon - \hat{\beta}_0 \pmb{1}_n - X\hat{u}\|_2 \leq \|\epsilon\|_2 + \sqrt{n}|\hat{\beta}_0| + \lambda_{\text{max}}(X) \|\hat{u}\|_2.
    \]
    By standard results (e.g.~Theorem 3.1 of \cite{Yin1988}) we have that $\lambda_{\text{max}}(X)/\sqrt{n}$ is converging in probability to a constant. Moreover, by the law of large numbers, $\|\epsilon\|_2/\sqrt{n} \stackrel{\mmp}{\to} \sqrt{\mme[\epsilon_i^2]}$. Combining these facts with the bounds on $|\hat{\beta}_0|$ and $\|\hat{u}\|_2$ given by Lemma \ref{lem:primal_bound} gives the desired bound on $\|\hat{r}\|_2$. 

    To bound $\|\hat{s}\|_2$, note that by standard facts regarding the convex conjugate (e.g.~Proposition 11.3 of \citet{Rockefeller1997}), we have $\hat{s} \in n^{-1/2}\partial \mathcal{R}_d(\tilde{\beta} + \hat{u})$. Moreover, by Lemma \ref{lem:primal_bound} and the law of large numbers there exists $C>0$ such that with probability converging to one all primal solutions satisfy $\|\tilde{\beta} + \hat{u}\|_2 \leq C$. So, with probability converging to one,
    \[
    \frac{1}{\sqrt{n}}\|\hat{s}\|_2 \leq   \sup_{\|v\|_2 \leq C}\left\|\frac{1}{n}\partial \mathcal{R}_d(v) \right\|_2.
    \]
    This last quantity is bounded by part 2 of Assumption \ref{assump:reg_high_dim_assumptions}.
\end{proof}

\subsection{Reduction to the auxiliary optimization problem}\label{sec:auxiliary_reduc}

We will now reduce (\ref{eq:app_min_max_qr}) to a simpler asymptotic program that is easier to study. Our main tool will be the Gaussian comparison inequalities of \citet{Gordon1985, Gordon1988} and their application to regression problems developed in \citet{Thram2018}. In particular, we will apply the following proposition. Since this result is a minor extension of Theorem 3 of \citet{Thram2018} (see also Theorem 3 of \citet{Thram2015}) we omit its proof. 
\begin{proposition}[Extension of Theorem 3 of \citet{Thram2018}]\label{prop:gordon}
    Fix any $d,n \in \mmn$. Let $X \in \mmr^{n \times d}$ be distributed as $(X_{i,j})_{i \in [n], j \in [d]}\stackrel{\text{i.i.d.}}{\sim} \mathcal{N}(0,1)$ and define $g \sim \mathcal{N}(0,I_n)$ and $h \sim \mathcal{N}(0,I_d)$ to be independent Gaussian vectors. Let $Q(\beta_0,u,r,\eta,s) : \mathcal{\mmr} \times \mmr^d \times \mmr^n \times \mmr^n \times \mmr^d \to \mmr$ be jointly continuous, convex in $(r,\beta_0,u)$, and concave in $(s,\eta)$. Fix any compact sets $A \subseteq \mathcal{\mmr} \times \mmr^d \times \mmr^n$ and $B \subseteq \mmr^n \times \mmr^d$ and define the values 
    \begin{align*}
    & \Phi  = \max_{(\eta,s) \in B} \min_{(\beta_0,u,r) \in A}  \eta^\top X u + Q(\beta_0,u,r,\eta,s),\\
    & \phi  =  \max_{(\eta,s) \in B} \min_{(\beta_0,u,r) \in A} \|u\|_2\eta^\top g + \|\eta\|_2 u^\top h + Q(\beta_0,u,r,\eta,s).
    \end{align*}
    Then, for all $c \in \mmr$,
    \[
    \mmp(\Phi > c) \leq 2\mmp(\phi \geq c).
    \]
    If in addition $A$ and $B$ are convex, then for all $c \in \mmr$,
    \[
    \mmp(\Phi < c) \leq 2\mmp(\phi \leq c).
    \]
\end{proposition}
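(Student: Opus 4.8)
The plan is to obtain this from Gordon's Gaussian comparison inequality, in the \emph{convex Gaussian min--max} form developed by \citet{Thram2018}. The only way the statement goes beyond their Theorem~3 is that the payoff $Q$ here also carries the variables $(\beta_0,r)$ on the minimization side and $s$ on the maximization side, none of which enter the bilinear Gaussian term $\eta^\top X u$; since that term depends on $(u,\eta)$ alone while $(\beta_0,u,r)$ ranges over the compact set $A$ and $(\eta,s)$ over the compact set $B$ through the jointly continuous, convex--concave function $Q$, absorbing these extra coordinates into the payoff changes nothing in the comparison argument. So the first step is to record this reduction; the second standard step is to reduce to finite index sets, replacing $A$ and $B$ by finite nets, proving the bound there via Gordon's finite-dimensional lemma, and passing to the limit using the continuity of $Q$ and compactness of $A,B$ (choosing the nets compatibly with the convex structure when it is needed for the second part).

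For the first inequality (no convexity needed) I would exhibit $\Phi$ as a Gaussian min--max after equalizing second moments. Introduce $\gamma\sim\mathcal N(0,1)$ independent of $(X,g,h)$ and set $\bar\Phi := \max_{(\eta,s)\in B}\min_{(\beta_0,u,r)\in A}\big[\eta^\top X u+\|\eta\|_2\|u\|_2\,\gamma+Q\big]$. On $\{\gamma\ge 0\}$ one has $\bar\Phi\ge\Phi$ pointwise, since adding a nonnegative term before the inner minimization only raises the value, so by independence of $\gamma$,
\[
\tfrac12\,\mmp(\Phi>c)=\mmp(\{\Phi>c\}\cap\{\gamma\ge0\})\le\mmp(\bar\Phi>c).
\]
Now compare, over indices $\big((\eta,s),(\beta_0,u,r)\big)$, the centered Gaussian process $Z := \eta^\top X u+\|\eta\|_2\|u\|_2\,\gamma$ with the decoupled process $W := \|u\|_2\eta^\top g+\|\eta\|_2 u^\top h$. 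A direct computation gives $\mme[Z Z'] = \langle\eta,\eta'\rangle\langle u,u'\rangle+\|\eta\|_2\|\eta'\|_2\|u\|_2\|u'\|_2$ and $\mme[W W']=\|u\|_2\|u'\|_2\langle\eta,\eta'\rangle+\|\eta\|_2\|\eta'\|_2\langle u,u'\rangle$, so $Z$ and $W$ have equal variances, equal covariances whenever $\eta=\eta'$, and for $\eta\ne\eta'$,
\[
\mme[ZZ']-\mme[WW']=\big(\|\eta\|_2\|\eta'\|_2-\langle\eta,\eta'\rangle\big)\big(\|u\|_2\|u'\|_2-\langle u,u'\rangle\big)\ge 0
\]
by two applications of Cauchy--Schwarz. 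These are exactly the covariance conditions in Gordon's comparison inequality for Gaussian min--max processes (with $(\eta,s)$ as the outer index, $(\beta_0,u,r)$ as the inner index, and $Q$ entering as the index-dependent deterministic shift permitted by the lemma); applying it yields $\mmp(\bar\Phi>c)\le\mmp(\phi>c)$, and combining with the display above gives $\mmp(\Phi>c)\le 2\,\mmp(\phi>c)\le 2\,\mmp(\phi\ge c)$.

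For the second inequality I would first use convexity. Since $A$ and $B$ are convex and compact, $Q$ is jointly continuous with $(\beta_0,u,r)\mapsto Q$ convex and $(\eta,s)\mapsto Q$ concave, and the bilinear term is bilinear, Sion's minimax theorem gives
\[
\Phi=\max_{(\eta,s)\in B}\min_{(\beta_0,u,r)\in A}\big[\eta^\top X u+Q\big]=\min_{(\beta_0,u,r)\in A}\max_{(\eta,s)\in B}\big[\eta^\top X u+Q\big],
\]
a genuine min--max of the bilinear Gaussian form. Repeating the equalization-and-comparison argument of the previous paragraph verbatim on this min--max (the covariance conditions are the same two computations and the same Cauchy--Schwarz inequality, now with the roles of outer and inner index interchanged) produces $\mmp(\Phi<c)\le 2\,\mmp(\psi'<c)$, where $\psi':=\min_{(\beta_0,u,r)\in A}\max_{(\eta,s)\in B}\big[\|u\|_2\eta^\top g+\|\eta\|_2 u^\top h+Q\big]$ is the corresponding auxiliary min--max. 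Finally, weak duality ($\max$--$\min\le\min$--$\max$) gives $\psi'\ge\phi$, hence $\{\psi'<c\}\subseteq\{\phi\le c\}$ and $\mmp(\Phi<c)\le 2\,\mmp(\phi\le c)$.

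The real work here is bookkeeping rather than anything conceptual. The delicate point is that the auxiliary objective is \emph{not} jointly convex--concave, so one cannot symmetrically apply Gordon's inequality to $\Phi$ in its native $\max$--$\min$ form to recover the second bound: the minimax swap for $\Phi$ — and therefore the convexity hypothesis on $A$ and $B$ — is essential, and $\phi$ re-enters only through the one-sided inequality $\psi'\ge\phi$. The remaining routine-but-careful steps are the reduction to finite index sets while preserving the convex structure needed for Sion's theorem, the continuity/measurability needed to pass to the limit, and tracking the sign of the equalizing variable $\gamma$ so that the ``$\tfrac12$'' step goes through in each of the two cases.
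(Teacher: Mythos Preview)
Your proposal is correct and follows exactly the standard CGMT argument of \citet{Thram2018} (variance equalization via an auxiliary scalar Gaussian, Gordon's comparison on the covariances you computed, and Sion's theorem plus weak duality for the convex case). The paper itself omits the proof entirely, citing it as a minor extension of Theorem~3 of \citet{Thram2018}, so your write-up is precisely the argument the authors are deferring to.
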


To apply this result in our context, let
\begin{align*}
\Phi(S) & =
\max_{(\eta \in S, \|s\|_2 \leq  C_s \sqrt{n})} \min_{(|\beta_0| \leq C_{\beta_0},\|u\|_2 \leq C_{u},\|r\|_2 \leq  C_r \sqrt{n})} \frac{1}{n}\sum_{i=1}^n \ell(r_i) + \frac{1}{n} \eta^\top (\epsilon - \beta_0 \pmb{1}_n - Xu - r)\\
& \hspace{8cm} + \frac{1}{\sqrt{n}} s^\top(\tilde{\beta} + u) -  \frac{1}{n} \mathcal{R}^*_d(\sqrt{n}s),
\end{align*}
where the constants $C_{s},C_{\beta_0},C_u,C_r$ satisfy the conclusions of Lemmas \ref{lem:primal_bound} and \ref{lem:r_s_bounded}. Let $C_{\eta}$ satisfy the conclusion of Lemma \ref{lem:dual_norm_bounds}. We know that asymptotically the solutions of $\Phi(\{\eta : \|\eta\|_2 \leq \sqrt{n}C_{\eta}\})$ agree with those of (\ref{eq:app_min_max_qr}). Our goal will be to compare the value of $\Phi(\{\eta : \|\eta\|_2 \leq \sqrt{n}C_{\eta}\})$ to that of $\Phi(S)$ when $S$ is a more restricted set. The key insight of \citet{Thram2018} is that for this purpose it is sufficient to study the value of the auxiliary optimization,
\begin{align*}
\phi(S) & := \min_{(\|r\|_2 \leq C_r \sqrt{n},|\beta_0| \leq C_{\beta_0}, 0 \leq M_u \leq C_u)} \max_{(\|s\|_2 \leq   C_s \sqrt{n},\eta \in S)} \min_{(u : \|u\|_2 = M_u)} \bigg( \frac{1}{n} \|u\|_2 \eta^\top g + \frac{1}{n} \|\eta\|_2 u^\top h\\
& \hspace{2cm} + \frac{1}{n} \eta^\top \epsilon - \frac{1}{n} \beta_0 \eta^\top\pmb{1}_n  - \frac{1}{n} \eta^\top r + \frac{1}{n} \sum_{i=1}^n \ell_{\tau}(r_i)  + \frac{1}{\sqrt{n}} s^\top(\tilde{\beta} + u) - \frac{1}{n} \mathcal{R}^*_d(\sqrt{n}s) \bigg),
\end{align*}
where $h \sim \mathcal{N}(0,I_d)$ and $g \sim \mathcal{N}(0,I_n)$ are Gaussian vectors sampled such that $(g,h,\epsilon,\tilde{\beta})$ is jointly independent. The following proposition formalizes this.

\begin{proposition}[Extension of Lemma 7 in \citet{Thram2018}]\label{prop:original_to_aux}
    Suppose the assumptions of Theorem \ref{thm:main_asym_consistency} hold and let $C_{s}$, $C_{\beta_0}$, $C_u$, $C_r$, and $C_{\eta}$ be constants satisfying the conclusion of Lemmas \ref{lem:dual_norm_bounds}, \ref{lem:primal_bound}, and \ref{lem:r_s_bounded}. Let $S$ be any set such that 
    \begin{enumerate}
        \item $S$ is compact.
        \item There exists $v \in \mmr$ and $\xi, \delta > 0$ such that 
        \[
        \min\{\mmp(\phi(\{\eta : \|\eta\|_2 \leq \sqrt{n}C_{\eta}\}) \geq v + \delta), \mmp(\phi(S) \leq v - \delta)  \} \geq 1-\xi.
        \]
    \end{enumerate}
    Then,
        \[
        \mmp(\text{For all dual solutions to (\ref{eq:asymp_program}), } \hat{\eta} \notin S) \geq 1-4\xi.
        \]
\end{proposition}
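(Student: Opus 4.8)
The plan is to run the standard convex Gaussian min--max (CGMT) ``set-avoidance'' argument in the style of Lemma~7 of \citet{Thram2018}. Write $B_\eta = \{\eta \in \mmr^n : \|\eta\|_2 \le \sqrt n C_\eta\}$, so that $\Phi(B_\eta)$ and $\phi(B_\eta)$ are the primal and auxiliary min--max values associated with (\ref{eq:app_min_max_qr}) restricted to the compact dual ball, while $\Phi(S)$ and $\phi(S)$ are the same objects with the dual variable further constrained to $S$. Since $\Phi$ contains an outer maximization over $\eta$, constraining a maximizer to lie in $S$ cannot raise the value; the strategy is therefore to produce, with probability at least $1-4\xi-o(1)$, the strict separation $\Phi(S) < \Phi(B_\eta)$, which is incompatible with any dual solution of (\ref{eq:app_min_max_qr}) lying in $S$.

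\textbf{Reduction to a compact problem.} First I would invoke Lemmas \ref{lem:dual_norm_bounds}, \ref{lem:primal_bound} and \ref{lem:r_s_bounded} --- whose constants carry a factor-of-two margin --- so that with probability tending to one every primal--dual solution of (\ref{eq:app_min_max_qr}) lies in the \emph{interior} of the product $\{|\beta_0|\le C_{\beta_0}\}\times\{\|u\|_2\le C_u\}\times\{\|r\|_2\le C_r\sqrt n\}\times B_\eta\times\{\|s\|_2\le C_s\sqrt n\}$. Because (\ref{eq:app_min_max_qr}) is convex--concave and this set is a product of convex sets, on that event the solution set and value of (\ref{eq:app_min_max_qr}) agree with those of $\Phi(B_\eta)$; in particular each dual solution $\hat\eta$, paired with its $\hat s$, is a maximizer in $\Phi(B_\eta)$.

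\textbf{Two applications of Gordon's inequality.} The objective in $\Phi(\cdot)$ is jointly continuous on the relevant compact domains, affine-plus-convex in $(\beta_0,u,r)$, and concave in $(\eta,s)$: the only nonlinear dual term, $-\frac1n\mathcal R_d^*(\sqrt n s)$, is concave since conjugates are convex and is finite once $s$ is restricted to the effective domain of $\mathcal R_d^*$. Hence Proposition \ref{prop:gordon} applies, and $\phi(\cdot)$ as defined just above is precisely its Gordon auxiliary after the routine minimax reordering that splits $u$ into its radius $M_u$ and its direction (valid by Sion's theorem, using the compactness and the convex--concave structure). For a \emph{lower} bound on $\Phi(B_\eta)$ I use the convex case of Proposition \ref{prop:gordon} --- the primal box and the dual ball $B_\eta\times\{\|s\|_2\le C_s\sqrt n\}$ are convex --- which with the hypothesis $\mmp(\phi(B_\eta)\ge v+\delta)\ge 1-\xi$ gives $\mmp(\Phi(B_\eta)<v+\delta/2)\le 2\,\mmp(\phi(B_\eta)<v+\delta)\le 2\xi$. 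For an \emph{upper} bound on $\Phi(S)$ I use the general case, which needs no convexity of $S$: with $\mmp(\phi(S)\le v-\delta)\ge 1-\xi$ this yields $\mmp(\Phi(S)>v-\delta/2)\le 2\,\mmp(\phi(S)>v-\delta)\le 2\xi$.

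\textbf{Combining, and the main obstacle.} On the intersection of the three events ``all solutions of (\ref{eq:app_min_max_qr}) lie in the interior of the product box'', ``$\Phi(B_\eta)\ge v+\delta/2$'' and ``$\Phi(S)\le v-\delta/2$'', which has probability at least $1-4\xi-o(1)$, we obtain $\Phi(S\cap B_\eta)\le\Phi(S)\le v-\delta/2<v+\delta/2\le\Phi(B_\eta)$. If some dual solution $\hat\eta$ of (\ref{eq:app_min_max_qr}) satisfied $\hat\eta\in S$, then by the reduction step $\hat\eta\in S\cap B_\eta$ and $(\hat\eta,\hat s)$ would be feasible and optimal for $\Phi(B_\eta)$, forcing $\Phi(S\cap B_\eta)\ge\Phi(B_\eta)$ --- a contradiction. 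So no dual solution lies in $S$, which is the claim; the $o(1)$ term is harmless in the asymptotic regime in which this proposition is applied. I expect the only genuinely delicate points to be in the reduction step and in matching $\phi(\cdot)$ to the Gordon auxiliary of $\Phi(\cdot)$ --- i.e.\ verifying that passing to compact domains is lossless and that the conjugate term $\mathcal R_d^*$ does not break the continuity/concavity hypotheses of Proposition \ref{prop:gordon} --- both of which are handled exactly as in \citet{Thram2018} using Assumption \ref{assump:reg_high_dim_assumptions} together with the a priori bounds of Lemmas \ref{lem:dual_norm_bounds}--\ref{lem:r_s_bounded}, so I would not reproduce those details.
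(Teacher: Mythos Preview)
Your approach is correct and mirrors the paper's own proof, which simply invokes Proposition~\ref{prop:gordon} and defers to the CGMT set-avoidance argument of Lemma~7 in \citet{Thram2018}. One minor slip: Sion's theorem is unavailable when $S$ is not convex, but for the upper bound on $\Phi(S)$ you only need the weak-duality direction (the Gordon auxiliary is always at most the paper's $\phi(S)$), so the argument still goes through.
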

\begin{proof}
    This result follows immediately by applying Proposition \ref{prop:gordon} and repeating the steps of Lemma 7 in \citet{Thram2018}.
\end{proof}

Our goal now is to lower bound $\phi(\{\eta : \|\eta\|_2 \leq \sqrt{n}C_{\eta}\})$ and upper bound $\phi(S)$ for a more restricted set $S$. We will focus initially on $\phi(\{\eta : \|\eta\|_2 \leq \sqrt{n}C_{\eta}\})$. Let $c_{\eta}$ and $C_{\eta}$ be constants satisfying the conclusion of Lemma \ref{lem:dual_norm_bounds}. We have that 
\begin{align*}
& \phi(\{\eta: \|\eta\|_2 \leq \sqrt{n}C_{\eta}\})  \geq \phi(\{\eta : c_{\eta} \sqrt{n} \leq \|\eta\|_2 \leq C_{\eta} \sqrt{n}\})\\
& =  \min_{(\|r\|_2 \leq C_r \sqrt{n},|\beta_0| \leq C_{\beta_0}, 0 \leq M_u \leq C_u)} \max_{(\|s\|_2 \leq  C_s \sqrt{n},c_{\eta}\sqrt{n} \leq \|\eta\|_2 \leq C_{\eta} \sqrt{n})}  \bigg( - M_u\left\|\frac{1}{n} \|\eta\|_2  h + \frac{1}{\sqrt{n}} s \right\|_2\\
& \hspace{1cm} \frac{1}{n} M_u \eta^\top g  + \frac{1}{n} \eta^\top \epsilon - \frac{1}{n} \beta_0 \eta^\top\pmb{1}_n   - \frac{1}{n} \eta^\top r + \frac{1}{n} \sum_{i=1}^n \ell_{\tau}(r_i)  + \frac{1}{\sqrt{n}} s^\top\tilde{\beta}  -\frac{1}{n} \mathcal{R}^*_d(\sqrt{n}s) \bigg)\\
& =  \min_{(\|r\|_2 \leq C_r \sqrt{n},|\beta_0| \leq C_{\beta_0}, 0 \leq M_u \leq C_u)} \max_{(\|s\|_2 \leq  C_s \sqrt{n}, c_{\eta}  \leq M_{\eta} \leq C_{\eta} )}  \bigg( - M_u\left\|\frac{1}{\sqrt{n}} M_{\eta}  h + \frac{1}{\sqrt{n}} s \right\|_2 \\
& \hspace{0.1cm} + M_{\eta} \left\| \frac{1}{\sqrt{n}} M_u  g + \frac{1}{\sqrt{n}} \epsilon - \frac{1}{\sqrt{n}} \beta_0 \pmb{1}_n - \frac{1}{\sqrt{n}} r \right\|_2 + \frac{1}{n} \sum_{i=1}^n \ell_{\tau}(r_i)   + \frac{1}{\sqrt{n}} s^\top\tilde{\beta}  - \frac{1}{n} \mathcal{R}^*_d(\sqrt{n}s)  \bigg).
\end{align*}
Notably, this last optimization problem is convex-concave. Now, note that for any vector $x$ and $C \geq \|x\|_2$, $\|x\|_2 = \min_{0<\tau \leq C} \frac{\|x\|_2^2}{2\tau} + \frac{\tau}{2}$. Moreover, by the weak law of large numbers, there exist constants $C_1, C_2 >0$ such that with probability converging to one,
\begin{align*}
 & \max_{(\|r\|_2 \leq C_r\sqrt{n}, |\beta_0| \leq C_{\beta_0}, 0 \leq M_u \leq C_u)} \left\|\frac{1}{\sqrt{n}} M_u  g + \frac{1}{\sqrt{n}} \epsilon - \frac{1}{\sqrt{n}} \beta_0 \pmb{1}_n - \frac{1}{\sqrt{n}} r\right\|_2 \\
 & \leq C_u \frac{\|g\|_2}{\sqrt{n}} + \frac{\|\epsilon\|_2}{\sqrt{n}} + C_{\beta_0} + C_r \leq C_1,
\end{align*}
and
\[
\max_{(\|s\|_2 \leq C_s\sqrt{n}, c_{\eta} \leq  M_{\eta} \leq C_{\eta} )} \left\|\frac{1}{\sqrt{n}} M_{\eta}  h + \frac{1}{\sqrt{n}} s \right\|_2 \leq C_{\eta} \frac{\|h_2\|_2}{\sqrt{n}} + C_s \leq C_2.
\]
So, applying these facts and using Sion's minimax theorem to swap the order of minimization and maximization \citep{Sion1958}, we have that the above can be rewritten as
\begin{align*}
    & \min_{(|\beta_0| \leq C_{\beta_0}, 0 \leq M_u \leq C_u)} \max_{( c_{\eta} \leq M_{\eta} \leq C_{\eta})} \min_{0 < \rho_1 \leq C_1}  \max_{0 < \rho_2 \leq C_2} \min_{\|r\|_2 \leq C_r \sqrt{n} } \max_{\|s\|_2 \leq C_s \sqrt{n}} \bigg( \frac{M_{\eta}}{2 n \rho_1} \left\|  M_u  g +  \epsilon -  \beta_0 \pmb{1}_n -  r \right\|^2_2\\
    & \hspace{2cm} + \frac{M_{\eta}\rho_1}{2} - \frac{M_u}{2n\rho_2}\left\|M_{\eta}  h + s \right\|^2_2 - \frac{M_u\rho_2}{2} + \frac{1}{n} \sum_{i=1}^n \ell_{\tau}(r_i)  + \frac{1}{\sqrt{n}} s^\top\tilde{\beta}  - \frac{1}{n} \mathcal{R}^*_d(\sqrt{n}s)\bigg).
\end{align*}
To simplify this further, we will rewrite the optimizations over $r$ and $s$ in terms of the Moreau envelope. This is done using the following lemma.
\begin{lemma}\label{lem:swap_to_envelope}
    Fix any constants $C_{\beta_0}, C_u, C_{\eta}, c_{\eta} > 0$ with $C_{\eta} > c_{\eta}$. Under the assumptions of Theorem \ref{thm:main_asym_consistency}, there exist constants $\tilde{C}_s, \tilde{C}_r > 0$, such that with probability tending to 1, it holds that for any $|\beta_0| \leq C_{\beta_0}$, $0 \leq M_u \leq C_u$, $c_{\eta} \leq M_{\eta} \leq C_{\eta}$, and $\rho_2, \rho_1 > 0$,
    \[
    \min_{\|r\|_2 \leq \tilde{C}_r \sqrt{n}} \frac{M_{\eta}}{2 n \rho_1} \left\|  M_u  g +  \epsilon -  \beta_0 \pmb{1}_n -  r \right\|^2_2 + \frac{1}{n} \sum_{i=1}^n \ell_{\tau}(r_i) = \frac{1}{n} \sum_{i=1}^n e_{\ell_{\tau}}\left( M_u  g_i +  \epsilon_i - \beta_0 ; \frac{\rho_1}{M_{\eta}} \right),
    \]
    and
    \begin{align*}
    & \max_{\|s\|_2 \leq \tilde{C}_s \sqrt{n}}   - \frac{M_u}{2n\rho_2}\left\|M_{\eta}  h + s \right\|^2_2 + \frac{1}{\sqrt{n}}s^\top \tilde{\beta} - \frac{1}{n} \mathcal{R}^*_d(\sqrt{n}s)\\
    & = -\frac{M_{\eta}^2 M_u}{2n\rho_2} \|h\|_2^2 + \frac{1}{n} e_{\mathcal{R}_d(\cdot/\sqrt{n})}\left( \sqrt{n}\tilde{\beta} - \frac{M_{\eta}M_u}{\rho_2} h  ; \frac{M_{u}}{\rho_2} \right).
    \end{align*}
\end{lemma}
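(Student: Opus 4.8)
The plan is to establish each of the two displayed identities by the same two-step argument. First, \emph{without} the ball constraint, the minimization over $r$ (resp.\ the maximization over $s$) equals the claimed Moreau-envelope expression; this is a deterministic convex-duality computation. Second, one shows that with probability tending to one, and uniformly over $|\beta_0|\le C_{\beta_0}$, $0\le M_u\le C_u$, $c_\eta\le M_\eta\le C_\eta$ and $\rho_1,\rho_2>0$, the unconstrained optimizer already lies inside the ball $\{\|r\|_2\le\tilde C_r\sqrt n\}$ (resp.\ $\{\|s\|_2\le\tilde C_s\sqrt n\}$), so that the constrained and unconstrained problems coincide. The constants $\tilde C_r,\tilde C_s$ will be chosen in terms of $C_{\beta_0},C_u,C_\eta,c_\eta,\gamma$ and the first two moments of $P_\epsilon$ and $P_\beta$.

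For the $r$-identity, set $a_i=M_ug_i+\epsilon_i-\beta_0$. The unconstrained objective is separable and equals $\frac1n\sum_{i=1}^n\!\bigl(\tfrac{M_\eta}{2\rho_1}(a_i-r_i)^2+\ell_\tau(r_i)\bigr)$, so coordinatewise minimization gives exactly $\frac1n\sum_i e_{\ell_\tau}(a_i;\rho_1/M_\eta)$ by the definition of the Moreau envelope (using $M_\eta\ge c_\eta>0$). The unconstrained minimizer is $r_i^{\star}=\operatorname{prox}_{(\rho_1/M_\eta)\ell_\tau}(a_i)$; since the pinball loss is minimized at $0$ with subgradient contained in $[-(1-\tau),\tau]$, this proximal map soft-thresholds $a_i$ toward $0$, whence $|r_i^{\star}|\le|a_i|$ for every $\rho_1>0$. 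Therefore $\|r^{\star}\|_2\le\|a\|_2\le C_u\|g\|_2+\|\epsilon\|_2+C_{\beta_0}\sqrt n$, a bound not depending on the parameters, and by the law of large numbers $\|g\|_2/\sqrt n\to1$ and $\|\epsilon\|_2/\sqrt n\to\sqrt{\mme[\epsilon_1^2]}$, so $\|r^{\star}\|_2\le\tilde C_r\sqrt n$ with probability $\to1$ for $\tilde C_r$ large enough.

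For the $s$-identity, expanding $\|M_\eta h+s\|_2^2$, pulling out the constant $-\tfrac{M_uM_\eta^2}{2n\rho_2}\|h\|_2^2$, and rescaling turns the unconstrained $s$-maximization into $\frac1n\sup_s\bigl[\langle s,\sqrt n\tilde\beta-\tfrac{M_uM_\eta}{\rho_2}h\rangle-\tfrac{M_u}{2\rho_2}\|s\|_2^2-\mathcal R_d^*(\sqrt n s)\bigr]$. Using the scaling identity $(\mathcal R_d(\cdot/\sqrt n))^*(y)=\mathcal R_d^*(\sqrt n y)$ and the conjugate representation $e_g(x;\rho)=\sup_y[\langle x,y\rangle-g^*(y)-\tfrac\rho2\|y\|_2^2]$ (with the convention $e_g(\cdot;0)=g$ from Lemma~\ref{lem:envelope_extension} when $M_u=0$), this sup is precisely $\frac1n e_{\mathcal R_d(\cdot/\sqrt n)}(\sqrt n\tilde\beta-\tfrac{M_uM_\eta}{\rho_2}h;\,M_u/\rho_2)$, which together with the extracted constant reproduces the claimed right-hand side. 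It remains to bound the maximizer $s^{\star}$. Writing $\rho=M_u/\rho_2$ and $x=\sqrt n\tilde\beta-M_\eta\rho h$, first-order conditions give $s^{\star}=\tfrac1\rho\bigl(x-v^{\star}\bigr)\in\partial[\mathcal R_d(\cdot/\sqrt n)](v^{\star})$ with $v^{\star}=\operatorname{prox}_{\rho\mathcal R_d(\cdot/\sqrt n)}(x)$, and because $\mathcal R_d\ge0=\mathcal R_d(0)$ the prox displacement satisfies $\|x-v^{\star}\|_2\le\|x\|_2$. A short dichotomy then closes the bound: if $\rho\le 1$ then $\|v^{\star}/\sqrt n\|_2\le 2\|\tilde\beta\|_2+2C_\eta\|h\|_2/\sqrt n=O(1)$, so part~2 of Assumption~\ref{assump:reg_high_dim_assumptions} gives $\|s^{\star}\|_2=\tfrac1{\sqrt n}\|\partial\mathcal R_d(v^{\star}/\sqrt n)\|_2\le O(d/\sqrt n)=O(\sqrt n)$ (the degenerate case $\rho=0$, i.e.\ $M_u=0$, being subsumed here since then $\sqrt n s^{\star}\in\partial\mathcal R_d(\tilde\beta)$); if $\rho>1$ then $\|s^{\star}\|_2\le\|x\|_2/\rho\le\sqrt n\|\tilde\beta\|_2+C_\eta\|h\|_2=O(\sqrt n)$. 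Since $\|\tilde\beta\|_2$, $\|h\|_2/\sqrt n$ and $d/n$ converge to constants by the law of large numbers, $\|s^{\star}\|_2\le\tilde C_s\sqrt n$ with probability $\to1$, uniformly in the parameters.

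Intersecting the two probability-$\to1$ events on which the unconstrained optimizers lie in their respective balls, both constrained optima coincide with the unconstrained ones — hence with the Moreau-envelope expressions — simultaneously for all parameter choices, which is the claim. The main obstacle is the uniform control of $\|s^{\star}\|_2$: as $M_u/\rho_2\to0$ the quadratic penalty $\tfrac{M_u}{2n\rho_2}\|s\|_2^2$ degenerates, so one cannot rely on it alone and must extract the needed coercivity from $\mathcal R_d^*$ through the subgradient bound of Assumption~\ref{assump:reg_high_dim_assumptions}, and this has to hold uniformly over the full (unbounded) range of $\rho_2$ and over $M_u$ down to $0$, rather than at a single parameter value.
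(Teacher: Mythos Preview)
Your proposal is correct. The $r$-identity and its optimizer bound match the paper's proof essentially verbatim (both use that $\operatorname{prox}_{\ell_\tau}(0;\rho)=0$ and non-expansiveness give $\|r^\star\|_2\le\|a\|_2$). For the $s$-identity you arrive at the same expression through the conjugate representation $e_g(x;\rho)=\sup_y[\langle x,y\rangle-\tfrac{\rho}{2}\|y\|_2^2-g^*(y)]$, whereas the paper completes the square and then invokes the Moreau identity $e_f(x;\rho)+e_{f^*}(x/\rho;1/\rho)=\|x\|_2^2/(2\rho)$; these are the same computation in a different order. The genuine difference is in how $\|s^\star\|_2$ is controlled uniformly in $(M_u,\rho_2)$. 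You split on $\rho=M_u/\rho_2\lessgtr 1$, using the subgradient bound from Assumption~\ref{assump:reg_high_dim_assumptions}(2) when the quadratic is weak and the displacement bound $\|x-v^\star\|_2\le\|x\|_2$ when it is strong. The paper instead compares $s^\star$ to the fixed feasible point $s'\in n^{-1/2}\partial\mathcal R_d(\tilde\beta)$: optimality of $s^\star$ together with the Fenchel--Young equality at $s'$ yields $\|M_\eta h+s^\star\|_2\le\|M_\eta h+s'\|_2$, hence $\|s^\star\|_2\le 2M_\eta\|h\|_2+\|s'\|_2$, a single bound that is manifestly independent of $(M_u,\rho_2)$ and so requires no dichotomy or separate treatment of $M_u=0$. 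Your route is slightly longer but has the minor advantage of making explicit which mechanism (regularizer coercivity versus quadratic penalty) is doing the work in each regime.
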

\begin{proof}
    Recall the definition of the proximal function,
    \[
    \text{prox}_f(x;\rho) = \underset{v}{\text{argmin}} \frac{1}{2\rho} \|x-v\|_2^2 +  f(v).
    \]
    For any vector $x \in \mmr^n$ and $\rho > 0$, let 
    \[
    \text{prox}^n_{\ell_{\tau}}(x;\tau) = \underset{v}{\text{argmin}} \frac{1}{2\rho} \|x-v\|_2^2 +  \sum_{i=1}^n \ell_{\tau}(v_i),
    \]
    denote the proximal map of the function $v \mapsto \sum_{i=1}^n \ell_{\tau}(v_i)$.
    By definition of $\ell_{\tau}$, we have $\text{prox}^n_{\ell_{\tau}}(0;\rho) = 0$ for any $\rho > 0$. Since the proximal function is non-expansive (Proposition 12.28 of \citet{Bauschke2017}), it holds that for any $x \in \mmr$ and $\rho > 0$,
    \[
    \|\text{prox}^n_{\ell_{\tau}}(x;\rho)\|_2 =   \|\text{prox}^n_{\ell_{\tau}}(x;\rho) - \text{prox}^n_{\ell_{\rho}}(0;\rho) \|_2 \leq \|x - 0 \|_2 = \|x\|_2.
    \]
    So, in particular, 
    \[
    \|\text{prox}^n_{\ell_{\tau}}(  M_u  g +  \epsilon -  \beta_0 \pmb{1}_n  ;\rho_1/M_{\eta}) \|_2 \leq \|M_u  g +  \epsilon -  \beta_0 \pmb{1}_n  \|_2 \leq C_u \|g\|_2 + \|\epsilon\|_2 + C_{\beta_0}\sqrt{n}.
    \]
    The first part of the lemma then follows immediately by using the law of large numbers to bound $\|g\|_2$ and $\|\epsilon\|_2$.

    For the second part of the lemma, write 
    \begin{align*}
     & \max_{\|s\| \leq \tilde{C}_s\sqrt{n}} - \frac{M_u}{2n\rho_2}\left\|M_{\eta}  h + s \right\|^2_2 + \frac{1}{\sqrt{n}}s^\top \tilde{\beta} - \frac{1}{n} \mathcal{R}^*_d(\sqrt{n}s)\\
     & =   \max_{\|s\| \leq \tilde{C}_s\sqrt{n}} - \frac{M_u}{2n\rho_2}\left\|M_{\eta}  h - \frac{\rho_2}{M_u}\sqrt{n}\tilde{\beta} + s \right\|^2_2 + \frac{\rho_2}{2nM_u}\|\sqrt{n} \tilde{\beta}\|_2^2 - \frac{1}{n} M_{\eta} h^\top(\sqrt{n}\tilde{\beta})  - \frac{1}{n} \mathcal{R}^*_d(\sqrt{n}s).
    \end{align*}
    Suppose that $\tilde{C}_s$ is sufficient large so that $\left\|\text{prox}_{x \mapsto \mathcal{R}^*_d(\sqrt{n}x)}\left(\frac{\rho_2}{M_u}\sqrt{n}\tilde{\beta} - M_{\eta}  h  ; \frac{\rho_2}{M_u} \right)\right\|_2  \leq \tilde{C}_s \sqrt{n}$. Then, the above can be rewritten as
    \[
- \frac{1}{n}\text{e}_{x \mapsto \mathcal{R}^*_d(\sqrt{n}x)}\left(\frac{\rho_2}{M_u}\sqrt{n}\tilde{\beta} - M_{\eta}  h  ; \frac{\rho_2}{M_u} \right) +  \frac{\rho_2}{2nM_u}\|\sqrt{n} \tilde{\beta}\|_2^2 - \frac{1}{n} M_{\eta} h^\top(\sqrt{n}\tilde{\beta}).
    \]
    Moreover, recalling the identity (Lemma \ref{lem:envelope_identity} below),
    \[
    e_f(x;\rho) + e_{f^*}(x/\rho;1/\rho) = \frac{\|x\|_2^2}{2\rho},
    \]
    this can be equivalently written as
    \[
    \frac{1}{n} e_{\mathcal{R}_d(\cdot/\sqrt{n})}\left(\sqrt{n}\tilde{\beta} -  \frac{M_uM_{\eta}}{\rho_2}h ; \frac{M_u}{\rho_2} \right) - \frac{M_{\eta}^2 M_u}{2n \rho_2} \|h\|_2^2,
    \]
    as desired. 
    
    It remains to bound $\left\|\text{prox}_{x \mapsto \mathcal{R}^*_d(\sqrt{n}x)}\left(\frac{\rho_2}{M_u}\sqrt{n}\tilde{\beta} - M_{\eta}  h  ; \frac{\rho_2}{M_u} \right)\right\|_2 $. For ease of notation, let $s^* = \text{prox}_{x \mapsto \mathcal{R}^*_d(\sqrt{n}x)}\left(\frac{\rho_2}{M_u}\sqrt{n}\tilde{\beta} - M_{\eta}  h  ; \frac{\rho_2}{M_u} \right)$. Fix any $s' \in n^{1/2}\partial R_d(\tilde{\beta})$. By definition of the proximal function, we have that 
    \begin{align*}
     & \frac{M_u}{2\rho_2}\left\|M_{\eta}  h - \frac{\rho_2}{M_u}\sqrt{n}\tilde{\beta} + s^* \right\|^2_2 +  \mathcal{R}_d^*(\sqrt{n}s^*) \leq  \frac{M_u}{2\rho_2}\left\|M_{\eta}  h - \frac{\rho_2}{M_u}\sqrt{n}\tilde{\beta} + s' \right\|^2_2 + \mathcal{R}_d^*(\sqrt{n}s')\\
     & \implies  \frac{M_u}{2\rho_2}\left\|M_{\eta}  h  + s^* \right\|^2_2 \leq \frac{M_u}{2\rho_2}\left\|M_{\eta}  h  + s' \right\|^2_2 + (s^*)^\top(\sqrt{n}\tilde{\beta}) - \mathcal{R}_d^*(\sqrt{n}s^*)\\
     & \hspace{4cm} - ((s')^\top(\sqrt{n}\tilde{\beta}) - \mathcal{R}_d^*(\sqrt{n}s'))\\
     & \implies    \frac{M_u}{2\rho_2}\left\|M_{\eta}  h  + s^* \right\|^2_2 \leq \frac{M_u}{2\rho_2}\left\|M_{\eta}  h  + s' \right\|^2_2,
    \end{align*}
    where on the last line we have applied the definition of the convex conjugate (see also Proposition 11.3 of \citet{Rockefeller1997}). So, rearranging we have that 
    \[
   \frac{\|s^*\|_2}{\sqrt{n}} \leq 2M_{\eta}\frac{\|h\|_2}{\sqrt{n}} + \|\partial R_d(\tilde{\beta})\|_2 \leq 2C_{\eta} \frac{\|h\|_2}{\sqrt{n}} + \|\partial R_d(\tilde{\beta})\|_2.
    \]
    This last quantity can be bounded by the law of large numbers and part 2 of Assumption \ref{assump:reg_high_dim_assumptions}
\end{proof}

Now, without loss of generality we may assume that $C_r \geq \tilde{C}_r$ and $C_s \geq \tilde{C}_s$. So, applying Lemma \ref{lem:swap_to_envelope} and taking a continuous extension at $\rho_1 = 0$, our previous calculations gives us that 
\begin{equation}\label{eq:final_data_aux}
\begin{split}
&  \phi(\{\eta: \|\eta\|_2 \leq \sqrt{n}C_{\eta}\})\\
    & \geq  \min_{(|\beta_0| \leq C_{\beta_0}, 0 \leq M_u \leq C_u, 0 \leq \rho_1 \leq C_1)} \max_{(c_{\eta} \leq M_{\eta} \leq C_{\eta}, 0 < \rho_2 \leq C_2)}  \bigg( \frac{1}{n} \sum_{i=1}^n e_{\ell_{\tau}}\left( M_u  g_i +  \epsilon_i - \beta_0 ; \frac{\rho_1}{M_{\eta}} \right) \\ 
    & \hspace{2cm} - \frac{M_{\eta}^2 M_u}{2n\rho_2} \|h\|_2^2 + \frac{1}{n} e_{\mathcal{R}_d(\cdot/\sqrt{n})}\left( \sqrt{n}\tilde{\beta}-\frac{M_{\eta}M_u}{\rho_2} h  ; \frac{M_{u}}{\rho_2} \right) + \frac{M_{\eta}\rho_1}{2}  - \frac{M_u\rho_2}{2}    \bigg),
\end{split}
\end{equation}

Our final step is to replace all the random quantities above with their asymptotic limits. To do this, we will employ the following lemma which states that pointwise convergence can be converted to convergence of the minimum value of a convex function. This result is a minor variant of Lemma 10 of \citet{Thram2018} and we include a partial proof for completeness.

\begin{lemma}[Extension of Lemma 10 of \citet{Thram2018}]\label{lem:convexity_lemma}
    Fix $b>a$ and let $f_n : [a,b] \to \mmr$ be a sequence of random convex functions converging pointwise in probability to $f : [a,b] \to \mmr$. Then, 
    \[
    \inf_{x \in [a,b]} f_n(x) \stackrel{\mmp}{\to}  \inf_{x \in [a,b]} f(x) .
    \]
    Similarly, if $f_n : (a,b] \to \mmr$ is a sequence of random convex functions converging pointwise in probability to  $f : (a,b] \to \mmr$, then 
    \[
    \inf_{x \in (a,b]} f_n(x) \stackrel{\mmp}{\to}  \inf_{x \in (a,b]} f(x) .
    \]
\end{lemma}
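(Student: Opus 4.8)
The plan is to prove both assertions with one argument. The starting point is the elementary fact that the infimum of a finite convex function $g$ over an interval is unchanged by adding or deleting endpoints: writing a point near $a$ as a convex combination of $a$ and a fixed interior point gives $\limsup_{x\to a^+}g(x)\le g(a)$, and similarly at $b$, so $\inf_{[a,b]}g=\inf_{(a,b)}g=\inf_{(a,b]}g$. Hence the two statements are the same claim about the infimum over the open interval, and the proof below invokes pointwise convergence only at finitely many interior points, so it covers the closed and the half-open case at once. Write $v=\inf f$ and $v_n=\inf f_n$, and recall that a real-valued convex function on an interval is bounded below. The inequality $\limsup_n v_n\le v$ in probability is immediate: given $\epsilon>0$ choose an interior $x_0$ with $f(x_0)<v+\epsilon/2$ and use $v_n\le f_n(x_0)\xrightarrow{\mmp}f(x_0)$.

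For the reverse inequality, fix $\epsilon>0$ and split the domain into a neighbourhood of $a$, a compact interior interval $K$, and a neighbourhood of $b$. On $K$ the classical convexity lemma — pointwise convergence of convex functions is uniform on compact subsets of the interior, which is the mechanism behind Lemma 10 of \citet{Thram2018} — gives $\sup_K|f_n-f|\le\epsilon/2$ with probability tending to one, hence $f_n\ge v-\epsilon/2$ on $K$. For the two tails I use monotonicity of the difference quotients of the convex function $f_n$, which yields, for points $x<p<q$ of the domain,
\[
f_n(x)\ \ge\ f_n(p)-\frac{p-x}{q-p}\,(f_n(q)-f_n(p))\qquad (x<p<q),
\]
together with the mirror bound $f_n(x)\ge f_n(q)+\tfrac{x-q}{q-p}(f_n(q)-f_n(p))$ for $p<q<x$. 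Taking $p<q$ two interior points with $p$ close to $a$ and $q$ at a fixed distance from $a$, on the event (probability $\to1$) that $f_n$ is within $\epsilon/4$ of $f$ at $p$ and at $q$ the first term is $\ge v-\epsilon/4$, and a short case analysis on the sign and size of $f(q)-f(p)$ — using that $f$ is bounded below and that the coefficient $\tfrac{p-x}{q-p}\le\tfrac{p-a}{q-p}$ tends to $0$ as $p\to a^+$ — bounds the correction by $\epsilon/2$, so $f_n\ge v-\epsilon$ on the left neighbourhood; the right neighbourhood is symmetric. Intersecting the three events shows $\mmp(v_n\ge v-\epsilon)\to1$, and with the easy direction this gives $v_n\xrightarrow{\mmp}v$.

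The only place the shape of the domain (a closed or half-open, rather than open, interval) really matters — and the step I expect to be the main obstacle — is the control of $f_n$ near the endpoints: the convexity lemma by itself only delivers uniform convergence on compact subsets of the open interval, so a priori $f_n$ could plunge well below $v$ in a neighbourhood of $a$ or $b$ while behaving perfectly on every fixed interior compact set. The extrapolation inequality above is exactly what prevents this, because a convex function cannot fall much below its value on a reference pair $p<q$ when continued a small relative distance past $p$, and that relative distance $\tfrac{p-a}{q-p}$ is driven to $0$ by sliding $p$ toward the endpoint — harmless since $f$ stays bounded below and the reference values remain finite.
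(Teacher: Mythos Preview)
Your proposal is correct and follows essentially the same architecture as the paper's proof: the easy upper bound via pointwise evaluation, uniform convergence on a compact interior subinterval (the paper cites Lemma~7.75 of \citet{liese2008} for this), and a convexity-based extrapolation inequality to control the tails near the endpoints. Two points are worth noting as mild improvements over the paper's argument. First, your observation that $\inf_{[a,b]}g=\inf_{(a,b)}g=\inf_{(a,b]}g$ for any finite convex $g$ lets you treat both assertions at once, whereas the paper proves the closed case and declares the half-open case ``similar.'' Second, the paper splits the endpoint analysis into two cases according to whether $\liminf_{x\to a}f(x)$ equals or exceeds $\inf f$; your single extrapolation bound $f_n(x)\ge f_n(p)-\tfrac{p-x}{q-p}(f_n(q)-f_n(p))$, combined with $f_n(p)\ge v-\epsilon/4$ and $f_n(q)-f_n(p)\le f(q)-v+\epsilon/2$ on the good event, handles both cases uniformly once you push $p\to a^+$ to kill the coefficient. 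The trade-off is only that the paper's Case~1 argument gives a slightly sharper constant when $f$ is strictly larger near the boundary, which is irrelevant here.
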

\begin{proof}
    We will prove the first part of the lemma. Proof of the second part is similar and is omitted. For any $x' \in [a,b]$ we have that 
    \[
    \limsup_{n \to \infty}     \inf_{x \in [a,b]}f_n(x) \leq \limsup_{n \to \infty}    f_n(x') \stackrel{\mmp}{=} f(x').
    \]
    So, taking an infimum over $x'$ gives $\limsup_{n \to \infty}     \inf_{x \in [a,b]}f_n(x) \leq \inf_{x \in [a,b]} f(x)$. 
    
    It suffices to prove a matching lower bound. If $\inf_{x \in [a,b]} f(x) = -\infty$ there is nothing to show. So, assume that $\inf_{x \in [a,b]} f(x)  > -\infty$. By Lemma 7.75 of \citet{liese2008} we have that for any points $a < x_1 < x_2 < b$,
    \[
    \sup_{x \in [x_1,x_2]}|f_n(x) - f(x)| \stackrel{\mmp}{\to} 0,
    \]
    and thus also,
    \begin{equation}\label{eq:uniform_conv_convex}
     \inf_{x \in [x_1,x_2]}f_n(x)  \stackrel{\mmp}{\to} \inf_{x \in [x_1,x_2]}f(x).
    \end{equation}
    So, we just need to check what happens on the boundary. We will focus on the lower boundary. First, suppose that $\liminf_{x \to a} f(x) > \inf_{x \in [a,b]} f(x)$. Let $x^* \in (a,b]$ be such that $f(x^*) < \inf_{x \in [a,b]} f(x) + \frac{\liminf_{x \to a} f(x) - \inf_{x \in [a,b]} f(x)}{2}$. Fix any $a < x_1 < x_2 < x^*$ with $f(x_1), f(x_2) > f(x^*)$. For any $x \in [a,x_1]$ let $\lambda_x>0$ be such that $x_2 = \lambda_x x + (1-\lambda_x)x^*$. Then, 
    \[
    f_n(x_2) \leq \lambda_x f_n(x) +(1-\lambda_x)f_n(x^*) \implies f_n(x) \geq f_n(x^*) + \frac{1}{\lambda_x} (f_n(x_2) - f_n(x^*)). 
    \]
    Asymptotically, we have that $\lim_{n \to \infty} f_n(x_2) - f_n(x^*) \stackrel{\mmp}{=} f(x_2) - f(x^*) > 0$ and thus $\liminf_{n \to \infty} \inf_{x \in [a,x_1]} f_n(x) \stackrel{\mmp}{\geq} f(x^*) \geq \inf_{x \in [a,b]} f(x)$.
    
    On the other hand, suppose that $\liminf_{x \to a} f(x) =  \inf_{x \in [a,b]} f(x)$. Fix any $\delta > 0$. We claim that there exists $a < x_1 < x_2 < b$ such that $f(x_1), f(x_2) < \inf_{x \in [a,b]} f(x) + \delta$. To see this, let $x_{\delta} \in [a,(b+a)/2]$ be such that $f(x_{\delta}) < \inf_{x \in [a,b]} f(x) + \delta/2$. Fix any $0 < \xi < b - x_{\delta}$ and for any $x \in [x_{\delta},x_{\delta} + \xi]$ write 
    \[
    f(x) \leq \left(1- \frac{x-x_{\delta}}{b-x_{\delta}} \right) f(x_{\delta}) + \frac{x-x_{\delta}}{b-x_{\delta}} f(b) \leq f(x_{\delta}) + \frac{\xi}{b-x_{\delta}} (f(b) - \inf_{x' \in [a,b]} f(x')).
    \]
    Taking $\xi$ sufficiently small we find that $\sup_{x \in [x_{\delta},x_{\delta} + \xi]} f(x) \leq \inf_{x \in [a,b]} f(x) + \delta$ and so setting $x_1 < x_2$ to be any points in $(x_{\delta}, x_{\delta} + \xi)$ gives the desired claim. 
    
    Now, for any $x \in [a,x_1]$ we have
    \begin{align*}
     f_n(x_1) & = f_n\left(\frac{x_2 - x_1}{x_2 - x} x + \left(1-\frac{x_2 - x_1}{x_2 - x}\right) x_2\right ) \leq \frac{x_2 - x_1}{x_2 - x} f_n(x) + \left( 1- \frac{x_2 - x_1}{x_2 - x} \right) f_n(x_2)\\
     \implies  f_n(x) & \geq \frac{x_2 - x}{x_2 - x_1} f_n(x_1) - \frac{x_2 - x}{x_2 - x_1} \left( 1- \frac{x_2 - x_1}{x_2 - x} \right)   f_n(x_2)\\
     & \geq \min\{f_n(x_1), f_n(x_2)\} - \frac{x_2 - x}{x_2 - x_1} \left( 1- \frac{x_2 - x_1}{x_2 - x} \right)  (f_n(x_2) -  \min\{f_n(x_1), f_n(x_2)\})\\
    & \geq \min\{f_n(x_1), f_n(x_2)\} -  \frac{x_1 - a}{x_2 - x_1}(f_n(x_2) -  \min\{f_n(x_1), f_n(x_2)\})\\
    & \stackrel{\mmp}{\geq} \inf_{x' \in [a,b]} f(x') - \frac{x_1 - a}{x_2 - x_1} \delta,
    \end{align*}
    where the probability in the last inequality holds uniformly over $x$. Thus,
    \[
    \liminf_{n \to \infty} \inf_{x \in [a,x_1]} f_n(x) \stackrel{\mmp}{\geq}  \inf_{x \in [a,b]} f(x). 
    \]

    So, in total, we find that in all cases we may find $x_1 \in (a,b]$ such that 
    \[
     \liminf_{n \to \infty} \inf_{x \in [a,x_1]} f_n(x) \stackrel{\mmp}{\geq}  \inf_{x \in [a,b]} f(x).
    \]
    By a matching argument, we may also find $x'_1 \in [a,b)$ such that 
    \[
    \liminf_{n \to \infty} \inf_{x \in [x'_1,b]} f_n(x) \stackrel{\mmp}{\geq}  \inf_{x \in [a,b]} f(x).
    \]
    Combining these two facts and using (\ref{eq:uniform_conv_convex}) to get convergence on the interior gives the desired result.
\end{proof}

Combining all of the previous results we arrive at the following.
\begin{proposition}\label{prop:aux_to_asymp}
    Suppose the assumptions of Theorem \ref{thm:main_asym_consistency} hold. Let $(C_{\beta_0}, C_u, c_{\eta}, C_{\eta}, C_1, C_2)$ be constants satisfying the conclusions of Lemmas \ref{lem:dual_norm_bounds}, \ref{lem:primal_bound}, \ref{lem:r_s_bounded}, and \ref{lem:swap_to_envelope}.
    Let $V$ denote the value of the asymptotic program defined in (\ref{eq:asymp_program}). Then,
    \[
    \liminf_{n,d \to \infty} \phi(\{\eta: \|\eta\|_2 \leq \sqrt{n}C_{\eta}\}) \stackrel{\mmp}{\geq} V.
    \]
\end{proposition}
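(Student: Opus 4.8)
The plan is to start from the bound \eqref{eq:final_data_aux} established above, which already reduces the claim to the convergence of a random nested min--max. Writing $F_n(\beta_0,M_u,\rho_1,M_\eta,\rho_2)$ for the random objective on the right-hand side of \eqref{eq:final_data_aux} and $F(\beta_0,M_u,\rho_1,M_\eta,\rho_2)$ for the integrand of the asymptotic program \eqref{eq:asymp_program}, the bound \eqref{eq:final_data_aux} reads
\[
\phi(\{\eta:\|\eta\|_2 \leq \sqrt{n}C_\eta\}) \;\geq\; \min_{(|\beta_0| \leq C_{\beta_0},\,0 \leq M_u \leq C_u,\,0 \leq \rho_1 \leq C_1)}\;\max_{(c_\eta \leq M_\eta \leq C_\eta,\,0 < \rho_2 \leq C_2)} F_n,
\]
and $V$ is the analogous min--max of $F$. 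So it suffices to show that the right-hand side converges to $V$ in probability; the stated $\liminf$ bound then follows because for every $\delta>0$ the event $\{\phi(\{\eta:\|\eta\|_2\leq\sqrt{n}C_\eta\}) \geq V-\delta\}$ has probability tending to one. I would prove this convergence in two stages: first pointwise convergence $F_n \to F$ over the compact parameter region, then an upgrade to convergence of the min--max value via the convexity lemma (Lemma \ref{lem:convexity_lemma}), following the treatment of Lemma 10 of \citet{Thram2018}.

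For the pointwise convergence I would fix parameters with $\rho_2>0$ (handling $\rho_1=0$ and $M_u=0$ via the continuous extension $e_f(\cdot;0)=f(\cdot)$) and treat the three random terms of $F_n$ separately. The sum $\frac{1}{n}\sum_i e_{\ell_\tau}(M_u g_i + \epsilon_i - \beta_0;\rho_1/M_\eta)$ converges to its mean by the weak law of large numbers: the summands are i.i.d., and $e_{\ell_\tau}(\cdot;\rho)$ inherits the $\max\{1-\tau,\tau\}$-Lipschitz bound of $\ell_\tau$, so integrability follows from $\mme|\epsilon_1|<\infty$ (Assumption \ref{assump:data_high_dim}). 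The term $\frac{M_\eta^2 M_u}{2n\rho_2}\|h\|_2^2$ converges to $\frac{M_\eta^2 M_u \gamma}{2\rho_2}$ since $\|h\|_2^2/d \stackrel{\mmp}{\to} 1$ and $d/n\to\gamma$. The regularizer term equals $\frac{d}{n}$ times $\frac{1}{d}e_{\mathcal{R}_d(\cdot/\sqrt{n})}(\sqrt{n}\tilde\beta - \frac{M_\eta M_u}{\rho_2}h;\frac{M_u}{\rho_2})$, which converges in probability by part 3 of Assumption \ref{assump:reg_high_dim_assumptions} (applied with $c=-M_\eta M_u/\rho_2$ and $\rho=M_u/\rho_2$), while $\frac{d}{n}\to\gamma$. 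Adding the deterministic terms $\frac{M_\eta\rho_1}{2}-\frac{M_u\rho_2}{2}$ identifies the pointwise limit as $F$.

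To pass the convergence through the min and the max I would use that the derivation of \eqref{eq:final_data_aux}---carried out in the $\mathcal{R}^*_d$-envelope form, which equals the stated form by the envelope identity (Lemma \ref{lem:envelope_identity})---exhibits $F_n$ as jointly convex in the minimization block $(\beta_0,M_u,\rho_1)$ and jointly concave in the maximization block $(M_\eta,\rho_2)$ on the stated compact domains, properties inherited by the limit $F$. Put $G_n(\beta_0,M_u,\rho_1)=\max_{(M_\eta,\rho_2)}F_n$ and define $G$ analogously from $F$. For each fixed $(\beta_0,M_u,\rho_1)$, the map $(M_\eta,\rho_2)\mapsto -F_n$ is convex and converges pointwise in probability to $-F$, so Lemma \ref{lem:convexity_lemma}, applied one coordinate at a time and using its half-open-interval version in the $\rho_2\in(0,C_2]$ direction, yields $G_n \stackrel{\mmp}{\to} G$ pointwise; and $G_n$, being a pointwise maximum of functions convex in $(\beta_0,M_u,\rho_1)$, is itself convex, so a second application of Lemma \ref{lem:convexity_lemma} gives $\min_{(\beta_0,M_u,\rho_1)}G_n \stackrel{\mmp}{\to}\min_{(\beta_0,M_u,\rho_1)}G=V$. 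Combined with the displayed inequality, this is exactly the assertion.

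The hard part is this last stage. Since Lemma \ref{lem:convexity_lemma} is one-dimensional, it must be iterated coordinate by coordinate, and at each iteration one must verify that the restricted function is still convex (respectively concave) and control its behaviour at the half-open endpoint $\rho_2\to 0^+$; here the optimizing $\rho_2$ stays bounded away from $0$ because $M_\eta^2 M_u\|h\|_2^2/(2n\rho_2)\to\infty$ as $\rho_2\to 0^+$ whenever $M_u>0$, which keeps the maximizer in the interior. By contrast, the pointwise convergence in the first stage is routine, the only delicate points being the degenerate cases $\rho_1=0$ and $M_u=0$.
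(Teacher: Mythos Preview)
Your proposal is correct and follows the same route as the paper's own proof, which simply states that the result follows by repeated applications of Lemma~\ref{lem:convexity_lemma} to the right-hand side of \eqref{eq:final_data_aux}, with the pointwise limits supplied by the law of large numbers and part~3 of Assumption~\ref{assump:reg_high_dim_assumptions}. You have spelled out exactly these steps---the pointwise convergence of each random term and the coordinate-by-coordinate iteration of Lemma~\ref{lem:convexity_lemma} through the convex--concave structure---so there is nothing to add.
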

\begin{proof}
    This lemma follows by repeated applications of Lemma \ref{lem:convexity_lemma} to (\ref{eq:final_data_aux}) where the corresponding pointwise limits follow by the law of large numbers and part 3 of Assumption \ref{assump:reg_high_dim_assumptions}.
\end{proof}

\subsection{Analysis of the asymptotic program}\label{sec:asymp_program}

In this section we prove a number of useful results regarding the asymptotic auxiliary program defined in (\ref{eq:final_data_aux}). In what follows we use 
\begin{align*}
A(\beta_0,M_u,\rho_1,M_{\eta},\rho_2) & = \mme\left[e_{\ell_{\tau}}\left(M_ug_1 + \epsilon_1 - \beta_0 ; \frac{\rho_1}{M_{\eta}} \right)\right] - \frac{M_{\eta}^2M_u\gamma}{2\rho_2}\\
& + \gamma \mme\left[ e_{\nu}\left( \frac{M_{\eta}M_u}{\rho_2} h_1 + \gamma\sqrt{d}\tilde{\beta}_1 ; \frac{M_u}{\rho_2} \right)\right] + \frac{M_{\eta}\rho_1}{2} - \frac{M_u\rho_2}{2}, 
\end{align*}
to denote the objective of this optimization. 

\begin{lemma}\label{lem:A_convex_cont}
    Under the assumptions of Theorem \ref{thm:main_asym_consistency}, $A(\beta_0,M_u,\rho_1,M_{\eta},\rho_2)$ is jointly continuous, jointly convex in $(\beta_0,M_u,\rho_1)$, and jointly concave in $(M_{\eta},\rho_2)$ on the domain $\mmr \times \mmr_{\geq 0} \times \mmr_{\geq 0} \times \mmr_{> 0} \times \mmr_{>0}$.
\end{lemma}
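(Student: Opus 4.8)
The plan is to use, for each Moreau envelope appearing in the objective $A$, the variational representation $e_f(x;\rho)=\inf_v\{\tfrac{1}{2\rho}\|x-v\|_2^2+f(v)\}$ (valid for $\rho>0$, with $e_f(x;0)=f(x)$ by continuous extension), together with the joint convexity of the perspective map $(z,\rho)\mapsto\|z\|_2^2/\rho$ on $\mmr^k\times\mmr_{>0}$. The two bilinear terms $\tfrac{M_\eta\rho_1}{2}$ and $-\tfrac{M_u\rho_2}{2}$ are affine in each of the blocks $(\beta_0,M_u,\rho_1)$ and $(M_\eta,\rho_2)$, hence harmless; the term $-\tfrac{M_\eta^2M_u\gamma}{2\rho_2}$ is linear in $M_u$ (so fine for the convexity claim) but must be paired with the regularizer term for the concavity claim.

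\emph{Convexity in $(\beta_0,M_u,\rho_1)$} (with $M_\eta,\rho_2$ held fixed). In the data term, substitute $x=M_ug_1+\epsilon_1-\beta_0$ and $\rho=\rho_1/M_\eta$; since $(\beta_0,M_u,\rho_1,v)\mapsto(M_ug_1+\epsilon_1-\beta_0-v,\rho_1)$ is affine and $\tfrac{M_\eta}{2}\|z\|_2^2/\rho_1$ is jointly convex in $(z,\rho_1)$, the integrand $\tfrac{M_\eta}{2\rho_1}(M_ug_1+\epsilon_1-\beta_0-v)^2+\ell_\tau(v)$ is jointly convex in $(\beta_0,M_u,\rho_1,v)$, so its partial infimum over $v$ is convex in $(\beta_0,M_u,\rho_1)$, and expectation preserves this. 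The same argument applies to the regularizer term after substituting $x=\tfrac{M_\eta M_u}{\rho_2}h_1+\gamma\sqrt d\,\tilde{\beta}_1$ and $\rho=M_u/\rho_2$ (both affine in $M_u$), using that $\nu$ is convex. Adding the linear term $-\tfrac{M_\eta^2M_u\gamma}{2\rho_2}$ and the two bilinear terms completes this case.

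\emph{Concavity in $(M_\eta,\rho_2)$} (with $\beta_0,M_u,\rho_1$ held fixed). The data term is easy: with $x_0:=M_ug_1+\epsilon_1-\beta_0$ now fixed, $e_{\ell_\tau}(x_0;\rho_1/M_\eta)=\inf_v\{\tfrac{M_\eta}{2\rho_1}(x_0-v)^2+\ell_\tau(v)\}$ is an infimum of functions that are \emph{affine} in $M_\eta$ (and constant in $\rho_2$), hence concave, and expectation preserves concavity. The main obstacle — and really the only delicate point — is the regularizer term $\gamma\,\mme[e_\nu(\tfrac{M_\eta M_u}{\rho_2}h_1+\gamma\sqrt d\,\tilde{\beta}_1;\tfrac{M_u}{\rho_2})]$: by itself it is \emph{not} concave in $(M_\eta,\rho_2)$ (for ridge regularization $e_\nu$ is a quadratic, so this term is convex in $M_\eta$). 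The resolution is to combine it with $-\tfrac{M_\eta^2M_u\gamma}{2\rho_2}$. Expanding the square inside the envelope (for $M_u>0$) gives
\begin{align*}
e_\nu\!\left(\tfrac{M_\eta M_u}{\rho_2}h_1+\gamma\sqrt d\,\tilde{\beta}_1;\tfrac{M_u}{\rho_2}\right)
&=\tfrac{M_\eta^2M_u}{2\rho_2}h_1^2\\
&\quad+\inf_v\left\{M_\eta h_1\bigl(\gamma\sqrt d\,\tilde{\beta}_1-v\bigr)+\tfrac{\rho_2}{2M_u}\bigl(\gamma\sqrt d\,\tilde{\beta}_1-v\bigr)^2+\nu(v)\right\},
\end{align*}
and, since $h_1\sim\mathcal{N}(0,1)$ satisfies $\mme[h_1^2]=1$ and is independent of $\tilde{\beta}_1$, the quantity $\gamma\,\mme[\tfrac{M_\eta^2M_u}{2\rho_2}h_1^2]=\gamma\tfrac{M_\eta^2M_u}{2\rho_2}$ cancels $-\tfrac{M_\eta^2M_u\gamma}{2\rho_2}$ exactly. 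What remains is $\gamma\,\mme[\inf_v\{M_\eta h_1(\gamma\sqrt d\,\tilde{\beta}_1-v)+\tfrac{\rho_2}{2M_u}(\gamma\sqrt d\,\tilde{\beta}_1-v)^2+\nu(v)\}]$, an expectation of an infimum of functions that are \emph{jointly affine} in $(M_\eta,\rho_2)$, hence jointly concave; the case $M_u=0$ is immediate because both of these terms then become constant in $(M_\eta,\rho_2)$.

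\emph{Continuity and finiteness.} The cross and bilinear terms are rational and continuous on the domain (where $\rho_2>0$). For the data term I would use dominated convergence together with $0\le e_{\ell_\tau}(x;\rho)\le\ell_\tau(x)\le\max\{\tau,1-\tau\}|x|$, the joint continuity of $(x,\rho)\mapsto e_{\ell_\tau}(x;\rho)$ on $\mmr\times\mmr_{\ge0}$ (Lemma \ref{lem:envelope_extension}), and $\mme[|g_1|],\mme[|\epsilon_1|]<\infty$ (Assumption \ref{assump:data_high_dim}). For the regularizer term, $(M_\eta,M_u,\rho_2)\mapsto(\tfrac{M_\eta M_u}{\rho_2},\tfrac{M_u}{\rho_2})$ is continuous into $\mmr\times\mmr_{\ge0}$, and $(c,\rho)\mapsto\mme[e_\nu(ch_1+\gamma\sqrt d\,\tilde{\beta}_1;\rho)]$ is jointly continuous and finite by parts 3--4 of Assumption \ref{assump:reg_high_dim_assumptions}, so the composition is continuous. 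The boundary slices $\{\rho_1=0\}$ and $\{M_u=0\}$ in the convexity/concavity arguments are checked directly — the relevant envelope collapses either to $\ell_\tau$ composed with an affine map or to a constant — or, equivalently, by joint continuity. In short, the entire lemma is routine except for the concavity of the regularizer term in $(M_\eta,\rho_2)$, which is exactly why the cancellation of $\gamma\tfrac{M_\eta^2M_u}{2\rho_2}h_1^2$ against $-\tfrac{M_\eta^2M_u\gamma}{2\rho_2}$ is the heart of the argument.
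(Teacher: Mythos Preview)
Your proof is correct, and it takes a genuinely different route from the paper's. The paper's argument for convex--concavity is a one-liner: $A$ is the pointwise limit of the (random) pre-limit auxiliary objectives derived in Section~\ref{sec:auxiliary_reduc}, which are convex--concave by construction (they arise from the min--max Lagrangian after Sion's theorem and the $\|x\|_2=\min_{\rho>0}\{\|x\|_2^2/(2\rho)+\rho/2\}$ trick), and pointwise limits preserve convexity and concavity. You instead verify convexity and concavity term by term, using perspective convexity and infima of affine functions. The payoff is that your approach is self-contained and, in particular, explains structurally \emph{why} the term $-\tfrac{M_\eta^2 M_u\gamma}{2\rho_2}$ is present: as you observe, $\gamma\,\mme[e_\nu(\cdot)]$ alone is not concave in $(M_\eta,\rho_2)$, and the cancellation of $\gamma\tfrac{M_\eta^2 M_u}{2\rho_2}\mme[h_1^2]$ against $-\tfrac{M_\eta^2 M_u\gamma}{2\rho_2}$ is exactly what rescues concavity. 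The paper's route never needs to see this because it never separates the terms. One small point: you invoke ``$\nu$ is convex'' for the convexity-in-$M_u$ step, which is not stated explicitly in Assumption~\ref{assump:reg_high_dim_assumptions}; it is, however, implicit (parts 5--6 use $\nu^*$ and the envelope identity), and in any case the joint convexity of $(c,\rho)\mapsto\mme[e_\nu(ch_1+\gamma\sqrt d\,\tilde\beta_1;\rho)]$ follows directly as the pointwise limit (via part~3) of the jointly convex functions $\tfrac1d\,e_{\mathcal{R}_d(\cdot/\sqrt n)}(ch+\sqrt n\,\tilde\beta;\rho)$, so the step goes through either way. For continuity, both proofs are essentially the same: dominated convergence for the pinball term and part~4 of Assumption~\ref{assump:reg_high_dim_assumptions} for the regularizer term.
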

\begin{proof}
    The second, fourth, and fifth terms of $A$ are clearly jointly continuous. The third term of $A$ is jointly continuous by part three of Assumption \ref{assump:reg_high_dim_assumptions}. Joint continuity of the first term follows by inspecting the form of $e_{\ell_{\tau}}$ (Lemma \ref{lem:pinball_env}) and applying the dominated convergence theorem. The fact that $A$ is convex-concave follows directly from the fact that it is the pointwise limit of a sequence of convex-concave functions.
\end{proof}

\begin{lemma}\label{lem:m_rho_b_strict_convex}
    Fix any $C_{\eta} > c_{\eta} >0$ and $C_2 >0$. Under the conditions of Theorem \ref{thm:main_asym_consistency}, the function
    \[
    (\beta_0,M_u,\rho_1) \mapsto \max_{c_{\eta} \leq M_{\eta} \leq C_{\eta}, 0 < \rho_2 \leq C_2} A(\beta_0,M_u,\rho_1,M_{\eta},\rho_2),
    \]
    is jointly strictly convex on $\mmr \times \mmr_{\geq 0} \times \mmr_{>0}$. Moreover, for $\rho_1 = 0$ this function is jointly strictly convex in $(\beta_0,M_u)$.
\end{lemma}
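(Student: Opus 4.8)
The plan is to build on the convexity already established in Lemma~\ref{lem:A_convex_cont} and to reduce the strict-convexity claim to the single term of $A$ that carries genuine curvature, namely
\[
\Psi_{M_\eta}(\beta_0,M_u,\rho_1):=\mme\!\left[e_{\ell_\tau}\!\left(M_ug_1+\epsilon_1-\beta_0;\ \rho_1/M_\eta\right)\right],
\]
and then to prove that this term is itself jointly strictly convex.

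\textbf{Step 1 (reduction).} First I would observe that $F(\beta_0,M_u,\rho_1):=\max_{c_\eta\le M_\eta\le C_\eta,\ 0<\rho_2\le C_2}A(\beta_0,M_u,\rho_1,M_\eta,\rho_2)$ is a pointwise supremum of functions convex in $(\beta_0,M_u,\rho_1)$ (Lemma~\ref{lem:A_convex_cont}) and hence convex. If it failed to be strictly convex there would be a nondegenerate segment $[p_0,p_1]$ (inside $\mmr\times\mmr_{\ge0}\times\mmr_{>0}$ for the first claim, inside $\mmr\times\mmr_{\ge0}\times\{0\}$ for the second) on which $F$ is affine. Writing $p_m$ for the midpoint and choosing $(M_\eta^{(k)},\rho_2^{(k)})$ with $A(p_m;M_\eta^{(k)},\rho_2^{(k)})\to F(p_m)$, the chain $A(p_m;q)\le\tfrac12A(p_0;q)+\tfrac12A(p_1;q)\le\tfrac12F(p_0)+\tfrac12F(p_1)=F(p_m)$, valid for every $q$, forces by a sandwich $A(p_i;M_\eta^{(k)},\rho_2^{(k)})\to F(p_i)$ for $i=0,1$, and hence the midpoint defect $A(p_m;M_\eta^{(k)},\rho_2^{(k)})-\tfrac12A(p_0;M_\eta^{(k)},\rho_2^{(k)})-\tfrac12A(p_1;M_\eta^{(k)},\rho_2^{(k)})\to0$. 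Then I would split $A(\cdot;M_\eta,\rho_2)=\Psi_{M_\eta}+\tfrac12M_\eta\rho_1+H_{M_\eta,\rho_2}(M_u)$, where $H$ collects the three terms of $A$ depending only on $M_u$ and is convex in $M_u$; along the segment (on which each of $\beta_0,M_u,\rho_1$ is affine in the segment parameter) the midpoint defect of $A(\cdot;M_\eta^{(k)},\rho_2^{(k)})$ is the sum of the nonpositive defect of $\Psi_{M_\eta^{(k)}}$, the zero defect of $\tfrac12M_\eta^{(k)}\rho_1$, and the nonpositive defect of $t\mapsto H_{M_\eta^{(k)},\rho_2^{(k)}}(M_u(t))$. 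As their sum tends to $0$, each does; passing to a subsequence with $M_\eta^{(k)}\to M_\eta^\star\in[c_\eta,C_\eta]$ and using joint continuity of $(M_\eta,\beta_0,M_u,\rho_1)\mapsto\Psi_{M_\eta}(\beta_0,M_u,\rho_1)$ (which holds including at $\rho_1=0$ via $e_{\ell_\tau}(\cdot;0)=\ell_\tau$, by dominated convergence using $e_{\ell_\tau}(x;\rho)\le\max\{\tau,1-\tau\}\,|x|$), the $\Psi$-defect passes to the limit, so $\Psi_{M_\eta^\star}$ is affine on $[p_0,p_1]$. Note $M_\eta^\star\ge c_\eta>0$ keeps $\rho_1/M_\eta^\star$ controlled, and that no control of $\rho_2^{(k)}$ was ever needed.

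\textbf{Step 2 (strict convexity of $\Psi_{M_\eta^\star}$).} Next I would derive a contradiction from $\Psi_{M_\eta^\star}$ being affine on $[p_0,p_1]$. For $\rho_1>0$: since $e_{\ell_\tau}(x;\rho)=x^2/(2\rho)$ on the open ``middle region'' $\{-\rho(1-\tau)<x<\rho\tau\}$ and is affine outside it, being $C^1$ and piecewise $C^\infty$, differentiating under the integral (the nonsmooth locus of $e_{\ell_\tau}$ pulls back to a $(g_1,\epsilon_1)$-null set and the second partials have polynomial growth, integrable against the $\mathcal N(0,1)\otimes P_\epsilon$ density, $P_\epsilon$ having a bounded density by Assumption~\ref{assump:data_high_dim}) shows $\Psi_{M_\eta^\star}\in C^2$. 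Restricting to the segment gives a convex $C^2$ function of $t$ equal to its chord at $t=\tfrac12$, hence affine, hence with vanishing second derivative at every interior $t_\star$; evaluating it yields
\[
\mme\!\left[\frac{M_\eta^\star}{\rho_1(t_\star)}\Bigl(g_1\,\Delta M_u-\Delta\beta_0-u(t_\star)\tfrac{\Delta\rho_1}{\rho_1(t_\star)}\Bigr)^{\!2}\bone\bigl\{u(t_\star)\in\bigl(-\tfrac{\rho_1(t_\star)(1-\tau)}{M_\eta^\star},\tfrac{\rho_1(t_\star)\tau}{M_\eta^\star}\bigr)\bigr\}\right]=0,
\]
where $u(t)=M_u(t)g_1+\epsilon_1-\beta_0(t)$ and $(\Delta\beta_0,\Delta M_u,\Delta\rho_1)=p_1-p_0$. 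Because $P_\epsilon$ has a density positive on all of $\mmr$, the indicated event contains an open set of $(g_1,\epsilon_1)$ on which, for each fixed $g_1$, $\epsilon_1$ still ranges over a genuine interval; there $g_1\Delta M_u-\Delta\beta_0=u(t_\star)\Delta\rho_1/\rho_1(t_\star)$, so varying $\epsilon_1$ forces $\Delta\rho_1=0$ and then varying $g_1$ forces $\Delta\beta_0=\Delta M_u=0$, contradicting $p_0\ne p_1$. For $\rho_1=0$ the same template applies with $e_{\ell_\tau}(\cdot;0)=\ell_\tau$: with $\psi(c):=\mme_{\epsilon_1}[\ell_\tau(\epsilon_1+c)]$ one checks $\psi\in C^2$ with $\psi''(c)=f_\epsilon(-c)>0$, so $\Psi_{M_\eta^\star}(\beta_0,M_u,0)=\mme_{g_1}[\psi(M_ug_1-\beta_0)]$ is $C^2$ with second directional derivative $\mme_{g_1}\bigl[f_\epsilon\bigl(\beta_0(t_\star)-M_u(t_\star)g_1\bigr)(g_1\Delta M_u-\Delta\beta_0)^2\bigr]$, whose vanishing again forces $\Delta\beta_0=\Delta M_u=0$.

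\textbf{Main obstacle.} The subtle point is Step 1: turning ``$F$ affine on a segment'' into ``$\Psi_{M_\eta^\star}$ affine on that segment'' for a fixed $M_\eta^\star$, despite the fact that near-optimal values of $\rho_2$ may escape to the excluded endpoint $\rho_2=0$, at which $A$ has no continuous extension. Isolating the $\Psi$-defect---whose dependence on the dual variables is only through $M_\eta$, which lives in the compact interval $[c_\eta,C_\eta]$---and simply discarding the nonpositive $H$-defect sidesteps this entirely. Once $\Psi_{M_\eta^\star}$ is identified, Step 2 is routine, relying only on $e_{\ell_\tau}(x;\rho)$ being genuinely quadratic (hence nondegenerate in $x$) on its middle region and on $\epsilon_1$ having full support with a bounded continuous density.
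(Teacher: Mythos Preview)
Your proof is correct, and Step 2 is essentially the same computation as the paper's: both differentiate $\theta\mapsto\mme[e_{\ell_\tau}(M_u(\theta)g_1+\epsilon_1-\beta_0(\theta);\rho_1(\theta)/M_\eta)]$ twice along a segment, obtain the expression $\frac{M_\eta}{\rho_1}\mme[(g_1\Delta M_u-\Delta\beta_0-Z\,\Delta\rho_1/\rho_1)^2\bone\{Z\in\text{middle region}\}]$, and use the full support of $\epsilon_1$ to conclude positivity. The $\rho_1=0$ cases also match.

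The difference is in Step 1. The paper does not argue by contradiction: having shown $\Psi_{M_\eta}$ strictly convex for each fixed $M_\eta$, it observes that $\Psi_{M_\eta}$ does not depend on $\rho_2$, so $(\beta_0,M_u,\rho_1)\mapsto\max_{0<\rho_2\le C_2}A(\cdot;M_\eta,\rho_2)=\Psi_{M_\eta}+\sup_{\rho_2}(\text{convex remainder})$ is strictly convex directly (strictly convex plus convex). Then it invokes Lemma~\ref{lem:convex_under_opt} (Lemma 19 of \citet{Thram2018}) to pass the outer $\max_{M_\eta}$, after checking that this max is attained (concavity in $M_\eta$ gives continuity on the compact interval $[c_\eta,C_\eta]$). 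Your sandwich-and-subsequence argument is effectively an in-line proof of that lemma specialized to this setting. What you gain is self-containment and an explicit explanation of why the open $\rho_2$-endpoint is harmless (because the curvature-carrying term $\Psi$ is independent of $\rho_2$); what the paper gains is brevity, since once $\Psi$ is isolated the $\rho_2$-domain issue never arises and the cited lemma finishes in one line.
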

\begin{proof}
    We first consider the case where $\rho_1 > 0$. Fix any $M_{\eta} \in [c_{\eta}, C_{\eta}]$ and pair of distinct points $ (\beta_0,M_u,\rho_1),  (\beta'_0,M'_u,\rho'_1) \in \mmr \times \mmr_{\geq 0} \times \mmr_{>0}$. For $\theta \in [0,1]$ define the function
    \[
    w(\theta) = \mme\left[e_{\ell_{\tau}}\left( ((1-\theta)M_u + \theta M_u')g_1 + \epsilon_1 - (1-\theta)\beta_0 - \theta \beta_0' ; \frac{(1-\theta)\rho_1 + \theta \rho_1'}{M_{\eta}} \right)\right].
    \]
    For ease of notation, let 
    \begin{align*}
    & (\xi_1,\xi_2,\xi_3) = (M_u' - M_u, \beta'_0 - \beta_0,\rho'_1 - \rho_1),\ \rho_{\theta} = ((1-\theta)\rho_1 + \theta \rho_1'),\\
     \text{and } & Z_{\theta} = ((1-\theta)M_u + \theta M_u')g_1 + \epsilon_1 - (1-\theta)\beta_0 - \theta \beta_0'.
    \end{align*}
    By the dominated convergence theorem and a direct calculation using the form of $e_{\ell_{\tau}}$ (see Lemma \ref{lem:pinball_env}), we have 
    \begin{align*}
    w'(\theta) & = \mme\bigg[(g_1\xi_1-\xi_2)\tau \bone\left\{Z_{\theta} > \tau \frac{\rho_{\theta}}{M_{\eta}}\right\} + (g_1\xi_1-\xi_2)\frac{Z_{\theta}M_{\eta}}{\rho_{\theta}}  \bone\left\{-(1-\tau)\frac{\rho_{\theta}}{M_{\eta}} \leq Z_{\theta} \leq \tau \frac{\rho_{\theta}}{M_{\eta}}\right\}\\
    & \hspace{0.25cm} - (g_1\xi_1-\xi_2)(1-\tau) \bone\left\{Z_{\theta} < -(1-\tau)\frac{\rho_{\theta}}{M_{\eta}}\right\}     -\frac{\tau^2\xi_3}{2M_{\eta}} \bone\left\{Z_{\theta} > \tau \frac{\rho_{\theta}}{M_{\eta}}\right\}\\
    & \hspace{0.25cm} - \frac{Z_{\theta}^2M_{\eta}\xi_3}{2\rho_{\theta}^2} \bone\left\{-(1-\tau)\frac{\rho_{\theta}}{M_{\eta}} \leq Z_{\theta} \leq \tau \frac{\rho_{\theta}}{M_{\eta}}\right\} - \frac{(1-\tau)^2\xi_3}{2M_{\eta}} \bone\left\{Z_{\theta} < -(1-\tau)\frac{\rho_{\theta}}{M_{\eta}}\right\} \bigg],
    \end{align*}
    and 
    \begin{align*}
       & w''(\theta)\\
       & = \mme\left[ \left( (g\xi_1-\xi_2)^2\frac{M_{\eta}}{\rho_{\theta}} - 2(g_1\xi_1-\xi_2)\xi_3\frac{Z_{\theta}M_{\eta}}{\rho_{\theta}^2} + \xi_3^2\frac{Z_{\theta}^2M_{\eta}}{\rho_{\theta}^3} \right)\bone\left\{-(1-\tau)\frac{\rho_{\theta}}{M_{\eta}} \leq Z_{\theta} \leq \tau \frac{\rho_{\theta}}{M_{\eta}}\right\}   \right]\\
       & = \frac{M_{\eta}}{\rho_{\theta}} \mme\left[ \left( g_1\xi_1-\xi_2 - \xi_3\frac{Z_{\theta}}{\rho_{\theta}} \right)^2\bone\left\{-(1-\tau)\frac{\rho_{\theta}}{M_{\eta}} \leq Z_{\theta} \leq \tau \frac{\rho_{\theta}}{M_{\eta}}\right\}   \right].
    \end{align*}
    Recall that $\epsilon_1$ has positive support on $\mmr$. Thus, $Z_{\theta}$ has positive support on $\mmr$ and $g_1\xi_1-\xi_2 - \xi_3\frac{Z_{\theta}}{\rho_{\theta}}$ has positive support on $\mmr$ if $\xi_3 \neq 0$. Moreover, if $\xi_3 = 0$, then $(\xi_1,\xi_2) \neq (0,0)$ and we clearly have that  $\mmp(g_1\xi_1-\xi_2 = 0) = 0$ . In either case, we conclude that $w''(\theta) > 0$ and thus that
    \[
    (\beta_0, M_u, \rho_1) \mapsto \mme\left[e_{\ell_{\tau}}\left( M_u g_1 + \epsilon_1 - \beta_0; \frac{\rho_1}{M_{\eta}} \right)\right],
    \]
    is strictly convex. Since this term does not involve $\rho_2$ and the remainder of the objective is convex (it is the pointwise limit of a convex function), we conclude that the function 
    \[
 (\beta_0,M_u,\rho_1) \mapsto \max_{0 < \rho_2 \leq C_2} A(\beta_0,M_u,\rho_1,M_{\eta},\rho_2),
   \]
   is strictly convex. Finally, since $A$ is convex-concave we have that for any $(\beta_0,M_u,\rho_1) \in \mmr \times \mmr_{\geq 0} \times \mmr_{>0}$, $M_{\eta} \mapsto \max_{0 < \rho_2 \leq C_2} A(\beta_0,M_u,\rho_1,M_{\eta},\rho_2)$ is concave on $\mmr_{>0}$ and thus continuous on $[c_{\eta}, C_{\eta}]$. The desired result then follows by Lemma \ref{lem:convex_under_opt}.

    Now, consider the case $\rho_1 = 0$. Once again, fix $M_{\eta} \in [c_{\eta}, C_{\eta}]$ and a pair of distinct points $(M_u,\beta_0), (M'_u,\beta'_0) \in \mmr_{\geq 0} \times \mmr$. For $\theta \in [0,1]$ consider the function
    \[
    \tilde{w}(\theta) = \mme[\ell_{\tau}(((1-\theta) M_u + \theta M_u') g_1 + \epsilon_1 - (1-\theta)\beta_0 - \theta\beta_0') ].
    \]
    Let $Z_{\theta} := ((1-\theta) M_u + \theta M_u') g_1 + \epsilon_1 - (1-\theta)\beta_0 - \theta\beta_0'$. By a direct calculation,  
    \begin{align*}
         \tilde{w}'(\theta) & = \mme[\tau((M'_u - M_u) g_1 - (\beta_0' - \beta_0)) \bone\{Z_{\theta} > 0\}\\
         & \hspace{4cm} - (1-\tau)((M'_u - M_u) g_1 - (\beta_0' - \beta_0)) \bone\{Z_{\theta} \leq 0\}  ]\\
        & = \mme\left[((M_u - M_u') g_1 - (\beta_0 - \beta_0' )) \bone\{Z_{\theta} \leq 0\} + \tau ((M'_u - M_u) g_1 - (\beta_0' - \beta_0))  \right],
    \end{align*}
    and 
   \begin{align*}
    & \tilde{w}''(\theta)\\
    & = \frac{d}{d\theta} \mme\left[ \mme\left[((M_u - M'_u) g_1 - (\beta_0 - \beta'_0)) \bone\left\{ \epsilon_1 \leq (1-\theta) (\beta_0 - M_u g_1) + \theta (\beta_0' - M_u' g_1)  \right\} \right]  \mid g_1 \right]\\
    & = \mme\left[ ((M_u - M'_u) g_1 - (\beta_0 - \beta'_0))^2 p_{\epsilon}(  (1-\theta) (\beta_0 - M_u g_1) + \theta (\beta_0' - M_u' g_1) ) \right]\\
    & > 0,
    \end{align*}
    where $p_{\epsilon}$ denotes the density of $\epsilon_1$. Since this last term is positive we find that $\tilde{w}$ is strictly convex. The desired result then follows by arguing as above. 
\end{proof}

\begin{lemma}\label{lem:unique_pos_solution}
   Suppose the assumptions of Theorem \ref{thm:main_asym_consistency} hold. Fix any constants $C_{\beta_0},C_u,C_1,C_2, C_{\eta}, c_{\eta} > 0$ with $C_{\eta} > c_{\eta}$ and $c_{\eta} < \sqrt{(1/2)\min\{\tau^2,(1-\tau)^2\}}$. Then, the asymptotic optimization program (\ref{eq:asymp_program}) admits a unique solution for $(\beta_0,M_u,\rho_1)$. Moreover, letting $(\beta^*_0,M^*_u,\rho^*_1)$ denote this solution we have that $M_u^* > 0 \implies \rho_1^* > 0$. 
\end{lemma}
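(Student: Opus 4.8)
The plan is to read uniqueness off directly from the strict convexity of Lemmas~\ref{lem:A_convex_cont} and \ref{lem:m_rho_b_strict_convex}. Set $F(\beta_0,M_u,\rho_1) = \max_{c_\eta \le M_\eta \le C_\eta,\ 0 < \rho_2 \le C_2} A(\beta_0,M_u,\rho_1,M_\eta,\rho_2)$, so that (\ref{eq:asymp_program}) is the minimization of $F$ over the compact box $K = [-C_{\beta_0},C_{\beta_0}] \times [0,C_u] \times [0,C_1]$. Since $A$ is jointly continuous and convex in $(\beta_0,M_u,\rho_1)$, $F$ is lower semicontinuous and convex on $K$, so a minimizer exists. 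If $x \neq y$ were two minimizers with common value $V$, I would split into cases: if both lie in $\mmr\times\mmr_{\geq 0}\times\mmr_{>0}$, strict convexity there contradicts minimality at the midpoint; if both have $\rho_1 = 0$, they differ in $(\beta_0,M_u)$ and the second assertion of Lemma~\ref{lem:m_rho_b_strict_convex} gives the same contradiction; if only $y$ has $\rho_1 > 0$, the midpoint $m$ still has $\rho_1 > 0$ and satisfies $F(m) = V$ by convexity and minimality, whence $F(\tfrac12 m + \tfrac12 y) < V$ by strict convexity on $\mmr\times\mmr_{\geq 0}\times\mmr_{>0}$ --- again a contradiction.

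\textbf{The implication $M_u^* > 0 \Rightarrow \rho_1^* > 0$.} Here I would argue by contradiction: suppose the unique minimizer has $M_u^* > 0$ and $\rho_1^* = 0$, and exhibit a descent direction in $\rho_1$. Write $Z^* = M_u^* g_1 + \epsilon_1 - \beta_0^*$; since $\epsilon_1$ has a positive density on $\mmr$, $K := \tau^2\mmp(Z^* > 0) + (1-\tau)^2\mmp(Z^* < 0) \geq \min\{\tau^2,(1-\tau)^2\} > 2c_\eta^2$, so in particular $c_\eta - K/c_\eta < 0$. The first key step is to show that at $\rho_1 = 0$ the inner maximum forces $M_\eta = c_\eta$. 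Using the Moreau envelope identity (Lemma~\ref{lem:envelope_identity}) to cancel the $1/\rho_2$ singularity, one rewrites $A(\beta_0,M_u,0,M_\eta,\rho_2) = \mme[\ell_\tau(Z)] + G(M_\eta,\rho_2)$, where $G$ depends on $M_\eta$ only through $-\gamma\,\mme[e_{\nu^*}(M_\eta h_1 + \tfrac{\rho_2}{M_u}\gamma\sqrt{d}\,\tilde{\beta}_1;\tfrac{\rho_2}{M_u})]$. Differentiating in $M_\eta$ and applying the Stein identity and the uniform positivity of $\mme[\partial_x^2 e_{\nu^*}]$ from parts 5 and 6 of Assumption~\ref{assump:reg_high_dim_assumptions} (legitimate because $M_u^* > 0$ bounds $\rho_2/M_u^*$ above by $C_2/M_u^*$, so part 6 applies) shows $\partial_{M_\eta} G < 0$ on $[c_\eta,C_\eta]$, so the maximizing $M_\eta$ at $\rho_1 = 0$ is exactly $c_\eta$.

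The second key step is a perturbation in $\rho_1$. Fixing $(\beta_0,M_u) = (\beta_0^*,M_u^*)$ and setting $\phi(\rho_1) = \max_{M_\eta,\rho_2} A(\beta_0^*,M_u^*,\rho_1,M_\eta,\rho_2)$, a convex function on $[0,C_1]$, I would use the explicit form of the pinball Moreau envelope (Lemma~\ref{lem:pinball_env}) to expand, uniformly for $M_\eta$ near $c_\eta$,
\[
\mme\!\left[e_{\ell_\tau}\!\left(Z^*;\tfrac{\rho_1}{M_\eta}\right) - \ell_\tau(Z^*)\right] + \tfrac{M_\eta\rho_1}{2} = \tfrac{\rho_1}{2}\!\left(M_\eta - \tfrac{K}{M_\eta}\right) + O(\rho_1^2).
\]
Since the maximum defining $\phi(0)$ is attained only at $M_\eta = c_\eta$ and $A$ is jointly continuous, the maximizing $M_\eta$ for the $\rho_1$-problem converges to $c_\eta$ as $\rho_1 \to 0^+$; plugging this into the display (using $A(\cdot,0,\cdot,\cdot) \leq \phi(0)$ to drop the other maximizer's contribution) and using $c_\eta - K/c_\eta < 0$ gives $\phi(\rho_1) < \phi(0)$ for all small $\rho_1 > 0$. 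Since $(\beta_0^*,M_u^*,\rho_1)$ stays in $K$, this contradicts minimality, proving $\rho_1^* > 0$.

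\textbf{Main obstacle.} Uniqueness is routine. The substance is the second claim. Pinning $M_\eta$ to $c_\eta$ at $\rho_1 = 0$ is where both $M_u^* > 0$ and the regularizer hypotheses (Stein identity, uniform positivity of $\mme[\partial_x^2 e_{\nu^*}]$) are genuinely needed --- if $M_u^* = 0$ the objective is flat in $M_\eta$ and the argument must fail, which is precisely why the implication can go the other way. The perturbation step then needs both the asymptotic expansion of the pinball envelope and an upper-semicontinuity argument ensuring the perturbed maximizer $M_\eta(\rho_1)$ does not drift from $c_\eta$; only with $M_\eta(\rho_1)$ near $c_\eta$ does the hypothesis $c_\eta^2 < \tfrac12\min\{\tau^2,(1-\tau)^2\}$ force the $O(\rho_1)$ term to be strictly negative.
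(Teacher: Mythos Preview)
Your proposal is correct and uses the same ingredients as the paper, but organizes the second part differently.

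For uniqueness, your case analysis handling the boundary $\rho_1=0$ is actually more careful than the paper's one-line appeal to Lemma~\ref{lem:m_rho_b_strict_convex}: the lemma only asserts strict convexity on $\mmr\times\mmr_{\geq 0}\times\mmr_{>0}$ (and separately in $(\beta_0,M_u)$ at $\rho_1=0$), so a case split along the lines you sketch is in fact needed.

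For the implication $M_u^*>0\Rightarrow\rho_1^*>0$, both arguments rest on the same three facts: the Moreau identity removing the $1/\rho_2$ singularity, the Stein/uniform-positivity hypotheses (parts~5--6 of Assumption~\ref{assump:reg_high_dim_assumptions}) making the $-\gamma\,\mme[e_{\nu^*}]$ term strictly decreasing in $M_\eta$, and the explicit pinball envelope together with the smallness of $c_\eta$. The difference is structural. You first establish that the maximizer at $\rho_1=0$ sits at $M_\eta=c_\eta$, then invoke argmax stability to transfer this to small $\rho_1$ and evaluate the $O(\rho_1)$ expansion at the (nearby) maximizer. The paper instead avoids locating the maximizer altogether: it lower-bounds the $\rho_1=0$ value by evaluating at $M_\eta=c_\eta$, and upper-bounds the $\rho_1>0$ value \emph{uniformly} in $M_\eta$ by combining the pinball bound with a mean-value-theorem estimate $\mme[e_{\nu^*}(M_\eta h_1+\cdots)]\geq \mme[e_{\nu^*}(c_\eta h_1+\cdots)]+c(M_\eta-c_\eta)$, so that the comparison reduces to checking a single scalar expression $w(M_\eta,\rho_1)<0$ (split into $M_\eta$ near and far from $c_\eta$). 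Your argmax-continuity step ultimately requires the same near/far split once you unpack it, and you should note that the half-open $\rho_2$-domain is harmless because $\rho_2$ does not interact with $\rho_1$ (so $\sup_{\rho_2}$ factors out). What you gain is a sharper leading constant ($K=\tau^2\mmp(Z^*>0)+(1-\tau)^2\mmp(Z^*<0)$ in place of $\tfrac12\min\{\tau^2,(1-\tau)^2\}$); what the paper gains is a shorter, more self-contained estimate that sidesteps any discussion of argmax convergence.
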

\begin{proof}
    Since the optimization domain for $(\beta_0,M_u,\rho_1)$ is compact and $A$ is jointly continuous and convex-concave (Lemma \ref{lem:A_convex_cont}) the optimization program (\ref{eq:asymp_program}) must obtain its minimum in $(\beta_0,M_u,\rho_1)$ (cf. Theorem 1.9 and Proposition 1.26 of \citet{Rockefeller1997}). The fact that this minimizer is unique then follows directly from Lemma \ref{lem:m_rho_b_strict_convex}.
    
    Now, let $(\beta^*_0,M^*_u,\rho^*_1)$ denote this unique solution and suppose that $M^*_u > 0$. Recall the identity (Lemma \ref{lem:envelope_identity} below),
    \[
    e_f(x;\rho) + e_{f^*}(x/\rho; 1/\rho) = \frac{x^2}{2\rho}.
    \]
    Applying this to our optimization problem, we have that for any $\rho_2 > 0$ and $0 \leq \rho_1 \leq C_1$,
    \begin{align*}
    A(\beta^*_0,M^*_u,\rho_1,M_{\eta},\rho_2) & = \mme\left[e_{\ell_{\tau}}\left(M^*_ug_1 + \epsilon_1 - \beta^*_0 ; \frac{\rho_1}{M_{\eta}} \right)\right] - \frac{M_{\eta}^2M^*_u\gamma}{2\rho_2}\\
    &  \ \ \  -\gamma \mme\left[ e_{\nu^*}\left( M_{\eta} h_1 + \gamma\frac{\rho_2}{M^*_u} \sqrt{d}\tilde{\beta}_1 ; \frac{\rho_2}{M^*_u} \right)\right]\\
    & + \gamma \frac{\rho_2}{2M^*_u} \mme\left[\left(\frac{M_{\eta}M^*_u}{\rho_2} h_1 + \sqrt{d}\tilde{\beta}_1 \right)^2\right] \\
    & \ \ \ + \frac{M_{\eta}\rho_1}{2} - \frac{M^*_u\rho_2}{2}\\
    & = \mme\left[e_{\ell_{\tau}}\left(M^*_ug_1 + \epsilon_1 - \beta^*_0 ; \frac{\rho_1}{M_{\eta}} \right)\right]\\
    & \ \ \ -\gamma \mme\left[ e_{\nu^*}\left( M_{\eta} h_1 + \gamma\frac{\rho_2}{M^*_u} \sqrt{d}\tilde{\beta}_1 ; \frac{\rho_2}{M^*_u} \right)\right]\\
    & \ \ \ + \gamma \frac{\rho_2}{2M^*_u}\mme[(\sqrt{d}\tilde{\beta}_1)^2]  + \frac{M_{\eta}\rho_1}{2} - \frac{M^*_u\rho_2}{2}.
    \end{align*}
    So, in particular, 
    \begin{equation}\label{eq:lower_A_rho_zero_bound}
    \begin{split}
        & \max_{(0 < \rho_2 \leq C_2, c_{\eta} \leq M_{\eta} \leq C_{\eta})}  A(\beta_0^*,M_u^*,0,M_{\eta},\rho_2)   \geq  \max_{(0 < \rho_2 \leq C_2)} A(\beta_0^*,M_u^*,0,c_{\eta},\rho_2)\\
         & = \mme\left[\ell_{\tau}\left(M^*_ug_1 + \epsilon_1 - \beta^*_0 \right)\right]  - \gamma \mme\left[ e_{\nu^*}\left(c_{\eta} h_1 +  \gamma\frac{\rho_2}{M_u} \sqrt{d}\tilde{\beta}_1 ; \frac{\rho_2}{M^*_u} \right)\right]\\
         & \hspace{1cm} + \gamma \frac{\rho_2}{2M^*_u}\mme[(\sqrt{d}\tilde{\beta}_1)^2]  - \frac{M^*_u\rho_2}{2}.
    \end{split}
    \end{equation}
    We will now compare this lower bound against a matching upper bound when $\rho_1 >0$ is small and positive. Fix $\rho_1, \rho_2 > 0$. By directly examining the definition of $e_{\ell_{\tau}}$, (Lemma \ref{lem:pinball_env}) we have the pointwise inequality 
    \begin{align*}
    & e_{\ell_{\tau}}\left(M^*_ug_1 + \epsilon_1 - \beta^*_0 ; \frac{\rho_1}{M_{\eta}} \right)  \\
    & \leq \ell_{\tau}(M^*_ug_1 + \epsilon_1 - \beta^*_0 ) - \min\{\tau^2,(1-\tau)^2\} \frac{\rho_1}{2M_{\eta}} \bone\left\{M^*_ug_1 + \epsilon_1 - \beta^*_0  \notin \left[ -\frac{\rho_1}{M_{\eta}}(1-\tau), \frac{\rho_1}{M_{\eta}}\tau \right] \right\},
    \end{align*}
    and thus for any $M_{\eta} \in [c_{\eta},C_{\eta}]$,
    \begin{align*}
 & \mme\left[e_{\ell_{\tau}}\left(M^*_ug_1 + \epsilon_1 - \beta^*_0 ; \frac{\rho_1}{M_{\eta}} \right)\right]\\
 & \leq \mme\left[\ell_{\tau}\left(M^*_ug_1 + \epsilon_1 - \beta^*_0 ; \frac{\rho_1}{M_{\eta}} \right)\right] \\
& \hspace{1cm} -  \min\{\tau^2,(1-\tau)^2\} \frac{\rho_1}{2M_{\eta}} \left( 1- \mmp\left(M^*_ug_1 + \epsilon_1 - \beta^*_0  \in \left[ -\frac{\rho_1}{M_{\eta}}(1-\tau), \frac{\rho_1}{M_{\eta}}\tau \right] \right) \right)\\
&   \leq \mme\left[\ell_{\tau}\left(M^*_ug_1 + \epsilon_1 - \beta^*_0 ; \frac{\rho_1}{M_{\eta}} \right)\right] \\
& \hspace{1cm} -  \min\{\tau^2,(1-\tau)^2\} \frac{\rho_1}{2M_{\eta}} \left( 1- \mmp\left(M^*_ug_1 + \epsilon_1 - \beta^*_0  \in \left[ -\frac{\rho_1}{c_{\eta}}(1-\tau), \frac{\rho_1}{c_{\eta}}\tau \right] \right) \right).
    \end{align*}
Now, let $\rho_1$ be sufficiently small such that $\mmp\left(M^*_ug_1 + \epsilon_1 - \beta^*_0  \in \left[ -\frac{\rho_1}{c_{\eta}}(1-\tau), \frac{\rho_1}{c_{\eta}}\tau \right] \right) \leq 1/2$. Then, the above implies that
\[
 \mme\left[e_{\ell_{\tau}}\left(M^*_ug_1 + \epsilon_1 - \beta^*_0 ; \frac{\rho_1}{M_{\eta}} \right)\right]   \leq \mme\left[\ell_{\tau}\left(M^*_ug_1 + \epsilon_1 - \beta^*_0 ; \frac{\rho_1}{M_{\eta}} \right)\right] - \min\{\tau^2,(1-\tau)^2\} \frac{\rho_1}{4M_{\eta}}.
\]
On the other hand, by part five of Assumption \ref{assump:reg_high_dim_assumptions} we also have that 
\begin{align*}
\frac{d}{dM_{\eta}} \mme\left[ e_{\nu^*}\left( M_{\eta} h_1 + \gamma\frac{\rho_2}{M^*_u} \sqrt{d}\tilde{\beta}_1 ; \frac{\rho_2}{M^*_u} \right)\right] & = \mme\left[ h_1 \partial_x e_{\nu^*}\left( M_{\eta} h_1 + \gamma\frac{\rho_2}{M^*_u} \sqrt{d}\tilde{\beta}_1 ; \frac{\rho_2}{M^*_u} \right)\right]\\
& = M_{\eta} \mme\left[ \partial_x^2 e_{\nu^*}\left( M_{\eta} h_1 + \gamma\frac{\rho_2}{M^*_u} \sqrt{d}\tilde{\beta}_1 ; \frac{\rho_2}{M^*_u} \right)\right].
\end{align*}
Let $c = \min_{c_{\eta} \leq M_{\eta} \leq C_{\eta}, 0 < \rho_2 \leq C_2} M_{\eta}  \mme\left[ \partial_x^2 e_{\nu^*}\left( M_{\eta} h_1 + \gamma\frac{\rho_2}{M^*_u} \sqrt{d}\tilde{\beta}_1 ; \frac{\rho_2}{M^*_u} \right)\right]$ and note that by part 6 of Assumption \ref{assump:reg_high_dim_assumptions} we have $c>0.$ Then, the mean value theorem gives
\[
\mme\left[ e_{\nu^*}\left( M_{\eta} h_1 + \gamma\frac{\rho_2}{M^*_u} \sqrt{d}\tilde{\beta}_1 ; \frac{\rho_2}{M^*_u} \right)\right] \geq \mme\left[ e_{\nu^*}\left(c_{\eta}h_1 + \gamma\frac{\rho_2}{M^*_u} \sqrt{d}\tilde{\beta}_1 ; \frac{\rho_2}{M^*_u} \right)\right] + c (M_{\eta} - c_{\eta}).
\]
Putting this all together, we find that for $\rho_1$ sufficiently close to $0$,
\begin{align*}
      & \max_{(0 < \rho_2 \leq C_2, c_{\eta} \leq M_{\eta} \leq C_{\eta})}  A(\beta_0^*,M_u^*,\rho_1,M_{\eta},\rho_2)\\
      & \leq \max_{(0 < \rho_2 \leq C_2, c_{\eta} \leq M_{\eta} \leq C_{\eta})} \mme\left[\ell_{\tau}\left(M^*_ug_1 + \epsilon_1 - \beta^*_0 \right)\right]  - \gamma \mme\left[ e_{\nu^*}\left(c_{\eta}h_1 +   \gamma\frac{\rho_2}{M^*_u} \sqrt{d}\tilde{\beta}_1 ; \frac{\rho_2}{M^*_u} \right)\right]\\
     & \ \ \  + \gamma \frac{\rho_2}{2M_u^*} \mme[(\sqrt{d}\tilde{\beta}_1)^2] - \frac{M_u^*\rho_2}{2} + \frac{M_{\eta}\rho_1}{2} - \min\{\tau^2,(1-\tau)^2\} \frac{\rho_1}{4M_{\eta}} - c (M_{\eta}-c_{\eta}).
\end{align*}
We claim that for $\rho_1$ sufficiently small the term 
\[
w(M_{\eta}, \rho_1) = \frac{M_{\eta}\rho_1}{2} - \min\{\tau^2,(1-\tau)^2\} \frac{\rho_1}{4M_{\eta}} - c (M_{\eta}-c_{\eta}),
\]
is always negative. Indeed, by our choice of $c_{\eta}$ we have that $\frac{c_{\eta}}{2} - \min\{\tau^2,(1-\tau)^2\} \frac{1}{4c_{\eta}} < 0$. So, we may find $\delta > 0$ such that for all $c_{\eta} \leq M_{\eta} \leq c_{\eta} + \delta$, $\frac{M_{\eta}}{2} - \min\{\tau^2,(1-\tau)^2\} \frac{1}{4M_{\eta}} < 0$, and thus also $w(M_{\eta}, \rho_1) < 0$ for all $\rho_1 > 0$. On the other hand, for $c_{\eta} + \delta \leq M_{\eta} \leq C_{\eta}$ we have
\[
w(M_{\eta}, \rho_1) \leq \rho_1\left( \frac{C_{\eta}}{2} - \min\{\tau^2,(1-\tau)^2\} \frac{1}{4C_{\eta}} \right) - c\delta,
\]
which is negative for $\rho_1$ sufficiently small. This proves the desired claim and thus shows that for all $\rho_1$ sufficiently small,
\begin{align*}
      & \max_{(0 < \rho_2 \leq C_2, c_{\eta} \leq M_{\eta} \leq C_{\eta})}  A(\beta_0^*,M_u^*,\rho_1,M_{\eta},\rho_2) <   \max_{(0 < \rho_2 \leq C_2, c_{\eta} \leq M_{\eta} \leq C_{\eta})} \mme\left[\ell_{\tau}\left(M^*_ug_1 + \epsilon_1 - \beta^*_0 \right)\right]\\
      & \hspace{3cm} - \gamma \mme\left[ e_{\nu^*}\left(c_{\eta}h_1 +   \gamma\frac{\rho_2}{M^*_u} \sqrt{d}\tilde{\beta}_1 ; \frac{\rho_2}{M^*_u} \right)\right] + \gamma \frac{\rho_2}{2M_u^*} \mme[(\sqrt{d}\tilde{\beta}_1)^2] - \frac{M_u^*\rho_2}{2}.
\end{align*}
Comparing the above to our bound in (\ref{eq:lower_A_rho_zero_bound}) for the case $\rho_1 = 0$ we find that 
\[
\max_{(0 < \rho_2 \leq C_2, c_{\eta} \leq M_{\eta} \leq C_{\eta})}  A(\beta_0^*,M_u^*,\rho_1,M_{\eta},\rho_2) < \max_{(0 < \rho_2 \leq C_2, c_{\eta} \leq M_{\eta} \leq C_{\eta})}  A(\beta_0^*,M_u^*,0,M_{\eta},\rho_2).
\]
Thus, $\rho^*_1 \neq 0$, as desired. \\

\end{proof}

\begin{lemma}\label{lem:m_eta_unique}
    Suppose the assumptions of Theorem \ref{thm:main_asym_consistency} hold. Fix any constants $C_{\beta_0},C_u,C_1,C_2, C_{\eta}, c_{\eta} > 0$ with $C_{\eta} > c_{\eta}$ and $c_{\eta} < \sqrt{(1/2)\min\{\tau^2,(1-\tau)^2\}}$. Let $(\beta_0^*,M_u^*,\rho_1^*)$ denote the unique solution to the asymptotic program (\ref{eq:asymp_program}) defined in Lemma \ref{lem:unique_pos_solution} and suppose that $M_u^* >0$. Then, the asymptotic program (\ref{eq:asymp_program}) obtains a unique solution for $M_{\eta}$.
\end{lemma}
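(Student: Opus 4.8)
The plan is to deduce the statement from the strict concavity of a one–dimensional concave maximization. By Lemma~\ref{lem:unique_pos_solution} the outer minimizer $(\beta_0^*,M_u^*,\rho_1^*)$ is unique and, since $M_u^*>0$, we also have $\rho_1^*>0$. Because (\ref{eq:asymp_program}) is a convex--concave min--max program (Lemma~\ref{lem:A_convex_cont}) with a compact domain for $M_\eta$, in any of its saddle points the $M_\eta$-component — once $(\beta_0,M_u,\rho_1)$ has been fixed at $(\beta_0^*,M_u^*,\rho_1^*)$ — must maximize
\[
F(M_\eta):=\sup_{0<\rho_2\le C_2}A(\beta_0^*,M_u^*,\rho_1^*,M_\eta,\rho_2)
\]
over $[c_\eta,C_\eta]$. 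Hence it suffices to show that $F$ is finite and \emph{strictly} concave on $[c_\eta,C_\eta]$, since then it has at most one maximizer and, a solution of (\ref{eq:asymp_program}) existing, the optimal $M_\eta$ is this unique point. To analyse $F$ I would use the representation of $A$ obtained in the proof of Lemma~\ref{lem:unique_pos_solution} via the Moreau identity $e_f(x;\rho)+e_{f^*}(x/\rho;1/\rho)=x^2/(2\rho)$ (Lemma~\ref{lem:envelope_identity}): for $M_u^*>0$,
\[
A(\beta_0^*,M_u^*,\rho_1^*,M_\eta,\rho_2)=w(M_\eta)-\gamma\,\mme\!\left[e_{\nu^*}\!\left(M_\eta h_1+\gamma\frac{\rho_2}{M_u^*}\sqrt{d}\tilde{\beta}_1;\frac{\rho_2}{M_u^*}\right)\right]+\gamma\frac{\rho_2}{2M_u^*}\mme[(\sqrt{d}\tilde{\beta}_1)^2]+\frac{M_\eta\rho_1^*}{2}-\frac{M_u^*\rho_2}{2},
\]
where $w(M_\eta):=\mme\!\left[e_{\ell_\tau}\!\left(M_u^*g_1+\epsilon_1-\beta_0^*;\rho_1^*/M_\eta\right)\right]$ depends on $M_\eta$ only.

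I would then treat the two $M_\eta$-dependent pieces separately. First, $w$ is strictly concave on $[c_\eta,C_\eta]$: setting $Z:=M_u^*g_1+\epsilon_1-\beta_0^*$, which has a positive density on $\mmr$ by Assumption~\ref{assump:data_high_dim}, and differentiating twice under the expectation using the explicit three-piece form of $e_{\ell_\tau}(\cdot;\rho)$ from Lemma~\ref{lem:pinball_env} — exactly as in the proof of Lemma~\ref{lem:m_rho_b_strict_convex}, the contributions from the $M_\eta$-dependent breakpoints $\pm\rho_1^*/M_\eta$ cancel because $e_{\ell_\tau}(\cdot;\rho)$ is $C^1$ across its pieces — one finds
\[
w''(M_\eta)=-\frac{\rho_1^*}{M_\eta^{3}}\left(\tau^2\,\mmp\!\left(Z>\frac{\tau\rho_1^*}{M_\eta}\right)+(1-\tau)^2\,\mmp\!\left(Z<-\frac{(1-\tau)\rho_1^*}{M_\eta}\right)\right)<0 .
\]
Second, the map $(M_\eta,\rho_2)\mapsto\big(M_\eta h_1+\gamma\frac{\rho_2}{M_u^*}\sqrt{d}\tilde{\beta}_1,\ \frac{\rho_2}{M_u^*}\big)$ is affine, and $e_{\nu^*}(x;\rho)$ is jointly convex in $(x,\rho)$ on $\mmr\times\mmr_{>0}$ (it is the partial infimum over $v$ of the jointly convex function $(x,v,\rho)\mapsto (x-v)^2/(2\rho)+\nu^*(v)$), so $-\gamma\,\mme[e_{\nu^*}(\cdots)]$ together with the remaining affine terms is jointly concave in $(M_\eta,\rho_2)$; its partial supremum over $\rho_2\in(0,C_2]$ is therefore concave in $M_\eta$ and finite (boundedness above being clear since $e_{\nu^*}\ge0$ and $w\le\mme[\ell_\tau(Z)]<\infty$). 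Adding the strictly concave $w$ shows that $F$ is finite and strictly concave on $[c_\eta,C_\eta]$, which, combined with the first paragraph, yields the uniqueness of the optimal $M_\eta$.

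The main obstacle is the second-derivative computation for $w$: one must differentiate twice through the expectation while carrying along the $M_\eta$-dependent thresholds of the pinball Moreau envelope and verify that the boundary terms generated by these moving thresholds vanish — they do, because $e_{\ell_\tau}(\cdot;\rho)$ and its first derivative match across the three pieces, so the integrand of $w'$ has no jump. This is the same mechanism already used for Lemma~\ref{lem:m_rho_b_strict_convex}; the remainder is bookkeeping — choosing the conjugate representation of $A$ so that the $(M_\eta,\rho_2)$-dependence outside $w$ enters through an affine map, and then invoking that a partial supremum of a jointly concave function is concave.
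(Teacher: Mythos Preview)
Your proposal is correct and follows essentially the same route as the paper: fix $(\beta_0^*,M_u^*,\rho_1^*)$, compute $w''(M_\eta)<0$ from the explicit pinball envelope to get strict (indeed strong) concavity of $w$, and combine this with joint concavity of the remaining $(M_\eta,\rho_2)$-terms so that the partial sup over $\rho_2$ is strictly concave in $M_\eta$. The only cosmetic difference is that the paper, rather than directly invoking ``partial supremum of a jointly concave function is concave'' to write $F=w+G$, runs a $\delta$-approximation argument (picking near-optimal $\rho_2^1,\rho_2^2$) to handle the possibly non-attained sup over $\rho_2\in(0,C_2]$; your formulation is cleaner and equally valid.
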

\begin{proof}
Since $A$ is jointly convex-concave (Lemma \ref{lem:A_convex_cont}) we know that the function 
\begin{equation}\label{eq:m_eta_opt}
M_{\eta} \mapsto \max_{(0 < \rho_2 \leq C_2)} \min_{(|\beta_0| \leq C_{\beta_0}, 0 \leq M_u \leq C_u, 0 \leq \rho_1 \leq C_1)} A(\beta_0,M_u,\rho_1,M_{\eta},\rho_2),
\end{equation}
is concave on $\mmr_{>0}$ and thus continuous on $[c_{\eta}, C_{\eta}]$. Thus, this function obtains its maximum. 

It remains to show that the maximizer is unique. For ease of notation, let $Z = M^*_ug_1 + \epsilon_1 - \beta^*_0$ and define the function
    \[
    w(M_{\eta}) = \mme\left[ e_{\ell_{\tau}}\left( Z ; \frac{\rho^*_1}{M_{\eta}} \right) \right].
    \]
    Recall that by Lemma \ref{lem:unique_pos_solution} we must have that $\rho_1^* > 0$. We claim that $w$ is strongly concave on $[c_{\eta}, C_{\eta}]$. To see this, note that by a direct calculation using the form of $e_{\ell_{\tau}}$ (see Lemma \ref{lem:pinball_env}), we have 
    \begin{align*}
    w'(M_{\eta}) & = \mme\bigg[ \frac{\tau^2\rho^*_1}{2M_{\eta}^2} \bone\left\{Z > \tau \frac{\rho^*_1}{M_{\eta}}\right\} + \frac{Z^2}{2\rho^*_1}  \bone\left\{-(1-\tau)\frac{\rho^*_{1}}{M_{\eta}} \leq Z \leq \tau \frac{\rho^*_1}{M_{\eta}}\right\}\\
    & \hspace{1cm} + \frac{(1-\tau)^2\rho^*_1}{2M_{\eta}^2} \bone\left\{Z < -(1-\tau)\frac{\rho^*_{1}}{M_{\eta}}\right\} \bigg],
    \end{align*}
    and
    \begin{align*}
    w''(M_{\eta}) & = \mme\bigg[ -\frac{\tau^2\rho^*_1}{M_{\eta}^3} \bone\left\{Z > \tau \frac{\rho^*_{1}}{M_{\eta}}\right\} - \frac{(1-\tau)^2\rho^*_1}{M_{\eta}^3} \bone\left\{Z < -(1-\tau)\frac{\rho^*_{1}}{M_{\eta}}\right\} \bigg].
    \end{align*}
    So, in particular,
    \[
    \sup_{c_{\eta} \leq M_{\eta} \leq C_{\eta} } w''(M_{\eta}) \leq   \mme\bigg[ -\frac{\tau^2\rho^*_1}{C_{\eta}^3} \bone\left\{Z > \tau \frac{\rho^*_{1}}{c_{\eta}}\right\} - \frac{(1-\tau)^2\rho^*_1}{C_{\eta}^3} \bone\left\{Z < -(1-\tau)\frac{\rho^*_{1}}{c_{\eta}}\right\} \bigg] < 0,
    \]
    where the get the last inequality we have applied the fact that $\epsilon_1$ has support on all of $\mmr$ (and thus that $Z$ has support on all of $\mmr$). 

    Now, assume by contradiction that there exist distinct maximizers $M_{\eta}^1$ and $M_{\eta}^2$ for (\ref{eq:m_eta_opt}) on the domain $[c_{\eta},C_{\eta}]$. Since this is a convex-concave problem, we must that that $M_{\eta}^1$ and $M_{\eta}^2$ are maximizers of the function
    \[
    M_{\eta} \mapsto  \max_{0 < \rho_2 \leq C_2} A(\beta_0^*,M_u^*,\rho_1^*,M_{\eta},\rho_2),
    \]
    on $[c_{\eta},C_{\eta}]$. Additionally, note that  $\max_{{c_{\eta} \leq M_{\eta} \leq C_{\eta} }} \max_{0 < \rho_2 \leq C_2} A(\beta_0^*,M_u^*,\rho_1^*,M_{\eta},\rho_2) < \infty$. This follows immediately from the fact that 
    \begin{align*}
    \max_{c_{\eta} \leq M_{\eta} \leq C_{\eta} } \max_{0 < \rho_2 \leq C_2} A(\beta_0^*,M_u^*,\rho_1^*,M_{\eta},\rho_2) & \leq \max_{{c_{\eta} \leq M_{\eta} \leq C_{\eta} }} A(0,0,0,M_{\eta},\rho_2)\\
    & = \mme[\ell_{\tau}(\epsilon_1)] + \gamma \mme[\nu(\gamma \sqrt{d}\tilde{\beta}_1)] < \infty.
    \end{align*}
    Fix $\delta > 0$ small and let $\rho_{2}^1, \rho_2^2 \in (0,C_2]$ be any two values such that
    \[
 \min_{k \in \{1,2\}} A(\beta_0^*,M_u^*,\rho_1^*,M^k_{\eta},\rho^k_2) \geq   \max_{{c_{\eta} \leq M_{\eta} \leq C_{\eta} }} \max_{0 < \rho_2 \leq C_2} A(\beta_0^*,M_u^*,\rho_1^*,M_{\eta},\rho_2) - \delta.
    \]
By the strong concavity of $w(\eta)$ and the joint concavity of the remainder of the terms in $A$ in $(M_{\eta}, \rho_2)$, we have 
\begin{align*}
& \max_{{c_{\eta} \leq M_{\eta} \leq C_{\eta} }} \max_{0 < \rho_2 \leq C_2} A(\beta_0^*,M_u^*,\rho_1^*,M_{\eta},\rho_2)  \geq A\left(\beta_0^*,M_u^*,\rho_1^*, \frac{1}{2}M^1_{\eta} + \frac{1}{2} M^2_{\eta},\frac{1}{2}\rho^1_2 + \frac{1}{2}\rho^2_2 \right)\\
& \geq \frac{1}{2}A(\beta_0^*,M_u^*,\rho_1^*,M^1_{\eta},\rho^1_2) + \frac{1}{2} A(\beta_0^*,M_u^*,\rho_1^*,M^2_{\eta},\rho^2_2) + \frac{ \inf_{c_{\eta} \leq M_{\eta} \leq C_{\eta} } |w''(M_{\eta})|}{8}(M^1_{\eta} - M^2_{\eta})^2\\
& = \max_{{c_{\eta} \leq M_{\eta} \leq C_{\eta} }} \max_{0 < \rho_2 \leq C_2} A(\beta_0^*,M_u^*,\rho_1^*,M_{\eta},\rho_2) - \delta + \frac{ \inf_{c_{\eta} \leq M_{\eta} \leq C_{\eta} } |w''(M_{\eta})|}{8}(M^1_{\eta} - M^2_{\eta})^2,
\end{align*}
as so rearranging,
\[
(M^1_{\eta} - M^2_{\eta})^2 \leq \frac{8\delta}{\inf_{c_{\eta} \leq M_{\eta} \leq C_{\eta} } |w''(M_{\eta})|}.
\]
Sending $\delta \to 0$ gives the desired result.
\end{proof}

Our last result of this section gives a first-order condition for $\rho_1^*$.

\begin{lemma}\label{lem:rho_characterization}
Suppose the conditions of Theorem \ref{thm:main_asym_consistency} hold. Fix any $C_{\beta_0}, C_u, C_{\eta}, c_{\eta}, C_1, C_2 >0$ with $C_{\eta} > c_{\eta}$, $c_{\eta} < \sqrt{1/2)\min\{\tau^2,(1-\tau)^2\}}$, and $C_1 > \sqrt{C_u^2 + \mme[\epsilon_1^2] + C_{\beta_0}^2 }$. Let $(M_u^*, \rho^*_1,\beta_0^*)$ denote the unique optimal solution to the asymptotic program (\ref{eq:asymp_program}) defined in Lemma \ref{lem:unique_pos_solution} and assume that $M_u^* > 0$. Let $M_{\eta}^*$ denote the  unique optimal solution in $M_{\eta}$ of (\ref{eq:asymp_program}) defined in Lemma \ref{lem:m_eta_unique}.  Then, $\rho_1^*$ satisfies the first-order condition
\[
\rho_1^* =  \sqrt{ \mme\left[ \left( M_u^* g + \epsilon_1 - \beta_0^* - \textup{prox}_{\ell_{\tau}}\left( M_u^* g + \epsilon_1 - \beta_0^*;\frac{\rho^*_1}{M^*_{\eta}} \right)\right)^2 \right]}.
\]
    
\end{lemma}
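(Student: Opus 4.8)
The plan is to reduce the claim to a one-dimensional first-order condition in $\rho_1$, with every other optimization variable frozen at its (unique) optimal value, and then to evaluate the $\rho$-derivative of the pinball Moreau envelope. Throughout write $Z = M_u^* g + \epsilon_1 - \beta_0^*$, and note $\mme[Z^2] = (M_u^*)^2 + \mme[\epsilon_1^2] + (\beta_0^*)^2 \le C_u^2 + \mme[\epsilon_1^2] + C_{\beta_0}^2 < C_1^2$ by the hypothesis on $C_1$. By Lemma~\ref{lem:unique_pos_solution}, $(\beta_0^*, M_u^*, \rho_1^*)$ is the unique minimizer of the convex function $(\beta_0, M_u, \rho_1) \mapsto \Psi(\beta_0, M_u, \rho_1) := \max_{(c_\eta \le M_\eta \le C_\eta,\, 0 < \rho_2 \le C_2)} A(\beta_0, M_u, \rho_1, M_\eta, \rho_2)$; in particular $\rho_1^*$ minimizes $\rho_1 \mapsto \Psi(\beta_0^*, M_u^*, \rho_1)$ over $[0, C_1]$, and $\rho_1^* > 0$ since $M_u^* > 0$. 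The only part of $A$ involving $\rho_1$ is $h(\rho_1, M_\eta) := \mme[e_{\ell_\tau}(Z; \rho_1/M_\eta)] + M_\eta\rho_1/2$, which involves $M_\eta$ but not $\rho_2$; and by Lemma~\ref{lem:m_eta_unique} (together with the strong concavity of $M_\eta \mapsto \mme[e_{\ell_\tau}(Z;\rho_1^*/M_\eta)]$ established there) the maximizer in $M_\eta$ at $\rho_1 = \rho_1^*$ is the single value $M_\eta^*$. Hence, by the standard subdifferential rule for a pointwise maximum (Danskin's theorem), $\rho_1 \mapsto \Psi(\beta_0^*, M_u^*, \rho_1)$ is differentiable at $\rho_1^*$ with derivative equal to $G'(\rho_1^*)$, where $G(\rho_1) := h(\rho_1, M_\eta^*) = \mme[e_{\ell_\tau}(Z; \rho_1/M_\eta^*)] + M_\eta^*\rho_1/2$ is convex and $C^1$ on $(0,\infty)$.

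Next I would compute $G'$ via the Moreau-envelope identity $\partial_\rho e_f(x;\rho) = -\tfrac{1}{2\rho^2}\big(x - \textup{prox}_f(x;\rho)\big)^2$, which follows from $e_f(x;\rho) = \tfrac{1}{2\rho}(x-\textup{prox}_f(x;\rho))^2 + f(\textup{prox}_f(x;\rho))$ and the envelope theorem (or directly from the explicit formula for $e_{\ell_\tau}$ in Lemma~\ref{lem:pinball_env}). The chain rule then gives
\[
G'(\rho_1) = -\frac{M_\eta^*}{2\rho_1^2}\,\mme\!\left[\Big(Z - \textup{prox}_{\ell_\tau}\big(Z;\tfrac{\rho_1}{M_\eta^*}\big)\Big)^2\right] + \frac{M_\eta^*}{2},
\]
where differentiation under the expectation is justified near any $\rho_1 > 0$ by the pointwise bound $|Z - \textup{prox}_{\ell_\tau}(Z;\rho)| \le |Z|$ --- which holds because, by the Moreau decomposition, $Z - \textup{prox}_{\ell_\tau}(Z;\rho) = \rho\,\Pi_{[-(1-\tau),\tau]}(Z/\rho)$ and $|\Pi_{[-(1-\tau),\tau]}(z)| \le |z|$ --- together with $\mme[Z^2] < \infty$.

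I would then check that $\rho_1^*$ is interior to $[0,C_1]$: the same bound gives $\mme[(Z - \textup{prox}_{\ell_\tau}(Z;\rho))^2] \le \mme[Z^2] < C_1^2$, so $G'(C_1) = \tfrac{M_\eta^*}{2}\big(1 - C_1^{-2}\,\mme[(Z-\textup{prox}_{\ell_\tau}(Z;C_1/M_\eta^*))^2]\big) > 0$ (recall $M_\eta^* \ge c_\eta > 0$); by convexity of $G$ its minimizer on $[0,C_1]$ lies in $[0,C_1)$, and combined with $\rho_1^* > 0$ this shows $\rho_1^* \in (0,C_1)$. Since $\rho_1^*$ is then an interior minimizer of the convex, differentiable $G$, we have $G'(\rho_1^*) = 0$, which rearranges to $\mme[(Z - \textup{prox}_{\ell_\tau}(Z;\rho_1^*/M_\eta^*))^2] = (\rho_1^*)^2$; taking the positive square root (legitimate since $\rho_1^* > 0$) yields the asserted identity.

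The main obstacle is the reduction carried out in the first paragraph: one must ensure that maximizing over the half-open range $0 < \rho_2 \le C_2$ (whose supremum need not be attained, as the surrounding lemmas already anticipate) does not spoil either the well-posedness of $\Psi$ or the differentiability of $\rho_1 \mapsto \Psi(\beta_0^*,M_u^*,\rho_1)$ at $\rho_1^*$. This is handled exactly by exploiting that the $\rho_1$-dependent part $h(\rho_1,M_\eta)$ of $A$ is independent of $\rho_2$ and that the $M_\eta$-maximizer is unique, so that the relevant one-sided derivatives of $\Psi$ in $\rho_1$ may be computed by perturbing only near $(\rho_1^*, M_\eta^*)$; everything else (the Moreau-envelope derivative, differentiation under the integral, and the endpoint estimate $\rho_1^* < C_1$) is routine given the explicit form of $e_{\ell_\tau}$.
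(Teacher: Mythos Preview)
Your approach is essentially the paper's: freeze the other variables at their optima, compute the $\rho$-derivative of the pinball Moreau envelope, bound $\mme[(Z-\textup{prox}_{\ell_\tau}(Z;\rho))^2]\le\mme[Z^2]<C_1^2$ to force the minimizer into the interior, and read off the first-order condition. The paper simply asserts that $\rho_1^*$ minimizes your $G$ (their $w$) on $[0,C_1]$ via the saddle-point property, then runs the same derivative computation and the same endpoint bound.

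There is one slip in your endpoint argument. You establish that $\rho_1^*$ minimizes $\Psi(\beta_0^*,M_u^*,\cdot)$ and that $G'(C_1)>0$, then write ``by convexity of $G$ its minimizer on $[0,C_1]$ lies in $[0,C_1)$, and combined with $\rho_1^*>0$ this shows $\rho_1^*\in(0,C_1)$.'' But the second clause presumes $\rho_1^*$ minimizes $G$, which you have not shown---your Danskin step only gives $\Psi'(\rho_1^*)=G'(\rho_1^*)$. Two one-line fixes: either (i) invoke the saddle-point property directly (Sion on the compact convex-concave problem in $(\rho_1,M_\eta)$ gives a saddle point, unique by Lemmas~\ref{lem:unique_pos_solution} and~\ref{lem:m_eta_unique}, hence $\rho_1^*$ does minimize $G$), which is what the paper uses implicitly; or (ii) argue by contradiction: if $\rho_1^*=C_1$, then Lemma~\ref{lem:m_eta_unique} still says the $M_\eta$-maximizer at $\rho_1=C_1$ is $M_\eta^*$, so Danskin gives $\Psi'(C_1)=G'(C_1)>0$, contradicting minimality of $C_1$ for the convex $\Psi$. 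With either patch your proof is complete and matches the paper's.
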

\begin{proof}
    Under the given assumptions,  $\rho_1^*$ minimizes the function
    \[
    w(\rho_1)= \mme\left[e_{\ell_{\tau}}\left( M^*_ug_1 + \epsilon_1 - \beta^*_0 ; \frac{\rho_1}{M_{\eta}^*} \right)\right] + \frac{M_{\eta}^*\rho_1}{2},
    \]
    on the interval $[0,C_1]$. For ease of notation, let $Z = M^*_ug_1 + \epsilon_1 - \beta^*_0$. By, Lemma 15(iii) of \citet{Thram2018},  $\frac{d}{d\rho} e_{\ell_{\tau}}(x;\rho) = - \frac{1}{2\rho^2} (x - \text{prox}_{\ell_{\tau}}(x;\rho))^2$ (see also the calculations in Lemma \ref{lem:pinball_env} below). Applying this fact alongside the dominated convergence theorem gives
    \[
    w'(\rho_1) = -\frac{M^*_{\eta}}{2\rho_1^2}\mme\left[ \left(Z - \text{prox}_{\ell_{\tau}}\left( Z;\frac{\rho_1}{M^*_{\eta}} \right)\right)^2 \right]  + \frac{M_{\eta}^*}{2}.
    \]
    So, 
    \[
    w'(\rho_1) > 0 \iff \rho_1 > \sqrt{ \mme\left[ \left(Z - \text{prox}_{\ell_{\tau}}\left( Z;\frac{\rho_1}{M^*_{\eta}} \right)\right)^2 \right]}.
    \]
    Finally, recall that the function $h(z) = z - \text{prox}_{\ell_{\tau}}\left(z;\rho\right)$ is $1$-Lipschitz (cf. Proposition 12.28 of \citet{Bauschke2017}). Moreover, note that $h(0) = 0$. Thus, the right-hand-side above is at most
    \[
    \sqrt{ \mme\left[ \left(Z - \text{prox}_{\ell_{\tau}}\left( Z;\frac{\rho_1}{M^*_{\eta}} \right)\right)^2 \right]} \leq \sqrt{ \mme\left[ \left(Z\right)^2 \right]} \leq \sqrt{C_u^2 + \mme[\epsilon_1^2] + C_{\beta_0}^2 }.
    \]
    In particular, we find that $w$ is increasing on the interval $(\sqrt{C_u^2 + \mme[\epsilon_1^2] + C_{\beta_0}^2 }, C_1)$. Since we also know that $w$ does not obtain it's minimum at $0$ (Lemma \ref{lem:unique_pos_solution}), we find that $w$ must obtain its minimum inside the open interval $(0,C_1)$. Thus, $w'(\rho_1^*) = 0$, or equivalently,
    \[
    \rho_1^* = \sqrt{ \mme\left[ \left(Z - \text{prox}_{\ell_{\tau}}\left( Z;\frac{\rho^*_1}{M^*_{\eta}} \right)\right)^2 \right]},
    \]
    as claimed.

\end{proof}

\subsection{Final steps}\label{sec:final_theorem_proof}

In this section we prove Theorem \ref{thm:main_asym_consistency}. We begin by stating a convergence result for the primal variables.

\begin{theorem}\label{thm:primal_convergence}
Let $C_u, C_{\beta_0}, C_{\eta}, c_{\eta}, C_1, C_2$ be constants satisfying the conclusions of Lemmas \ref{lem:dual_norm_bounds}, \ref{lem:primal_bound}, \ref{lem:r_s_bounded}, and \ref{lem:swap_to_envelope}. Let $M_u^*$ and $\beta_0^*$ denote the unique solutions for $M_u$ and $\beta_0$ in the asymptotic program (\ref{eq:asymp_program}) defined in Lemma \ref{lem:unique_pos_solution}. Then, under the assumptions of Theorem \ref{thm:main_asym_consistency}, it holds that for all $\delta > 0$,
\[
\mmp\left(\text{For all primal solutions to (\ref{eq:app_min_max_qr}), } |\|\hat{\beta} - \tilde{\beta}\|_2 - M_u^*| < \delta \text{ and } |\hat{\beta}_0 - \beta_0^*| < \delta \right) \to 1.
\]
\end{theorem}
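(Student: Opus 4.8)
The plan is to run the convex Gaussian min--max localization argument for the \emph{primal} iterates, on top of the reduction already established. Write $\hat u=\hat\beta-\tilde\beta$, let $A$ be the compact primal box $\{(\beta_0,u,r):|\beta_0|\le C_{\beta_0},\|u\|_2\le C_u,\|r\|_2\le C_r\sqrt n\}$ from Lemmas \ref{lem:primal_bound} and \ref{lem:r_s_bounded}, write $F$ for the objective of (\ref{eq:app_min_max_qr}), and set $\Phi_\star=\Phi(\{\eta:\|\eta\|_2\le\sqrt n C_\eta\})$, $\phi_\star=\phi(\{\eta:\|\eta\|_2\le\sqrt n C_\eta\})$, and $V$ for the value of the asymptotic program (\ref{eq:asymp_program}). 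By Lemmas \ref{lem:dual_norm_bounds}, \ref{lem:primal_bound}, \ref{lem:r_s_bounded} the box constraints (and $\|\eta\|_2\le\sqrt nC_\eta$, which $\hat\eta\in[-(1-\tau),\tau]^n$ satisfies automatically) are inactive at the solutions of (\ref{eq:app_min_max_qr}) with probability $\to1$, so the box-restricted problem has the same saddle points as (\ref{eq:app_min_max_qr}) and, by Sion's theorem on this convex--concave problem, $\max_{(\eta,s)}F(\hat p,\cdot)=\Phi_\star$ for every primal solution $\hat p$. First I would pin down the value: $\liminf_n\Phi_\star\stackrel{\mmp}{\ge}V$ by Proposition \ref{prop:aux_to_asymp} and the convex case of Proposition \ref{prop:gordon}; for the matching upper bound, restrict the primal minimization in $\phi_\star$ to the single point $(\beta_0,M_u)=(\beta_0^*,M_u^*)$, run the reduction of Section \ref{sec:auxiliary_reduc} to get convergence of this restricted value to $\min_{\rho_1}\max_{M_\eta,\rho_2}A(\beta_0^*,M_u^*,\rho_1,M_\eta,\rho_2)=V$ (the last equality being the saddle property of the solution in Lemmas \ref{lem:unique_pos_solution} and \ref{lem:m_eta_unique}), whence $\limsup_n\phi_\star\stackrel{\mmp}{\le}V$ and, by the unconditional half of Proposition \ref{prop:gordon}, $\Phi_\star\stackrel{\mmp}{\to}V$.

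Next I would show that cutting the primal scalars away from their limits strictly raises the auxiliary value. For $m\in(0,C_u)$ and $b\in\mmr$ let $\phi^{M_u\ge m}$, $\phi^{M_u\le m}$, $\phi^{\beta_0\ge b}$, $\phi^{\beta_0\le b}$ denote $\phi_\star$ with the corresponding scalar minimization restricted to the indicated closed interval; nothing in Section \ref{sec:auxiliary_reduc} uses that $M_u$ ranges over all of $[0,C_u]$ rather than a subinterval, and restricting $\beta_0$ is immediate, so running the same reduction together with Lemma \ref{lem:convexity_lemma} shows each of these converges in probability to the value of (\ref{eq:asymp_program}) with the same constraint imposed. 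By the strict joint convexity of $(\beta_0,M_u)\mapsto\min_{\rho_1}\max_{M_\eta,\rho_2}A$ (Lemmas \ref{lem:A_convex_cont}, \ref{lem:m_rho_b_strict_convex}) and uniqueness of its minimizer $(\beta_0^*,M_u^*)$ (Lemma \ref{lem:unique_pos_solution}), these limits strictly exceed $V$ whenever the excluded scalar region contains $(\beta_0^*,M_u^*)$; in particular, for one $\varepsilon>0$, with probability $\to1$: $\phi^{M_u\ge M_u^*+\delta/2}\ge V+\varepsilon$, $\phi^{M_u\le M_u^*-\delta}\ge V+\varepsilon$, $\phi^{\beta_0\ge\beta_0^*+\delta}\ge V+\varepsilon$, $\phi^{\beta_0\le\beta_0^*-\delta}\ge V+\varepsilon$ (the second statement dropped when $M_u^*=0$).

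To conclude, fix $\delta>0$ and suppose toward a contradiction that with probability bounded away from $0$ some primal solution $\hat p=(\hat\beta_0,\hat\beta,\hat r)$ has $|\|\hat u\|_2-M_u^*|\ge\delta$ or $|\hat\beta_0-\beta_0^*|\ge\delta$. Every such $\hat p$ lies in one of finitely many \emph{convex} subsets of $A$: the half-boxes $\{\beta_0\le\beta_0^*-\delta\}\cap A$ and $\{\beta_0\ge\beta_0^*+\delta\}\cap A$; the ball $\{\|u\|_2\le M_u^*-\delta\}\cap A$; and sets $K_j=\{u:\langle u,e_j\rangle\ge M_u^*+\delta/2\}\cap A$ with $e_1,\dots,e_N$ a net of the unit sphere fine enough (a Cauchy--Schwarz estimate) that $\{M_u^*+\delta\le\|u\|_2\le C_u\}\subseteq\bigcup_jK_j$. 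Each such convex $K$ is contained in one of the scalar regions above, so writing $\phi_K$ for $\phi_\star$ with the primal minimization further restricted to $K$, we have $\phi_K\ge V+\varepsilon$ with probability $\to1$, for an $\varepsilon>0$ common to the finitely many $K$. Since $K$ and $\{\eta:\|\eta\|_2\le\sqrt nC_\eta\}$ are convex, the convex case of Proposition \ref{prop:gordon} gives $\Phi_K\ge V+\varepsilon/2$ with probability $\to1$, hence $\min_{p\in K}\max_{(\eta,s)}F(p,\cdot)\ge\Phi_K\ge V+\varepsilon/2$ by the minimax inequality; but $\hat p\in K$ forces $\min_{p\in K}\max_{(\eta,s)}F(p,\cdot)\le\max_{(\eta,s)}F(\hat p,\cdot)=\Phi_\star\le V+\varepsilon/4$ with probability $\to1$ by the value convergence above, a contradiction. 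A union bound over the finitely many $K$ proves the theorem, since $\|\hat\beta-\tilde\beta\|_2=\|\hat u\|_2$.

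The main obstacle is the non-convexity of $\{\|u\|_2\ge M_u^*+\delta\}$: one cannot restrict $\Phi$'s primal variables directly to this set and invoke the convex half of Proposition \ref{prop:gordon}. The fix is the halfspace net above: each $K_j$ is convex (so Gordon's inequality transfers) while sitting inside the \emph{larger} scalar region $\{M_u\ge M_u^*+\delta/2\}$ on which the auxiliary value provably exceeds $V$, so avoidance of the optimum is delegated to that region rather than to $K_j$ itself. The only other point needing care is the matching upper bound $\limsup_n\phi_\star\le V$, which is precisely where the saddle structure of the asymptotic program (Lemmas \ref{lem:unique_pos_solution}, \ref{lem:m_eta_unique}) is used.
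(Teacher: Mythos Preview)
Your overall CGMT localization strategy is sound, and the value convergence $\Phi_\star\stackrel{\mmp}{\to}V$, as well as the treatment of the convex restrictions $\{\beta_0\le\beta_0^*-\delta\}$, $\{\beta_0\ge\beta_0^*+\delta\}$, $\{\|u\|_2\le M_u^*-\delta\}$, are all fine. The gap is in the halfspace net for the shell $\{\|u\|_2\ge M_u^*+\delta\}$: an $\epsilon$-net of $S^{d-1}$ has cardinality $N(d)=(C/\epsilon)^d$, so while the bound $\mmp(\phi_{K_j}\le V+\varepsilon/2)\le \mmp(\phi^{M_u\ge M_u^*+\delta/2}\le V+\varepsilon/2)$ is indeed uniform in $j$, the events $\{\Phi_{K_j}<V+\varepsilon/2\}$ depend on the random matrix $X$ through $K_j$, and the union bound you need is $N(d)\cdot 2\mmp(\phi^{M_u\ge m}\le V+\varepsilon/2)\to 0$. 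Within the paper's framework the convergence of $\phi^{M_u\ge m}$ to its limit comes only from Lemma~\ref{lem:convexity_lemma} and the weak law of large numbers, giving no rate; the exponential prefactor $N(d)$ therefore swamps it. Your ``finitely many $K$'' is finite for each fixed $d$, but the number blows up precisely in the regime $d\to\infty$ where the theorem lives.

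The standard fix, which the paper uses, is to swap the roles of primal and dual in Proposition~\ref{prop:gordon}: write $-\Phi^{\text{primal}}(S)=\max_{p\in S}\min_{(\eta,s)}(-F)$, note that $-F$ still carries the bilinear form $\frac{1}{n}\eta^\top X u=\frac{1}{n}u^\top X^\top\eta$ with $X^\top$ i.i.d.\ Gaussian, and apply the \emph{first} inequality of Proposition~\ref{prop:gordon} (which needs only compactness, not convexity, of the max domain) to obtain $\mmp(\Phi^{\text{primal}}(S)<c)\le 2\mmp(\phi^{\text{primal}}(S)\le c)$ for the single non-convex compact set $S=\{(\beta_0,u):|\|u\|_2-M_u^*|\ge\delta \text{ or }|\beta_0-\beta_0^*|\ge\delta\}$. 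The auxiliary $\phi^{\text{primal}}(S)$ coincides with your $\phi_S$, so the analysis of Section~\ref{sec:auxiliary_reduc} still yields $\phi^{\text{primal}}(S)\stackrel{\mmp}{\to}V'>V$ via strict convexity (Lemma~\ref{lem:m_rho_b_strict_convex}), and one concludes exactly as in Proposition~\ref{prop:norm_conv}. In short: rather than covering the non-convex set and paying a dimension-dependent union bound, negate and use the direction of Gordon's inequality that does not require convexity.
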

\begin{proof}
    The proof of this result follows similar steps to the proof of Theorem \ref{thm:main_asym_consistency} and, in particular, is very similar to the proof of Proposition \ref{prop:norm_conv} below. Namely, following similar arguments to those presented in Section \ref{sec:auxiliary_reduc} for the dual variables, one can show that to prove this result it is sufficient to bound the value of the program
    \begin{align*}
\phi^{\text{primal}}(S) & := \max_{(\|s\|_2 \leq   C_s \sqrt{n},c_{\eta} \leq M_{\eta} \in \leq C_{\eta})} \min_{(\|r\|_2 \leq C_r \sqrt{n},(\beta_0, u) \in S)}  \min_{(\eta : \|\eta\|_2 = M_{\eta}\sqrt{n})} \bigg( \frac{1}{n} \|u\|_2 \eta^\top g + \frac{1}{n} \|\eta\|_2 u^\top h \\
& \hspace{1.5cm} \frac{1}{n} \eta^\top \epsilon - \frac{1}{n} \beta_0\eta^\top \pmb{1}_n - \frac{1}{n} \eta^\top r + \frac{1}{n} \sum_{i=1}^n \ell_{\tau}(r_i)  + \frac{1}{\sqrt{n}} s^\top(\tilde{\beta} + u) - \frac{1}{n}R^*_d(\sqrt{n}s) \bigg),
\end{align*}
for various choices of $S$. Arguing as above, the values of this program are completely characterized by values of the asymptotic program (\ref{eq:asymp_program}). Convergence of $\|\hat{\beta} - \tilde{\beta}\|_2$ and $\hat{\beta}_0$ then follows from similar arguments to those presented in Pr oposition \ref{prop:norm_conv} below where we show an analogous convergence result for $\|\hat{\eta}\|_2$. Since the details of this proof closely mirror our other arguments, they are omitted.
\end{proof}

We now turn to the proof of Theorem \ref{thm:main_asym_consistency}. Our first result considers the case $M_u^* = 0$.

\begin{proposition}\label{prop:M_u_is_0}
    Suppose the conditions of Theorem \ref{thm:main_asym_consistency} hold.
    Let $(M_u^*,\beta_0^*)$ be defined as in Theorem \ref{thm:primal_convergence} and suppose that $M_u^* = 0$. Let $p := \mmp(\epsilon_1 - \beta^*_0 < 0)$. Then, for all $\xi > 0$, with probability tending to one, all dual solutions $\hat{\eta}$ to \ref{eq:app_min_max_qr} satisfy
    \[
     \left|\frac{1}{n} \sum_{i=1}^n \bone\{\hat{\eta}_i  = -(1-\tau)\} - p \right|,\ \left|\frac{1}{n} \sum_{i=1}^n \bone\{\hat{\eta}_i  = \tau\} - (1-p) \right|,\ \frac{1}{n} \sum_{i=1}^n \bone\{\hat{\eta}_i  \in (-(1-\tau), \tau)\} < \xi.
    \]
    In particular, the result of Theorem \ref{thm:main_asym_consistency} goes through for $P_{\eta} = p \delta_{-(1-\tau)} + (1-p) \delta_{\tau}$.
\end{proposition}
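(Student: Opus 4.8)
The plan is to reduce the statement to the convergence of the \emph{signs} of the quantile regression residuals, and then to feed in the primal convergence result of Theorem~\ref{thm:primal_convergence}. First, recall that any primal--dual solution of (\ref{eq:app_min_max_qr}) satisfies, by the first-order conditions in $\eta$ and in $r$, the relations $\hat r_i = Y_i - \hat\beta_0 - X_i^\top\hat\beta$ and $\hat\eta_i \in \partial\ell_\tau(\hat r_i)$. Since $\partial\ell_\tau(r) = \{-(1-\tau)\}$ for $r<0$, $\{\tau\}$ for $r>0$, and $[-(1-\tau),\tau]$ for $r=0$, this gives, for every $i$, the pointwise comparisons
\[
\bone\{\hat r_i<0\}\ \le\ \bone\{\hat\eta_i=-(1-\tau)\}\ \le\ \bone\{\hat r_i\le 0\},\qquad
\bone\{\hat\eta_i\in(-(1-\tau),\tau)\}\ \le\ \bone\{\hat r_i=0\},
\]
together with the mirror bounds for $\bone\{\hat\eta_i=\tau\}$. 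It therefore suffices to show that, with probability tending to one, every primal solution satisfies $\frac1n\sum_{i=1}^n\bone\{\hat r_i<0\}\to p$ and $\frac1n\sum_{i=1}^n\bone\{\hat r_i>0\}\to 1-p$: these two limits sum to $1$, so $\frac1n\sum_{i=1}^n\bone\{\hat r_i=0\}\to 0$ automatically, and the three displayed comparisons then squeeze all three dual counts to their claimed values.

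To prove the residual-sign convergence, I would write $\hat r_i=(\epsilon_i-\hat\beta_0)-X_i^\top(\hat\beta-\tilde\beta)$ and fix $t,\delta>0$. By Theorem~\ref{thm:primal_convergence} with $M_u^*=0$, with probability tending to one every primal solution has $\|\hat\beta-\tilde\beta\|_2\le\delta$ and $|\hat\beta_0-\beta_0^*|\le\delta$. On that event the deterministic estimate
\[
\frac1n\#\{i:\,|X_i^\top v|>t\}\ \le\ \frac{1}{nt^2}\sum_{i=1}^n(X_i^\top v)^2\ =\ \frac{\|Xv\|_2^2}{nt^2}\ \le\ \frac{\sigma_{\max}(X)^2}{n}\cdot\frac{\delta^2}{t^2}
\]
holds simultaneously over all $v$ with $\|v\|_2\le\delta$, and $\sigma_{\max}(X)^2/n\to(1+\sqrt\gamma)^2$ in probability by standard random matrix theory (Theorem~3.1 of \cite{Yin1988}). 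Bounding $\bone\{\hat r_i<0\}$ above by $\bone\{\epsilon_i-\beta_0^*<t+\delta\}+\bone\{X_i^\top(\hat\beta-\tilde\beta)>t\}$ and below by $\bone\{\epsilon_i-\beta_0^*<-t-\delta\}-\bone\{X_i^\top(\hat\beta-\tilde\beta)<-t\}$, and invoking the law of large numbers for $\frac1n\sum_i\bone\{\epsilon_i-\beta_0^*<c\}$, I would obtain, writing $a_n:=\frac1n\sum_{i=1}^n\bone\{\hat r_i<0\}$,
\[
\mmp\big(\epsilon_1-\beta_0^*<-t-\delta\big)-\tfrac{(1+\sqrt\gamma)^2\delta^2}{t^2}\ \le\ \liminf_n a_n\ \le\ \limsup_n a_n\ \le\ \mmp\big(\epsilon_1-\beta_0^*<t+\delta\big)+\tfrac{(1+\sqrt\gamma)^2\delta^2}{t^2},
\]
uniformly over primal solutions. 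Sending $\delta\to0$ and then $t\to0$, and using that $P_\epsilon$ has a density (so $c\mapsto\mmp(\epsilon_1-\beta_0^*<c)$ is continuous), yields $a_n\to\mmp(\epsilon_1-\beta_0^*<0)=p$; the same argument with the inequalities reversed gives $\frac1n\sum_i\bone\{\hat r_i>0\}\to 1-p$.

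Finally, I would deduce the Theorem~\ref{thm:main_asym_consistency}-type conclusion by noting that for any bounded $\psi$ and any dual solution,
\[
\Big|\tfrac1n\sum_{i=1}^n\psi(\hat\eta_i)-\big(p\,\psi(-(1-\tau))+(1-p)\,\psi(\tau)\big)\Big|\ \le\ \|\psi\|_\infty\Big(\big|c_1-p\big|+\big|c_2-(1-p)\big|+c_0\Big),
\]
where $c_1=\frac1n\sum_i\bone\{\hat\eta_i=-(1-\tau)\}$, $c_2=\frac1n\sum_i\bone\{\hat\eta_i=\tau\}$ and $c_0=\frac1n\sum_i\bone\{\hat\eta_i\in(-(1-\tau),\tau)\}$; by the previous step the right-hand side tends to $0$ uniformly over dual solutions, and $p\,\psi(-(1-\tau))+(1-p)\,\psi(\tau)=\mme_{Z\sim P_\eta}[\psi(Z)]$ for $P_\eta=p\,\delta_{-(1-\tau)}+(1-p)\,\delta_\tau$, which is supported on $[-(1-\tau),\tau]$ with atoms at the two endpoints and (vacuously) a continuous part on the interior.

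The main thing to watch is uniformity over the (generally non-unique) sets of primal and dual solutions and the correct ordering of the $\delta\to0$, $t\to0$ limits; both are manageable here because Theorem~\ref{thm:primal_convergence} already holds uniformly over primal solutions and the residual counts depend only on $(\hat\beta_0,\hat\beta)$. The uniform anti-concentration estimate over the $\delta$-ball, which might superficially look like the crux, is in fact immediate from the operator-norm bound on $X$.
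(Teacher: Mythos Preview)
Your proof is correct and follows essentially the same approach as the paper: both use the first-order conditions to relate $\hat\eta_i$ to the sign of the residual $\hat r_i=\epsilon_i-\hat\beta_0-X_i^\top(\hat\beta-\tilde\beta)$, invoke Theorem~\ref{thm:primal_convergence} to get $\|\hat\beta-\tilde\beta\|_2\to 0$ and $\hat\beta_0\to\beta_0^*$, and control $\frac1n\#\{i:|X_i^\top(\hat\beta-\tilde\beta)|>t\}$ via the operator norm of $X$. The only cosmetic difference is that the paper applies an $L_1$-Markov bound ($\frac1n\sum_i\bone\{|X_i^\top v|>\rho\}\le\frac{1}{n\rho}\|Xv\|_1$) where you use the $L_2$ version, and the paper only details the interior count $\frac1n\sum_i\bone\{\hat\eta_i\in(-(1-\tau),\tau)\}$ and declares the other two ``similar,'' whereas you carry all three through explicitly.
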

\begin{proof}
    We will focus on the bound on $\frac{1}{n} \sum_{i=1}^n \bone\{\hat{\eta}_i  \in (-(1-\tau), \tau)\}$. The bounds on the other two terms are similar. By the first-order conditions of the optimization in $r$, we have that for any joint primal-dual solution $(\hat{\beta}_0, \hat{\beta}, \hat{\eta})$,
    \[
    \hat{\eta}_i \in \begin{cases}
         \{\tau\},& Y_i > \hat{\beta}_0 + X_i^\top \hat{\beta},\\
         [(1-\tau), \tau] ,& Y_i = \hat{\beta}_0 + X_i^\top \hat{\beta},\\
         \{-(1-\tau)\},& Y_i < \hat{\beta}_0 + X_i^\top \hat{\beta},
    \end{cases}
     = 
     \begin{cases}
        \{\tau\},& \epsilon_i > \hat{\beta}_0 + X_i^\top (\hat{\beta} - \tilde{\beta}),\\
         -[\tau, 1-\tau] ,&  \epsilon_i = \hat{\beta}_0 + X_i^\top (\hat{\beta} - \tilde{\beta}),\\
         \{-(1-\tau)\},& \epsilon_i < \hat{\beta}_0 + X_i^\top (\hat{\beta} -\tilde{\beta}).
    \end{cases}
    \]
    Now, by standard results (e.g. Theorem 3.1 of \citet{Yin1988}) we have that $\sigma_{\max}(X)/\sqrt{n}$ is converging in probability to a constant $c>0$. In particular, this implies that with probability converging to one, $\|X(\hat{\beta} - \tilde{\beta})\|_1 \leq \sqrt{n} \|X(\hat{\beta} - \tilde{\beta})\|_2 \leq n 2c \|\hat{\beta} - \tilde{\beta}\|_2$. So, for any $\rho > 0$, 
    \begin{align*}
     \frac{1}{n} \sum_{i=1}^n \bone\{\hat{\eta}_i  \in (-(1-\tau),\tau)\} & \leq \frac{1}{n} \sum_{i=1}^n \bone\{|\epsilon_i -  \hat{\beta}_0 - X_i^\top (\hat{\beta} - \tilde{\beta})| \leq \rho \}\\
    & \leq \frac{1}{n} \sum_{i=1}^n  \bone\{|\epsilon_i - \hat{\beta}_0| \leq 2\rho\} + \frac{1}{n} \sum_{i=1}^n  \bone\{|\hat{\beta}_0 + X_i^\top (\hat{\beta} - \tilde{\beta})| > \rho\} \\
    & \leq  \frac{1}{n} \sum_{i=1}^n  \bone\{|\epsilon_i -  \hat{\beta}_0 | \leq 2\rho\} + \frac{1}{n \rho} \|\hat{\beta}_0 + X_i^\top (\hat{\beta} - \tilde{\beta})\|_1\\
    & \leq \frac{1}{n} \sum_{i=1}^n  \bone\{|\epsilon_i -  \beta^*_0 | \leq 3\rho\} + \bone\{|\beta^*_0 - \hat{\beta}_0| > \rho\}   + \frac{2c}{\rho} \|\hat{\beta} - \tilde{\beta}\|_2.
    \end{align*}
    So, 
    \begin{align*}
    \sup_{\hat{\eta}} \frac{1}{n} \sum_{i=1}^n \bone\{\hat{\eta}_i  \in (-(1-\tau),\tau)\}  \leq \sup_{\hat{\beta}_0, \hat{\beta}} &  \frac{1}{n} \sum_{i=1}^n  \bone\{|\epsilon_i -  \beta^*_0 | \leq 3\rho\} + \bone\{|\beta^*_0 - \hat{\beta}_0| > \rho\} \\
    & + \frac{2c}{\rho} \|\hat{\beta} - \tilde{\beta}\|_2,
    \end{align*}
    where the suprema are over all dual solutions for $\eta$ and all primal solutions for $(\beta_0,\beta)$, respectively. Applying the law of large numbers and the results of Lemma \ref{lem:primal_bound}, we find that
    \[
    \sup_{\hat{\eta}} \frac{1}{n} \sum_{i=1}^n \bone\{\hat{\eta}_i  \in (-(1-\tau),\tau)\}  \leq \mmp(|\epsilon_1 - \beta_0^*| \leq 3\rho) + o_{\mmp}(1).
    \]
    Since $\epsilon_1$ has a continuous distribution, the desired result follows by sending $\rho \to 0$.
\end{proof}

We now turn to the main proof of Theorem \ref{thm:main_asym_consistency}, which focuses on the more difficult case in which $M_u^* >0$. To begin, we first show that $\|\hat{\eta}_2\|_2$ converges.

\begin{proposition}\label{prop:norm_conv}
Assume the conditions of Theorem \ref{thm:main_asym_consistency} hold. Let $C_u, C_{\beta_0}, C_{\eta}, c_{\eta}, C_1, C_2$ be constants satisfying the conclusions of Lemmas \ref{lem:dual_norm_bounds}, \ref{lem:primal_bound}, \ref{lem:r_s_bounded}, and \ref{lem:swap_to_envelope} as well as the assumptions of Lemma \ref{lem:m_eta_unique}. Let $M_u^*$ denote the unique solution for $M_u$ in the asymptotic program (\ref{eq:asymp_program}) defined in Lemma \ref{lem:unique_pos_solution} and assume that $M_u^* > 0$. Let $M_{\eta}^*$ denote the unique solution for $M_{\eta}$ in (\ref{eq:asymp_program}) defined in Lemma  \ref{lem:m_eta_unique}. Then, for all $\delta > 0$,
    \[
    \mmp\left( \text{For all dual solutions of (\ref{eq:app_min_max_qr}), } |\|\hat{\eta}\|_2 - \sqrt{n}M^*_{\eta}| < \delta \right) \to 1.
    \]
\end{proposition}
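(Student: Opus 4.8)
The plan is to invoke the set-exclusion principle of Proposition \ref{prop:original_to_aux} to rule out every dual solution whose rescaled norm is bounded away from $M_\eta^*$. Fix $\delta > 0$ and, for small $\epsilon > 0$, set $K_\epsilon = \{M \in [c_\eta, C_\eta] : |M - M_\eta^*| \ge \epsilon\}$ and $S_\epsilon = \{\eta \in \mmr^n : \|\eta\|_2/\sqrt{n} \in K_\epsilon\}$, a compact (but non-convex) set. By Lemma \ref{lem:dual_norm_bounds}, with probability tending to one every dual solution of (\ref{eq:app_min_max_qr}) has $\|\hat\eta\|_2/\sqrt n \in [c_\eta, C_\eta]$, so it suffices to show $\mmp(\text{no dual solution lies in } S_\epsilon) \to 1$; letting $\epsilon \downarrow 0$ then gives $\|\hat\eta\|_2/\sqrt n \to M_\eta^*$ in probability, uniformly over dual solutions, which is the desired conclusion (up to the rescaling by $\sqrt n$ used throughout).

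To apply Proposition \ref{prop:original_to_aux} I need a value $v$ with $\phi(\{\eta : \|\eta\|_2 \le \sqrt n C_\eta\}) \ge v + \delta'$ and $\phi(S_\epsilon) \le v - \delta'$ both holding with probability tending to one, for some $\delta' > 0$. The first inequality is furnished directly by Proposition \ref{prop:aux_to_asymp}, which gives $\phi(\{\eta : \|\eta\|_2 \le \sqrt n C_\eta\}) \ge V - o_\mmp(1)$ for $V$ the value of the asymptotic program (\ref{eq:asymp_program}). For the second I would re-run the chain of reductions of Section \ref{sec:auxiliary_reduc} verbatim, but with the dual norm confined to $K_\epsilon$ in place of $[c_\eta, C_\eta]$: scalarize the terms linear in $u$ into Euclidean norms, introduce the auxiliary variables $\rho_1, \rho_2$ through the identity $\|x\|_2 = \min_{0 < \rho \le C} \tfrac{\|x\|_2^2}{2\rho} + \tfrac{\rho}{2}$, swap the order of the optimizations by Sion's theorem, and express the inner problems in $r$ and $s$ as Moreau envelopes using Lemma \ref{lem:swap_to_envelope}. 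This identifies $\phi(S_\epsilon)$, with probability tending to one, with the finite-sample analogue of (\ref{eq:final_data_aux}) in which $M_\eta$ is restricted to $K_\epsilon$, and repeated use of Lemma \ref{lem:convexity_lemma} (with pointwise limits supplied by the law of large numbers and part 3 of Assumption \ref{assump:reg_high_dim_assumptions}, exactly as in Proposition \ref{prop:aux_to_asymp}) shows this converges in probability to $V_{K_\epsilon} := \min_{(\beta_0, M_u, \rho_1)} \max_{(M_\eta \in K_\epsilon,\, 0 < \rho_2 \le C_2)} A(\beta_0, M_u, \rho_1, M_\eta, \rho_2)$.

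The remaining point is the strict gap $V_{K_\epsilon} < V$. Writing $g(M_\eta) = \max_{0 < \rho_2 \le C_2} \min_{(\beta_0, M_u, \rho_1)} A(\beta_0, M_u, \rho_1, M_\eta, \rho_2)$, convexity–concavity of $A$ (Lemma \ref{lem:A_convex_cont}) makes $g$ concave on $[c_\eta, C_\eta]$ — an infimum of jointly concave functions is concave, and partial maximization preserves concavity — hence continuous on the interior, and Sion's theorem gives $V = \max_{[c_\eta, C_\eta]} g$ and $V_{K_\epsilon} = \max_{K_\epsilon} g$. By Lemma \ref{lem:m_eta_unique} the maximum of $g$ over $[c_\eta, C_\eta]$ is attained only at $M_\eta^*$, so concavity forces $g$ to be nondecreasing on $[c_\eta, M_\eta^*]$ and nonincreasing on $[M_\eta^*, C_\eta]$; therefore $V_{K_\epsilon}$ equals $g$ evaluated at whichever of $M_\eta^* \pm \epsilon$ lies in $K_\epsilon$ (or at an endpoint of $[c_\eta, C_\eta]$ if $M_\eta^*$ is within $\epsilon$ of it), and in every case $V_{K_\epsilon} < g(M_\eta^*) = V$. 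Taking any $v \in (V_{K_\epsilon}, V)$ and $\delta' = \tfrac12 \min\{v - V_{K_\epsilon}, V - v\} > 0$ then puts us in the situation required by Proposition \ref{prop:original_to_aux}, and the conclusion follows. I expect the main obstacle to be the second step: re-deriving the reduction on the restricted, non-convex set $S_\epsilon$ while tracking that each equality (scalarization, Sion's minimax, the Moreau-envelope identities of Lemma \ref{lem:swap_to_envelope}) and each application of Lemma \ref{lem:convexity_lemma} remains valid, so that the uniqueness of the $M_\eta$-optimizer from Lemma \ref{lem:m_eta_unique} genuinely upgrades to the strict value gap $V_{K_\epsilon} < V$; once that gap is secured, the comparison inequality does the rest.
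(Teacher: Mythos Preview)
Your plan is essentially the paper's proof: use Proposition \ref{prop:original_to_aux} together with the lower bound from Proposition \ref{prop:aux_to_asymp}, re-run the reductions of Section \ref{sec:auxiliary_reduc} with $M_\eta$ confined to the gapped set $K_\epsilon$, and invoke Lemma \ref{lem:m_eta_unique} for the strict value drop. The paper writes the limit of $\phi(S_\epsilon)$ in max--min form, $\max_{(M_\eta\in K_\epsilon,\rho_2)}\min_{(\beta_0,M_u,\rho_1)}A$, rather than your min--max $V_{K_\epsilon}$, and then the inequality with $V$ is immediate from Lemma \ref{lem:m_eta_unique}.

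There is one genuine slip in your third paragraph. You assert that Sion gives $V_{K_\epsilon}=\max_{K_\epsilon}g$, but $K_\epsilon$ is not convex, so Sion does not apply; only weak duality $V_{K_\epsilon}\ge\max_{K_\epsilon}g$ holds, which is the wrong direction for your purpose. The fix is short: bound $V_{K_\epsilon}\le \max_{M_\eta\in K_\epsilon}\max_{\rho_2}A(\beta_0^*,M_u^*,\rho_1^*,M_\eta,\rho_2)$ by plugging the unique minimizer $(\beta_0^*,M_u^*,\rho_1^*)$ into the outer minimum, and then observe that the proof of Lemma \ref{lem:m_eta_unique} actually establishes strict concavity of $M_\eta\mapsto\max_{\rho_2}A(\beta_0^*,M_u^*,\rho_1^*,M_\eta,\rho_2)$ on $[c_\eta,C_\eta]$ (through the strong concavity of $w$), so its supremum over $K_\epsilon$ is strictly below its value at $M_\eta^*$, which equals $V$. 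With that correction your argument goes through and matches the paper's.
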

\begin{proof}
   Let $V$ denote the value of the asymptotic optimization program $(\ref{eq:asymp_program})$.  By Propositions \ref{prop:original_to_aux} and \ref{prop:aux_to_asymp}, it is sufficient to show that there exists $\xi > 0$ such that with probability converging to one, 
    \[
    \phi(\{\eta : \sqrt{n}c_{\eta} \leq \|{\eta}\|_2 \leq \sqrt{n}C_{\eta}, |\|{\eta}\|_2 - \sqrt{n}M^*_{\eta}| \geq \delta\}) < V - \xi.
    \]
    For ease of notation, let $S_{M,\delta} = \{M \in [c_{\eta}, C_{\eta}]: M \geq  M^*_{\eta} + \delta \text{ or } M \leq M^*_{\eta} - \delta\} $. By a direct calculation following the arguments of Section \ref{sec:auxiliary_reduc}, we have that 
    \begin{align*}
     & \phi(\{\eta : \sqrt{n}c_{\eta} \leq \|{\eta}\|_2 \leq \sqrt{n}C_{\eta}, |\|{\eta}\|_2 - \sqrt{n}M^*_{\eta}| \geq \delta\})\\
     \stackrel{\mmp}{\to} & \max_{(M_{\eta} \in S_{M,\delta}, 0 < \rho_2 \leq C_2)} \min_{(|\beta_0| \leq C_{\beta_0}, 0 \leq M_u \leq C_u, 0 \leq \rho_1 \leq C_1) } A(\beta_0, M_u, \rho_1, M_{\eta}, \rho_2).
    \end{align*}
    By Lemma \ref{lem:m_eta_unique}, this expression is strictly less than $V$, as desired.

\end{proof}

We now prove Theorem \ref{thm:main_asym_consistency}.

\begin{proof}[Proof of Theorem \ref{thm:main_asym_consistency}.]
 Let $C_u, C_{\beta_0}, C_{\eta}, c_{\eta}, C_1, C_2$ be constants satisfying the conclusions of Lemmas \ref{lem:dual_norm_bounds}, \ref{lem:primal_bound}, \ref{lem:r_s_bounded}, and \ref{lem:swap_to_envelope} as well as the assumptions of Lemmas \ref{lem:unique_pos_solution} and \ref{lem:rho_characterization}. Let $(M_u^*, \beta_0^*, \rho_1^*)$ denote the unique solutions for $(M_u,\beta_0,\rho_1)$ in the asymptotic program (\ref{eq:asymp_program}) defined in Lemma \ref{lem:unique_pos_solution}. If $M_u^* = 0$, the result follows from Proposition \ref{prop:M_u_is_0}. So, suppose $M_u^* > 0$. Let $M_{\eta}^*$ denote the unique solution for $M_{\eta}$ in (\ref{eq:asymp_program}) defined in Lemma \ref{lem:m_eta_unique}. Recall that by Lemma \ref{lem:unique_pos_solution} we must have that $\rho_1^* > 0$ and let $P_{\eta}$ denote the distribution of
 \begin{equation}\label{eq:asymptotic_dist}
 \frac{M_{\eta}^*(M_u^* g_1 + \epsilon_1 - \beta_0^* - \text{prox}_{\ell_{\tau}}(M_u^* g_1 + \epsilon_1 - \beta_0^*; \rho_1^*/M_{\eta}^*) )}{\rho_1^*}.
 \end{equation}
 Fix any bounded $L$-lipschitz function $\psi$. Let $V$ denote the value of the asymptotic optimization program $(\ref{eq:asymp_program})$ and fix any $\kappa, \delta > 0$ small. Let $S_{\kappa, \delta}$ denote the set
\[
 \left\{\eta : \max\{c_{\eta}, M_{\eta}^* - \kappa\} \leq \frac{\|{\eta}\|_2}{\sqrt{n}} \leq \min\{C_{\eta}, M_{\eta}^* + \kappa\}, \left|\frac{1}{n} \sum_{i=1}^n \psi(\eta_i) - \mme_{Z \sim P_{\eta}}[\psi(Z)]\right| \geq \delta\right\}. 
\]
By Propositions \ref{prop:original_to_aux},  \ref{prop:aux_to_asymp}, and \ref{prop:norm_conv} it is sufficient to show that there exists $\xi > 0$ such that with probability converging to one, 
    \[
    \phi(S_{\kappa, \delta}) < V - \xi.
    \]
First, note that since $\psi$ is Lipschitz we may assume that $\kappa$ is sufficiently small such that $\eta \in S_{\kappa,\delta} \implies (M_{\eta}^*/\|\eta\|_2)\eta \in S_{0,\delta/2}$, and thus, in particular,
\begin{align*}
    & \phi(S_{\kappa,\delta} )\\
    &  =  \min_{(\|r\|_2 \leq C_r \sqrt{n},|\beta_0| \leq C_{\beta_0}, 0 \leq M_u \leq C_u)}  \max_{(\|s\|_2 \leq  C_s \sqrt{n}, \eta \in S_{\kappa,\delta})}  \bigg( \frac{1}{n} M_u \eta^\top g  - M_u\left\|\frac{1}{n} \|\eta\|_2 h + \frac{1}{n} s \right\|_2\\
& \hspace{1cm} + \frac{1}{n} \eta^\top \epsilon - \frac{1}{n} \beta_0\eta^\top\pmb{1}_n   - \frac{1}{n} \eta^\top r + \frac{1}{n} \sum_{i=1}^n \ell_{\tau}(r_i)  + \frac{1}{\sqrt{n}} s^\top\tilde{\beta}  - \frac{1}{\sqrt{n}}{\mathcal{R}}^*_d(\sqrt{n}s) \bigg)\\
& \leq \phi(S_{0,\delta/2}) + \max_{(\|r\|_2 \leq C_r \sqrt{n},|\beta_0| \leq C_{\beta_0}, 0 \leq M_u \leq C_u, \eta \in S_{\kappa,\delta})} \left|\frac{1}{n}\left(1- \frac{M_{\eta}^*}{\|\eta\|_2}\right)\eta^\top(M_u g + \epsilon - \beta_0\pmb{1}_n - r)\right|\\
& \leq \phi(S_{0,\delta/2}) + \kappa \frac{C_u\|g\|_2 + \|\epsilon\|_2 + C_{\beta_0} \sqrt{n} + C_r\sqrt{n}}{\sqrt{n}}\\
& = \phi(S_{0,\delta/2}) +\kappa O_{\mmp}(1).
\end{align*}
Moreover,
\begin{align*}
\phi(S_{0,\delta/2})\\
&  =  \min_{(\|r\|_2 \leq C_r \sqrt{n},|\beta_0| \leq C_{\beta_0}, 0 \leq M_u \leq C_u)}  \max_{(\|s\|_2 \leq  C_s \sqrt{n}, \eta \in S_{0,\delta/2})}  \bigg( \frac{1}{n} M_u \eta^\top g  - M_u\left\|\frac{1}{n} \|\eta\|_2 h + \frac{1}{n} s \right\|_2\\
& \hspace{1cm} + \frac{1}{n} \eta^\top \epsilon - \frac{1}{n} \beta_0\eta^\top\pmb{1}_n   - \frac{1}{n} \eta^\top r + \frac{1}{n} \sum_{i=1}^n \ell_{\tau}(r_i)  + \frac{1}{\sqrt{n}} s^\top\tilde{\beta}  - \frac{1}{n}{\mathcal{R}}^*_d(\sqrt{n}s) \bigg)\\
& \leq \min_{(\|r\|_2 \leq C_r \sqrt{n})} \max_{(\|s\|_2 \leq  C_s \sqrt{n}, \eta \in S_{0,\delta/2})}  \bigg(\frac{1}{n} M^*_u \eta^\top g  - M^*_u\left\|\frac{1}{n} M^*_{\eta} h + \frac{1}{n} s \right\|_2 + \frac{1}{n} \eta^\top \epsilon \\
& \hspace{2cm} - \frac{1}{n} \beta^*_0 \eta^\top\pmb{1}_n   - \frac{1}{n} \eta^\top r + \frac{1}{n} \sum_{i=1}^n \ell_{\tau}(r_i)  + \frac{1}{\sqrt{n}} s^\top\tilde{\beta}  - \frac{1}{n}{\mathcal{R}}^*_d(\sqrt{n}s)\bigg)\\
& =  \min_{(\|r\|_2 \leq C_r \sqrt{n})} \max_{(\eta \in S_{0,\delta/2})}\bigg( \frac{1}{n}\eta^\top(M_u^*g + \epsilon - \beta_0^*\pmb{1}_n - r) + \frac{1}{n}\sum_{i=1}^n \ell_{\tau}(r_i)\bigg) \numberthis \label{eq:first_term_eta_opt_theorem}\\
&  \hspace{2cm} + \max_{\|s\|_2 \leq C_s\sqrt{n}}  - M^*_u\left\|\frac{1}{n} M^*_{\eta} h + \frac{1}{n} s \right\|_2  + \frac{1}{\sqrt{n}} s^\top \tilde{\beta} - \frac{1}{n} {\mathcal{R}}^*_d(\sqrt{n}s).
\end{align*}
Arguing as in Section \ref{sec:auxiliary_reduc} (and in particular applying Lemmas \ref{lem:swap_to_envelope} and \ref{lem:convexity_lemma} along with the law of large numbers), the second term converges as
\begin{equation}\label{eq:asymp_second_term}
\begin{split}
& \max_{\|s\|_2 \leq C_s\sqrt{n}}  - M^*_u\left\|\frac{1}{n} M^*_{\eta} h + \frac{1}{n} s \right\|_2  + \frac{1}{\sqrt{n}} s^\top \tilde{\beta} - \frac{1}{n} {\mathcal{R}}^*_d(\sqrt{n}s)\\
 \stackrel{\mmp}{\to} & \max_{0  < \rho_2 \leq C_2} - \frac{(M_{\eta}^*)^2M_u^*\gamma}{2} + \gamma\mme\left[e_{\nu}\left(\frac{M_{\eta}^*M_u^*}{\rho_2}h + \gamma \sqrt{d} \tilde{\beta} ; \frac{M^*_u}{\rho_2} \right) \right]- \frac{M_u^*\rho_2}{2}.
 \end{split}
\end{equation}

It remains to consider the first term. Let $r^* \in \mmr^n$ be the vector given by $r^*_i = \text{prox}_{\ell_{\tau}}(M_u^* g_i + \epsilon_i - \beta_0^*; \rho_1^*/M_{\eta}^*)$. By the law of large numbers, we have
\[
 \frac{1}{n} \sum_{i=1}^n \psi\left(\frac{M_{\eta}^*(M_u^* g_i + \epsilon_i - \beta_0^* - r^*_i )}{\rho_1^*} \right) \stackrel{\mmp}{\to} \mme_{Z \sim P_{\eta}}[Z],
\]
and that 
\[
\frac{\|M_u^* g_i + \epsilon_i - \beta_0^* - r^*_i \|_2}{\sqrt{n}} \stackrel{\mmp}{\to} \rho_1^*,
\]
 where we recall that by Lemma \ref{lem:rho_characterization}, $\rho_1^* = \sqrt{\mme_{Z \sim P_{\eta}}[Z^2]} $.   Since $\psi$ is $L$-Lipschitz, this implies that
\begin{align*}
& \liminf_{n \to \infty} \min_{\eta \in S_{0,\delta/2}} \frac{1}{\sqrt{n}} \left\|\eta - \frac{M_{\eta}^*(M_u^* g_i + \epsilon_i - \beta_0^* - r^*_i )}{\rho_1^*} \right\|_2\\
& \geq \liminf_{n \to \infty}  \min_{\eta \in S_{0,\delta/2}} \frac{1}{L} \left| \frac{1}{n} \sum_{i=1}^n \psi\left(\frac{M_{\eta}^*(M_u^* g_i + \epsilon_i - \beta_0^* - r^*_i )}{\rho_1^*} \right) - \frac{1}{n} \sum_{i=1}^n \psi(\eta_i)\right| \stackrel{\mmp}{\geq} \frac{\delta}{2L}.
\end{align*}
For ease of notation, let $Z^* = M_u^*g + \epsilon - \beta_0^*\pmb{1}_n - r^*$. Applying these calculations, we find that the optimization appearing on line (\ref{eq:first_term_eta_opt_theorem}) can be bounded as
\begin{align*}
     &  \min_{(\|r\|_2 \leq C_r \sqrt{n})} \max_{(\eta \in S_{0,\delta/2})}\bigg( \frac{1}{n}\eta^\top(M_u^*g + \epsilon - \beta_0^*\pmb{1}_n - r) + \frac{1}{n}\sum_{i=1}^n \ell_{\tau}(r_i)\bigg)\\
     & \leq  \max_{(\eta \in S_{0,\delta/2})}\bigg( \frac{1}{n}\eta^\top Z^* + \frac{1}{n}\sum_{i=1}^n \ell_{\tau}(r^*_i)\bigg)\\
     & = \max_{(\eta \in S_{0,\delta/2})}\bigg( \frac{1}{n}\frac{\eta^\top Z^*}{M_{\eta}^* \|Z^*\|_2} M_{\eta}^*\|Z^*\|_2\bigg)  + \mme\left[ \ell_{\tau}\left( \text{prox}_{\ell_{\tau}}\left(M_u^* g_1 + \epsilon_1 - \beta_0^*; \frac{\rho_1^*}{M_{\eta}^*}\right) \right)\right] + o_{\mmp}(1)\\
     & = \max_{(\eta \in S_{0,\delta/2})} \left(1 - \frac{1}{2} \left\|\frac{1}{\sqrt{n}}\frac{\eta}{M_{\eta}^* } - \frac{Z^*}{\|Z^*\|_2}  \right\|_2^2 \right)\frac{M_{\eta}^* \|Z^*\|_2}{\sqrt{n}}\\
     & \hspace{4cm} + \mme\left[ \ell_{\tau}\left( \text{prox}_{\ell_{\tau}}\left(M_u^* g_1 + \epsilon_1 - \beta_0^*; \frac{\rho_1^*}{M_{\eta}^*} \right)\right)\right] + o_{\mmp}(1) \\
     & \leq \left( 1- \frac{\delta^2}{8 L^2 (M_{\eta}^*)^2} \right)M_{\eta}^* \rho_1^* + \mme\left[ \ell_{\tau}\left( \text{prox}_{\ell_{\tau}}\left(M_u^* g_1 + \epsilon_1 - \beta_0^*; \frac{\rho_1^*}{M_{\eta}^*} \right)\right)\right] + o_{\mmp}(1)\\
     & = \frac{M_{\eta}^*\rho_1^*}{2} + \mme\left[ e_{\ell_{\tau}}\left( M_u^* g_1 + \epsilon_1 - \beta_0^*; \frac{\rho_1^*}{M_{\eta}^*} \right)\right] - \frac{\delta^2}{8 L^2 (M_{\eta}^*)^2}  + o_{\mmp}(1),
\end{align*}
where the last line applies the formula for $\rho_1^*$ given in Lemma \ref{lem:rho_characterization} alongside the definition of the Moreau envelope. Combining this with (\ref{eq:asymp_second_term}), we conclude that 
\begin{align*}
\phi(S_{\kappa,\delta}) \leq   V -  \frac{\delta^2}{8 L^2 (M_{\eta}^*)^2} + \kappa O_{\mmp}(1) + o_{\mmp}(1).
\end{align*}
Sending $\kappa \to 0$ gives the desired result.

\end{proof}

\subsection{Corollaries of Theorem \ref{thm:main_asym_consistency}}\label{sec:corollaries}

We now prove Corollaries  \ref{corr:loo_cov_consistency} and \ref{corr:quantile_consistency}.

\begin{proof}[Proof of Corollary \ref{corr:loo_cov_consistency}]
For all $i \in \{1,\dots,n\}$, let $(\hat{\beta_0}^{(-i)},\hat{\beta}^{(-i)})$ denote a leave-one-out solution to the quantile regression when the $i_{\text{th}}$ sample is omitted from the fit and suppose that this solution is chosen such that $(\hat{\beta_0}^{(-i)},\hat{\beta}^{(-i)}) \independent (X_i,Y_i)$. By Proposition \ref{prop:initial_loo_comparison}, we have that for all dual solutions $\hat{\eta}$, 
\[
\frac{1}{n} \sum_{i=1}^n \bone\{Y_i < \hat{\beta}^{(-i)}_0 +  X_i^\top \hat{\beta}^{(-i)}\} \leq \frac{1}{n} \sum_{i=1}^n \bone\{\hat{\eta}_i \leq 0\},
\]
and 
\[
\frac{1}{n} \sum_{i=1}^n \bone\{\hat{\eta}_i < 0 \} \leq  \frac{1}{n} \sum_{i=1}^n \bone\{Y_i \leq \hat{\beta}^{(-i)}_0 +  X_i^\top \hat{\beta}^{(-i)}\}.
\]
Moreover, since the distribution of $Y_i \mid X_i$ is continuous, we must have that $\mmp(Y_i = \hat{\beta}_0^{(-i)} + X_i^\top \hat{\beta}^{(-i)}) = 0$. So, combining the above, we find that with probability one all dual solutions are such that
\begin{equation}\label{eq:dual_loo_comparison_corr}
\left| \frac{1}{n} \sum_{i=1}^n \bone\{Y_i \leq \hat{\beta}^{(-i)}_0 +  X_i^\top \hat{\beta}^{(-i)}\} - \frac{1}{n} \sum_{i=1}^n \bone\{\hat{\eta}_i \leq 0\}  \right| \leq \frac{1}{n} \sum_{i=1}^{n} \bone\{\hat{\eta}_i = 0\}.
\end{equation}
By Theorem \ref{thm:main_asym_consistency} and the continuity of the distribution $P_{\eta}$ at $0$ it is straightforward to show that
    \[
    \forall \delta > 0,\ \mmp\left(\text{For all dual solutions $\hat{\eta}$, }  \frac{1}{n} \sum_{i=1}^{n} \bone\{\hat{\eta}_i = 0\} \leq \delta  \right) \to 1,
    \]
    and 
    \begin{equation}\label{eq:dual_cutoff_limit}
     \forall \delta > 0,\ \mmp\left(\text{For all dual solutions $\hat{\eta}$, }  \left| \frac{1}{n} \sum_{i=1}^{n} \bone\{\hat{\eta}_i \leq 0\} - \mmp_{Z \sim P_{\eta}}(Z \leq 0) \right| \leq \delta  \right) \to 1.
    \end{equation}
Combining these facts with (\ref{eq:dual_loo_comparison_corr}) gives, in particular, that
\[
 \frac{1}{n} \sum_{i=1}^n \bone\{Y_i \leq \hat{\beta}^{(-i)}_0 +  X_i^\top \hat{\beta}^{(-i)}\}  \stackrel{\mmp}{\to} \mmp_{Z \sim P_{\eta}}(Z \leq 0).
\]
Since this random variable is bounded, we then also have that
\[
 \mmp( Y_1 \leq \hat{\beta}^{(-1)}_0 +  X_1^\top \hat{\beta}^{(-1)} ) = \mme\left[\frac{1}{n} \sum_{i=1}^n \bone\{Y_i \leq \hat{\beta}^{(-i)}_0 +  X_i^\top \hat{\beta}^{(-i)}\} \right] \to \mmp_{Z \sim P_{\eta}}(Z \leq 0),
\]
or equivalently, that  $\mmp( Y_{n+1} \leq \hat{\beta}_0 +  X_{n+1}^\top \hat{\beta} ) \to \mmp_{Z \sim P_{\eta}}(Z \leq 0)$. Combing this fact with (\ref{eq:dual_cutoff_limit}) gives the desired result.

\end{proof}

\begin{proof}[Proof of Corrolary \ref{corr:quantile_consistency}]
    Let $(C_u, C_{\beta_0}, C_{\eta}, c_{\eta}, C_1, C_2)$ the conclusions of Lemmas \ref{lem:dual_norm_bounds}, \ref{lem:primal_bound}, \ref{lem:r_s_bounded}, and \ref{lem:swap_to_envelope} as well as the assumptions of Lemmas \ref{lem:unique_pos_solution} and \ref{lem:rho_characterization}. Let $(M_u^*, \beta_0^*, \rho_1^*)$ denote the unique solution in $(M_u,\beta_0,\rho_1)$ to the asymptotic program (\ref{eq:asymp_program}) defined in Lemma \ref{lem:unique_pos_solution}. 
    
    To begin, we will first show that the unregularized quantile regression program must have $M_u^* > 0$. Let $(\hat{\beta}_0, \hat{\beta}, \hat{r}, \hat{\eta})$ denote any primal-dual solutions to the quantile regression (\ref{eq:app_min_max_qr}). The first-order conditions of this optimization in $r$ imply that $\hat{\eta} \in [-(1-\tau),\tau]^n$. By Proposition \ref{prop:M_u_is_0}, if $M_u^* = 0$ we must have that with probability converging to one,
    \[
    \frac{1}{n} \sum_{i=1}^n \bone\{\hat{\eta}_i \in (-(1-\tau),\tau) \} < d,
    \]
    We will show that this is not possible. 
    
    Introduce the notation $X_A$ to denote the submatrix of $X$ consisting of the rows in $A \subseteq \{1,\dots,n\}$ and $X_{A,B}$ to denote the submatrix with rows in $A \subseteq \{1,\dots,n\}$ and columns in $B \subseteq \{1,\dots,d\}$. Let $\hat{\eta}_A$ denote the subvector of $\hat{\eta}$ with entries in $A$ and $I_{\text{int.}} = \{i \in \{1,\dots,n\} : -(1-\tau) < \hat{\eta}_i < \tau \}$ denote the set of entries of $\hat{\eta}$ which lie in the interior. By the first-order conditions of (\ref{eq:asymp_program}) in $\beta$, we have that 
    \begin{equation}\label{eq:no_reg_eta_cond}
    \hat{\eta}^\top X = 0 \iff \hat{\eta}_{I_{\text{int.}}^c}^\top X_{I_{\text{int.}}^c} =  \hat{\eta}_{I_{\text{int.}}}^\top X_{I_{\text{int.}}}.
    \end{equation}
    On the other hand, for any fixed set $\tilde{I} \subseteq \{1,\dots,n\}$ with $|\tilde{I}| < d-1$ and vector $v \in \{-(1-\tau),\tau\}^{n-|\tilde{I}|}$ we have that with probability one $v^T X_{\tilde{I}^c}$ is not in the row space of $X_{\tilde{I}}$. This follows immediately from the fact that for any $u \in \mmr^{|\tilde{I}|}$,
    \[
    u^\top X_{\tilde{I}} = v^\top X_{\tilde{I}^c} \implies  v^\top X_{\tilde{I}^c,\{1,\dots,|\tilde{I}|\}} (X^\top_{\tilde{I},\{1,\dots,|\tilde{I}|\}})^{-1} X_{\tilde{I},\{|\tilde{I}| + 1,\dots,d\}} =   v^\top X_{\tilde{I}^c, \{|\tilde{I}| + 1,\dots,d\}},
    \]
    which occurs with probability zero since $v^\top X_{I^c, \{|\tilde{I}| + 1,\dots,d\}}$ is a continuously distributed random vector independent of $v^\top X_{\tilde{I}^c,\{1,\dots,|\tilde{I}|\}} (X^\top_{\tilde{I},\{1,\dots,|\tilde{I}|\}})^{-1} X_{\tilde{I},\{|\tilde{I}| + 1,\dots,d\}}$. Taking a union bound over all choices of $\tilde{I}$ and $v$ and applying  (\ref{eq:no_reg_eta_cond}), we find that with probability one, $|I_{\text{int.}}| > d-1$. As discussed above, this implies that $M_u^* > 0$, as claimed.

    We are now ready to prove the main result of Corollary (\ref{corr:quantile_consistency}). Fix any $\delta > 0$. Let $q^*$ denote the $\tau$ quantile of the asymptotic distribution $P_{\eta}$ defined in (\ref{eq:asymptotic_dist}). We will show that with probability converging to one the empirical quantile of $\hat{\eta}$ lies below $q^* + 2\delta$. Proof of a matching lower bound is identical. If $q^* \geq \tau - 2\delta$, then the result is immediate. So, suppose that $q^* < \tau - 2\delta$. Let $\psi_{\delta}$ be the step function
    \[
    \psi_{\delta}(x) = \begin{cases}
        0,\ & x > q^* + 2\delta,\\
       \frac{q^*+2\delta - x}{\delta},\ & q^* + \delta \leq x \leq q^* + 2\delta\\
       1,\ & x < q^* + \delta.
    \end{cases} 
    \]
    Fix a small value $\xi > 0$ to be specified shortly. By Theorem \ref{thm:main_asym_consistency}, we have that with probability converging to one all dual solutions satisfy 
    \begin{align*}
    \frac{1}{n} \sum_{i=1}^n \bone\{\hat{\eta}_i \leq q^* + 2\delta\} & \geq \frac{1}{n} \sum_{i=1}^n \psi_{\delta}(\hat{\eta}_i) \geq \mme_{Z \sim P_{\eta}}[\psi_{\delta}(Z)] - \xi\\
    & \geq \mmp_{Z \sim P_{\eta}}(Z \leq q^*) + \mmp_{Z \sim P_{\eta}}(q^*  < Z \leq q^* + \delta) - \xi.
    \end{align*}
    Since $M_u^* > 0$, we must have that $\rho_1^* > 0$ and thus that $P_{\eta}$ has point masses at $-(1-\tau)$ and $\tau$ and a continuous distribution with positive density on $(-(1-\tau),\tau)$. In particular, by choosing $\xi$ sufficiently small we may guarantee that with probability converging to one, all dual solutions satisfy 
    \[
    \frac{1}{n} \sum_{i=1}^n \bone\{\hat{\eta}_i \leq q^* + 2\delta\}  \geq  \mmp_{P_{\eta}}(Z \leq q^*) + \mmp_{P_{\eta}}(q^*  < Z \leq q^* + \delta) - \xi \geq \tau,
    \]
    and thus that
    \[
    \text{Quantile}\left(\tau, \frac{1}{n} \sum_{i=1}^n \delta_{\hat{\eta}_i} \right) \leq q^* + 2\delta,
    \]
    as claimed.
\end{proof}

\section{Additional technical lemmas}

In this section, we give a number of auxiliary results that are useful in the main proofs. 

\begin{lemma}\label{lem:gaussian_abs_lower}
Let $\{X_i\}_{i=1}^n \stackrel{i.i.d.}{\sim} \mathcal{N}(0,I_d)$. Then, as $d/n \to \gamma  \in [0,\infty)$,
\[
\liminf_{d,n \to \infty} \inf_{(\|u\|_2 \leq 1, |\beta_0| \leq 1, \max\{\|u\|_2, |\beta_0|\} = 1)} \frac{1}{n} \sum_{i=1}^n |X_i^\top u + \beta_0| \stackrel{\mmp}{\geq} \sqrt{\frac{2}{\pi}} - \sqrt{\gamma}. 
\]
\end{lemma}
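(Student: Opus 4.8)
The plan is to reduce the uniform bound to a scalar optimization via Gordon's Gaussian comparison inequality. First, if $\gamma \ge 2/\pi$ the right-hand side is $\le 0$ and the claim is trivial since the empirical average is nonnegative, so assume $\gamma < 2/\pi$. Write $f_n(u,\beta_0) = \frac1n\|Xu + \beta_0\mathbf 1_n\|_1$, which is convex and positively $1$-homogeneous in $(u,\beta_0)$; on the set $\{\max\{\|u\|_2,|\beta_0|\}=1\}$ one has $\|(\beta_0,u)\|_2 \in [1,\sqrt2]$, so by homogeneity $f_n(u,\beta_0)\ge \Phi_n := \inf_{\beta_0^2+\|u\|_2^2=1} f_n(u,\beta_0)$, and it suffices to show that for every $\epsilon>0$, $\mmp(\Phi_n \ge \sqrt{2/\pi}-\sqrt\gamma-\epsilon)\to 1$. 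Using $\|v\|_1 = \max_{\|z\|_\infty\le 1} z^\top v$, rewrite $\Phi_n = \min_{\beta_0^2+\|u\|_2^2=1}\max_{\|z\|_\infty\le1}\frac1n\big(z^\top X u + \beta_0\, z^\top\mathbf1_n\big)$, a min--max in which the i.i.d.\ $\mathcal N(0,1)$ matrix $X$ enters only through the bilinear term $z^\top X u$, while $\frac1n\beta_0 z^\top\mathbf1_n$ is a deterministic extra term.

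Next I would invoke Gordon's Gaussian comparison inequality \citep{Gordon1985,Gordon1988} (cf.\ Proposition~\ref{prop:gordon}). The direction we need, $\mmp(\Phi_n < c) \le 2\mmp(\phi_n \le c)$, follows by negating $\Phi_n$ into a max--min and applying the first, convexity-free inequality of Proposition~\ref{prop:gordon} (the extra term is bilinear, hence convex-concave, and both index sets are compact); thus the non-convexity of the sphere $\{\|(\beta_0,u)\|_2=1\}$ causes no difficulty. The associated auxiliary problem is
\[
\phi_n = \min_{\beta_0^2+\|u\|_2^2=1}\ \max_{\|z\|_\infty\le1}\ \frac1n\big(\|u\|_2\, g^\top z + \|z\|_2\, h^\top u + \beta_0\, z^\top\mathbf1_n\big),
\]
with $g\sim\mathcal N(0,I_n)$ and $h\sim\mathcal N(0,I_d)$ independent, and it remains to show $\phi_n \ge \sqrt{2/\pi}-\sqrt\gamma-o_{\mmp}(1)$.

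To lower bound $\phi_n$ I would, for each feasible $(\beta_0,u)$, evaluate the inner maximum at the feasible point $z = \operatorname{sgn}(\|u\|_2\,g + \beta_0\mathbf 1_n)$ (for which $\|z\|_2 = \sqrt n$ a.s.); this yields $\max_{\|z\|_\infty\le1}(\cdots) \ge \frac1n\|\|u\|_2 g + \beta_0\mathbf 1_n\|_1 + \frac1{\sqrt n}h^\top u$. Since this lower bound depends on $u$ only through $\|u\|_2$ and $h^\top u$, minimizing over $(\beta_0,u)$ on the sphere (the direction of $u$ contributes $\min_{\|u\|_2=\mu}h^\top u = -\mu\|h\|_2$) gives
\[
\phi_n \ \ge\ \min_{\mu\in[0,1],\ \beta_0 = \pm\sqrt{1-\mu^2}}\Big(\frac1n\sum_{i=1}^n|\mu g_i + \beta_0| \ -\ \mu\,\tfrac{\|h\|_2}{\sqrt n}\Big).
\]
The map $(\mu,\beta_0)\mapsto\frac1n\sum_i|\mu g_i+\beta_0|$ is Lipschitz on $[0,1]\times[-1,1]$ with an a.s.\ bounded constant, so the weak law of large numbers upgrades to uniform convergence, and since $\|h\|_2/\sqrt n \stackrel{\mmp}{\to}\sqrt\gamma$ this gives $\phi_n \stackrel{\mmp}{\ge} \inf_{\mu\in[0,1]}\big(\mme|\mu Z + \sqrt{1-\mu^2}| - \mu\sqrt\gamma\big) - o_{\mmp}(1)$ with $Z\sim\mathcal N(0,1)$. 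The last step is the elementary fact that $\mu\mapsto\mme|\mu Z + \sqrt{1-\mu^2}|$ is nonincreasing on $[0,1]$: differentiating (using $\mme[Z\operatorname{sgn}(\mu Z+b)] = 2\varphi(b/\mu)$ and $\mme[\operatorname{sgn}(\mu Z+b)] = \mmp(|Z|\le b/\mu)$) reduces this to $2r\varphi(r)\le \mmp(|Z|\le r)$ for $r\ge 0$, which holds because the difference vanishes at $r=0$ and has derivative $2r^2\varphi(r)\ge 0$. Hence $\mme|\mu Z + \sqrt{1-\mu^2}| \ge \mme|Z| = \sqrt{2/\pi}$ for all $\mu\in[0,1]$, so the infimum above is $\ge \sqrt{2/\pi}-\sqrt\gamma$, giving $\phi_n \ge \sqrt{2/\pi}-\sqrt\gamma - o_{\mmp}(1)$; combined with the Gordon inequality this yields $\mmp(\Phi_n < \sqrt{2/\pi}-\sqrt\gamma-\epsilon)\to 0$ and hence the claim.

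The main obstacle is setting up the auxiliary problem correctly — using the comparison direction that tolerates the non-convex spherical constraint — and then extracting the exact constant $\sqrt{2/\pi}-\sqrt\gamma$ from $\phi_n$; the crux there is the decoupling choice of sign vector $z$, the explicit minimization over the direction of $u$, and the monotonicity of $\mme|\mu Z + \sqrt{1-\mu^2}|$ in $\mu$. The initial homogeneity reduction and the uniform law of large numbers are routine.
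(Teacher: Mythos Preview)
Your proof is correct and follows essentially the same route as the paper: dualize the $\ell_1$-norm, apply Gordon's comparison inequality (using the convexity-free direction, as you note), and reduce the auxiliary problem to a one-parameter optimization that is solved explicitly. The only cosmetic differences are that the paper keeps the original constraint $\max\{\|u\|_2,|\beta_0|\}=1$ (rather than first passing to the sphere by homogeneity) and handles the final scalar minimization by a two-case split (using evenness/convexity when $\|u\|_2=1$ and Jensen when $|\beta_0|=1$) instead of your monotonicity computation for $\mu\mapsto\mme|\mu Z+\sqrt{1-\mu^2}|$.
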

\begin{proof}
Let $X \in \mmr^{n \times d}$ denote the matrix with rows $X_1,\dots,X_n$. Write 
    \begin{align*}
    & \inf_{(\|u\|_2 \leq 1, |\beta_0| \leq 1, \max\{\|u\|_2, |\beta_0|\} = 1)}  \frac{1}{n} \sum_{i=1}^n |X_i^\top u + \beta_0|\\
    & =   \inf_{(\|u\|_2 \leq 1, |\beta_0| \leq 1, \max\{\|u\|_2, |\beta_0|\} = 1)} \max_{(v \in \{\pm 1\}^n)}\frac{1}{n} v^\top X u + \frac{1}{n}\beta_0 v^\top \pmb{1}_n .
    \end{align*}
    By the Gordon's inequality (Proposition \ref{prop:gordon} above), we have that for any $c \in \mmr$,
    \begin{align*}
    & \mmp\left( \inf_{(\|u\|_2 \leq 1, |\beta_0| \leq 1, \max\{\|u\|_2, |\beta_0|\} = 1)} \max_{(v \in \{\pm 1\}^n)}  \frac{1}{n} v^\top X u + \frac{1}{n} \beta_0 v^\top \pmb{1}_n  < c \right) \numberthis \label{eq:gaussian_compare_sign_vec}\\
    & \leq  2\mmp\left(  \inf_{(\|u\|_2 \leq 1, |\beta_0| \leq 1, \max\{\|u\|_2, |\beta_0|\} = 1)} \max_{(v \in \{\pm 1\}^n)} \frac{1}{n} \|v\|_2 u^\top h +\frac{1}{n} \|u\|_2 v^\top g + \frac{1}{n} \beta_0 v^\top \pmb{1}_n \leq c  \right),
    \end{align*}
    where $h \sim \mathcal{N}(0,I_d)$ and $g \sim \mathcal{N}(0,I_n)$ are independent. Now, 
    \begin{align*}
         & \inf_{(\|u\|_2 \leq 1, |\beta_0| \leq 1, \max\{\|u\|_2, |\beta_0|\} = 1)} \max_{(v \in \{\pm 1\}^n)} \frac{1}{n} \|v\|_2 u^\top h + \frac{1}{n}\|u\|_2 v^\top g + \frac{1}{n} \beta_0 v^\top \pmb{1}_n \\
         &  =  \inf_{(\|u\|_2 \leq 1, |\beta_0| \leq 1, \max\{\|u\|_2, |\beta_0|\} = 1)}  \frac{1}{\sqrt{n}} u^\top h +  \frac{1}{n} \sum_{i=1}^n | \|u\|_2 g_i + \beta_0| \\
         & = \inf_{(0 \leq c_u \leq 1, \beta_0 \leq 1, \max\{c_u, \beta_0\} = 1)} \frac{1}{n} \sum_{i=1}^n | c_u g_i + \beta_0|  - c_u \frac{\|h\|_2}{\sqrt{n}}\\
         & \stackrel{\mmp}{\to} \inf_{(0 \leq c_u \leq 1, \beta_0 \leq 1, \max\{c_u, \beta_0\} = 1)} \mme[|c_u g_1 + \beta_0|] - c_u \sqrt{\gamma},
    \end{align*}
    where the limit follows from standard uniform concentration arguments (e.g. Lemma 7.75 of \citet{liese2008}). 
    
    Finally, note that for any $c_u$, $\beta_0 \mapsto \mme[|c_u g_1 + \beta_0|]$ is a convex, even function and thus obtains its minimum at $0$. So, 
    \[
     \inf_{(c_u = 1, |\beta_0| \leq 1)} \mme[|c_u g_1 + \beta_0|] - c_u \sqrt{\gamma} = \mme[|g_1|] - \sqrt{\gamma} = \sqrt{\frac{2}{\pi}} - \sqrt{\gamma}.
    \]
    On the other hand, by Jensen's inequality,
    \[
    \inf_{(0 \leq c_u \leq 1, |\beta_0| = 1)} \mme[|c_u g_1 + \beta_0|] - c_u \sqrt{\gamma}  \geq  \inf_{(0 \leq c_u \leq 1)} |c_u \mme[g_1] + 1]| - c_u \sqrt{\gamma} = 1 - \sqrt{\gamma}.
    \]
    Combining the above, we conclude that
    \[
    \inf_{(\|u\|_2 \leq 1, |\beta_0| \leq 1, \max\{\|u\|_2, |\beta_0|\} = 1)} \max_{(v \in \{\pm 1\}^n)} \frac{1}{n} \|v\|_2 u^\top h + \frac{1}{n}\|u\|_2 v^\top g + \frac{1}{n} \beta_0 v^\top \pmb{1}_n \stackrel{\mmp}{\to} \sqrt{\frac{2}{\pi}} - \sqrt{\gamma},
    \]
    and applying (\ref{eq:gaussian_compare_sign_vec}) gives the desired result.
\end{proof}

Our next lemma gives sufficient conditions under which partial optimization preserves strict convexity.

\begin{lemma}[Lemma 19 of \citet{Thram2018}]\label{lem:convex_under_opt}
    Let $\mathcal{A}$ and $\mathcal{B}$ be convex sets and $\Psi: \mathcal{A} \times \mathcal{B} \to \mmr$ be strictly convex in its first argument. Assume that $\Psi(a,\cdot)$ obtains its maximum for all $a \in \mathcal{A}$. Then, $a \mapsto \max_{b \in \mathcal{B}} \Psi(a,b)$ is strictly convex.
\end{lemma}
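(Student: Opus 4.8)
The plan is to argue directly from the definition of the maximized function. Write $f(a) := \max_{b \in \mathcal{B}} \Psi(a,b)$; by the hypothesis that $\Psi(a,\cdot)$ attains its maximum for every $a$, this is a genuine (finite, real-valued) function on $\mathcal{A}$. Fix distinct points $a_0, a_1 \in \mathcal{A}$ and a weight $\theta \in (0,1)$, and set $a_\theta := (1-\theta)a_0 + \theta a_1 \in \mathcal{A}$, using convexity of $\mathcal{A}$. The goal is the strict inequality $f(a_\theta) < (1-\theta) f(a_0) + \theta f(a_1)$.

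The first step is to invoke attainment at $a_\theta$: choose $b^\star \in \mathcal{B}$ with $\Psi(a_\theta, b^\star) = f(a_\theta)$. The crucial step is then to apply strict convexity of the slice $a \mapsto \Psi(a, b^\star)$, which is part of the hypothesis, at the three points $a_0, a_1, a_\theta$; since $a_0 \neq a_1$, this yields $\Psi(a_\theta, b^\star) < (1-\theta)\Psi(a_0, b^\star) + \theta \Psi(a_1, b^\star)$. Finally, for $i \in \{0,1\}$ the definition of $f$ as a supremum gives $\Psi(a_i, b^\star) \leq f(a_i)$; substituting into the right-hand side produces $f(a_\theta) < (1-\theta) f(a_0) + \theta f(a_1)$, which is exactly strict convexity of $f$.

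There is no real obstacle here, but it is worth isolating why the attainment hypothesis cannot be dropped: if one replaces $b^\star$ by a near-maximizer $b_\varepsilon$ with $\Psi(a_\theta, b_\varepsilon) > f(a_\theta) - \varepsilon$, the strict-convexity bound on the slice $\Psi(\cdot, b_\varepsilon)$ still holds, but the size of the gap in that strict inequality is not controlled uniformly in $\varepsilon$, so letting $\varepsilon \to 0$ only recovers the weak inequality $f(a_\theta) \leq (1-\theta) f(a_0) + \theta f(a_1)$. Attainment is precisely what upgrades this to the strict inequality. The only other points to record are that $f$ is real-valued (immediate from attainment together with $\Psi$ being $\mmr$-valued on all of $\mathcal{A} \times \mathcal{B}$) and that $b^\star$, being an element of $\mathcal{B}$, may freely be paired with $a_0$ and $a_1$ since $\Psi$ is defined on the full product $\mathcal{A}\times\mathcal{B}$.
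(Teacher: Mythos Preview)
Your argument is correct and is exactly the standard proof of this fact; the paper does not supply its own proof but simply cites \citet{Thram2018}, whose Lemma~19 is proved by precisely the same three-step argument (choose a maximizer $b^\star$ at the midpoint, apply strict convexity of the slice $\Psi(\cdot,b^\star)$, and bound each term by $f$). Your remark on why attainment cannot be replaced by a near-maximizer is also accurate and worth keeping.
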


Our next result computes the Moreau envelope of the pinball loss. 

\begin{lemma}\label{lem:pinball_env}
    For any $x \in \mmr$ and $\rho \geq 0$ the Moreau envelope of the pinball loss is given by
    \[
    e_{\ell_{\tau}}(x;\rho) = \begin{cases}
         \frac{\tau^2\rho}{2} + \tau (x-\rho\tau) ,\ & x-\rho\tau > 0,\\
         \frac{x^2}{2\rho},\ & x \in [- \rho(1-\tau),\rho\tau],\\
         \frac{(1-\tau)^2\rho}{2} - (1-\tau)(x+\rho(1-\tau))  ,\ & x+\rho(1-\tau) < 0.
    \end{cases}
    \]
\end{lemma}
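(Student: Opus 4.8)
The plan is to compute the proximal map of the pinball loss explicitly and then substitute it back into the definition of the Moreau envelope. Recall $\ell_{\tau}(v) = \tau v - \min\{v,0\}$, so that $\ell_{\tau}(v) = \tau v$ for $v \geq 0$ and $\ell_{\tau}(v) = -(1-\tau)v$ for $v \leq 0$; in particular $\partial\ell_{\tau}(v) = \{\tau\}$ for $v > 0$, $\partial\ell_{\tau}(v) = \{-(1-\tau)\}$ for $v < 0$, and $\partial\ell_{\tau}(0) = [-(1-\tau),\tau]$. First I would treat the case $\rho > 0$. Since $v \mapsto \frac{1}{2\rho}(x-v)^2 + \ell_{\tau}(v)$ is strictly convex and coercive, it has a unique minimizer $v^*$, characterized by the first-order condition $x - v^* \in \rho\,\partial\ell_{\tau}(v^*)$.

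Next I would solve this inclusion by cases, which gives the soft-thresholding formula. If $x - \rho\tau > 0$, then $v^* = x - \rho\tau > 0$ is consistent with $\partial\ell_\tau(v^*) = \{\tau\}$; if $x + \rho(1-\tau) < 0$, then $v^* = x + \rho(1-\tau) < 0$ is consistent with $\partial\ell_\tau(v^*) = \{-(1-\tau)\}$; and if $-\rho(1-\tau) \leq x \leq \rho\tau$, then $v^* = 0$ works since $x \in [-\rho(1-\tau),\rho\tau] = \rho\,\partial\ell_\tau(0)$. These three regions of $x$ are disjoint and exhaust $\mmr$, so by uniqueness this describes $v^*$ completely.

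Then I would substitute $v^*$ back in. When $x - \rho\tau > 0$, $e_{\ell_\tau}(x;\rho) = \frac{1}{2\rho}(\rho\tau)^2 + \tau(x-\rho\tau) = \frac{\tau^2\rho}{2} + \tau(x-\rho\tau)$; when $x \in [-\rho(1-\tau),\rho\tau]$, $e_{\ell_\tau}(x;\rho) = \frac{x^2}{2\rho}$; and when $x + \rho(1-\tau) < 0$, using $\ell_\tau(v^*) = -(1-\tau)(x+\rho(1-\tau))$ since $v^* < 0$, one gets $e_{\ell_\tau}(x;\rho) = \frac{(1-\tau)^2\rho}{2} - (1-\tau)(x+\rho(1-\tau))$. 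This matches the claimed formula. Finally I would check the boundary case $\rho = 0$: the stated formula reduces to $\tau x$ for $x > 0$, $0$ for $x = 0$, and $-(1-\tau)x$ for $x < 0$, i.e. to $\ell_\tau(x)$, which is exactly the continuous extension $e_{\ell_\tau}(x;0) = \ell_\tau(x)$ used elsewhere in the paper; one can also verify continuity as $\rho \downarrow 0$ directly from the three branches.

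There is no real obstacle here — the computation is the standard soft-thresholding / Moreau-envelope-of-a-piecewise-linear-function calculation. The only points requiring a little care are verifying that the three $x$-regions partition $\mmr$ (so that the case analysis is exhaustive) and checking consistency of the sign of $v^*$ with the branch of $\partial\ell_\tau$ used, which is what pins down the minimizer uniquely.
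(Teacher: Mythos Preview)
Your proposal is correct and follows essentially the same approach as the paper: compute the proximal map of $\ell_\tau$ via the first-order optimality condition, solve the resulting inclusion by cases to obtain the soft-thresholding formula, and substitute back into the envelope definition; the paper handles $\rho = 0$ by invoking the continuous extension lemma, which you also mention along with a direct check.
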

\begin{proof}
    The case $\rho = 0$ is given by the continuous extension stated in Lemma \ref{lem:envelope_extension}. So, consider the case $\rho > 0$. We begin by computing the proximal function. Let $f(v) = \frac{1}{2\rho}(v- x)^2 + \ell_{\tau}(v)$ denote the objective appearing in the definition of the Moreau envelope and the proximal function. We have that 
    \[
    \partial f(v) = \begin{cases}
         \{\frac{v-x}{\rho} + \tau\},\ & v > 0,\\
         [\frac{v-x}{\rho} -(1-\tau), \frac{v-x}{\rho} + \tau],\ & v = 0,\\
         \{\frac{v-x}{\rho} - (1-\tau)\},\ & v < 0.
    \end{cases}
    \]
    Setting this to zero, we find that 
    \[
    \text{prox}_{\ell_{\tau}}(x;\rho) = \begin{cases}
         x-\rho\tau ,\ & x-\rho\tau > 0,\\
         0,\ & x \in [- \rho(1-\tau),\rho\tau],\\
         x+\rho(1-\tau) ,\ & x + \rho(1-\tau) < 0.
    \end{cases}
    \]
    Plugging this into the definition of the Moreau envelope gives the desired result.
\end{proof}

The next two lemmas useful facts from convex analysis that were applied in the proofs above.

\begin{lemma}[Part i of Theorem 14.3 in \citet{Bauschke2017}]\label{lem:envelope_identity}
     Let $f : \mmr^k \to \mmr \cup \{+\infty\}$ be a proper, lower semicontinuous convex function. Then, for any $x \in \mmr^k$ and $\rho > 0$ we have the identity
    \[
    e_f(x;\rho) + e_{f^*}(x/\rho;1/\rho) = \frac{\|x\|^2}{2\rho}.
    \]
    
\end{lemma}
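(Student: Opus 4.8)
Since the asserted identity is precisely Part i of Theorem 14.3 in \citet{Bauschke2017}, one option is simply to invoke it; for completeness I would give the short direct argument. The plan is to rewrite the minimization defining $e_f(x;\rho)$ in terms of a convex conjugate and then apply the standard duality between adding a quadratic and infimal convolution with its conjugate. First I would expand the square, writing $\frac{1}{2\rho}\|x-v\|_2^2 = \frac{1}{2\rho}\|x\|_2^2 - \frac{1}{\rho}\langle x,v\rangle + \frac{1}{2\rho}\|v\|_2^2$, so that
\[
e_f(x;\rho) = \frac{\|x\|_2^2}{2\rho} - \sup_{v\in\mmr^k}\left( \left\langle \tfrac{x}{\rho}, v\right\rangle - f(v) - \tfrac{1}{2\rho}\|v\|_2^2 \right) = \frac{\|x\|_2^2}{2\rho} - \left( f + \tfrac{1}{2\rho}\|\cdot\|_2^2 \right)^{*}\left(\tfrac{x}{\rho}\right),
\]
where $(\cdot)^{*}$ denotes the convex conjugate as defined in the main text.

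Next I would evaluate this conjugate. The function $g := \tfrac{1}{2\rho}\|\cdot\|_2^2$ is everywhere finite and convex with $\operatorname{dom} g = \mmr^k$, so the constraint qualification $\operatorname{ri}(\operatorname{dom} f)\cap\operatorname{ri}(\operatorname{dom} g)\neq\emptyset$ holds trivially ($f$ being proper), and the conjugate-of-a-sum formula (see e.g.~\citet{Rockefeller1997} or \citet{Bauschke2017}) gives, for all $y$,
\[
(f+g)^{*}(y) = \min_{w\in\mmr^k}\left( f^*(w) + g^*(y-w) \right),
\]
with the infimum attained. Since $g^*(y) = \tfrac{\rho}{2}\|y\|_2^2$, substituting $y = x/\rho$ yields
\[
\left( f + \tfrac{1}{2\rho}\|\cdot\|_2^2 \right)^{*}\left(\tfrac{x}{\rho}\right) = \min_{w\in\mmr^k}\left( f^*(w) + \tfrac{\rho}{2}\left\|\tfrac{x}{\rho} - w\right\|_2^2 \right) = e_{f^*}\left(\tfrac{x}{\rho};\tfrac{1}{\rho}\right),
\]
which is well defined because $f^*$ is itself proper, lower semicontinuous and convex. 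Combining the two displays gives $e_f(x;\rho) = \frac{\|x\|_2^2}{2\rho} - e_{f^*}(x/\rho;1/\rho)$, which is the claimed identity after rearrangement.

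The only step that is not purely mechanical is the conjugate-of-a-sum identity, which can fail for general convex summands; here it is automatic because one of the summands is the everywhere-finite quadratic $g$. This same observation also yields attainment of the infimum defining $e_{f^*}(x/\rho;1/\rho)$; alternatively, attainment follows directly from coercivity of $w\mapsto f^*(w) + \tfrac{\rho}{2}\|x/\rho - w\|_2^2$ together with lower semicontinuity of $f^*$. Nothing in the argument uses more than that $f$ is proper, lower semicontinuous and convex, which matches the hypotheses of the lemma.
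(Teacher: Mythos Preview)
Your argument is correct. The paper does not actually prove this lemma; it merely cites Part~i of Theorem~14.3 in \citet{Bauschke2017} and uses the identity as a black box. Your self-contained derivation---expand the square to recognize $e_f(x;\rho) = \|x\|_2^2/(2\rho) - (f + \tfrac{1}{2\rho}\|\cdot\|_2^2)^*(x/\rho)$, then apply the conjugate-of-a-sum formula $(f+g)^* = f^* \,\Box\, g^*$ with $g = \tfrac{1}{2\rho}\|\cdot\|_2^2$ (valid since $g$ has full domain), and finally identify the resulting infimal convolution as $e_{f^*}(x/\rho;1/\rho)$---is the standard proof and matches what one finds in the cited reference. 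Your remark that the only nontrivial step is the qualification condition for the sum rule, automatically satisfied here because one summand is everywhere finite, is exactly the point that needs to be checked. There is nothing to add.
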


\begin{lemma}[Corollary of Theorem 1.25 in \citet{Rockefeller1997}]\label{lem:envelope_extension}
    Let $f :\mmr^k \to \mmr$ be convex. Then for all $x \in \mmr^k$,
    \[
    \lim_{\rho \downarrow 0} e_f(x;\rho) = f(x).
    \]
\end{lemma}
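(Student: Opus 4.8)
The plan is to sandwich $e_f(x;\rho)$ between $f(x) - \tfrac{\rho}{2}\|g\|_2^2$ and $f(x)$ for a fixed subgradient $g$ of $f$ at $x$, and then let $\rho\downarrow 0$. First I would record the trivial upper bound: by the definition $e_f(x;\rho) = \inf_{v\in\mmr^k}\{\tfrac{1}{2\rho}\|x-v\|_2^2 + f(v)\}$, choosing $v=x$ gives $e_f(x;\rho)\le f(x)$ for every $\rho>0$, hence $\limsup_{\rho\downarrow 0} e_f(x;\rho)\le f(x)$. (One may also note that $e_f(x;\rho)$ is non-decreasing as $\rho\downarrow 0$, since $\tfrac{1}{2\rho}\|x-v\|_2^2$ increases, so the limit exists and equals $\sup_{\rho>0} e_f(x;\rho)$.)

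For the matching lower bound, I would use that $f$ is finite and convex on all of $\mmr^k$, so every point lies in the interior of $\operatorname{dom} f$ and the subdifferential $\partial f(x)$ is nonempty (e.g.\ Theorem 23.4 of \citet{Rockefeller1997}). Fixing any $g\in\partial f(x)$, the subgradient inequality gives $f(v)\ge f(x) + g^\top(v-x)$ for all $v\in\mmr^k$. Substituting this into the Moreau envelope,
\[
e_f(x;\rho)\ \ge\ \inf_{v\in\mmr^k}\Big\{\tfrac{1}{2\rho}\|x-v\|_2^2 + f(x) + g^\top(v-x)\Big\}\ =\ f(x) - \tfrac{\rho}{2}\|g\|_2^2,
\]
where the infimum is attained at $v = x-\rho g$ by minimizing an explicit quadratic. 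Hence $\liminf_{\rho\downarrow 0} e_f(x;\rho)\ge f(x)$, and combining with the upper bound yields $\lim_{\rho\downarrow 0} e_f(x;\rho) = f(x)$.

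There is essentially no obstacle; the only point requiring a word of care is the nonemptiness of $\partial f(x)$, which can fail for a general convex function only at boundary points of its effective domain, but is automatic here because $f$ is real-valued everywhere. Alternatively, one can bypass subgradients entirely by appealing directly to the epi-convergence / monotone approximation result that $e_f(\cdot;\rho)\uparrow f$ pointwise as $\rho\downarrow 0$ for lower semicontinuous convex $f$ (Theorem 1.25 of \citet{Rockefeller1997}), which is the route indicated by the lemma's attribution; the short squeeze argument above is the self-contained version.
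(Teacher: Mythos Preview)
Your proof is correct. The paper does not actually prove this lemma; it simply records it as a corollary of Theorem~1.25 in \citet{Rockefeller1997} and moves on. Your sandwich argument --- the trivial upper bound from $v=x$ together with the lower bound $e_f(x;\rho)\ge f(x)-\tfrac{\rho}{2}\|g\|_2^2$ obtained from any subgradient $g\in\partial f(x)$ --- is a clean, self-contained derivation that supplies what the paper only cites. The one point you flag (nonemptiness of $\partial f(x)$) is indeed the only place requiring care, and your justification via $f$ being finite-valued everywhere is the right one.
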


\end{document}